\newtheorem{theorem}{Theorem}[section]
\newtheorem{lemma}[theorem]{Lemma}
\newtheorem{definition}[theorem]{Definition}
\newtheorem{corollary}[theorem]{Corollary}
\newtheorem{conjecture}[theorem]{Conjecture}
\newcolumntype{d}[1]{D{.}{.}{#1}}
\newcolumntype{C}[1]{>{\centering\let\newline\\\arraybackslash\hspace{0pt}}m{#1}}
\newenvironment{mylist}[1]{
\setbox1=\hbox{#1}
\begin{list}{}{
\setlength{\labelwidth}{\wd1}
\setlength{\leftmargin}{\wd1}
\addtolength{\leftmargin}{0em}
\addtolength{\leftmargin}{\labelsep}
\setlength{\rightmargin}{1em}}}{\end{list}}
\newcommand{\litem}[1]{\item[#1\hfill]}
\newcommand{\newLNT}{hyperloop nesting}
\newcommand{\NewLNT}{Hyperloop nesting}
\newcommand{\cvertex}{canonical}
\newcommand{\proofend}{$\Box$\\}
\title{Incremental Strong Connectivity and $2$-Connectivity in Directed Graphs
}
\author[1]{Loukas Georgiadis}
\author[2]{Giuseppe F.~Italiano}
\author[2]{Nikos Parotsidis}
\affil[1]{University of Ioannina, Greece. 
}
\affil[2]{University of Rome Tor Vergata, Italy. 
}
\date{}
\begin{document}

\maketitle

\begin{abstract}
In this paper, we present new incremental algorithms for maintaining data
structures that represent all connectivity cuts of size one in directed graphs
(digraphs), and the strongly connected components that result by the removal of each of those cuts.
We give a conditional lower bound that provides evidence that our algorithms may be tight up to a sub-polynomial factors.
As an additional result, with our approach we can also maintain dynamically
the $2$-vertex-connected components of a digraph during any sequence of edge insertions in a total of $O(mn)$ time.
This matches the bounds for the incremental maintenance of the $2$-edge-connected components of a digraph.
\end{abstract}

\section{Introduction}
\label{sec:inrtoduction}
A dynamic graph algorithm aims at updating efficiently the solution
of a problem after each update, faster than recomputing it from scratch. A dynamic graph problem is said to be \emph{fully dynamic} if the update operations
include both insertions and deletions of edges, and it
is said to be \emph{incremental} (resp., \emph{decremental}) if only
insertions (resp., deletions) are allowed.
In this paper, we present new incremental algorithms for some basic connectivity problems on directed graphs (digraphs), which were recently considered in the literature~\cite{2C:GIP:arXiv}.
Before defining the problems and stating our bounds, we need
some definitions.

Let $G=(V,E)$ be a digraph. $G$ is \emph{strongly connected} if there is a directed path from each vertex to every other vertex.
The \emph{strongly connected components} (in short \emph{SCCs}) of $G$ are its maximal strongly connected subgraphs.
Two vertices $u,v \in V$  are \emph{strongly connected} if they belong to the same SCC
of $G$.
An edge (resp., a vertex) of $G$ is a \emph{strong bridge} (resp., a \emph{strong articulation point}) if its removal increases the number of SCCs in the remaining graph. See Figure \ref{figure:SCC-example}.
Given two vertices $u$ and $v$, we say that an edge (resp., a vertex) of $G$ is a \emph{separating edge} (resp., a \emph{separating vertex}) for $u$ and $v$ if its removal leaves $u$ and $v$ in different SCCs.
Let $G$ be strongly connected: $G$ is \emph{$2$-edge-connected} (resp., \emph{$2$-vertex-connected}) if it has no strong bridges (resp., no strong articulation points).
Two vertices $u, v\in V$ are said to be \emph{$2$-edge-connected} (resp., \emph{$2$-vertex-connected}), denoted by $u \leftrightarrow_{\mathrm{2e}} v$ (resp., $u \leftrightarrow_{\mathrm{2v}} v$), if there are two edge-disjoint (resp., internally vertex-disjoint) directed paths from $u$ to $v$  and two edge-disjoint (resp., internally vertex-disjoint) directed paths from $v$ to $u$. (Note that a path from $u$ to $v$ and a path from $v$ to $u$ need not be edge-disjoint or internally vertex-disjoint).
A \emph{$2$-edge-connected component}  (resp., a \emph{$2$-vertex-connected component}) of a digraph $G=(V,E)$ is defined as a maximal subset $B \subseteq V$ such that $u \leftrightarrow_{\mathrm{2e}} v$ (resp., $u \leftrightarrow_{\mathrm{2v}} v$) for all $u, v \in B$.
Given a digraph $G$, we denote by
$G\setminus e$ (resp., $G\setminus v$) be the digraph obtained after deleting edge $e$ (resp., vertex $v$) from $G$.

\begin{figure}[t!]
\begin{center}
\vspace{-.9cm}
\includegraphics[trim={0 0 0 2cm}, clip=true, width=1.0\textwidth]{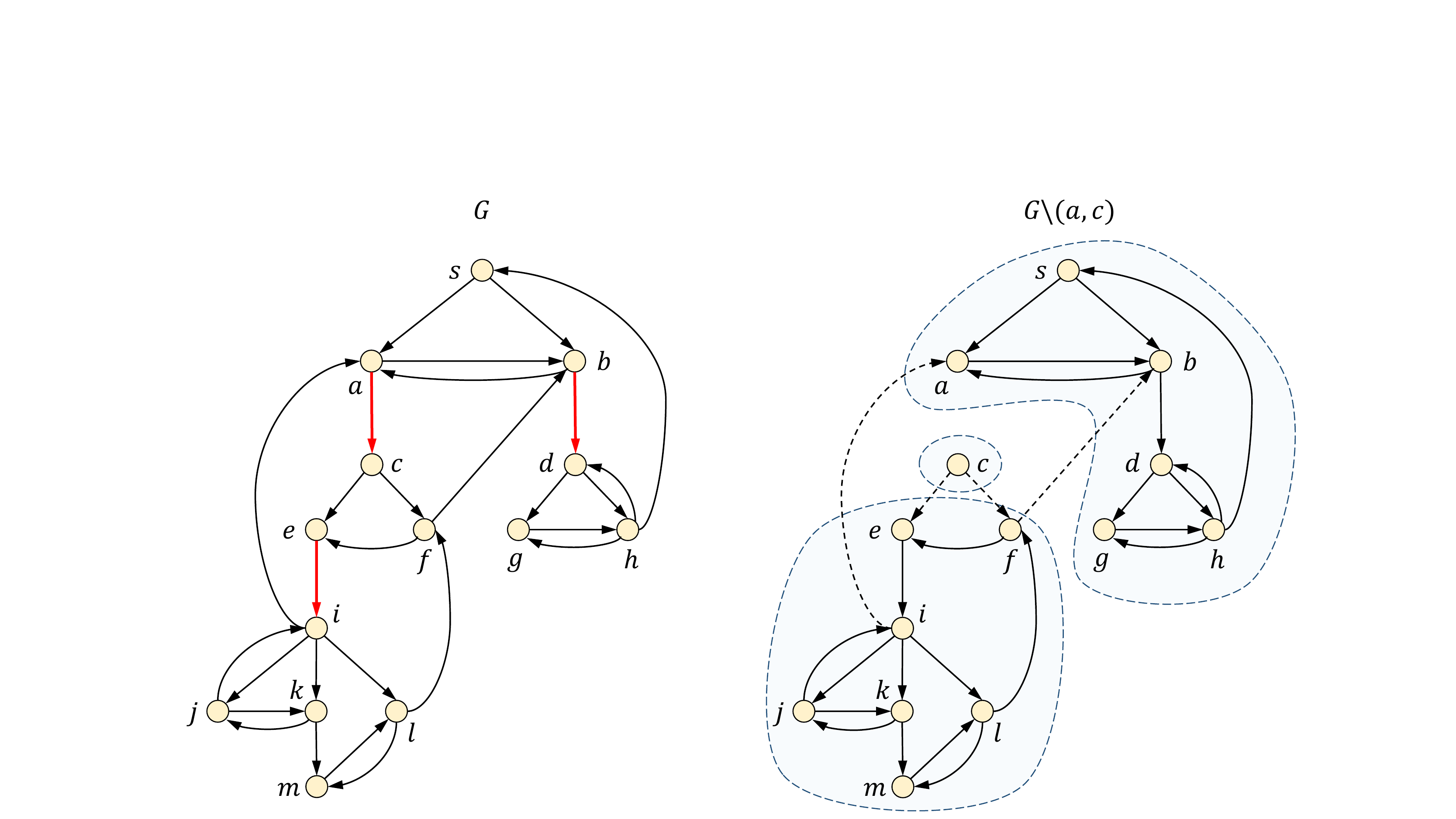}	
\caption{A strongly connected digraph $G$ with strong bridges shown in red (better viewed in color), and the SCCs of $G \setminus e$ after the deletion of
 the strong bridge $e=(a,c)$.}
\label{figure:SCC-example}
\end{center}
\vspace{-.7cm}
\end{figure}

Let $G=(V,E)$ be a strongly connected graph. In very recent work~\cite{2C:GIP:arXiv}, we presented an $O(n)$-space data structure that, after a linear-time preprocessing, 
is able to answer in asymptotically optimal (worst-case) time all the following queries on a static digraph:
\begin{itemize}
	\item Report in $O(1)$ time the total number of SCCs in $G\setminus e$ (resp., $G\setminus v$), for any query edge $e$  (resp., vertex $v$) in $G$.
	\item Report in $O(1)$ time the size of the largest and of the smallest SCCs in $G\setminus e$ (resp., $G\setminus v$), for any query edge $e$  (resp., vertex $v$) in $G$.
	\item Report in $O(n)$ time all the SCCs of $G\setminus e$ (resp., $G\setminus v$), for any query edge $e$ (resp., vertex $v$).
	\item Test in $O(1)$ time whether two query vertices $u$ and $v$ are strongly connected in $G\setminus e$ (resp., $G\setminus v$), for any query edge $e$ (resp., vertex $v$).
	\item For any two query vertices $u$ and $v$ that are strongly connected in $G$,
	report all edges $e$ (resp., vertices $v$) such that $u$ and $v$ are not strongly connected in $G\setminus e$ (resp., $G\setminus v$)
	in time $O(k+1)$, where $k$ is the number of separating edges (resp., separating vertices).
\end{itemize}

As pointed out in~\cite{2C:GIP:arXiv,Paudel2017Computing}, this data structure is motivated by applications in many areas, including
computational biology~\cite{Gunawardena12,MihalakUY15}
social network analysis~\cite{Kempe:2003,Ventresca2015},
network resilience~\cite{Shen:2013} and network
immunization~\cite{Aspnes:2006,Immune:PhysRevLett,Kuhlman:2010}.

A dynamic version of the aforementioned data structure can be used to monitor the critical components (i.e., edges and vertices) whose removal disrupts the underlying graphs, in graphs that change over time.
An ideal scenario is to design efficient algorithms in the fully dynamic setting.
However, we show that no data structure that can answer any of the queries that we consider in sublinear time in the number of edges, can be maintained faster than recomputing the data structure from scratch unless a widely believed conjecture is proved wrong.
There are real-word dynamic networks where edge deletion occur rarely, in which case the incremental setting finds applications.
Such networks include, for instance, communication networks, road networks, the power grid.

\smallskip
\noindent{\bf Our Results.}
We show a conditional lower bound for the fully dynamic version of this problem. More specifically, 
let $G = (V,E)$ be a digraph with $n$ vertices that undergoes $m$ edge
updates from an initially empty graph.  
We prove that any fully dynamic algorithm that can answer any of the queries considered here requires either $\Omega(m^{1-o(1)})$ amortized update time, or $\Omega(m^{1-o(1)})$ query time, 
unless the Strong Exponential Time Hypothesis~\cite{impagliazzo1999complexity,impagliazzo1998problems} is false.

Motivated by this hardness result, we focus on the incremental version of this problem. 
We present an incremental version of the data structure introduced in~\cite{2C:GIP:arXiv}, which can be maintained throughout a sequence of edge insertions. In particular,
we show how to maintain a digraph $G$ undergoing edge insertions in a total of $O(mn)$ time, where $n$ is the number of vertices and $m$ the number of edges after all insertions, so that all the queries we consider can be answered in asymptotically optimal (worst-case) time after each insertion.
As an additional result, with our approach we can also
maintain
the $2$-vertex-connected components of a digraph during any sequence of edge insertions in a total of
$O(mn)$ time.
After every insertion we can test whether two query vertices are $2$-vertex-connected and, whenever the answer is negative, produce a separating vertex (or an edge) for the two query vertices.
This matches the bounds for the
incremental maintenance of the $2$-edge-connected components of a digraph~\cite{GIN16:ICALP}.

Before our work,
no algorithm for all those problems  was faster than recomputing the solution from scratch after each edge insertion, which yields a total of
$O(m^2)$. Our algorithms improve substantially over those bounds.
In addition, we show a
conditional lower bound for the total update time of an incremental
data structure that can answer queries of the form ``are $ u $ and $ v $ strongly connected in $G \setminus e$'', where $u,v\in V$, $e\in E$.
In particular, we prove that the existence of a data structure
that supports the aforementioned queries with total update time $O((mn)^{1-\epsilon})$ (for some constant $\epsilon > 0$).
Therefore, a polynomial improvement of our bound leads to a breakthrough.

\smallskip
\noindent{\bf Related Work.}
Many efficient algorithms for several dynamic graph problems have been proposed in the literature, including dynamic connectivity~\cite{HK99,HLT01,Nanongkai2017Dynamic,PT07,Thorup2000}, minimum spanning trees~\cite{EGIN97,F85,HK01,HLT01,Nanongkai2017Dynamic}, edge/vertex connectivity~\cite{EGIN97,HLT01} on undirected graphs, and transitive closure~\cite{DI08,HK95,King99,Sankowski2004Dynamic} and shortest paths~\cite{Abraham2017Fully,DI04,King99,Thorup04} on digraphs.
Dynamic problems on digraphs are known to be harder than on undirected graphs and most of the dynamic algorithms on undirected graphs have polylog update bounds, while dynamic algorithms on digraphs have higher polynomial update bounds.  The hardness of dynamic algorithms on digraphs has been recently supported also by conditional lower bounds~\cite{AW14}.

In \cite{decdom17}, the decremental version of the data structure considered in this paper is presented. The total time and space required to maintain decrementally the data structure is $O(mn \log{n})$ and $O(n^2 \log n)$, respectively: here $m$ is the number of edges in the initial graph.
We remark that our incremental algorithms are substantially different from decremental algorithms of~\cite{decdom17}, and indeed the techniques that we use here are substantially different from~\cite{decdom17}.
More specifically, the main approach of~\cite{decdom17}
is to maintain the SCCs in $G\setminus v$ for each $v\in V$, by carefully combining $n$ appropriate instances of the decremental SCCs algorithm from~\cite{scc-decomposition}.
This allows to maintain
decrementally the dominator tree
in $O(mn \log{n})$ total time and $O(n^2 \log n)$ space.
On the contrary,
in the incremental setting it is already known how to maintain dominator trees, and the main challenge is
to maintain efficiently information about nesting loops throughout edge insertions. This allows us to achieve better bounds than in the decremental setting~\cite{decdom17}: namely, $O(mn)$ total time and $O(m+n)$ space.

In \cite{GIN16:ICALP} we presented an incremental algorithm that maintains the $2$-edge-connected components of a directed graph with $n$ vertices through any sequence of edge insertions in a total of $O(mn)$ time, where $m$ is the number of edges after all insertions. After each insertion, we can test in constant time if two query vertices $v$ and $w$ are $2$-edge-connected, and if not we can produce in constant time a ``witness'' of this property, by exhibiting
an edge that is contained in all paths from $v$ to $w$ or in all paths from $w$ to $v$.

\smallskip
\noindent{\bf Our Technical Contributions.}
Our fist contribution is to dynamize the recent data structure in~\cite{2C:GIP:arXiv}, which hinges on two main building blocks: dominator trees and loop nesting trees (which are reviewed in Section~\ref{sec:definitions}).
While it is known how to maintain efficiently dominator trees in the incremental setting~\cite{dyndom:2012},
the incremental maintenance of loop nesting trees is a challenging task.
Indeed, loop nesting trees are heavily based on depth-first search, and   maintaining efficiently a dfs tree of a digraph under edge insertions has been an elusive goal: no efficient solutions are known up to date, and incremental algorithms are available only in the restricted case of DAGs~\cite{FGN97}.
To overcome these inherent difficulties, we manage to define a new notion of strongly connected subgraphs of a digraph, which is still relevant for our problem and is independent of
depth first search.
This new notion is based on some specific nesting loops, which define a laminar family. One of the technical contributions of this paper is to show how to maintain efficiently this family of nesting loops during edge insertions. We believe that this result might be of independent interest, and perhaps it might shed further light to the incremental dfs problem on general digraphs.

Our second contribution, the incremental maintenance of the $2$-vertex-connected components of a digraph, completes the picture on incremental $2$-connectivity on digraphs by complementing the recent $2$-edge connectivity results of \cite{GIN16:ICALP}.
We remark that $2$-vertex connectivity in digraphs is much more difficult than $2$-edge connectivity, since it is plagued with several degenerate special cases, which are not only more tedious but also more cumbersome to deal with. For instance, $2$-edge-connected components partition the vertices of a digraph, while $2$-vertex-connected components do not.
Furthermore, two vertices $v$ and $w$ are $2$-edge-connected if and only if the removal of any edge leaves $v$ and $w$ in the same SCC. Unfortunately, this property no longer holds for $2$-vertex connectivity, as for instance two mutually adjacent vertices are always left in the same strongly connected component by the removal of any other vertex, but they are not necessarily $2$-vertex-connected.

\section{Dominator trees, loop nesting trees and auxiliary components}
\label{sec:definitions}

In this section we review the two main ingredients used by the recent framework in~\cite{2C:GIP:arXiv}: dominator trees and loop nesting trees. As already mentioned in the introduction, one of the main technical difficulties behind our approach is that the incremental maintenance of loop nesting trees seems an elusive goal. We then review the notion of auxiliary components, which is used in Section \ref{sec:canonical-loop-nesting} to overcome this difficulty.
We remark that both dominator trees and auxiliary components can be maintained efficiently during edge insertions \cite{dyndom:2012,GIN16:ICALP}.

Throughout, we assume that the reader is familiar with standard graph terminology, as contained
for instance in~\cite{clrs}.
Given a rooted tree, we denote by $T(v)$ the
set of descendants of $v$ in $T$.
Given a digraph $G=(V,E)$, and a set of vertices $S \subseteq V$, we denote by $G[S]$ the subgraph induced by $S$.
Moreover, we use $V(S)$ and $E(S)$ to refer to the vertices of $S$ and to the edges adjacent to $S$, respectively.
The \emph{reverse digraph} of $G$, denoted by $G^R=(V, E^R)$, is obtained
by reversing the direction of all edges.
A \emph{flow graph} $F$ is a directed graph (digraph) with a distinguished start vertex $s \in V(F)$, where all vertices in $V(F)$ are reachable from $s$ in $F$.
We denote by $G_s$ the subgraphs of $G$ induced by the vertices that are reachable from $s$; that is, $G_s$ is a flow graph with start vertex $s$.
Respectively, we denote by $G^R_s$ the subgraphs of $G^R$ induced by the vertices that are reachable from $s$.
If $G$ is strongly connected, all vertices are reachable from $s$ and reach $s$, so we can view both $G$ and $G^R$ as flow graphs with start vertex $s$.

\smallskip
\noindent{\bf Dominator trees.}
A vertex $v$ is a \emph{dominator} of a vertex $w$ ($v$ \emph{dominates} $w$) if every path from $s$ to $w$ contains $v$.
%
The dominator relation in $G$ can be represented by a tree rooted at $s$, the \emph{dominator tree} $D$, such that $v$ dominates $w$ if and only if $v$ is an ancestor of $w$ in $D$.
See Figure \ref{figure:dominators-example}.
We denote by $dom(w)$  the set of vertices that dominate $w$. Also, we let $d(w)$ denote the parent of a vertex $w$ in $D$.
Similarly, we can define the dominator relation in the flow graph $G_s^R$, and let $D^R$ denote the dominator tree of $G_s^R$, and $d^R(v)$ the parent of $v$ in $D^R$.
The dominator tree of a flow graph can be computed in linear time, see, e.g.,~\cite{domin:ahlt,dominators:bgkrtw}.
%
%
%
An edge $(u,v)$ is a \emph{bridge} of a flow graph $G_s$ if all paths from $s$ to $v$ include $(u,v)$.\footnote{Throughout the paper, to avoid confusion we use consistently the term  \emph{bridge} to refer to a bridge of a flow graph and the term \emph{strong bridge} to refer to a strong bridge in the original graph.}
Let $s$ be an arbitrary start vertex of $G$.
As shown in \cite{Italiano2012}, an edge $e=(u,v)$ is strong bridge of $G$ if and only if it is either a bridge of $G_s$ or a bridge of $G_s^R$.
As a consequence,
all the strong bridges of
$G$ can be obtained from the bridges of the flow graphs $G_s$ and $G_s^R$, and thus there can be at most $2(n-1)$ strong bridges overall.
\begin{figure}[t!]
	\begin{center}
		\vspace{-.9cm}
		\includegraphics[trim={13cm 3cm 0cm 3cm}, clip=true, width=0.7\textwidth]{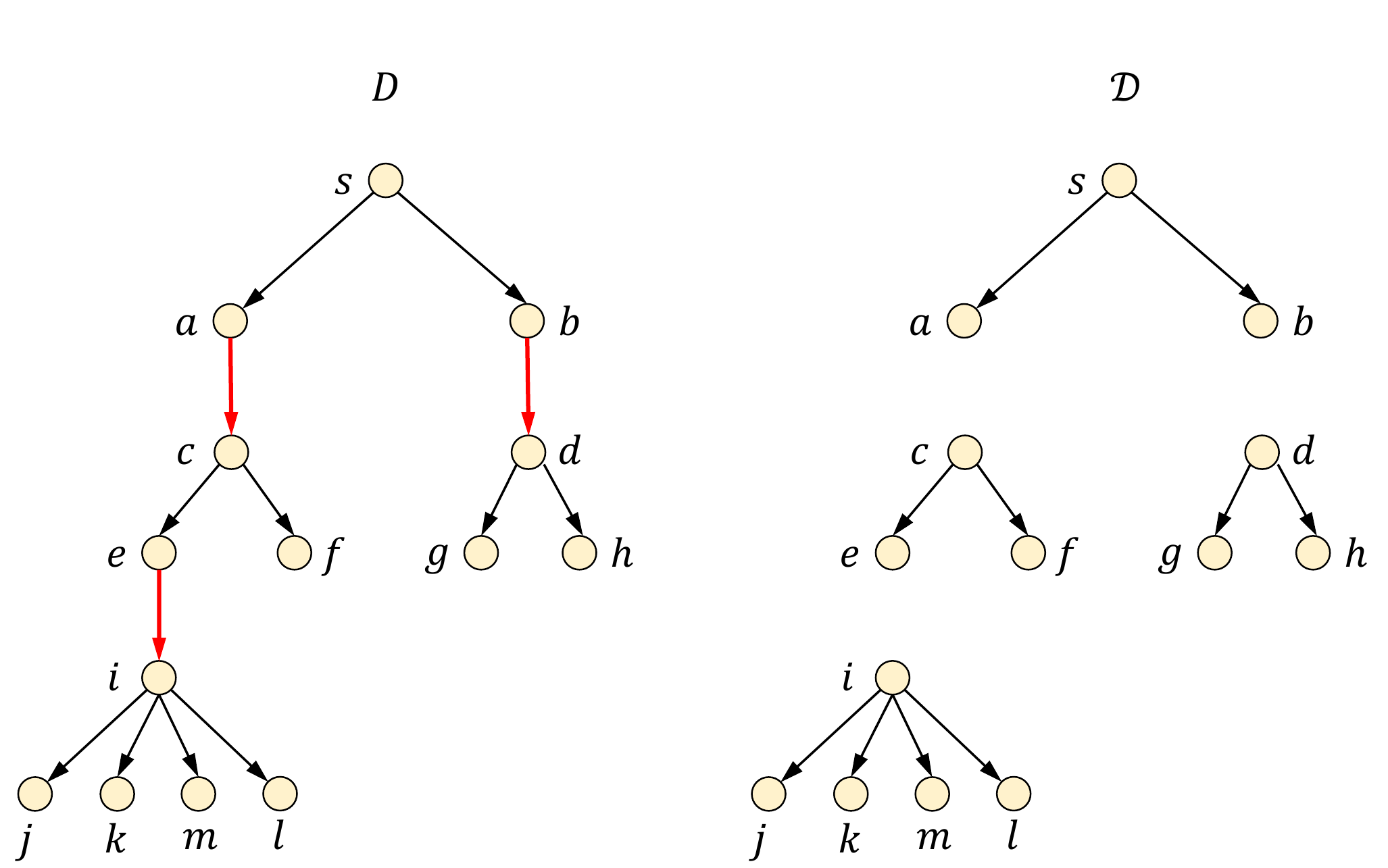}	
		\caption{The dominator tree $D$ (on the left) of the digraph of Figure \ref{figure:SCC-example} with start vertex $s$, and its bridge decomposition $\mathcal{D}$ (on the right).
		}
		\label{figure:dominators-example}
	\end{center}
\end{figure}
After deleting from the dominator trees $D$ and $D^R$ respectively the bridges of $G_s$ and $G_s^R$, we obtain the \emph{bridge decomposition}  of $D$ and $D^R$ into forests $\mathcal{D}$ and $\mathcal{D}^R$.
Throughout the paper, we denote by $D_u$ (resp., $D_u^R$) the tree in $\mathcal{D}$ (resp., $\mathcal{D}^R$) containing vertex $u$, and by $r_u$ (resp., $r^R_u$) the root of $D_u$ (resp., $D_u^R$).
The following lemma from \cite{2ECB} holds for a flow graph $G_s$ of a strongly connected digraph $G$ (and hence also for the flow graph $G_s^R$ of $G^R$).

\begin{lemma}
\label{lemma:partition-paths} \emph{(\cite{2ECB})}
Let $G$ be a strongly connected digraph and let $(u,v)$ be a strong bridge of $G$. Also, let $D$ 
be the dominator tree of the
flow graph $G_s$,
for an arbitrary start vertex $s$.
Suppose $u=d(v)$. Let $w$ be any vertex that is not a descendant of $v$ in $D$. Then there is path from $w$ to $v$ in $G$ that does not contain any proper descendant of $v$ in $D$. Moreover, all simple paths in $G$ from $w$ to any descendant of $v$ in $D$ must contain the edge $(d(v),v)$.
\end{lemma}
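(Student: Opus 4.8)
The plan is to exploit the fact that $u = d(v)$ together with the hypothesis that $(u,v)$ is a strong bridge, which (by the characterization from \cite{Italiano2012}) means $(u,v)$ is in particular a bridge of the flow graph $G_s$: every path from $s$ to $v$ passes through the edge $(u,v)$. First I would establish the existence of a path from $w$ to $v$ avoiding all proper descendants of $v$ in $D$. Since $G$ is strongly connected, $w$ reaches $v$; take any simple path $P$ from $w$ to $v$ and let $x$ be the \emph{last} vertex on $P$ that is not a proper descendant of $v$ in $D$ (such a vertex exists because $w$ itself qualifies by assumption, and $v$ is not a proper descendant of itself). If $x = v$ we are done. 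Otherwise the successor $y$ of $x$ on $P$ is a proper descendant of $v$, so the path consisting of the edge $(x,y)$ followed by the suffix of $P$ from $y$ to $v$ would be a path from outside $D(v)$ into $D(v)$; the key observation is that any such path, when extended backwards to $s$ (using strong connectivity, a path from $s$ to $x$ that stays outside $D(v) \setminus \{v\}$ — which exists by induction on the tree, or directly because $x \notin D(v)$ and hence is reached without entering the subtree) yields a path from $s$ to $y \in D(v)$ not using the edge $(u,v)$, contradicting that $(u,v)$ is a bridge of $G_s$. Hence $x=v$, and the prefix of $P$ from $w$ to $v$ is the desired path avoiding proper descendants of $v$.

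For the second statement, I would argue by contradiction: suppose $Q$ is a simple path in $G$ from $w$ to some descendant $z$ of $v$ in $D$ that does not contain the edge $(d(v),v) = (u,v)$. Prepend to $Q$ a path from $s$ to $w$; since $w$ is not a descendant of $v$, I claim this prefix can be chosen to avoid all proper descendants of $v$ (again because $v$ dominates exactly its subtree, so the only way to reach $D(v)\setminus\{v\}$ from $s$ is through $v$, and any detour through $D(v)$ can be short-circuited using the structure of the dominator tree — or more cleanly, take a shortest path from $s$ to $w$ in $G_s$, which cannot enter $D(v)$ and leave it again without passing through $v$, and since it ends at $w \notin D(v)$ it need not enter $D(v)$ at all). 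Concatenating with $Q$ produces a walk from $s$ to $z$; since every path from $s$ to $z$ must pass through $v$ (as $v$ dominates $z$), and the first time this walk enters $\{v\} \cup (D(v)\setminus\{v\})$ it must do so at $v$, there is a sub-walk from $s$ to $v$ that does not use the edge $(u,v)$. Extracting a simple path from it gives a path from $s$ to $v$ avoiding $(u,v)$, again contradicting that $(u,v)$ is a bridge of $G_s$.

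The main obstacle, and the step deserving the most care, is the repeated claim that one can route from $s$ to a vertex $w \notin D(v)$ (or to the entry point $x$) \emph{without} passing through any proper descendant of $v$. This is intuitively clear from the semantics of the dominator tree — $v$ dominates precisely the vertices in $D(v)$, so a path from $s$ that ends outside $D(v)$ can be taken to stay outside $D(v)\setminus\{v\}$ entirely — but it must be justified rigorously, e.g.\ by taking a shortest $s$-to-$w$ path and observing that if it entered $D(v)\setminus\{v\}$ it would have to re-enter through $v$ to eventually leave (since $v$ dominates all of $D(v)$), contradicting minimality; or by a direct induction on the depth of $w$ in $D$. Once this routing lemma is in place, both halves of the statement follow from the single fact that $(u,v)$ is a bridge of the flow graph $G_s$, i.e.\ lies on every $s$-to-$v$ path. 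I would also double-check the degenerate case $w = s$, which is consistent since $s$ is never a proper descendant of $v$.
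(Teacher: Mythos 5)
Since the paper cites this lemma from \cite{2ECB} without reproducing a proof, there is no in-paper argument to compare against; I will evaluate your proposal on its own.

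Your overall strategy is the right one: reduce both claims to the fact that $(u,v)$ is a bridge of $G_s$, and use the routing observation that any vertex outside $D(v)$ can be reached from $s$ along a path that avoids $D(v)$. You also correctly identify the routing lemma as the step deserving care, and your justification of it is fine (since $v$ does not dominate $w$, some $s$-to-$w$ path avoids $v$, and then every vertex on that path also fails to be dominated by $v$, hence lies outside $D(v)$). However, two steps need repair.

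First, the opening claim that ``$(u,v)$ is a strong bridge with $u = d(v)$, which by \cite{Italiano2012} means $(u,v)$ is a bridge of $G_s$'' is a non sequitur as written. The cited characterization only gives that a strong bridge is a bridge of $G_s$ \emph{or} of $G_s^R$. You need the extra observation that $u=d(v)$ rules out the second alternative alone: if some $s$-to-$v$ path avoided $(u,v)$, it would still pass through $u$ (since $u$ dominates $v$), and its $u$-to-$v$ suffix would be a $u$-to-$v$ path in $G\setminus(u,v)$; since a simple $v$-to-$u$ path also cannot use $(u,v)$, $u$ and $v$ would be strongly connected in $G\setminus(u,v)$, hence $G\setminus(u,v)$ would be strongly connected, contradicting that $(u,v)$ is a strong bridge.

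Second, the extremal choice in the first claim is set up backwards and does not actually prove anything. You define $x$ as the \emph{last} vertex of $P$ that is not a proper descendant of $v$; but the endpoint $v$ itself is not a proper descendant of $v$, so $x=v$ always, the ``otherwise'' branch is vacuous, and ``$x=v$ implies we are done'' is unjustified because the prefix of $P$ before its endpoint is unconstrained by the definition of $x$. The correct framing is to take the \emph{first} vertex $x$ of $P$ that lies in $D(v)$. Its predecessor $p$ lies outside $D(v)$; by the routing lemma there is an $s$-to-$p$ path avoiding $D(v)$, and concatenating with $(p,x)$ gives an $s$-to-$x$ walk avoiding $v$ unless $x=v$. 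Since $v$ dominates every vertex of $D(v)$, this forces $x=v$. As $P$ is simple, $v$ occurs only at the end, so $P$ meets $D(v)$ only at its last vertex and therefore avoids all proper descendants of $v$. The same ``first entry into $D(v)$'' argument also yields the second claim directly and more cleanly than your detour through a walk from $s$: the first vertex of any simple $w$-to-$z$ path that lies in $D(v)$ must be $v$, reached via the unique entering edge $(u,v)$, so the path contains $(u,v)=(d(v),v)$.
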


\vspace{-.07cm}

\noindent{\bf Loop nesting forests.}
Let $G$ be a digraph, and $G_s$ the flow graph with an arbitrary start vertex $s$.
A \emph{loop nesting forest} represents a hierarchy of strongly connected subgraphs of $G_s$~\cite{st:t}, 
defined with respect to a dfs tree $T$ of $G_s$, rooted at $s$, as follows.
For any vertex $u$, $\mathit{loop}(u)$ is the set of all descendants $x$ of $u$ in $T$ such that there is a path from $x$ to $u$ in $G$ containing only descendants of $u$ in $T$.
Any two vertices in $\mathit{loop}(u)$ reach each other.
Therefore, $\mathit{loop}(u)$ induces a strongly connected subgraph of $G$; it is the unique maximal set of descendants of $u$ in $T$ that does so.
The $\mathit{loop}(u)$ sets form a laminar family of subsets of $V$:
for any two vertices $u$ and $v$, $\mathit{loop}(u)$ and $\mathit{loop}(v)$ are either disjoint or nested.
The
\emph{loop nesting forest} $H$ of $G_s$, with respect to $T$, is the forest in which the parent of any vertex $v$, denoted by $h(v)$, is the nearest proper ancestor $u$ of $v$ in $T$ such that $v \in \mathit{loop}(u)$ if there is such a vertex $u$, and null otherwise.
Then $\emph{loop}(u)$ is the set of all descendants of vertex $u$ in $H$, which we also denote as $H(u)$ (the subtree of $H$ rooted at vertex $u$).
A loop nesting forest can be computed in linear time~\cite{dominators:bgkrtw,st:t}.
When $G$ is strongly connected,
each vertex is contained in a loop, and $H$ is a tree, rooted at $s$.
Therefore, we refer to $H$ as the \emph{loop nesting tree} of $G_s$ (see Figure~\ref{figure:loop-nesting-example}).

\begin{figure}[t!]
\begin{center}
\vspace{-.9cm}
\includegraphics[trim={1cm 5cm 1cm 0cm}, clip=true, width=\textwidth]{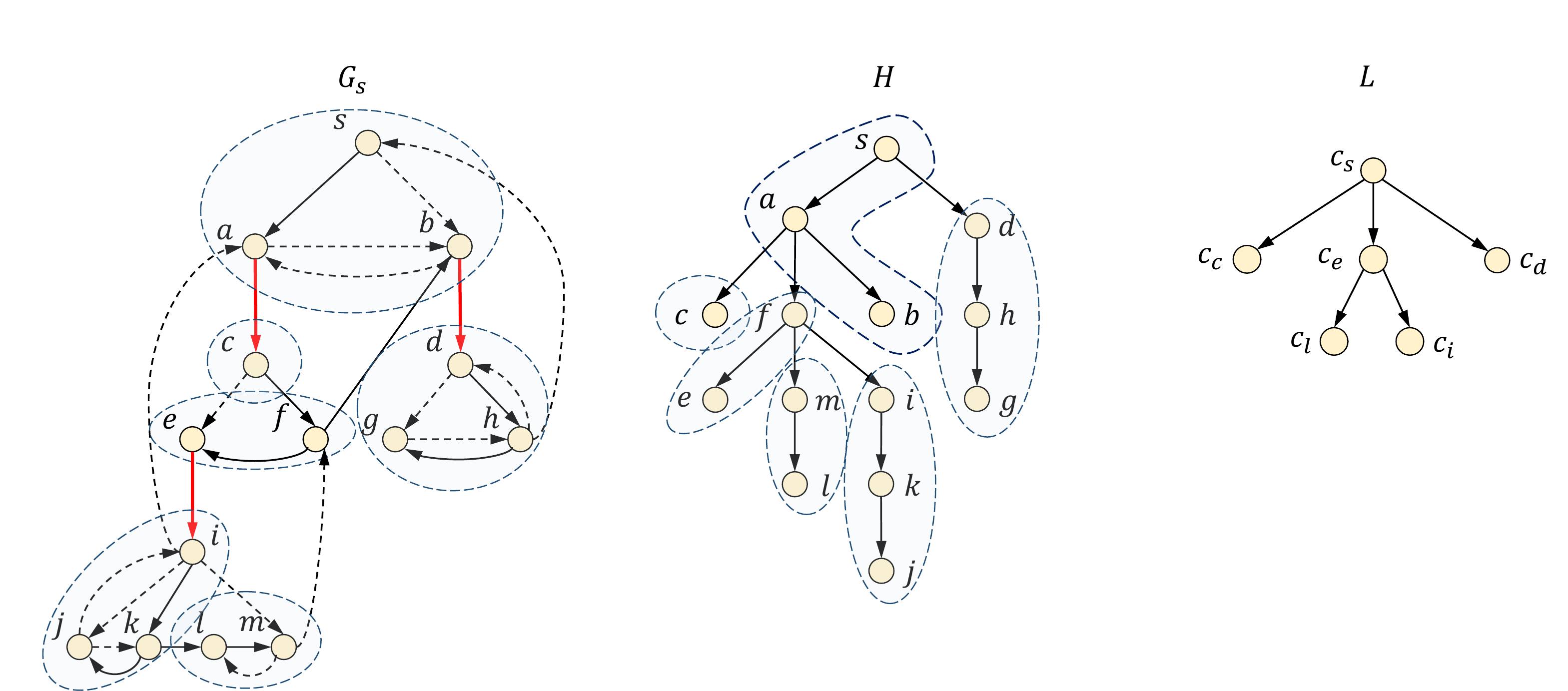}	
\caption{The flow graph $G_s$ of the graph of Figure \ref{figure:SCC-example} with solid dfs edges (left); the loop nesting tree $H$ of $G_s$ generated by the dfs traversal on the left (middle); the \newLNT{} tree $L$ of $G_s$ (right). The bridges of $G_s$ are shown red. The grouped vertices in both $G_s$ and $L$ represent the auxiliary components of $G_s$.}
\label{figure:loop-nesting-example}
\end{center}
\vspace{-.7cm}
\end{figure}

\smallskip
\noindent{\bf Auxiliary components.}
Let $G_s$ be a flow graph and $D$ and $\mathcal{D}$ be the dominator tree and the bridge decomposition of $G_s$, respectively.
Let $e=(u,v)$ be a bridge of the flow graph $G_s$.
We say that an SCC $C$ in $G[D(v)]$ is an \emph{$e$-dominated component} of $G$.
We also say that $C \subseteq V$ is a \emph{bridge-dominated component} if it is an $e$-dominated component for some bridge $e$:
bridge-dominated components form a laminar family \cite{GIN16:ICALP}.
An \emph{auxiliary component} of $G_s$ is a maximal subset of vertices $C \cap D_v$ such that $C$ is a subset of a $(d(r_v),r_v)$-dominated component.
Each auxiliary component $C$ is represented by an arbitrarily chosen vertex $u\in C$, which we call the \emph{\cvertex{} vertex} of $C$.
For each vertex $v\in C$, we refer to the \cvertex{} vertex of $C$ by $c_v$.
That is, if $u$ is the \cvertex{} vertex of an auxiliary component then $c_u=u$.
Following the bridge decomposition $\mathcal{D}$ of the dominator tree $D$ of $G_s$, the auxiliary components are defined with respect to the start vertex $s$.

\section{\NewLNT{} forest}
\label{sec:canonical-loop-nesting}
In this section, we introduce the new notion of \emph{\newLNT{} forest}, which,
differently from loop nesting forest, can be maintained efficiently during edge insertions, as we show in Section \ref{sec:canonical-loop-update}.
Given  a \cvertex{} vertex $v\not = c_s$, we define the \emph{hyperloop of $v$}, and denote it by $hloop(v)$, as the set of \cvertex{} vertices that are in the same $(d(r_v),r_v)$-dominated component as $v$.
As a special case, all \cvertex{} vertices that are strongly connected to $s$ are in the hyperloop $hloop(c_s)$.
It can be shown that
hyperloops form a laminar family of subsets of $V$, with respect to the start vertex $s$: for any two \cvertex{} vertices $u$ and $v$, $hloop(u)$ and $hloop(v)$ are either disjoint or nested (i.e., one contains the other).
This property allows us to define the \emph{\newLNT{} forest} $L$ of $G_s$ as follows.
The parent $\ell(v)$ of a \cvertex{} vertex $v$ in $L$ is the (unique) \cvertex{} vertex $u$, $u\notin D(r_v)$, with the largest depth in $D$, such that $v \in hloop(u)$.
If there is no vertex $u\notin D(r_v)$, such that $v \in hloop(u)$, then  $\ell(v) = \emptyset$; notice that in this case $v$ is not strongly connected to $s$ as well.
See Figure \ref{figure:loop-nesting-example}.
Then, $hloop(u)$ is the set of all descendants of a \cvertex{} vertex $u$ in $L$, which we also denote as $L(u)$ (the subtree of $L$ rooted at vertex $u$).
Similarly to the loop nesting forest,
the \newLNT{} forest of a strongly connected digraph is a tree.

\begin{figure}[t!]
	\begin{center}
		\includegraphics[trim={0cm 0cm 0cm 2cm}, clip=true, width=1.0\textwidth]{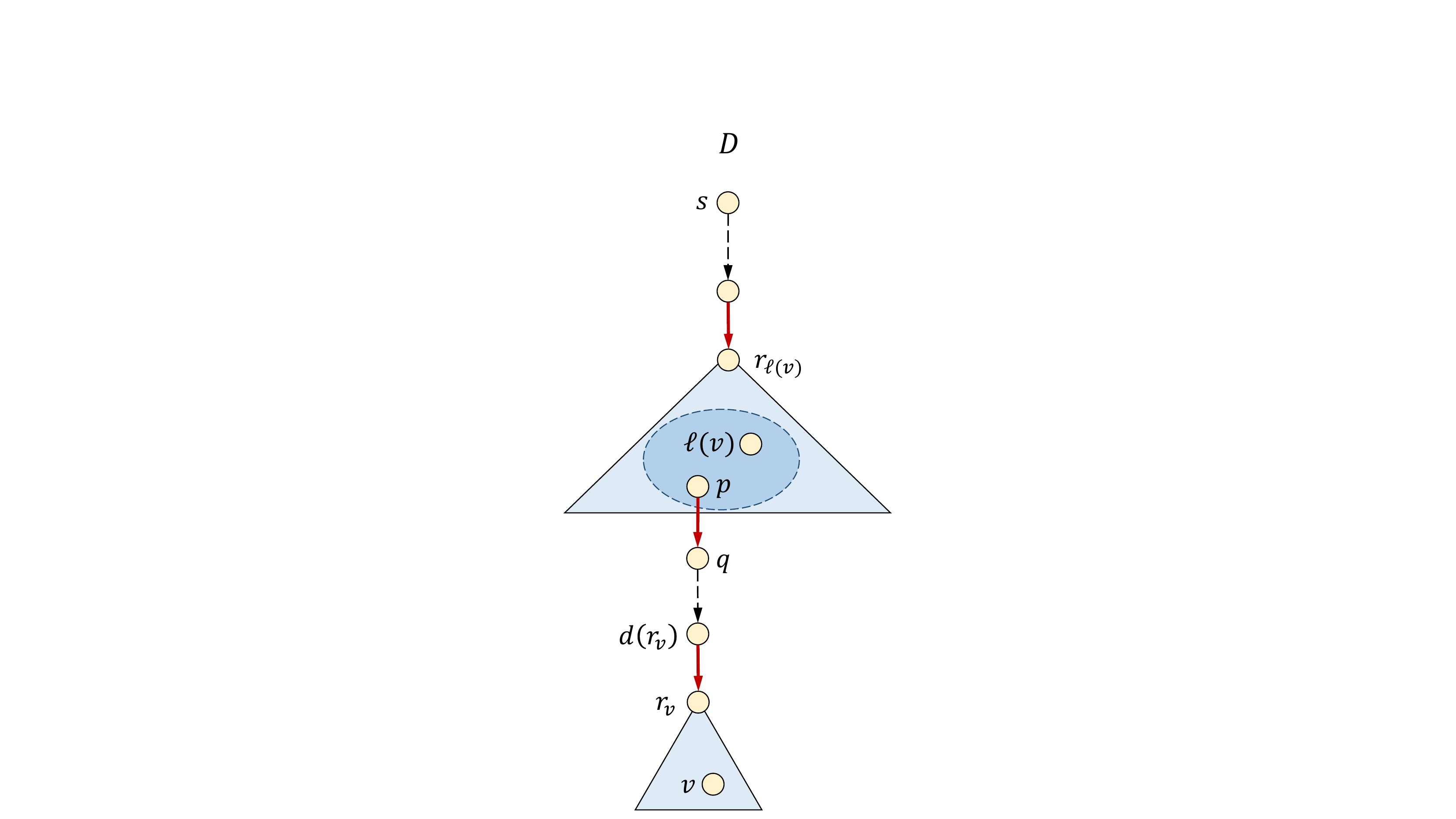}	
		\caption{A representation of the relation between a \cvertex{} vertex $v$ and its parent in $L$ with respect to the bridge decomposition of $D$.}
		\label{figure:canonical-loop-nesting-tree-parent-relation}
	\end{center}
	\vspace{-.7cm}
\end{figure}

We begin the study of the hyperloop nesting forest
by showing that it is unique, and thus, depends solely on the structure of the graph.
(We consider a fixed choice of the canonical vertices of the auxiliary components.)

\begin{lemma}
The hyperloop nesting forest of a flowgraph $G_s$ is unique.
\end{lemma}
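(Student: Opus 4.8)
The plan is to show that the hyperloop nesting forest $L$ is completely determined by the parent function $\ell(\cdot)$, and that $\ell(v)$ is itself forced by the graph structure (given the fixed choice of canonical vertices). Since $L$ is by definition the forest whose parent pointers are the $\ell(v)$'s, it suffices to argue that each $\ell(v)$ is uniquely determined; then $L$ as a whole is unique. So first I would fix a canonical vertex $v \neq c_s$ and recall that $\ell(v)$ is declared to be the canonical vertex $u \notin D(r_v)$ of maximum depth in $D$ with $v \in hloop(u)$. The only thing that could threaten uniqueness is the existence of two distinct such vertices $u_1, u_2$ of the same (maximum) depth, or an ambiguity hiding in the sets $hloop(\cdot)$ themselves. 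The sets $hloop(u)$ are themselves defined purely structurally (the canonical vertices lying in the same $(d(r_u), r_u)$-dominated component as $u$), so the second source of ambiguity does not arise; everything reduces to showing the ``maximum depth'' vertex is unique.

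Next I would establish that, among all candidate ancestors $u$ with $v \in hloop(u)$ and $u \notin D(r_v)$, the candidates are totally ordered by the ancestor relation in $D$. The key tool is the laminarity of hyperloops stated just before the lemma: for canonical vertices $u_1, u_2$, the sets $hloop(u_1)$ and $hloop(u_2)$ are disjoint or nested. If $v \in hloop(u_1) \cap hloop(u_2)$, disjointness is impossible, so one contains the other, say $hloop(u_2) \subseteq hloop(u_1)$; since $u_2 \in hloop(u_2)$ this gives $u_2 \in hloop(u_1)$, i.e. $u_2$ lies in the $(d(r_{u_1}), r_{u_1})$-dominated component of $u_1$. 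By Lemma~\ref{lemma:partition-paths} applied to the strong bridge $(d(r_{u_1}), r_{u_1})$ — or directly from the fact that a $(d(r_{u_1}),r_{u_1})$-dominated component is an SCC inside $G[D(r_{u_1})]$ — every vertex of that component is a descendant of $r_{u_1}$ in $D$, hence $u_2 \in D(r_{u_1})$, and therefore $r_{u_2}$ (the root of $u_2$'s tree in the bridge decomposition) is a descendant of $r_{u_1}$; consequently $D(r_{u_2}) \subseteq D(r_{u_1})$. Combined with the constraint $u_1 \notin D(r_v)$ and $u_2 \notin D(r_v)$, this shows the candidate set is linearly ordered by $D$-ancestry: the one with strictly greater depth is an ancestor, strictly below $r_v$, of the other. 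Hence there is a unique candidate of maximum depth, so $\ell(v)$ is well-defined and unique.

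Finally I would assemble these observations: for every canonical vertex $v$, either the candidate set is empty, in which case $\ell(v) = \emptyset$ is forced, or it is nonempty and linearly ordered by $D$-ancestry with a unique deepest element, in which case $\ell(v)$ is that element. In both cases $\ell(v)$ depends only on $G$, $s$, the dominator tree $D$ and its bridge decomposition, and the fixed canonical vertices — all of which are themselves determined by $G_s$. Since $L$ is exactly the forest encoded by these parent pointers, $L$ is unique. The main obstacle I anticipate is the step showing the candidates form a chain in $D$: it requires carefully chaining laminarity of hyperloops together with the structural fact that a bridge-dominated component lies entirely within a single subtree $D(r_u)$ of the bridge decomposition, i.e. that membership $u_2 \in hloop(u_1)$ forces $D(r_{u_2}) \subseteq D(r_{u_1})$. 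Everything else is bookkeeping once that containment is in hand.
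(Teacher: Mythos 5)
Your proof is correct and arrives at the same structural fact that the paper relies on, namely that all candidate roots $r_u$ lie on $D[s,v]$ and hence form a chain, so the deepest one together with its unique canonical vertex determines $\ell(v)$. But you reach this via a different route. The paper's argument is more direct: since $v$ itself belongs to the $(d(r_u),r_u)$-dominated component, $v\in D(r_u)$, so every candidate root is an ancestor of $v$ and the chain is immediate, with no appeal to laminarity. You instead invoke laminarity of hyperloops to deduce $u_2\in hloop(u_1)$ and thence $u_2\in D(r_{u_1})$; this is sound but is a detour, since the membership $v\in D(r_u)$ already yields the chain for free. One small inaccuracy worth flagging: from $D(r_{u_2})\subseteq D(r_{u_1})$ you conclude that ``the one with strictly greater depth is an ancestor, strictly below $r_v$, of the other,'' but what you have actually established is a chain among the \emph{roots} $r_{u_i}$ (all of which are strict ancestors of $r_v$, i.e.\ strictly above it), not that the candidates $u_1,u_2$ themselves are $D$-comparable -- the canonical vertex sitting in $D_{r_{u_1}}$ need not be an ancestor of the one in $D_{r_{u_2}}$, and can even be the deeper of the two. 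This does not damage the argument once ``largest depth'' in the definition of $\ell(v)$ is read as the candidate with the deepest root $r_u$ (equivalently largest level), which is how the paper uses it: the roots are totally ordered, and each root $r$ contributes at most one candidate, namely the canonical vertex of the auxiliary component $C\cap D_r$ for the $(d(r),r)$-dominated component $C$ containing $v$.
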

\begin{proof}
Let $v$ be a canonical vertex of $G_s$.
By the definition of the hyperloop nesting forest $L$, the parent $\ell(v)$ of $v$ in $L$ is the canonical vertex $u \not \in D(r_v)$ with maximum depth in $D$, such that $u$ and $v$ are in the same $(d(r_{u}),r_{u})$-dominated component. Then, the fact that both $u$ and $v$ are descendants of $r_u$ implies that $u$ is unique and so the lemma follows.
\proofend
\end{proof}

Given a vertex $u$ in a flow graph $G_s$, we define its level, denoted by $level(u)$, to be the number of bridges $(v,w)$ of $G_s$ such that $w$ is an ancestor of $u$ in $D$.
In other words, the level of $u$ equals the number of strong bridges that appear in all paths from $s$ to $u$ in $G_s$.
As a result, all vertices in the same tree of the bridge decomposition have the same level.
In the next lemma we show that each \cvertex{} vertex has at most one ancestor in $L$ at each level.

\begin{lemma}
	\label{lemma:unique-level}
	Let $G_s$ a flow graph and let $u$ be a \cvertex{} vertex of $G_s$.
	All ancestors of $u$ in the \newLNT{} forest have unique level.
\end{lemma}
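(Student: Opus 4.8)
The plan is to prove the slightly stronger statement that along the unique path in $L$ from $u$ up to the root of its tree, the quantity $level(\cdot)$ is \emph{strictly decreasing} at every step. Since levels are non‑negative integers, this immediately gives that $u$ together with all its ancestors in $L$ have pairwise distinct levels, which is what the lemma asserts.

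First I would isolate a single parent step in $L$. Let $v$ be a \cvertex{} vertex with $\ell(v)=w\neq\emptyset$. By the definition of the \newLNT{} forest, $w\notin D(r_v)$ and $w$ lies in the same $(d(r_w),r_w)$-dominated component as $v$; since such a component is by definition an SCC of $G[D(r_w)]$, both $w$ and $v$ are descendants of $r_w$ in $D$. Hence $r_w$ and $r_v$ are both ancestors of $v$ in $D$, so they are comparable in $D$. I would then rule out the two cases "$r_w=r_v$" and "$r_w$ is a proper descendant of $r_v$": in the first case $w$ and $v$ lie in the same tree $D_v\subseteq D(r_v)$ of the bridge decomposition, and in the second case $w$, being a descendant of $r_w$, is a descendant of $r_v$; either way $w\in D(r_v)$, contradicting $w\notin D(r_v)$. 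Therefore $r_w$ is a \emph{proper} ancestor of $r_v$ in $D$.

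Next I would convert this into a strict inequality of levels. Recall that $level(x)$ counts the bridges $(a,b)$ of $G_s$ with $b$ an ancestor of $x$ in $D$, and that all vertices of a bridge‑decomposition tree share the same level, so $level(w)=level(r_w)$ and $level(v)=level(r_v)$. Because $r_w$ is an ancestor of $r_v$, every bridge counted by $level(r_w)$ is also counted by $level(r_v)$. Moreover $r_v\neq s$ (it has the proper ancestor $r_w$), so $(d(r_v),r_v)$ is a bridge of $G_s$; it is counted by $level(r_v)$ but not by $level(r_w)$, since $r_v$ is a proper descendant of $r_w$ and hence not an ancestor of $r_w$. Thus $level(w)=level(r_w)<level(r_v)=level(v)$.

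Finally I would conclude by iteration: writing the ancestors of $u$ in $L$ as $u=u_0,\ u_1=\ell(u_0),\ u_2=\ell(u_1),\dots$, the step above yields $level(u_0)>level(u_1)>level(u_2)>\cdots$, so in particular all ancestors of $u$ in $L$ have distinct levels. The only delicate point — the main obstacle — is the bookkeeping in the first step: carefully combining the definitions of $hloop$, of the bridge decomposition, and of \cvertex{} vertices to be certain that $r_{\ell(v)}$ is a \emph{proper} ancestor of $r_v$ in $D$. Once that is established, the level inequality and the induction are routine.
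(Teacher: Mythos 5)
Your proof is correct, and it takes a somewhat different route from the paper's. The paper argues by contradiction: assuming two distinct $L$-ancestors $w,v$ of $u$ share a level, it concludes $r_w=r_v$, then that $w$ and $v$ are strongly connected in $G[D(r_w)]$ (since $u\in hloop(w)\cap hloop(v)$), and therefore that $w$ and $v$ lie in the same auxiliary component, forcing $c_w=c_v$ and contradicting that both are \cvertex{} vertices. Your argument instead proves the stronger, direct statement that $level(\ell(v))<level(v)$ for every single parent step, by showing $r_{\ell(v)}$ is a proper ancestor of $r_v$ in $D$ and then counting the extra bridge $(d(r_v),r_v)$. This buys you a cleaner global picture — $L$-ancestry strictly decreases level — and it avoids invoking the auxiliary-component machinery entirely, relying only on the bridge decomposition and the hyperloop definition. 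It also implicitly supplies a detail the paper leaves unargued, namely why $r_w$ and $r_v$ are ancestors of $u$ in $D$ in the first place. The paper's version is shorter but leans on the auxiliary-component definition to produce the contradiction; yours is slightly longer but more self-contained and yields strict monotonicity, which is in fact the property used later (e.g., in Lemma~\ref{lem:new-parent-in-h-3} and Lemma~\ref{lemma:new-parent-in-H} one repeatedly compares levels of ancestors). Both are sound.
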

\begin{proof}
	Let $w$ and $v$ be two distinct ancestors of $u$ in $L$ such that $level(w) = level(v)$.
	By the definition of the \newLNT{} forest both $r_w$ and $r_v$ are ancestors of $u$ in $D$.
	Then, $r_w = r_v$.
	Assume by contradiction that $u$ is strongly connected with both $w$ and $v$ in $G[D(r_w)]$.
	Then $w$ and $v$ are strongly connected in $G[D(r_w)]$.
	By the definition of the auxiliary components, $w$ and $v$ are in the same auxiliary component, and thus $c_w = c_v$.
	A contradiction to the fact that both $w$ and $v$ are \cvertex{} vertices.
	The lemma follows. \proofend
\end{proof}

The following lemma characterizes the relationship between the loop nesting forest $H$ and the \newLNT{} forest $L$ of a flow graph $G_s$.
More specifically,  it
shows that $L$ can be obtained from $H$ by contracting all the vertices of each auxiliary component into their \cvertex{} vertex.
This yields immediately a linear-time algorithm to compute the \newLNT{} forest of a flow graph $G_s$: we first compute a loop nesting forest $H$ of $G_s$  \cite{dominators:bgkrtw,st:t} and then contract each vertex $v$ to $c_v$ in $H$.
In the following we denote by $h_v$ the unique ancestor of $v$ in $H$, such that $h_v \in D_v$ and $h(h_v) \notin D_v$.
If $v\in D_s$, it follows that $h_v = s$.
We can compute $h_v$, for all $v\in V$ in $O(n)$ time \cite{2C:GIP:arXiv}.

\begin{lemma}
	\label{lemma:loops-nestings-relation}
	For every vertex $v$, the following hold:
	\begin{itemize}
		\item The canonical vertices of $v$ and $h_v$ are the same.
		\item The canonical vertex of the parent of $c_v$ in $L$ is the canonical vertex of the parent of $h_v$ in $H$, including the case where $h(h_v) = \emptyset$.
	\end{itemize}
\end{lemma}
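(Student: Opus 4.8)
I would derive both statements from two structural facts about how the loop nesting forest $H$ interacts with the bridge decomposition. The \emph{first fact} is that for every vertex $y$ one has $\mathit{loop}(y)\subseteq D(r_y)$, hence — since $G[\mathit{loop}(y)]$ is strongly connected — $\mathit{loop}(y)$ lies inside a single SCC of $G[D(r_y)]$, i.e.\ a single $(d(r_y),r_y)$‑dominated component. Only the containment is nontrivial, and this is the one place Lemma~\ref{lemma:partition-paths} enters: if some $x\in\mathit{loop}(y)$ were not a descendant of $r_y$ in $D$, take the path from $x$ to $y$ through descendants of $y$ in $T$ provided by the definition of $\mathit{loop}(y)$; since $d(r_y)$ is a proper $D$‑ancestor of $y$ it is a proper $T$‑ancestor of $y$, so that path avoids $d(r_y)$ and therefore the strong bridge $(d(r_y),r_y)$, contradicting Lemma~\ref{lemma:partition-paths} applied with $w=x$ and this strong bridge.

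The \emph{second fact} concerns auxiliary components. Let $A$ be the auxiliary component of $v$, write $C$ for the SCC of $G[D(r_v)]$ with $A=C\cap D_v$, and let $t$ be the first vertex of $C$ visited by the dfs defining $T$. By the standard dfs property, $C\subseteq\mathit{loop}(t)$. Moreover $t\in D_v$ (if $t$ lay in a bridge tree strictly below $D_v$, the head of the bridge just above it would lie in $C$ by the first fact and would be a proper $T$‑ancestor of $t$, contradicting dfs‑minimality of $t$ in $C$), and $h(t)\notin D_v$ (otherwise the first fact gives $\mathit{loop}(h(t))\subseteq C$, so $h(t)\in C$ would beat $t$ in dfs order). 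Together these show that $t$ is the topmost $H$‑ancestor in $D_v$ of every vertex of $A$; in particular $h_v=h_{c_v}=t$. Since $v$ and $h_v=t$ are then in the same auxiliary component, $c_v=c_{h_v}$, which is the first bullet; it also makes the second bullet unambiguous, as $h(h_v)=h(h_{c_v})=h(t)$.

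For the second bullet, first dispose of the case $h(h_v)=\emptyset$: then $h_v=t$ is a root of $H$, which in the strongly connected case forces $t=s$, hence $r_v=s$ and $\ell(c_v)=\emptyset$ since no canonical vertex lies outside $D(r_v)=V$ (the general flow‑graph case, where $v$ need not reach $s$, is analogous). Otherwise set $p:=h(t)=h(h_v)$. Applying the first fact to $p$ with $t\in\mathit{loop}(p)$ shows $r_p$ is a $D$‑ancestor of $t$; as $r_v$ is too and both are bridge heads, $r_p$ and $r_v$ are $D$‑comparable, and $r_p=r_v$ is excluded since $p\notin D_v$, so $r_p\prec_D r_v$ and $D(r_v)\subseteq D(r_p)$. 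Now chain strong connectivity inside $G[D(r_p)]$: $c_v$ and $t$ lie in one SCC (namely $C$, which is contained in that SCC), $t$ and $p$ lie in one SCC (first fact applied to $p$), and $p$ and $c_p$ lie in one SCC (definition of the auxiliary component of $p$). Hence $c_v$ and $c_p$ lie in a common SCC of $G[D(r_{c_p})]$, i.e.\ $c_v\in hloop(c_p)$, and $c_p\notin D(r_v)$; so $c_p$ is one of the canonical vertices competing to be $\ell(c_v)$ — equivalently, $c_p$ is a proper $L$‑ancestor of $c_v$.

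The remaining step, which I expect to be the main obstacle, is to show $c_p$ is the \emph{closest} such, i.e.\ the $L$‑parent of $c_v$. Suppose instead some canonical $u$ satisfied $c_p\prec_L u\prec_L c_v$. Then $u\notin D(r_v)$ and $c_v\in hloop(u)$, so $c_v$ and $u$ lie in a common SCC $C_u$ of $G[D(r_u)]$; moreover $c_p$ and $u$ are distinct $L$‑ancestors of $c_v$, so by the definition of $L$ together with Lemma~\ref{lemma:unique-level} we get $r_p\prec_D r_u\prec_D r_v$ strictly. Let $t_u$ be the first dfs‑visited vertex of $C_u$; since $D(r_v)\subseteq D(r_u)$ we have $C\subseteq C_u\subseteq\mathit{loop}(t_u)$, and $t_u\in D_{r_u}$ by the same minimality argument as in the second fact, so $t_u$ is a \emph{proper} $H$‑ancestor of $t$, hence an $H$‑ancestor of $p=h(t)$ (or equal to $p$). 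By the first fact, $p\in\mathit{loop}(t_u)\subseteq D(r_u)$, so $r_u$ is a bridge head that is a proper $D$‑ancestor of $p$ lying strictly below $r_p$ in $D$ — impossible, since $r_p$ is by construction the deepest bridge head above $p$. Therefore no such $u$ exists, and $\ell(c_v)=c_p=c_{h(h_v)}$, which is the second bullet.
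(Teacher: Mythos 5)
Your proof is correct and, at the level of the key tools, follows the same route as the paper: both rely on Lemma~\ref{lemma:partition-paths} to confine loops inside bridge-decomposition subtrees, on the fact that the dfs-first vertex of a strongly connected set is a $T$-ancestor of the whole set with that set inside its $\mathit{loop}$, and on chaining strong connectivity through $G[D(r_{\cdot})]$ for the various bridge roots. The main organizational difference is that you isolate the containment $\mathit{loop}(y)\subseteq D(r_y)$ and the identification $h_v=t$ (the dfs-minimum of the $(d(r_v),r_v)$-dominated component $C$) as standalone facts and derive everything from them, whereas the paper works directly with $h_v$ and $h(h_v)$ and re-derives these containments inline each time; your version is arguably cleaner, and your closing contradiction (a bridge head strictly between $r_p$ and $p$) is an equivalent rephrasing of the paper's ``$z$ lies strictly between $h_v$ and $h(h_v)$ in $H$.'' Two small imprecisions worth tidying: in your justification of $t\in D_v$, the first fact does not directly place ``the head of the bridge just above it'' in $C$ — the cleaner deduction is $C\subseteq\mathit{loop}(t)\subseteq D(r_t)$, which directly contradicts $v\in C\setminus D(r_t)$ when $t\notin D_v$; and the $h(h_v)=\emptyset$ sub-case deserves the explicit flow-graph argument (take the SCC of $v$ and some $z\notin D(r_v)$, and use its dfs-minimum together with the first fact to contradict $h(h_v)=\emptyset$), which is what the paper spells out rather than leaving as ``analogous.''
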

\begin{proof}
	Let $T$ be the dfs traversal that generated $H$.
	In the case where $v\in D_s$, it trivially follows that $c_{h_v} = c_v = c_s$.
	Now assume that $v\notin D_s$.
	Note that $r_v$ is an ancestor of $h_v$ in $T$ since all paths from $s$ to $h_v$ go through $r_v$ by Lemma~\ref{lemma:partition-paths}.
	Therefore, also all the descendants of $h_v$ in $H$ are descendant of $r_v$ in $T$.
	The fact that $h_v$ is an ancestor of $v$ in $H$ it means that $v$ is a descendant of $h_v$ in $T$ and $v$ has a path $P$ to $h_v$ using only descendants of $h_v$ in $T$.
	The path $P$ cannot contain vertices that are not in $D(r_v)$ since otherwise by Lemma~\ref{lemma:partition-paths} they contain $r_v$, which contradicts the fact that all vertices on $P$ contain descendants of $h_v$ in $T$ (recall that $r_v$ is an ancestor of $h_v$ in $T$).
	It follows that $h_v$ has a path to $v$ in $G[D(r_v)]$ and $v$ has a path to $h_v$ in $G[D(r_v)]$.
	Thus, $c_v = c_{h_v}$ by the definition of the auxiliary components.
	Exactly the same argument can be applied on the reverse graph to show that $c^R_{h^R_v} = c^R_v$.
	
	Consider now $h(h_v)$.
	In the case where $v\in D_s$, it trivially follows that $h(c_{h_v}) = h(c_v) = \emptyset$.
	Therefore in the following we assume that $v\notin D_s$.
	First, we deal with the case where $c_{h(h_v)} = \emptyset$.
	Since $v\notin D_s$, the only case that this can happen is when $h(h_v) = \emptyset$.
	That means, there is no ancestor $w$ of $h_v$ in $T$, such that $h_v$ has a path to $w$ using only vertices in $T(w)$.
	We show that this means, there is also no vertex $z\notin D(r_v)$, such that $v$ and $z$ are strongly connected in $G$.
	Assume, for the sake of contradiction that there is such vertex $z$, and let $C$ be the SCC containing both $h_v$, and therefore also $v$, and $z$.
	Moreover, let vertex $z'\in C, z' \not=h_v$ be the vertex that is visited first by the dfs that generated $T$.
	Then all vertices in $C$ are descendants of $z'$ in $T$ since they are all reachable from $z'$ and they were not visited by the dfs before $z'$.
	Hence, $h_v$ is a descendant of $z'$ in $T$.
	A contradiction to the fact that $h(h_v) = \emptyset$.
	Therefore, there is no vertex $z\notin D(r_v)$, such that $v$ and $z$ are strongly connected in $G \supset G[D(r_z)]$.
	Thus, $\ell(c_v) = \emptyset$.
	Now we consider the case where $c_{h_v} = c_v \not= \emptyset$.
	The fact that $r_{h(h_v)}$ is an ancestor of $h(h_v)$ in $D$ implies that $r_{h(h_v)}$ is an ancestor of $h(h_v)$ also in $T$.
	By the definition of $H$ it follows that $h_v$ is a descendant of $h(h_v)$ in $T$, and $h_v$ has a path $P$ to $h(h_v)$ using only descendants of $h(h_v)$ in $T$.
	The path $P$ cannot contain vertices that are not in $D(r_{h(h_v)})$ since otherwise, by Lemma~\ref{lemma:partition-paths}, they contain $r_{h(h_v)}$, which contradicts the fact that all vertices on $P$ contain descendants of $h(h_v)$ in $T$ (recall that $r_{h(h_v)}$ is an ancestor of $h(h_v)$ in $T$).
	Therefore $h(h_v)$ and $h_v$ are strongly connected in $G[r_{h(h_v)}]$, and by definition so do $c_{h(h_v)}$ and $c_{h_v}$.
	Hence, $c_{h(h_v)}$ is an ancestor of $c_{h_v}$ in $L$.
	Now we show that there is no other \cvertex{} vertex $w\not=h(h_v)$ such that $level(c_w)>level(h(h_v))$ and $c_w$ is an ancestor of $c_{h_v}$ in $L$.
	This implies that $\ell(c_v) = c_{h(h_v)}$.
	Assume, for the sake of contradiction, that there is such a vertex $w$.
	Then $h_v$ and $c_w$ are in the same SCC $C$ in $G[D(r_{c_w})]$.
	Let $z \in C$ be the first among the vertices in $C$ visited by the dfs that generated $T$.
	Then all vertices in $C$ are descendants of $z$ in $T$ since they are all reachable from $z$ and they were not visited by the dfs before $z$.
	Thus, $h_v$ is a descendant of $z$ in $H$.
	Now we show that this implies that $z$ is also a descendant of $h(h_v)$ in $H$.
	By the definition of the \newLNT{} forest $r_z$ and $r_{h(h_v)}$ are ancestors of $h_v$ in $D$.
	Furthermore, since $level(z) > level(h(h_v))$ it holds that $r_{h(h_v)}$ is an ancestor of $r_z$ in $D$.
	By the fact that $h(h_v)$ and $z$ are ancestors of $h_v$ in $T$ it holds that $h(h_v)$ and $z$ have ancestor-descendant relation, and hence, $z$ is a descendant of $h(h_v)$ in $T$ (as $r_z$ and $r_{h(h_v)}$ are ancestors of $h_v$ in $D$ and $level(z) > level(h(h_v))$).
	Since $h_v$ has a path to $h(h_v)$ using only descendants of $h(h_v)$ in $T$, it follows that $z$ also has a path to $h(h_v)$ using only descendants of $h(h_v)$ in $T$.
	Thus, $z$ is a descendant of $h(h_v)$ in $H$.
	In summary, $z$ if an ancestor of $h_v$ and a descendant of $h(h_v)$ in $H$; a contradiction to the definition of $h(h_v) \not=z$.
	This concludes the lemma.
	\proofend
\end{proof}

\section{Updating the dominator tree after an edge insertion }
\label{sec:incrementa-dominator-and-auxiliary-components}

In this section, we briefly review the algorithm from \cite{dyndom:2012} that updates the dominator tree of a flow graph $G_s$ after an edge insertion. Let $G_s$ be a flow graph with start vertex $s$.
Let $(x,y)$ be the edge to be inserted.
Let $D$ be the dominator tree of $G_s$ before the insertion; we let $D'$ be the the dominator tree of $G'_s$.
In general, for any function $f$ on $V$, we let $f'$ be the function after the update.
We say that vertex $v$ is \emph{$D$-affected} by the update if $d(v)$ (its parent in $D$) changes, i.e., $d'(v) \not= d(v)$.
We let $\mathit{nca_D}(x,y)$ denote the nearest common ancestor of $x$ and $y$ in the dominator tree $D$.

\begin{lemma}
\label{lemma:insert-affected} \emph{(\cite{irdom:rr94})}
If $v$ is $D$-affected, then it becomes a child of $\mathit{nca_D}(x,y)$ in $D'$, i.e., $d'(v)=\mathit{nca_D}(x,y)$.
\end{lemma}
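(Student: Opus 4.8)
The plan is to reason directly about the dominator relation before and after inserting the edge $(x,y)$. Write $z = \mathit{nca_D}(x,y)$. First I would establish the "easy direction": the ancestors of $v$ in $D'$ form a subset of the ancestors of $v$ in $D$ that are still ancestors — more precisely, inserting an edge can only remove dominators, never create new ones, since every $s$-$w$ path in $G_s$ is still an $s$-$w$ path in $G'_s$. Hence $dom'(v) \subseteq dom(v)$, so $d'(v)$ is some ancestor of $v$ in $D$ (indeed an ancestor of $d(v)$, possibly equal). Since $v$ is $D$-affected, $d'(v)$ is a strict ancestor of $d(v)$ in $D$.

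Next I would pin down which ancestor it is. The key observation is that the only new paths in $G'_s$ are those that use the new edge $(x,y)$ at least once; any vertex $u$ that lies on every $s$-$w$ path in $G'_s$ but not on every such path in $G_s$ must be "dodged" by routing through $(x,y)$. So to see that $z = \mathit{nca_D}(x,y)$ still dominates $v$: take any path $\pi$ from $s$ to $v$ in $G'_s$. If $\pi$ avoids $(x,y)$, it is a path in $G_s$, hence contains $d(v)$ and therefore contains its ancestor $z$. If $\pi$ uses $(x,y)$, consider the prefix of $\pi$ up to the first occurrence of $x$: this is a path from $s$ to $x$ in $G_s$, so it contains $z$ (as $z$ dominates $x$ in $G_s$). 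Either way $z \in \pi$, so $z \in dom'(v)$, i.e.\ $z$ is an ancestor of $v$ in $D'$.

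It remains to show $z$ is the \emph{parent} of $v$ in $D'$, i.e.\ no vertex $u$ that is a proper descendant of $z$ in $D$ (equivalently, by $dom'(v)\subseteq dom(v)$, any candidate for $d'(v)$ must be such a $u$, and I must rule all of them out) dominates $v$ in $G'_s$. Suppose $u$ is a proper descendant of $z$ in $D$ that is an ancestor of $v$ in $D$; I would exhibit an $s$-$v$ path in $G'_s$ avoiding $u$. Since $z = \mathit{nca_D}(x,y)$, at least one of $x, y$ — say the one whose branch does not contain $u$ — is not a descendant of $u$ in $D$; combined with the fact that $u \ne z$, I can build a path $s \to \cdots \to z \to (\text{down the branch through } x) \to x \to y \to v$, where the final segment from $y$ to $v$ stays inside $D(y)$-type descendants and avoids $u$ because $u$ is on the other branch. (If $u$ is an ancestor of $d(v)$ but the relevant branch issue needs care, the degenerate sub-cases are: $u$ a descendant of $y$, or $u$ equal to $d(v)$ itself — these are handled by noting that the $D$-affected hypothesis forces $d'(v)$ strictly above $d(v)$, so such $u$ are already excluded.) The main obstacle, and the part requiring the most care, is precisely this last step: correctly arguing that from $y$ one can reach $v$ in $G'_s$ while staying clear of $u$, which hinges on Lemma~\ref{lemma:partition-paths}-style reasoning about paths confined to subtrees of $D$, together with a careful case analysis on the position of $u$ relative to $x$, $y$, and $v$ in $D$. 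Once that path is produced, $u \notin dom'(v)$, so the only ancestor of $v$ in $D'$ that could be its parent is $z$, completing the proof.
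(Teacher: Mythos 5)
There is a genuine gap. The first half (that $z = \mathit{nca_D}(x,y)$ remains a dominator of $v$ in $G'_s$) is sound once one also establishes that $z$ is a proper ancestor of $v$ in $D$, which you use (``contains $d(v)$ and therefore contains its ancestor $z$'') but do not prove; this already requires a short ``split a new $s$-$v$ path at its first/last use of $(x,y)$'' argument, not merely the hypothesis that $v$ is $D$-affected. The more serious problem is the second half, where you rule out a proper descendant $u$ of $z$ on $D[s,v]$ by routing $s \to \cdots \to x$, then $(x,y)$, then $y \to \cdots \to v$, asserting that the tail from $y$ ``avoids $u$ because $u$ is on the other branch.'' This is false. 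When $v$ is $D$-affected one has $d(v) \notin dom'(v)$ (else $d'(v)=d(v)$), and from an $s$-$v$ path in $G'_s$ avoiding $d(v)$ one reads off, by splitting at the first and last occurrences of $(x,y)$, that $d(v)$ is an ancestor of $y$ but not of $x$. Every vertex $u$ you must exclude is an ancestor of $v$ strictly below $z$, hence an ancestor-or-self of $d(v)$, hence an ancestor of $y$: it lies on the \emph{same} branch as $y$. So the geometric intuition behind your tail step is simply wrong, and the ``degenerate sub-cases'' parenthetical does not salvage it, since it only dispatches $u=d(v)$ and not the intermediate $u$'s.

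The step you explicitly defer (``from $y$ one can reach $v$ in $G'_s$ while staying clear of $u$'') is exactly where the $D$-affected hypothesis must do real work, and it is not a routine case analysis. The repair: let $\pi_2$ be the suffix from $y$ to $v$ (a path in $G_s$) of an $s$-$v$ path in $G'_s$ avoiding $d(v)$, split at the last use of $(x,y)$. Then $\pi_2$ automatically avoids \emph{every} proper descendant $w$ of $z$ on $D[s,v]$: if $\pi_2$ visited such a $w \ne d(v)$, then concatenating an $s$-$w$ path in $G_s$ avoiding $d(v)$ (which exists, since $d(v)$ is a proper descendant of $w$ in $D$ and so $d(v) \notin dom(w)$) with the suffix of $\pi_2$ from $w$ would give an $s$-$v$ path in $G_s$ avoiding $d(v)$, contradicting $d(v) \in dom(v)$. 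Prepending an $s$-$x$ path avoiding $w$ (which exists since $w$, being an ancestor of $y$ strictly below $z=\mathit{nca_D}(x,y)$, is not an ancestor of $x$) kills $w$. Without routing the argument through the removal of $d(v)$, the proof does not close. (For reference, the paper cites this lemma from~\cite{irdom:rr94} and gives no proof of its own.)
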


We say that a vertex is \emph{$D$-scanned}
if it is a descendant of a $D$-affected vertex after an edge insertion.
Note that every $D$-affected vertex is also $D$-scanned since each vertex is a descendant of itself in $D$.
There are two key ideas behind the incremental dominators algorithm. First, the algorithm updates $D'$ in time proportional to number of the edges incident to $D$-scanned vertices.
Second, after an edge insertion,
all $D$-scanned vertices decrease their depth in $D'$ by at least one.
Since throughout a sequence of edge insertions the depth of a vertex can only decrease, each vertex can be $D$-scanned at most $n-1$ times, and thus
the algorithm examines at most $(n-1)$ times the adjacency list of each vertex.
This implies that
the algorithm 
has $O(mn)$ total update time, where $m$ is the number of edges in the graph after all insertions.

We now prove the following
lemma, which shows that after the insertion of an edge $(x,y)$, the ancestors and descendants of the vertices $v \notin D'(nca_D(x,y))$ do not change. We use this later later on in our incremental algorithm for maintaining the \newLNT{} forest of a flow graph.

\begin{lemma}
	\label{lem:out-of-D_y-nothing-changes}
	Let $u$ be a vertex that is not a descendant of $nca_D(x,y)$ in $D'$.
	Then $dom(u)=dom'(u)$ and $D'(u) = D(u)$. 
	Moreover, if $(d(u),u)$ was a bridge in $G_s$, then it remains a 
	bridge in $G'_s$.
\end{lemma}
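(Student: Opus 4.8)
The plan is to leverage the structure of the incremental dominators algorithm of \cite{dyndom:2012} as summarized just above: after inserting $(x,y)$, the only vertices whose parent in $D$ changes are $D$-affected vertices, and by Lemma~\ref{lemma:insert-affected} each such vertex reattaches as a child of $z := nca_D(x,y)$. In particular, every $D$-affected vertex lies in $D'(z)$. First I would argue that if $u \notin D'(z)$, then no ancestor of $u$ in $D'$ is $D$-affected: indeed, the ancestors of $u$ in $D'$ form a root-to-$u$ path, and if one of them, say $w$, were $D$-affected, then $w$ would be a child of $z$ in $D'$, forcing $u \in D'(w) \subseteq D'(z)$, a contradiction. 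Hence the parent pointer $d'(\cdot)$ agrees with $d(\cdot)$ along the entire path from the root to $u$, which gives $dom'(u) = dom(u)$.

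For the descendant claim $D'(u) = D(u)$, I would use the fact that the dominator tree is determined by the dominator sets: $a \in D(u)$ iff $u \in dom(a)$, and likewise with primes. So it suffices to show $dom(a) = dom'(a)$ for every $a \in D(u) \cup D'(u)$. For $a \in D'(u)$: since $u \notin D'(z)$ and $u$ is an ancestor of $a$ in $D'$, also $a \notin D'(z)$ (as $D'(z)$ is downward closed, if $a \in D'(z)$ its ancestor $u$ would be too — wait, that is not how downward closure works; rather I should argue $z \notin dom'(a)$ because $z \notin dom'(u)$ and $dom'(u) \subseteq dom'(a)$, so $a \notin D'(z)$). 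Then the previous paragraph applied to $a$ gives $dom(a) = dom'(a)$. Running this in both directions ($a \in D(u)$ symmetrically, using that $dom$ and $dom'$ coincide off $D'(z)$, and that $u \notin D(z)$ as well since insertion cannot move $u$ out of $D(z)$ into $D'(z)$ — this monotonicity direction needs a small check) yields $D(u) = D'(u)$.

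For the bridge claim: recall $(d(u),u)$ is a bridge of $G_s$ iff every path from $s$ to $u$ uses that edge; equivalently $(d(u),u)$ is the unique edge entering $D(u)$ "from above" in the sense that $u$ is a tree root after deleting it, but more directly it means $d(u) \in dom(u)$ and the edge is the only edge from $V \setminus D(u)$ into $D(u)$ — actually the clean characterization is: $(d(u),u)$ is a bridge iff $u$'s only entering edge among all edges from outside $dom$-wise is $(d(u),u)$; I would instead use that a bridge of $G_s$ corresponds exactly to an edge of $D$ that gets deleted in the bridge decomposition, and invoke the known fact (cf.\ \cite{Italiano2012}) that $(d(u),u)$ is a bridge iff $u$ is not strongly connected, within $G[\,\text{stuff above and including } u\,]$, to anything above $d(u)$ — but the simplest route: an edge $(d(u),u)$ of $D$ is a bridge of $G_s$ iff $u = r_u$, i.e., $u$ is the root of its tree in the bridge decomposition, which happens iff there is no edge $(a,b)$ with $b \in D(u)$ and $a \notin D(u)$ other than $(d(u),u)$. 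Since $D(u) = D'(u)$ and $dom(u) = dom'(u)$, and since inserting $(x,y)$ only adds an edge (and $x,y$ matters only if it creates such an edge into $D(u)$), I would check that the inserted edge $(x,y)$ cannot be an edge from outside $D'(u)=D(u)$ into $D(u)$: if it were, $x \notin D(u)$ but $y \in D(u)$, which would make $y$ reachable from $s$ avoiding $d(u)$ after insertion only if... — here I must show $y$'s dominators are unchanged, but $y \in D(u) = D'(u)$ and $u \notin D'(z)$ gives $y \notin D'(z)$, so $y$ is not $D$-affected and the edge $(x,y)$ does not destroy the bridge. Conclude $(d(u),u)$ remains a bridge in $G'_s$.

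The main obstacle I anticipate is the bridge part: making precise the combinatorial characterization of "bridge of a flow graph" in terms of edges crossing into $D(u)$, and then verifying that the single inserted edge $(x,y)$ cannot be such a crossing edge for a $u \notin D'(z)$. The dominator-set manipulations (first two claims) are routine once the key observation "no ancestor of $u$ in $D'$ is $D$-affected" is in hand; the bridge claim is where one must be careful, because it is a statement about all $s$-to-$u$ paths in $G'_s$, not just about parent pointers, so it genuinely uses that the newly added edge lands inside $D'(z)$-land rather than crossing the boundary of $D(u)$.
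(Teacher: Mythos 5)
The first part of your argument (showing $dom'(u)=dom(u)$ by noting no $D'$-ancestor of $u$ is $D$-affected) is sound and in the same spirit as the paper. The remaining two claims, however, contain a genuine logical error with a common source: the hypothesis ``$u$ is not a descendant of $nca_D(x,y)$ in $D'$'' does \emph{not} preclude $u$ from being a \emph{proper ancestor} of $z := nca_D(x,y)$, and your argument tacitly rules this case out. For $D'(u)=D(u)$ you propose to show $dom(a)=dom'(a)$ for every $a\in D(u)\cup D'(u)$, and to that end you claim $a\in D'(u)$ forces $a\notin D'(z)$, justifying it by ``$z\notin dom'(u)$ and $dom'(u)\subseteq dom'(a)$, so $z\notin dom'(a)$.'' That is a non sequitur: the inclusion $dom'(u)\subseteq dom'(a)$ transmits membership, not non-membership. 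And the intended conclusion is in fact false when $u$ is a proper ancestor of $z$: then $D'(z)\subseteq D'(u)$, and the $D$-affected vertices (which live in $D(z)=D'(z)$ and genuinely change their $dom$ sets) are descendants of $u$, so your intermediate claim fails even though the lemma's conclusion still holds. The paper's proof avoids this by not trying to freeze $dom(a)$ for all $a$ below $u$; instead it uses that all vertices whose $dom$ set changes lie inside $D(z)=D'(z)$, and since $u\notin D(z)$, whether such an $a$ lies below $u$ is governed only by whether $u$ lies above $z$, which is unchanged.

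The bridge argument inherits the same flaw, and worse: the assertion ``$y\in D(u)=D'(u)$ and $u\notin D'(z)$ gives $y\notin D'(z)$'' has a plainly false conclusion, since $z=nca_D(x,y)$ is by definition an ancestor of $y$, so $y\in D(z)=D'(z)$ always. Nor does ``$y$ is not $D$-affected'' by itself deliver that the bridge survives. The correct and short route (and what the paper does) is to show the newly inserted edge $(x,y)$ cannot cross from outside $D(u)$ into $D(u)$: if $x\notin D(u)$ and $y\in D(u)$, then $nca_D(x,y)$ must be a proper ancestor of $u$ in $D$, hence (using the already-established $dom(u)=dom'(u)$) also in $D'$, contradicting $u\notin D'(nca_D(x,y))$. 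Your bridge characterization (``$(d(u),u)$ is a bridge iff no other edge enters $D(u)$ from outside'') is the right one, and combined with $D(u)=D'(u)$ it does reduce the problem to this single crossing check; you just need to run the contradiction in the correct direction.
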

\begin{proof}
The claim that 
$D'(u) = D(u)$ follows immediately from the fact that all $D$-scanned vertices (which were the only vertices $v$ for which $dom(v)\not=dom'(v)$) were in $D(nca_D(x,y))$ and therefore remain in $D'(nca_D(x,y)) = D(nca_D(x,y))$.
This also implies that $dom(u)=dom'(u)$, as for each $w\in dom(u)$ it holds that $w\notin D(nca_D(x,y))$.
Now let $(d(u),u)$ be a bridge in $G_s$, and assume by
contradiction that $(d(u),u)$ is no longer a bridge in $G'_s$.
This implies there is a path from $s$ to $u$ avoiding $(d(u),u)$ in $G'_s$.
Since $(d(u),u)$ was the only incoming edge to $D(u)=D'(u)$, the path in $G'$ must contain and edge $(w,z)$ such that $w\notin D(u), z\in D(u)$.
This contradicts the fact that the only new edge is $(x,y)$ and that either $x,y\in D(u)$ or $x,y\notin D(u)$, as otherwise $D(nca_D(x,y))$ is an ancestor of $u$ in $D$ (which we assume is not).
\proofend
\end{proof}

\section{Updating the \newLNT{} forest after an edge insertion}
\label{sec:canonical-loop-update}

Let $G$ be a directed graph and let $G_s$ be the flow graph of $G$ with an arbitrary start vertex $s$.
In this section we show how to maintain the \newLNT{} forest $L$ of $G_s$ under a sequence of edge insertions.
We assume that
$D$ and $L$ are rooted at $s$ and $c_s$, respectively.
For simplicity, we also assume that all vertices of $G$ are reachable from $s$, so $m \ge n-1$. If this is not true, then we can simply recompute $D$ and $L$ from scratch, in linear time, every time a vertex becomes reachable from $s$ after an edge insertion. Since there can be at most $n-1$ such events, the total running time for these recomputations is $O(mn)$.

Throughout the sequence of edge insertions, we maintain
as additional data structures only the dominator tree $D$ (with the incremental dominators algorithm in \cite{dyndom:2012}), the bridge decomposition and the auxiliary components of $G_s$ (with the algorithm in
\cite{GIN16:ICALP}).

\begin{algorithm}[t!]
	\LinesNumbered
	\DontPrintSemicolon
	Set $s$ to be the designated start vertex of $G$.\;
	
	Compute the dominator tree $D$ and the set of bridges $\mathit{Br}$ of the corresponding flow graph $G_s$. \;
	
	Compute the bridge decomposition $\mathcal{D}$ of $D$, and the auxiliary components of $G_s$.\;
	
	Compute the loop nesting forest $H$ of the $G_s$.\;
	Construct the \newLNT{} forest $L$ from $H$ by contacting each vertex $w$ into $c_w$.\;
	
	\caption{\textsf{Initialize}$(G,s)$}
	\label{alg:initialize}
\end{algorithm}

\begin{figure}[t!]
	\begin{center}
		\includegraphics[trim={0cm 0cm 1cm 0cm}, clip=true, width=1.0\textwidth]{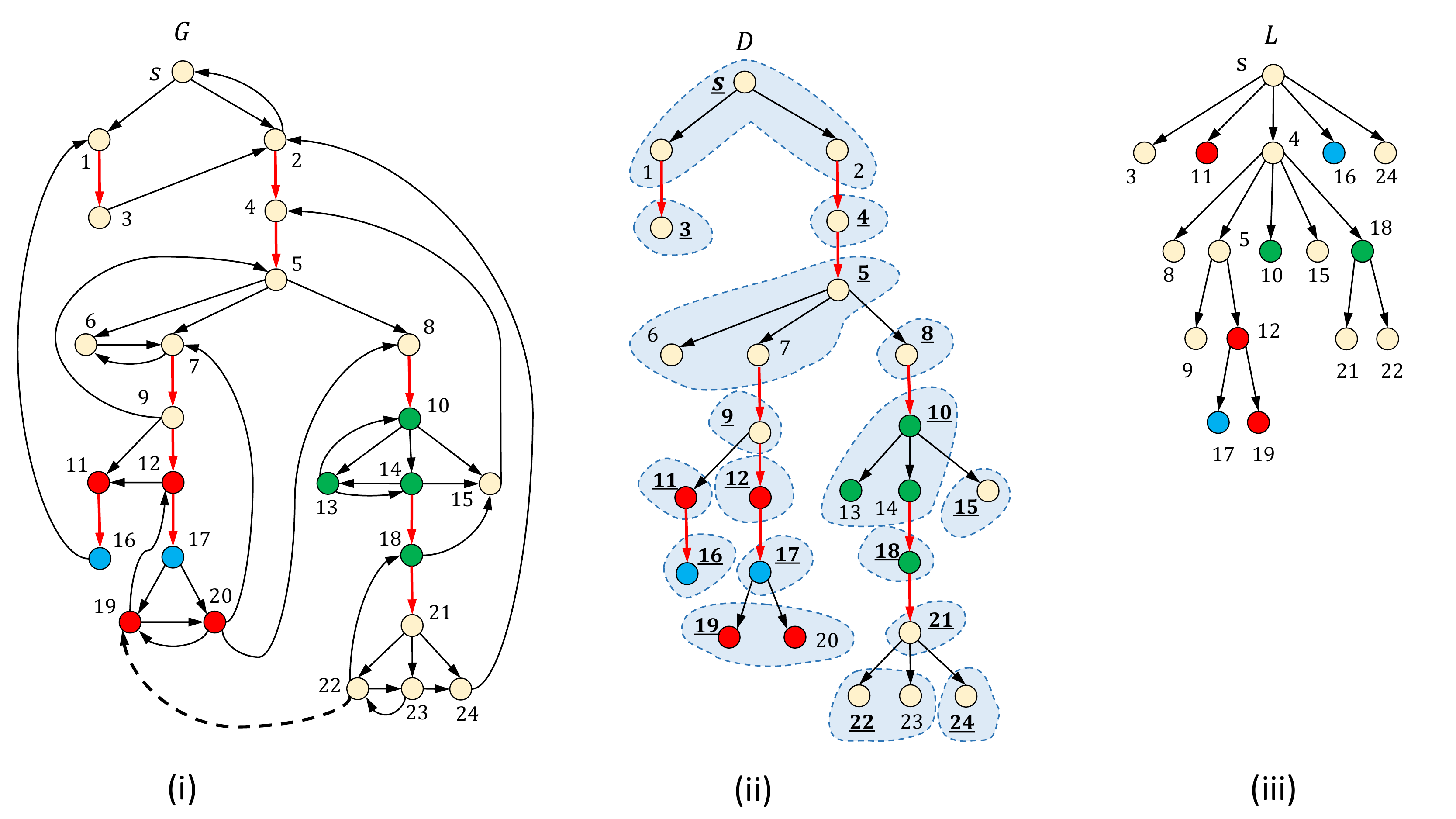}	
		\includegraphics[trim={0cm 0cm 1cm -1.2cm}, clip=true, width=1.0\textwidth]{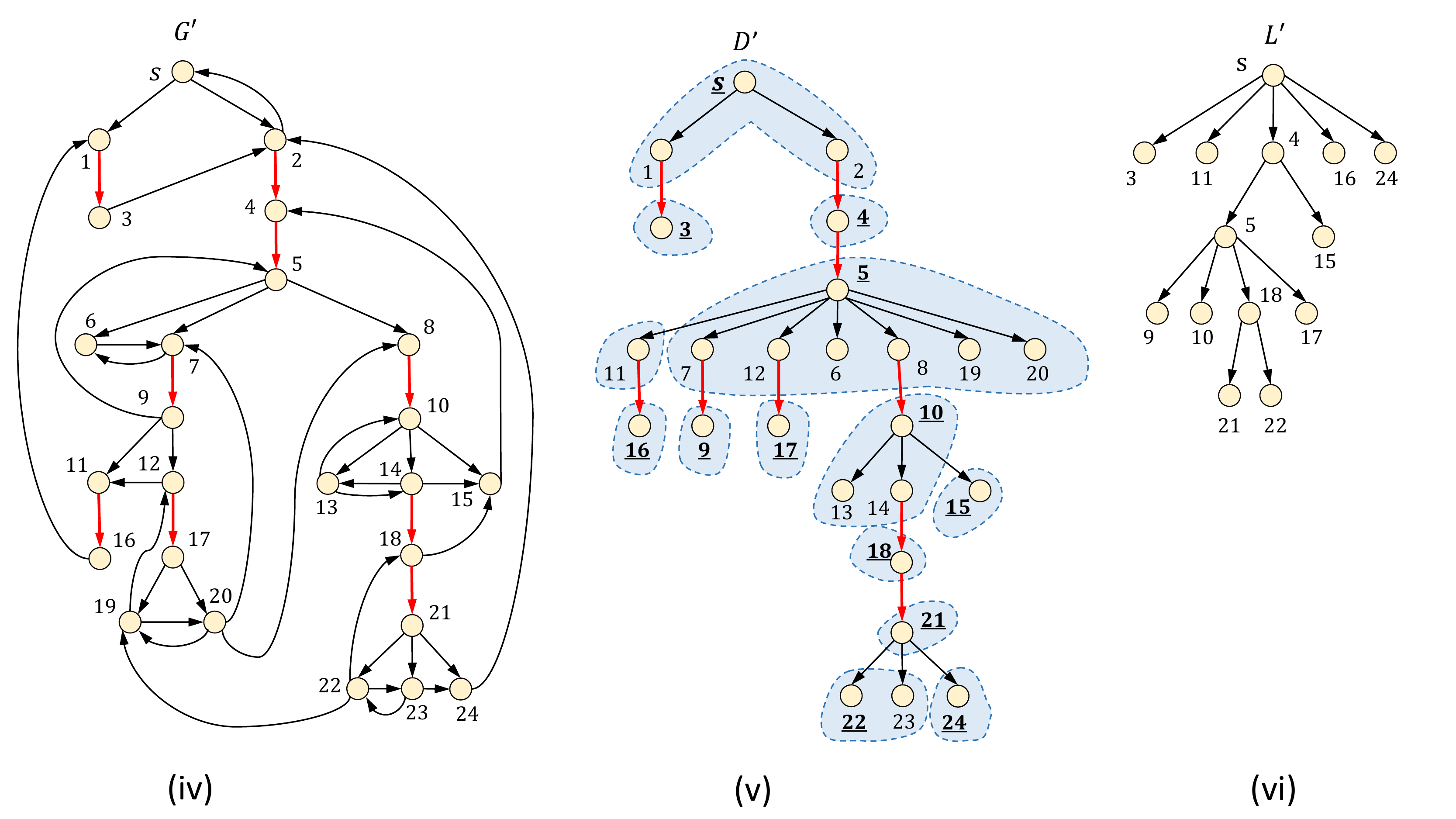}	
		\caption{A detailed example demonstrating the different types of vertices that we consider after an edge insertion.
			In (i), a digraph $G$ before the insertion of $(22,19)$ (dashed edge).
			In (ii), the dominator tree of $G$ where vertices in the same auxiliary components are grouped together.
			In (iii), the \newLNT{} tree of $G$.
			In (i)-(iii) we color the $D$-affected with red the $D$-scanned but not $D$-affected vertices with blue, and the $L$-affected but not $D$-scanned vertices with green.
			Finally, in (iv), (v), and (vi) we represent $G'$, $D'$ where vertices in the same auxiliary components are grouped together, and $L'$.
		}
		\label{figure:detailed-example}
	\end{center}
	\vspace{-.7cm}
\end{figure}

\smallskip
\noindent{\bf Initialization and 
	restarts.}
To initialize the algorithm, we compute the dominator tree $D$, bridge decomposition and auxiliary components, which can be done in linear time \cite{domin:ahlt,GIN16:ICALP}.
We also compute the \newLNT{} forest $L$ of $G_s$ in linear time, as suggested by Lemma \ref{lemma:loops-nestings-relation}. The pseudocode for the initialization is given in Algorithm \ref{alg:initialize}.

After the first initialization, in some special cases we initialize our algorithm again, in order to simplify the analysis.
We call this a \emph{restart}.
We restart our algorithm whenever a bridge $e=(u,v)$ of $G_s$ is canceled after the insertion of a new edge $(x,y)$ but we still have $d'(v)=u$, i.e., $(u,v)$ is no longer a strong bridge in $G$ but the parent of $u$ in the dominator tree $D$ does not change.
In this case, we say that the bridge $e=(u,v)$ is \emph{locally canceled}.
This is a difficult case to analyze:
the incremental dominators algorithm does not spend any time, since there are no $D$-affected vertices, while the bridge decomposition and the auxiliary components  of $G_s$ might change.
Fortunately, there are at most $O(n)$ locally canceled bridges throughout a sequence of edge insertions~\cite{GIN16:ICALP}.
Hence, we restart our algorithm at most $O(n)$ times. Consequently, the total time spent in restarts is $O(mn)$.

\smallskip
\noindent{\bf High-level overview of the update.}
Let $(x,y)$ be the new edge to be inserted.
Similarly to Section \ref{sec:incrementa-dominator-and-auxiliary-components},
for any function $f$, we use the notation $f'$ to denote the same function after the insertion of $(x,y)$, e.g., we denote by $\ell'(v)$, the parent of a \cvertex{} vertex $v$ in the \newLNT{} forest, after the insertion of $(x,y)$, and by $L'$ the resulting \newLNT{} forest.
Once again, we denote by \emph{$D$-scanned} the vertices that decrease their depth in the dominator tree $D$ after an edge insertion.
Moreover, we denote by \emph{$D$-affected} the vertices that change their parent in $D$ and by \emph{$L$-affected} the vertices for which it holds $\ell'(v')\not= c'_{\ell(v)}$, i.e.,
when the parent of $v'$
in $L'$ is not in the same auxiliary component as $\ell(v)$ (the parent of $v$ in $L$). 
After the insertion of a new edge $(x,y)$, if not involved in a restart, our algorithm performs the following updates, as shown in the
pseudocode of Algorithm \ref{alg:SCInsertEdge}:
\begin{itemize}
	\item[1)] Compute the new dominator tree $D'$, the corresponding bridge decomposition $\mathcal{D}'$, and the new auxiliary components. 
	\item[2)] Compute $\ell'(v)$ for the $D$-scanned \cvertex{} vertices $v\in D'_y$.
	\item[3)] Compute $\ell'(v)$ for the $D$-scanned \cvertex{} vertices $v\notin D'_y$.
	\item[4)] Compute $\ell'(v)$ for the $L$-affected canonical vertices $v$ that are not $D$-scanned.
\end{itemize}

As already mentioned,
the dominator tree $D$, the bridge decomposition and the auxiliary components of $G_s$ can be maintained during edge insertions within our claimed bounds 
\cite{dyndom:2012,GIN16:ICALP}. To complete the algorithm, it remains to show how to update efficiently the parent in the hyperloop nesting forest of the $D$-scanned and the $L$-affected vertices. This is non-trivial, and the low-level technical details of the method are spelled out in Sections \ref{sec:canonical-loop-update-d-scanned} and \ref{sec:canonical-loop-update-not-d-affected}, respectively.
Before giving the details of our algorithm, we show that we only need to consider the vertices $D'(r'_y)$.

\begin{algorithm}[t!]
	\LinesNumbered
	\DontPrintSemicolon
	
	Let $s$ be the designated start vertex of $G$, and let $e=(x,y)$.\;
	
	Update the dominator trees $D$. Let $S$ be the set of $D$-scanned vertices.\;
	
	Update the bridge decomposition $\mathcal{D}$, and the auxiliary components of $G_s$\;
	
	\eIf{a bridge is locally canceled in $G_s$}
	{
		Execute \textsf{Initialize}$(G,s)$.
	}
	{
		Execute 
		\textsf{Update-D-scanned}$(\mathcal{D},L,x,y,S)$ and \textsf{Update-L-affected}$(\mathcal{D},L,x,y,S)$.\;
	}
	
	\caption{\textsf{SCInsertEdge}$(G,e)$}
	\label{alg:SCInsertEdge}
\end{algorithm}

\begin{lemma}
	\label{lem:only-in-D(r_y)}
	No canonical vertex  $v\notin D'(r'_y)$ is $L$-affected.
\end{lemma}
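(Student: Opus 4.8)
The plan is to show that for any canonical vertex $v \notin D'(r'_y)$, neither the structure that determines $\ell(v)$ nor the auxiliary component of $\ell(v)$ changes after the insertion, so that $\ell'(v) = c'_{\ell(v)}$. The key observation is that $r'_y$ is the root of the tree in the new bridge decomposition $\mathcal{D}'$ containing $y$, and — by the analysis of the incremental dominators algorithm reviewed in Section \ref{sec:incrementa-dominator-and-auxiliary-components} — all $D$-affected vertices become children of $\mathit{nca}_D(x,y)$ (Lemma \ref{lemma:insert-affected}), and all $D$-scanned vertices lie in $D(\mathit{nca}_D(x,y))$. I would first argue that $D'(r'_y) \supseteq D(\mathit{nca}_D(x,y))$ — i.e. the subtree hanging off the relevant bridge root after the update contains everything that was scanned — so that any $v \notin D'(r'_y)$ is in particular not $D$-scanned. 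Then by Lemma \ref{lem:out-of-D_y-nothing-changes} (applied with $u = v$, since $v \notin D'(\mathit{nca}_D(x,y))$ as $D'(\mathit{nca}_D(x,y)) \subseteq D'(r'_y)$), we get $\mathit{dom}(v) = \mathit{dom}'(v)$, $D'(v) = D(v)$, and the bridge $(d(v),v)$ — hence $r_v = r'_v$ and the bridge $(d(r_v), r_v)$ — is preserved.

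Next I would show that the entire $(d(r_v), r_v)$-dominated component containing $v$, together with all the auxiliary components inside it and their canonical vertices, is unaffected. Here is where I would use Lemma \ref{lem:out-of-D_y-nothing-changes} more carefully: every vertex $w$ strongly connected to $v$ within $G[D(r_v)]$ satisfies $w \in D(r_v)$, and since $D'(r_v) = D(r_v)$ and no new edge has an endpoint crossing into $D(r_v)$ from outside (again because $x,y$ are either both in or both out of $D(r_v)$, else $\mathit{nca}_D(x,y)$ would be a proper ancestor of $r_v$, forcing $v \in D'(r'_y)$), the induced subgraph $G[D(r_v)]$ is identical before and after, and the added edge $(x,y)$, if inside, cannot merge $D(r_v)$'s SCCs with anything outside. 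Consequently the bridge-dominated components inside $D(r_v)$, the auxiliary components, and the chosen canonical vertices all stay the same; in particular $hloop(v)$ and its ancestors are determined by data that does not move, so $\ell'(v)$ is the same canonical vertex as $\ell(v)$ — and since $\ell(v)$'s auxiliary component is unchanged, $c'_{\ell(v)} = c_{\ell(v)} = \ell(v) = \ell'(v)$, i.e. $v$ is not $L$-affected.

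The one subtlety that needs care — and which I expect to be the main obstacle — is the direction ``$v \notin D'(r'_y)$ implies $v$ is not $D$-scanned and $D(r_v)$ is untouched.'' This requires pinning down the relationship between $r'_y$, $\mathit{nca}_D(x,y)$, and the bridge roots: one must verify that after the insertion the tree of $\mathcal{D}'$ containing $y$ extends up to (the new image of) the highest bridge above $y$ that survives, and that every $D$-scanned vertex as well as every vertex whose surrounding induced subgraph could have changed falls within $D'(r'_y)$. I would handle this by combining Lemma \ref{lemma:insert-affected} (affected vertices reattach at $\mathit{nca}_D(x,y)$, so everything potentially changed is in $D'(\mathit{nca}_D(x,y))$) with the fact that $\mathit{nca}_D(x,y) \in D'(r'_y)$ — which holds because $y \in D(\mathit{nca}_D(x,y))$ in the updated tree and no bridge separates $\mathit{nca}_D(x,y)$ from $y$ once $(x,y)$ is present (any such edge would be on the cycle through $(x,y)$). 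Everything else is then a routine application of Lemma \ref{lem:out-of-D_y-nothing-changes} and the definitions of auxiliary components and hyperloops.
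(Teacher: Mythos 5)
Your argument correctly establishes that $v$ is not $D$-scanned, that $\mathit{dom}(v)$, $D(v)$, $r_v$, the bridge $(d(r_v),r_v)$, and the auxiliary component $c_v$ are all preserved, and that $x,y$ are either both inside or both outside $D(r_v)$. But the jump from ``$G[D(r_v)]$ is unchanged'' to ``$\ell'(v)$ is the same canonical vertex as $\ell(v)$'' is where the proof breaks: $\ell(v)$ is \emph{not} determined by $G[D(r_v)]$. By definition, $\ell(v)$ is the canonical vertex $u \notin D(r_v)$ of maximum depth such that $v$ and $u$ are strongly connected in $G[D(r_u)]$, where $r_u$ is a \emph{proper ancestor} of $r_v$. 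These larger subtrees $D(r_u)$ can certainly contain $\mathit{nca}_D(x,y)$ (and hence both $x$ and $y$) even when $v \notin D'(r'_y)$, so the induced subgraph $G'[D'(r'_u)]$ differs from $G[D(r_u)]$ by the new edge $(x,y)$. A priori, $(x,y)$ could create a new strong-connectivity relation between $v$ and some canonical vertex $w$ at a level strictly between $\ell(v)$ and $r_v$, making $v$ $L$-affected. Nothing in your argument excludes this; you rule out only changes confined to $D(r_v)$, but the relevant strong-connectivity tests live in $D(r_u) \supsetneq D(r_v)$. (You flag a ``subtlety'' at the end, but it is a different one --- the $r'_y$ vs.\ $\mathit{nca}_D(x,y)$ bookkeeping --- and that part you actually handle correctly.)

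This missing case is the whole content of the lemma, and the paper's proof is built around it: it takes the hypothetical new parent $w = \ell'(v')$, uses Lemma~\ref{lem:out-of-D_y-nothing-changes} to show $D'(r'_w) = D(r_w)$ so that $G'[D'(r'_w)]$ and $G[D(r_w)]$ differ only by $(x,y)$, concludes that any new $v \leftrightarrow w$ strong connectivity in $G'[D'(r'_w)]$ must route through $(x,y)$, and then obtains a contradiction: since both $v$ and $w$ lie outside $D'(\mathit{nca}_D(x,y))$, by Lemma~\ref{lemma:partition-paths} all paths from them to $y$ pass through $\mathit{nca}_D(x,y)$, from which there is already a path to $y$ avoiding $(x,y)$. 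To fix your proof you would need an argument of exactly this shape --- showing that the new edge cannot produce a new strong-connectivity relation between $v$ and any candidate parent $w$ at an intermediate level --- rather than the structural-invariance claim you give, which only covers $D(r_v)$.
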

\begin{proof}
	We have that $nca_D(x,y) \in D'(r'_y)$.
	Assume $\ell'(v') = w$.
	The fact that $v\notin D'(r'_y)$ implies that $r'_{w} \notin D'(r'_y)$.
	By Lemma~\ref{lem:out-of-D_y-nothing-changes}, $r'_w = r_w$ and $D'(r'_w) = D(r_w)$.
	Therefore, $G[D(r_w)] = G'[D'(r'_w)]\setminus (x,y)$.
	As $w$ and $v$ are not strongly connected in $G[D(r_w)] = G[D'(r'_w)]$ but they are strongly connected in $G'[D'(r'_w)] = G[D(r_w)]\setminus(x,y)$, if follows that either all paths from $w$ to $v$ or from $v$ to $w$ in $G'[D'(r'_w)]$ contain $(x,y)$.
	Thus, all paths from $w$ or from $v$ to $y$ in $G'[D'(r'_w)]$ contain $(x,y)$.
	As $w,v\notin D'(nca_D(x,y))$, by Lemma~\ref{lemma:partition-paths}, all paths from $w$ or $v$ to $y$ in $G'[D'(r'_w)]$ contain $nca_D(x,y)$.
	Note that there is path from $nca_D(x,y)$ to $y$ in $G'[D'(nca_D(x,y))]$ avoiding $(x,y)$ (as $y\in D(nca_D(x,y)) = D'(nca_D(x,y))$, and $G[D(nca_D(x,y))] = G'[D'(nca_D(x,y))]\setminus(x,y)$).
	A contradiction to the fact that all paths from $w$ or from $v$ to $y$ in $G'[D'(r'_w)]$ contain $(x,y)$.
	\proofend
\end{proof}

\subsection{Updating the $D$-scanned vertices}
\label{sec:canonical-loop-update-d-scanned}

Let $S$ be the set of $D$-scanned vertices containing also the $D$-affected vertices.
After the insertion of the edge $(x,y)$, all the $D$-affected vertices become children of $nca_D(x,y)$ in $D'$, by Lemma \ref{lemma:insert-affected}.
In this section we deal with the update of the parent in the hyperloop nesting forest $\ell'(v')$, for all \cvertex{} vertices $v\in S$.
From now on, in order to simplify the notation, we assume without loss of generality that $v=c_v$ for any vertex of interest $v$; we also denote $c'_v$ by $v'$.

After the insertion of the edge $(x,y)$ only a subset of the ancestors in $L'$ of an $L$-affected \cvertex{} vertex $v$ changes.
In particular, Lemma~\ref{lem:ancestor-remains-ancestor} shows that the ancestors $w$ of $v$ in $L$ such that  $w\notin D'(r'_y)$ remain ancestors of $v'$ in $L'$.
However, the insertion of $(x,y)$ might create a new path from $v$ to a \cvertex{} vertex $z$ such that $v\in D'(r'_z)$, containing only vertices in $D'(r'_z)$.
In such a case, $z'$ becomes an ancestor of $v'$ in $L'$.

\begin{lemma}
	\label{lem:ancestor-remains-ancestor}
	Let $v \in D(r'_y)$ be a \cvertex{} vertex in $G$. 		
	For each ancestor $z$ of $v$ in $L$, such that $z\notin D'(r'_y)$, the canonical vertex $z'$ remains ancestor of $v'$ in $L'$.
	Moreover, $D'(r'_{z'}) = D(r_z)$.
\end{lemma}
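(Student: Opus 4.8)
The plan is to argue that the insertion of $(x,y)$ cannot destroy any strong-connectivity relation, so a witness path certifying that $z$ is an ancestor of $v$ in $L$ survives, and then show that this witness path still lies inside the bridge-decomposition tree of $z$ after the update. First I would unpack what "$z$ is an ancestor of $v$ in $L$" means: by the definition of the hyperloop nesting forest, there is a chain $v = v_0, v_1, \dots, v_k = z$ of canonical vertices where each $v_{i+1} = \ell(v_i)$, and in particular $v_i$ and $v_{i+1}$ lie in the same $(d(r_{v_{i+1}}), r_{v_{i+1}})$-dominated component, i.e.\ are strongly connected in $G[D(r_{v_{i+1}})]$. Since $z \notin D'(r'_y)$, Lemma~\ref{lem:out-of-D_y-nothing-changes} gives $r'_z = r_z$, $D'(r_z) = D(r_z)$, and crucially $(d(r_z), r_z)$ remains a bridge in $G'_s$; the same applies to every $v_i$ on the chain from $z$ downward that is not in $D'(r'_y)$, and for those that nominally could change, I would argue that being a descendant (in $L$) of $z \notin D'(r'_y)$ forces them out of $D'(r'_y)$ as well — this is exactly where I would need to be careful about which subtree relationships are preserved.

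Granting that, the second step is to establish $D'(r'_{z'}) = D(r_z)$. We already have $D'(r_z) = D(r_z)$ from Lemma~\ref{lem:out-of-D_y-nothing-changes}; what remains is $r'_{z'} = r_z$, i.e.\ that the canonical vertex $z'$ of $z$ after the update still sits in the same bridge-decomposition tree $D_z$. Since $D(r_z)$ is unchanged as a vertex set and the unique incoming bridge $(d(r_z),r_z)$ is unchanged, the subgraph $G'[D(r_z)]$ differs from $G[D(r_z)]$ only possibly by the single edge $(x,y)$, and only if both $x,y \in D(r_z)$; in either case the auxiliary component of $z$ within $D_z$ is recomputed from $G'[D(r_z)]$, and since strong connectivity only grows, $z$'s auxiliary component can only merge with others inside $D_z$, so its canonical representative $z'$ still lies in $D_z$ and $r'_{z'} = r_z$.

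Finally I would combine the two: from the old chain, $v$ and $z$ are strongly connected in $G[D(r_z)] \subseteq G'[D(r_z)] = G'[D'(r'_{z'})]$, so $v'$ and $z'$ are strongly connected in $G'[D'(r'_{z'})]$, hence $z' \in hloop'(v')$ (after contracting to canonical vertices), which by the laminar structure of hyperloops means $z'$ is an ancestor of $v'$ in $L'$ — here I can invoke Lemma~\ref{lemma:unique-level} to pin down that $z'$ occupies the ancestor slot at its level. The main obstacle I expect is the bookkeeping in the first step: cleanly showing that every intermediate canonical vertex $v_i$ on the $L$-chain between $v$ and $z$ also satisfies $v_i \notin D'(r'_y)$ (so that Lemma~\ref{lem:out-of-D_y-nothing-changes} applies all along the chain and the nesting relations among the $v_i$ are genuinely preserved, not just the endpoint relation), rather than merely asserting that the single strong-connectivity fact between $v$ and $z$ survives. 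This will likely use that $r_{v_i}$ is an ancestor of $r_z$ in $D$ for $i < k$ together with Lemma~\ref{lem:only-in-D(r_y)}.
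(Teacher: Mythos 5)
Your approach is fundamentally sound and lands on the same final step as the paper's argument, but you overcomplicate the first part. The key observation you flag as a worry — whether you can "merely assert that the single strong-connectivity fact between $v$ and $z$ survives" — turns out to be exactly what the paper does, and it is sufficient: by definition $L(z) = hloop(z)$, which is \emph{directly} the set of canonical vertices strongly connected to $z$ in $G[D(r_z)]$. So ``$z$ is an ancestor of $v$ in $L$'' is literally equivalent to ``$v$ and $z$ lie in the same SCC of $G[D(r_z)]$.'' There is no need to unpack a parent chain $v = v_0, v_1, \dots, v_k = z$ or to track which intermediate $v_i$ stay outside $D'(r'_y)$; that bookkeeping is a red herring you correctly felt uneasy about but went the wrong direction on.

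With that simplification, the paper's proof is two steps: (a) since $z \notin D'(r'_y)$, $r'_{z'}$ lies on the ancestor side of $D'(r'_y)$, so Lemma~\ref{lem:out-of-D_y-nothing-changes} gives $D'(r'_{z'}) = D(r_z)$ and preserves the bridge $(d(r'_{z'}), r'_{z'})$; (b) strong connectivity of $v$ and $z$ in $G[D(r_z)]$ persists in the supergraph $G'[D'(r'_{z'})]$, hence $v' \in hloop'(z')$ and $z'$ is an ancestor of $v'$ in $L'$. Your middle paragraph about auxiliary components only merging inside $D_z$, so that $r'_{z'} = r_z$, is correct and is exactly the justification that the paper leaves implicit. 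One minor slip: you wrote $z' \in hloop'(v')$; since $hloop'(v') \subseteq D'(r'_{v'})$ and $z' \notin D'(r'_{v'})$, the membership must go the other way: $v' \in hloop'(z')$.
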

\begin{proof}
	By Lemma \ref{lem:out-of-D_y-nothing-changes} and the fact that $r'_{z'}$ is an ancestor of $D'(r'_y)$, it follows that $D'(r'_{z'}) = D(r_z)$ and $(d(r'_{z'}),r'_{z'})$ remains a bridge in $G'_s$.
	Then, the fact that
	${z'}$ is an ancestor of $v'$ in $L'$ follows from the fact that $v$ and $z$ are strongly connected in $G'[D'(r'_{z'})]$ and $D'(r'_{z'}) = D(r_z)$.
	\proofend
\end{proof}

The following lemma identifies the new parent in $L'$ of $y'$.

\begin{lemma}
	\label{lem:parent-of-y-in-h}
	It holds that $\ell'(y') = w'$, where $w$ is the nearest ancestor of $y$ in $L$ such that $w\notin D'(r'_y)$.
\end{lemma}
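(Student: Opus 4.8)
The plan is to show first that $w'$ (where $w$ is the nearest ancestor of $y$ in $L$ outside $D'(r'_y)$) is indeed an ancestor of $y'$ in $L'$, and then that no canonical vertex $z$ with $level(z) > level(w)$ can be an ancestor of $y'$ in $L'$; combined with Lemma~\ref{lemma:unique-level} (at most one ancestor per level) this will pin down $\ell'(y') = w'$. For the first part, I would apply Lemma~\ref{lem:ancestor-remains-ancestor}: since $w$ is an ancestor of $y$ in $L$ with $w \notin D'(r'_y)$, that lemma gives directly that $w'$ remains an ancestor of $y'$ in $L'$ and that $D'(r'_{w'}) = D(r_w)$. So $w'$ is a genuine candidate.

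For the second part — that $w'$ is the \emph{nearest} such ancestor, i.e.\ the one of maximum depth in $D'$ — I would argue by contradiction. Suppose $z'$ is an ancestor of $y'$ in $L'$ with $level'(z') > level'(w')$. By Lemma~\ref{lem:ancestor-remains-ancestor} the ancestors of $y$ in $L$ lying outside $D'(r'_y)$ keep their $D$-subtrees, so if $z \notin D'(r'_y)$ then $z$ was already an ancestor of $y$ in $L$ with level strictly greater than $level(w)$, contradicting the choice of $w$ as the nearest such ancestor. Hence $z' \in D'(r'_y)$. Now I distinguish whether $r'_{z'}$ equals $r'_y$ or is a strict ancestor. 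If $r'_{z'}$ is a strict ancestor of $r'_y$ in $D'$, then $level'(z') < level'(y')$; combined with $level'(z') > level'(w')$ and the fact that $w$ (hence $w'$) lies strictly outside $D'(r'_y)$, I need to locate $r'_{z'}$ strictly between $r'_{w'}$ and $r'_y$ on the $D'$-path, and then exhibit that $y'$ and $z'$ being strongly connected in $G'[D'(r'_{z'})]$ forces, via Lemma~\ref{lemma:partition-paths}, a path not using the bridge $(d'(r'_y), r'_y)$ — a contradiction, since that bridge lies on the path from $r'_{z'}$ down to $y$. This is the delicate bookkeeping step. The case $r'_{z'} = r'_y$ is handled separately: I would note that before the insertion $y$ was not strongly connected to $z$ inside $G[D(r_y)]$ (else $c_y = c_z$, contradicting both being canonical), so the new strong connection must route through the new edge $(x,y)$, forcing all paths $z \to y$ or $y \to z$ in $G'[D'(r'_y)]$ through $(x,y)$; but then such a $z$ cannot simultaneously be an ancestor of $y'$ in $L'$ of level exceeding $level'(w')$ while $w'$ already sits outside $D'(r'_y)$ — I would derive the contradiction from the fact that the candidate ancestor must still reach $y$ avoiding $(x,y)$ along the $nca_D(x,y) \to y$ path inside $D'(nca_D(x,y))$, echoing the argument of Lemma~\ref{lem:only-in-D(r_y)}.

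The main obstacle I anticipate is the middle case, where $r'_{z'}$ is a strict ancestor of $r'_y$ in $D'$ but $z'$ still lies inside $D'(r'_y)$ — wait, that combination is impossible, so really the obstacle is ensuring the case split is exhaustive and correctly uses the bridge $(d'(r'_y), r'_y)$: I must verify that after the insertion $(d'(r'_y), r'_y)$ is genuinely a bridge of $G'_s$ (it is, being the parent edge of the root of $D'_y$ in the updated bridge decomposition), and that any canonical vertex $z'$ with $r'_{z'}$ a proper ancestor of $r'_y$ lies outside $D'(r'_y)$, reducing that subcase to the already-handled "outside" case via Lemma~\ref{lem:ancestor-remains-ancestor}. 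Once the case analysis is organized as: (a) $z \notin D'(r'_y)$ — contradicts minimality of $w$; (b) $r'_{z'}$ a proper ancestor of $r'_y$ — reduces to (a); (c) $r'_{z'} = r'_y$ — contradicts via the new-edge routing argument; the proof closes cleanly by invoking Lemma~\ref{lemma:unique-level} to conclude $w'$ is the unique ancestor of $y'$ in $L'$ at its level and of maximal depth, hence $\ell'(y') = w'$.
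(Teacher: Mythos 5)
Your overall plan is sound in outline — show $w'$ stays an ancestor of $y'$ in $L'$ via Lemma~\ref{lem:ancestor-remains-ancestor}, then rule out any deeper ancestor — and that is also the shape of the paper's argument. But there is a genuine logical gap in your case~(a), which is in fact the only case that can occur. You write that ``if $z \notin D'(r'_y)$ then $z$ was already an ancestor of $y$ in $L$'' and attribute this to Lemma~\ref{lem:ancestor-remains-ancestor}. That lemma gives only one implication: an ancestor of $y$ in $L$ lying outside $D'(r'_y)$ remains one in $L'$. What you need here is the \emph{converse} — that any ancestor $z'$ of $y'$ in $L'$ lying outside $D'(r'_y)$ must already have been an ancestor of $y$ in $L$ — and the lemma does not provide it. The fact that $D'(r'_{z'}) = D(r_z)$ (so the dominator-subtree is unchanged) does not settle the matter, because $G'[D'(r'_{z'})]$ still contains the extra edge $(x,y)$, which could a priori create a new strong connection between $z$ and $y$ inside that subtree. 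The paper closes precisely this hole: assuming $z$ and $y$ are strongly connected in $G'[D'(r'_z)]$ but not in $G[D(r_z)]$, it observes that there must then be no path from $x$ to $y$ in $G[D(r_z)]$, yet $y\in D(\mathit{nca}_D(x,y))$ guarantees a path from $\mathit{nca}_D(x,y)$ to $y$ inside $G[D(\mathit{nca}_D(x,y))]$ avoiding $x$, and $\mathit{nca}_D(x,y)\in D(r_z)$ yields a $z\to y$ path in $G[D(r_z)]$ avoiding $x$ — contradiction.

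Notice that you do have this very argument in hand — the ``new-edge routing'' reasoning with the path $\mathit{nca}_D(x,y)\to y$ avoiding $x$ — but you place it in your case~(c), where $r'_{z'} = r'_y$. That case is vacuous: by the definition of the hyperloop nesting forest, the parent of any canonical vertex $v'$ lies outside $D'(r'_{v'})$, so by induction every proper ancestor $z'$ of $y'$ in $L'$ satisfies that $r'_{z'}$ is a \emph{proper} ancestor of $r'_y$ in $D'$, forcing $z' \notin D'(r'_y)$. So there is really only your case~(a), it is exactly where the hard work is, and the argument you filed under case~(c) is the one that should be deployed there. With that relocation (and the observation that case~(c) cannot arise) your proof would match the paper's.
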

\begin{proof}
By Lemma \ref{lem:ancestor-remains-ancestor},
for all ancestors $z$ of $y$ in $L$, such that $z\notin D'(r'_y)$, vertex $z'$ remains ancestor of $y'$ in $L'$.
We show that all ancestors of $y'$ in $L'$ were ancestors of $y$ in $L$ before the edge insertion.
Assume by contradiction that there exists a \cvertex{} vertex $z'$ that is an ancestor of $y'$ in $L'$, but $z$ is not an ancestor of $y$ in $L$.
By the definition of \newLNT{} forest, $r'_{z}$ must be a proper ancestor of $r'_y$.
Lemma \ref{lem:out-of-D_y-nothing-changes} and the fact that $(x,y)$ is not locally canceled imply that  $dom'(r'_y) = dom(r_y)$, $r'_{z}$ is an ancestor of $r_y$ in $D$, and  $D'(r'_{z}) =
D(r_z)$.
Following our assumption we have that $z$ and $y$ are strongly connected in $G'[D'(r'_{z})]$ but not in $G[D(r_z)] = G'[D(r_z)] \setminus (x,y)$.
Therefore, there is no path from $x$ to $y$ in $G[D(r_z)]$.
By the fact that $y \in D({nca_D}(x,y))$, there exists a path from ${nca_D}(x,y)$ to $y$ in $G[D(nca_D(x,y))]$ avoiding $x$.
As $nca_D(x,y)\in D(r_z)$, there is a path from $z$ to $y$ through $nca_D(x,y)$ in $G[D(r_z)]$ avoiding $x$; a contradiction.
The lemma follows.
\proofend
\end{proof}

We now compute $\ell'(v')$ for each \cvertex{} vertex $v' \in S \cap D'_{y}$, that is, the $D$-affected vertices that are in the same tree of the canonical decomposition of $D'$ with $y$.
Note that all the new paths that are introduced by the insertion of the edge $(x,y)$ must contain $y$.
We use this observation to compute $\ell'(v')$, for $v' \in S \cap D'_{y}$, based on $\ell'(y')$ as shown in the following lemma.

\begin{lemma}
	\label{lem:new-parent-in-h-1}
	Let $v \in D'_{y}$ be $D$-scanned.
	If $v'$ and $y'$ are in different auxiliary components then $\ell'(v') = w'$, where $w$ is the nearest ancestor of $v$ in $L$ such that $w \notin D'(r'_{y})$.
\end{lemma}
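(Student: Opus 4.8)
The plan is to reduce the claim about $v'$ to the already-established claim about $y'$ (Lemma \ref{lem:parent-of-y-in-h}), exploiting the fact that every path created by inserting $(x,y)$ must pass through $y$. First I would fix $w$ to be the nearest ancestor of $v$ in $L$ with $w \notin D'(r'_y)$. By Lemma \ref{lem:ancestor-remains-ancestor}, $w'$ remains an ancestor of $v'$ in $L'$ and $D'(r'_{w'}) = D(r_w)$; so it suffices to show that $w'$ is in fact the \emph{parent} of $v'$, i.e., that no canonical vertex $z'$ with $\mathit{level}(z') > \mathit{level}(w')$ and $v \in D'(r'_z)$ satisfies $v \in hloop'(z')$. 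Equivalently, I must rule out any canonical vertex $z$, with $r'_z$ a proper descendant of $r'_w = r_w$ and a (possibly improper) ancestor of $r'_y$, such that $z$ and $v$ become strongly connected inside $G'[D'(r'_z)]$ but were not strongly connected in $G[D(r_z)]$.

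The key step is the following dichotomy for such a hypothetical $z$. Since $z$ and $v$ are strongly connected in $G'[D'(r'_z)]$ but not in $G[D(r_z)] = G'[D'(r'_z)] \setminus (x,y)$, every cycle through $z$ and $v$ in $G'[D'(r'_z)]$ uses the edge $(x,y)$, hence passes through $y$; in particular $y \in D'(r'_z)$ and both $z$ and $v$ are strongly connected to $y$ in $G'[D'(r'_z)]$. Now I split on whether $r'_z$ is an ancestor of $r'_y$ that is \emph{not} $r'_y$ itself, or $r'_z = r'_y$. In the first case, $y \in D'(r'_z) = D(r_z)$ and $z, y$ strongly connected in $G'[D'(r'_z)]$ forces $z'$ to be an ancestor of $y'$ in $L'$; but then by the proof of Lemma \ref{lem:parent-of-y-in-h}, $z'$ was already an ancestor of $y$ in $L$, so $z$ and $y$ (hence, since $y$ reaches $v$ and $v$ reaches $y$ through $D(r_z)$ \dots) were already strongly connected in $G[D(r_z)]$ — I need to turn "$v$ strongly connected to $y$ in $G'[D'(r'_z)]$" plus "new path goes through $y$" into "$v$ strongly connected to $y$ in $G[D(r_z)]$", which follows because the cycles through $v$ and $y$ that avoid $(x,y)$ already existed (the only missing link was between the $\{v,y\}$-side and $z$). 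This contradicts the assumption that $z,v$ were not strongly connected in $G[D(r_z)]$. In the second case, $r'_z = r'_y$, i.e., $z \in D'_y$; then $z$ and $v$ lie in the same auxiliary component of $G'_s$, namely the one of $y$ — but since $z$ is strongly connected to $y$ in $G'[D'(r'_y)]$ it lies in the auxiliary component of $y$, and since $v$ is strongly connected to $z$ there, $v'$ and $y'$ coincide, contradicting the hypothesis that $v'$ and $y'$ are in different auxiliary components.

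I expect the main obstacle to be the careful bookkeeping in the first case: converting the fact that a \emph{new} strong-connectivity relation between $v$ and $z$ was created into the statement that $v$ and $y$ were \emph{already} strongly connected in $G[D(r_z)]$, using that all new paths route through $y$ and that $nca_D(x,y) \in D(r_z)$ (so Lemma \ref{lemma:partition-paths} supplies an $(x,y)$-free path from $z$ to $y$ via $nca_D(x,y)$). One has to be precise that "$v$ and $y$ in the same SCC of $G'[D'(r'_z)]$" combined with "the edge $(x,y)$ lies on some but not all of the relevant cycles" still yields an $(x,y)$-free cycle through $v$ and $y$ inside $D(r_z)$; this is exactly the type of argument already carried out in Lemma \ref{lem:parent-of-y-in-h} and Lemma \ref{lem:only-in-D(r_y)}, so I would mirror that structure. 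The remaining verifications — that $w'$ is an ancestor of $v'$, that levels behave monotonically under the bridge decomposition, and that $D'(r'_{w'}) = D(r_w)$ — are immediate from Lemmas \ref{lem:out-of-D_y-nothing-changes} and \ref{lem:ancestor-remains-ancestor}.
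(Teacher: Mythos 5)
Your high-level framing is reasonable and your Case~2 ($r'_z=r'_y$) is correctly disposed of (though it is actually vacuous by definition: $\ell'(v')\notin D'(r'_{v'})=D'(r'_y)$, so $r'_z$ can never equal $r'_y$). Your Case~1 endpoint --- derive a contradiction by showing $v$ and $z$ were \emph{already} strongly connected in $G[D(r_z)]$ --- is a legitimate alternative to the paper's contradiction, which instead shows that every path from $v$ to $z$ in $G'[D(r_z)]$ would have to pass through $r'_y$ and only then through $(x,y)$, forcing $(x,y)$ to be a bridge in $G'_s$ (impossible since $y$ was already reachable from $s$). The detour through Lemma~\ref{lem:parent-of-y-in-h} to get ``$z$ and $y$ strongly connected in $G[D(r_z)]$'' is fine and is something the paper does not use.

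The genuine gap is the claim that ``$v$ strongly connected to $y$ in $G[D(r_z)]$.'' Your justification --- ``the cycles through $v$ and $y$ that avoid $(x,y)$ already existed (the only missing link was between the $\{v,y\}$-side and $z$)'' --- simply restates what has to be proved, and your suggested bridge via $nca_D(x,y)$ only gives an $(x,y)$-free path from $z$ to $y$, not from $v$ to $y$. The missing argument is precisely where the hypothesis $v'\neq y'$ has to be invoked in Case~1, and your proposal never uses it there: since $v\in D'_y$ and $y$ reaches $v$ in $G'[D'(r'_y)]$ (because $v$ is $D$-scanned), if the $(x,y)$-free prefix $P_{vx}$ of a $v\to z$ path stayed inside $D'(r'_y)$ then $v$ and $y$ would be strongly connected in $G'[D'(r'_y)]$ and hence $v'=y'$, contradiction; so $P_{vx}$ must exit $D'(r'_y)$, and by Lemma~\ref{lemma:partition-paths} it must therefore contain $r'_y$. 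Only after that can one splice in an $(x,y)$-free path from $r'_y$ to $y$ (which exists because $(x,y)$ is not a bridge) to obtain $v\leadsto y$ in $G[D(r_z)]$. That chain of reasoning is essentially the core of the paper's proof; without it, your Case~1 does not close, and ``mirroring'' Lemmas~\ref{lem:parent-of-y-in-h} or \ref{lem:only-in-D(r_y)} does not supply it, since those lemmas never need to route a $v\to x$ path out of $D'(r'_y)$ via $v'\neq y'$.
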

\begin{proof}
By Lemma \ref{lem:ancestor-remains-ancestor}
for each ancestor $z$ of $v$ in $L$, such that $z\notin D'(r'_y)$, vertex $z'$ remains ancestor of $v'$ in $L'$.
Assume by contradiction that $\ell'(v') = z'$ with $level'(z') > level'(w')$, that is, $r'_{w}$ is a proper ancestor of $r'_{z}$ in $D'$.
Since $v\in D'_y$, it follows that $r'_{z} \notin D'(r'_y)$, and therefore $D'(r'_{z}) = D(r_z)$ by Lemma \ref{lem:out-of-D_y-nothing-changes}.
By our assumption, $v$ and $z$ are strongly connected in $G'[D(r_z)]$.
Lemma \ref{lemma:partition-paths} implies that all paths from $z$ to $v$ in $G'[D(r_{z})]$ contain $d'(r'_{v})=d'(r'_y)$.
Let $P_{zv} = P_{zd'(r'_y)} \cdot P_{d'(r'_y)v}$ be such a path, where $P_{zd'(r'_y)}$ is the subpath from $z$ to $d'(r'_y)$ and $P_{d'(r'_y)v}$ the subpath from $d'(r'_y)$ to $v$.
Since, $P_{zd'(r'_y)}$ does not contain $(x,y)$ and we also have that $v\in D(r'_y)$ it follows that $z$ has a path to $v$ in $G[D(r_z)]$.
As $v$ and $z$ are not strongly connected in $G[D(r_z)]$, all path from $v$ to $z$ in $G'[D(r_z)]$ contain $(x,y)$.
Moreover, due to the facts that $v' \not = y'$, any path from $v$ to $z$ contains a vertex out of $D'(r'_y)$, and therefore, also $r'_y$ by Lemma \ref{lemma:partition-paths}.
By the last two arguments, all paths from $v$ to $z$ contain both first $r'_y$ and then $(x,y)$.
That implies that all paths from $r'_y$ to $y$ in $G'_s$ contain the edge $(x,y)$, and as all paths from $s$ to $y$ in $G'_s$ contain $r'_y$ (by Lemma \ref{lemma:partition-paths}) it follows that all paths from $s$ to $y$ contain $(x,y)$.
This is sufficient for $(x,y)$ to be a bridge in $G'_s$, which is a contradiction as $y$ is reachable from $s$ in $G_s$.
Thus, our assumption about $z$ led us to a contradiction.
The lemma follows.
\proofend
\end{proof}

With the help of Lemma~\ref{lem:new-parent-in-h-1}, we can iterate over the vertices $v \in S \cap D'_{y}$ setting $\ell'(v')$ appropriately.
Recall that $S$ (the set of $D$-scanned vertices) is provided to us by the incremental dominators algorithm.
Next, we deal with the \cvertex{} vertices $v \in S \setminus D'_y$.
We begin with the computation of $\ell'(v')$, for the \cvertex{} vertices $v'$ in $S$ for which $level'(\ell'(v')) > level'(r'_y)$, that is, their new parent in $L$ is in $D'(r'_y) \setminus D'_{y}$.
Let $G_{scanned}$ be the graph induced by the $D$-scanned vertices, and let $H_{scanned}$ be the loop nesting forest rooted at $y$ of $G_{scanned}$.
(Note that $y$ reaches all vertices in $G_{scanned}$.)
By contracting every vertex $v$ into $c'_v$ in $H_{scanned}$, we obtain a forest $\tilde{H}$. Let $\tilde{h}(v)$ be the parent in $\tilde{H}$ of a \cvertex{} vertex $v \in S \setminus D'_y$.
As stated in Lemma~\ref{lem:new-parent-in-h-2}, the parent in $\tilde{H}$ of each \cvertex{} vertex $v'$ such that $\tilde{h}(v)\in S \setminus D'_{y}$ is the parent of $v'$ in $L'$.

\begin{figure}[t!]
	\begin{center}
		\includegraphics[trim={1cm 0cm 3cm 0cm}, clip=true, width=1.0\textwidth]{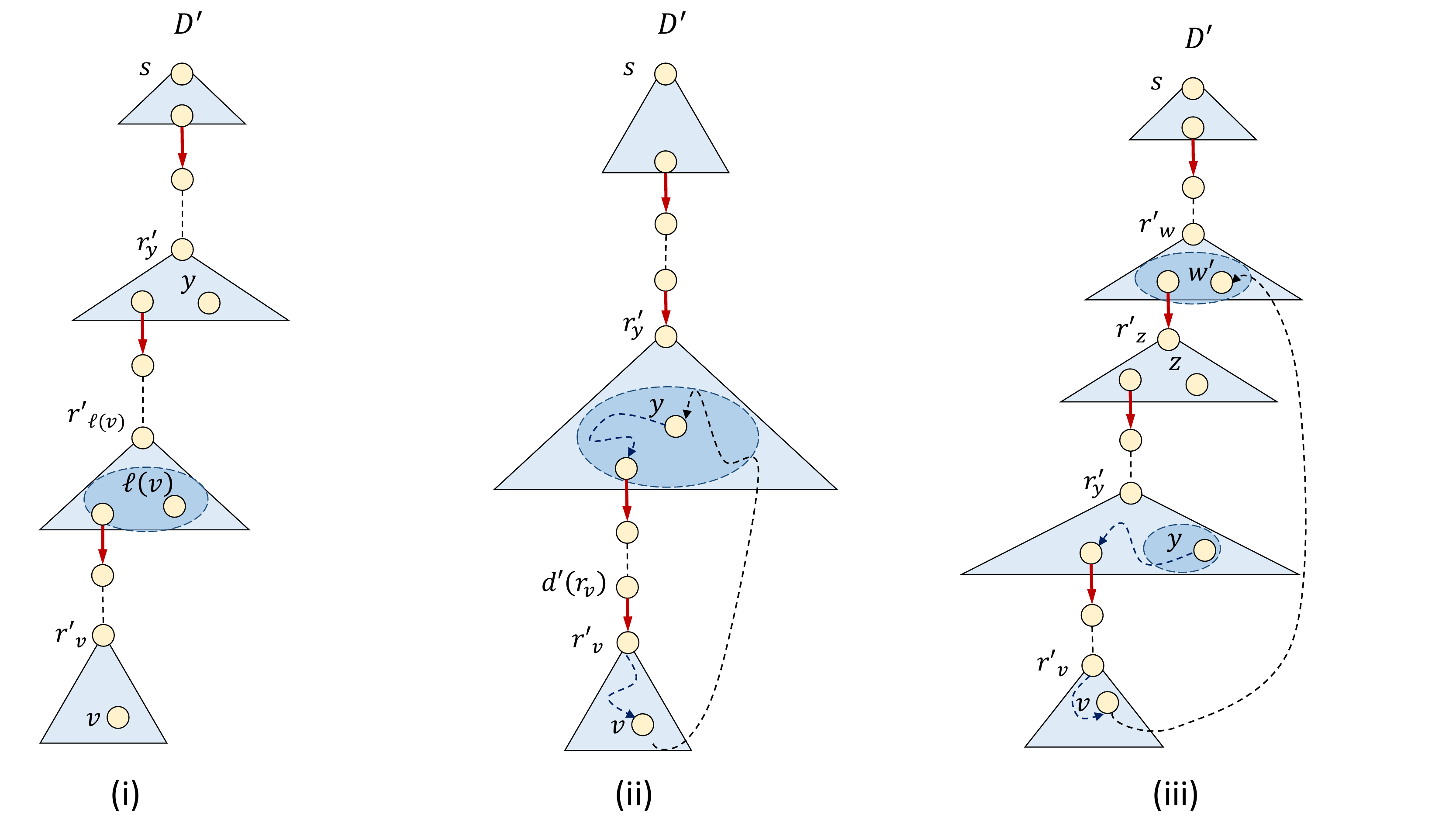}	
		\vspace{-.7cm}
		\caption{Instances of the updated dominator tree $D'$ after the insertion of $(x,y)$.
		In (i) $v$ is a $D$-scanned vertex where $v, \ell'(v)\in S\setminus D'_y$. For all such vertices, we apply Lemma \ref{lem:new-parent-in-h-2} to compute $\ell'(v)$.
		In (ii), $v$ has a path to $y$ in $G'[D'(r'_y)]$ and is $D$-affected, which is sufficient for $y'$ to be ancestor of $v'$ in $L'$.
		In (iii) vertex $w$ is an ancestor of $v$ in $L$, as $w$ and $v$ are strongly connected in $G[D(r_w)]$, and therefore $w'$ remains an ancestor of $v'$ in $L'$.
		The instances visualize the relations of the vertices of interest in the proofs of Lemmas \ref{lem:new-parent-in-h-2} and \ref{lem:new-parent-in-h-3}.}
		\label{figure:helping-proofs-1}
	\end{center}
		\vspace{-.7cm}
\end{figure}

\begin{lemma}
	\label{lem:new-parent-in-h-2}
	Let $\tilde{H}$ be the loop nesting forest rooted at $y$ of $G_{scanned}$ after contracting each vertex $v$ into $c'_v$.
	For every \cvertex{} vertex $v$ such that $\tilde{h}(v) \in S \setminus D'_{y}$, it holds that $\ell'(v') = \tilde{h}(v)$.
\end{lemma}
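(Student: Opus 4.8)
The plan is to show two inclusions: first, that $\tilde{h}(v)$ is an ancestor of $v'$ in $L'$; and second, that no canonical vertex $z'$ with $level'(z') > level'(\tilde h(v))$ and $z' \in S\setminus D'_y$ is an ancestor of $v'$ in $L'$. Combined with Lemma~\ref{lem:ancestor-remains-ancestor}, which tells us that all the ancestors of $v'$ in $L'$ lying outside $D'(r'_y)$ are precisely the old ancestors outside $D'(r'_y)$, these two facts pin down $\ell'(v')$ exactly as $\tilde h(v)$, since by hypothesis $\tilde h(v)$ itself lies in $S\setminus D'_y \subseteq D'(r'_y)\setminus D'_y$, hence at a strictly higher level than $r'_y$.

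For the first inclusion, I would use the definition of the loop nesting forest $H_{scanned}$ of $G_{scanned}$: if $u = \tilde h(v)$, then (after uncontracting) there is a vertex in the auxiliary component of $u$ that reaches $v$ and is reached from $v$ using only $D$-scanned vertices that are descendants of $u$ in the dfs tree generating $H_{scanned}$. The key step is to argue that all these witnessing paths stay inside $D'(r'_u)$: since $u\in S\setminus D'_y$, the tree $D'_u$ is a proper ancestor tree of $D'_y$, and by the same Lemma~\ref{lemma:partition-paths} argument used repeatedly in the proof of Lemma~\ref{lemma:loops-nestings-relation}, any path among descendants of $u$ that tried to leave $D'(r'_u)$ would have to pass through $r'_u$, contradicting that it stays among descendants of $u$. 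Hence $u$ and $v$ are strongly connected in $G'[D'(r'_u)]$, so $u' = c'_u = u$ is an ancestor of $v'$ in $L'$.

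For the second inclusion (the part I expect to be the main obstacle), suppose for contradiction that some $z' \in S\setminus D'_y$ with $level'(z') > level'(\tilde h(v))$ is an ancestor of $v'$ in $L'$; then $v$ and $z$ are strongly connected in $G'[D'(r'_z)]$, and $r'_{\tilde h(v)}$ is a proper ancestor of $r'_z$ in $D'$. The goal is to deduce that $z$ is then an ancestor of $v$ and a descendant of $\tilde h(v)$ in the contracted forest $\tilde H$ — contradicting the definition of $\tilde h(v)$ as the $\tilde H$-parent of $v$. To do this I would follow the template of the final contradiction in Lemma~\ref{lemma:loops-nestings-relation}: let $C$ be the SCC of $G'[D'(r'_z)]$ containing both $v$ and $z$, note that every vertex of $C$ lies in $D'(r'_z)$ and is therefore $D$-scanned (since $r'_z$ is in $D'(r'_y)$ — more precisely, $D'(r'_z) \subseteq D'(nca_D(x,y))$ — so all vertices are either in $D'_y$ or became scanned), hence $C \subseteq V(G_{scanned})$; let $\zeta$ be the first vertex of $C$ visited by the dfs generating $H_{scanned}$, so all of $C$ are descendants of $\zeta$ in that dfs tree, and $\zeta$ lies in the auxiliary component of $z$ (because $\zeta, z \in C$ and $C$ is strongly connected inside $D'(r'_z)$). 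Using $level'(z) > level'(\tilde h(v))$ and that $r'_z, r'_{\tilde h(v)}$ are both ancestors of $v$ in $D'$, I would conclude $r'_{\tilde h(v)}$ is an ancestor of $r'_z$, hence $\tilde h(v)$ and $\zeta$ have an ancestor–descendant relation in the dfs tree with $\zeta$ the descendant, and the witnessing path from $v$ (through $\tilde h(v)$) can be rerouted to show $\zeta$ is a descendant of $\tilde h(v)$ in $\tilde H$ while being an ancestor of $v$ — the required contradiction. The delicate point throughout is bookkeeping which vertices are guaranteed to be $D$-scanned and hence present in $G_{scanned}$, and ensuring the dfs-ancestor arguments transfer cleanly through the contraction to $c'$-vertices; I would handle the latter exactly as Lemma~\ref{lemma:loops-nestings-relation} does, since contracting an auxiliary component preserves the relevant ancestor relations.
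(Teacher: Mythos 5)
Your plan (show $\tilde h(v)$ is an $L'$-ancestor of $v'$, then rule out higher-level competitors) is workable in outline, but it hinges on an assertion whose justification is wrong, and that assertion is really the whole proof. You claim that every vertex of $D'(r'_z)$, for $z\in S\setminus D'_y$, is $D$-scanned, giving as the reason that $D'(r'_z)\subseteq D'(nca_D(x,y))$ ``so all vertices are either in $D'_y$ or became scanned.'' That inference is false: $D'(nca_D(x,y))$ can contain many vertices that are neither in $D'_y$ nor $D$-scanned --- take any non-$D$-affected child $w$ of $nca_D(x,y)$ sitting below a bridge; nothing in $D'(w)$ was scanned. The correct reason is different. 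Since $z$ is $D$-scanned, $z\in D'(w_0)$ for some $D$-affected $w_0$ that became a child of $nca_D(x,y)$ in $D'$. If $w_0$ were a proper descendant of $r'_z$, then $nca_D(x,y)=d'(w_0)\in D'(r'_z)$, hence $y\in D'(nca_D(x,y))\subseteq D'(r'_z)$, so $r'_z$ would be a (weak) ancestor of $r'_y$ --- impossible, because $z\notin D'_y$ together with $z\in D'(r'_y)$ forces $r'_z$ to be a \emph{proper descendant} of $r'_y$. Hence $w_0$ is an ancestor of $r'_z$, so $D'(r'_z)\subseteq D'(w_0)\subseteq S$. The paper states exactly this fact up front (``all paths from $y$ to $v$ contain $r'_v$ and $v$ is $D$-scanned, so all of $D'(r'_v)$ is $D$-scanned'') and then the rest is immediate.

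Once that observation is in hand, the elaborate dfs-tree machinery you import from the proof of Lemma~\ref{lemma:loops-nestings-relation} (first-visited vertex $\zeta$ of the SCC $C$, rerouting witnessing paths through $\tilde h(v)$, etc.) is unnecessary and in fact introduces further loose ends --- for instance $\zeta$ need not lie in the auxiliary component of $z$, since an auxiliary component is only the \emph{intersection} of an SCC of $G'[D'(r'_z)]$ with $D'_z$, not the whole SCC. The short route is: for $z\in S\setminus D'_y$ we have $G_{scanned}[D'(r'_z)]=G'[D'(r'_z)]$, so ``strongly connected in $G'[D'(r'_z)]$'' and ``strongly connected in $G_{scanned}[D'(r'_z)]$'' are the same statement; therefore the $L'$-ancestors of $v'$ lying in $S\setminus D'_y$ coincide with the $\tilde H$-ancestors of $v'$ lying in $S\setminus D'_y$, and both $\ell'(v')$ and $\tilde h(v)$ are characterized as the highest-level element of that common set. (Also a small slip: for $u\in S\setminus D'_y$, the tree $D'_u$ is a proper \emph{descendant} tree of $D'_y$, not an ancestor tree.)
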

\begin{proof}
	Let $v\in S \setminus D'_{y}$.
	As all paths from $y$ to $v$ contain $r'_v$ and $v$ is $D$-scanned, all vertices in $D'(r'_v)$ are $D$-scanned.
	Hence, $G_{scanned}[D'(r'_v)] = G'[D'(r'_v)]$.
	Therefore, if $\ell'(v') \in S \setminus D'_{y}$ then $v$ and $\ell'(v')$ are strongly connected in $G'[D'(r'_{\ell'(v)})] = G_{scanned}[D'(r'_{\ell'(v)})]$, which implies that $\ell'(v')$ is an ancestor of $v'$ in $\tilde{H}$.
	(For a visualization of the relations of $v,y,\ell'(v)$, and $s$ see Figure \ref{figure:helping-proofs-1} (i).)
	To complete the proof we show that if a canonical vertex $z$ is the parent of $v'$ in $\tilde{H}$, where $v'\in S \setminus D'_{y}$, then $z$ is the parent of $v'$ in $L'$.
	Assume by contradiction that $\ell'(v') \not= \tilde{h}(v')$.
	Then, there is a vertex $w$ such that $level'(\ell'(v')) <level'(w) < level'(v)$, and $w$ and $v$ are strongly connected in $G'[D'(r'_w)] =G_{scanned}[D'(r'_{w})]$.
	Note that $w$ contradicts the definition of $\ell'(v')$, and thus the lemma follows.
	\proofend
\end{proof}

Finally, for the vertices $v \in S \setminus D'_{y}$ for which
$\ell'(v') \notin S \setminus D'_y$, we compute $\ell'(v')$ according to the following lemma.

\begin{lemma}\label{lem:new-parent-in-h-3}
	Let $v \notin D'_{y}$ be a $D$-scanned vertex such that $level'(\ell'(v')) \leq level'(r'_y)$.
	If $v$ has a path to $y$ in $G'[D'(r'_{y})]$, then $\ell'(v')=y'$ in $L'$.
	Otherwise,
	$\ell'(v') = w'$, where $w$ is the nearest ancestor of $v$ in $L$ such that $level'(w) \leq level'(r'_y)$.
	If there is no such vertex $\ell'(v') = \emptyset$
\end{lemma}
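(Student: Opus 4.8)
The plan is to split into the two cases of the statement and argue each one by leveraging Lemmas \ref{lem:ancestor-remains-ancestor}, \ref{lem:parent-of-y-in-h}, \ref{lem:new-parent-in-h-1}, and \ref{lem:new-parent-in-h-2}, together with the structural facts of Lemmas \ref{lemma:partition-paths} and \ref{lem:out-of-D_y-nothing-changes}. Throughout we fix a $D$-scanned canonical vertex $v\notin D'_y$ with $level'(\ell'(v'))\le level'(r'_y)$, so that (by the earlier lemmas, in particular Lemma \ref{lem:new-parent-in-h-2}) the new parent $\ell'(v')$ lies outside $D'(r'_y)$, i.e. at level at most $level'(r'_y)$. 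Since every new directed path created by the insertion of $(x,y)$ must pass through $y$, any genuinely new ancestor of $v'$ in $L'$ must be reachable from $v$ along a path through $y$.

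\textbf{Case 1: $v$ has a path to $y$ in $G'[D'(r'_y)]$.} First I would observe that $y$ always has a path back to $v$ inside $D'(r'_y)$: since $v\in D'(r'_y)$ and, by Lemma \ref{lemma:partition-paths}, $r'_y$ reaches $v$ using only descendants of $r'_y$ in $D'$ (equivalently, vertices of $D'(r'_y)$), while $r'_y$ is reachable from $y$ inside $D'(r'_y)$ because all of $D'(r'_y)$ lies below the bridge $(d'(r'_y),r'_y)$ and $y\in D'(nca_D(x,y))\subseteq D'(r'_y)$. Combining this with the hypothesised path from $v$ to $y$ in $G'[D'(r'_y)]$ shows $v$ and $y$ are strongly connected in $G'[D'(r'_y)]$; hence $y'$ is an ancestor of $v'$ in $L'$. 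It remains to show $y'$ is the \emph{nearest} such ancestor, i.e. $\ell'(v')=y'$. Any vertex $w'$ strictly between $v'$ and $y'$ in $L'$ would be a canonical vertex with $level'(y')<level'(w')<level'(v)$ strongly connected to $v$ in $G'[D'(r'_w)]$, and since all such vertices are $D$-scanned (all of $D'(r'_w)$ is $D$-scanned, as all paths from $y$ to $v$ go through $r'_w$), $w'$ would already appear as an ancestor of $v'$ in $\tilde H$; this is exactly the situation handled by Lemma \ref{lem:new-parent-in-h-2}, so no such $w'$ exists and $\ell'(v')=y'$.

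\textbf{Case 2: $v$ has no path to $y$ in $G'[D'(r'_y)]$.} Here the insertion of $(x,y)$ creates no new path from $v$ into $D'(r'_y)$ that could reach a vertex at level $\le level'(r'_y)$, so I claim the ancestors of $v'$ in $L'$ at levels $\le level'(r'_y)$ are exactly the images under canonicalisation of the ancestors of $v$ in $L$ at those levels. The forward inclusion is Lemma \ref{lem:ancestor-remains-ancestor}: any ancestor $z$ of $v$ in $L$ with $z\notin D'(r'_y)$, equivalently $level'(z)\le level'(r'_y)$, satisfies $D'(r'_{z'})=D(r_z)$ and $z'$ remains an ancestor of $v'$. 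For the reverse inclusion, suppose $z'$ is an ancestor of $v'$ in $L'$ with $level'(z')\le level'(r'_y)$ but $z$ was not an ancestor of $v$ in $L$; then $D'(r'_{z'})=D(r_z)$ by Lemma \ref{lem:out-of-D_y-nothing-changes}, and $v,z$ are strongly connected in $G'[D(r_z)]$ but not in $G[D(r_z)]=G'[D(r_z)]\setminus(x,y)$, so some path between them uses $(x,y)$; hence there is a path from $v$ to $y$ lying in $D(r_z)$. Since $v\in D'(r'_y)$, Lemma \ref{lemma:partition-paths} forces this path to enter $D'(r'_y)$ only through $r'_y$, and I would argue the portion from $v$ to the first such occurrence of $r'_y$ stays inside $D'(r'_y)$ and (being a prefix that does not yet use $(x,y)$, which can only be reached after leaving and re-entering) yields a path from $v$ to $y$ inside $G'[D'(r'_y)]$ — contradicting the case hypothesis. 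Therefore the ancestor sets at levels $\le level'(r'_y)$ coincide, and the nearest ancestor of $v'$ in $L'$ among them is $w'$ where $w$ is the nearest ancestor of $v$ in $L$ with $level'(w)\le level'(r'_y)$; if $v$ has no such ancestor in $L$ then $v'$ has none in $L'$ either, so $\ell'(v')=\emptyset$.

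\textbf{Main obstacle.} The delicate point is the reverse inclusion in Case 2: extracting from an $(x,y)$-using path between $v$ and $z$ (which a priori wanders freely through $D(r_z)\supsetneq D'(r'_y)$) a clean path from $v$ to $y$ that stays entirely inside $G'[D'(r'_y)]$, so as to contradict the case hypothesis. I expect to need Lemma \ref{lemma:partition-paths} applied at the bridge $(d'(r'_y),r'_y)$ to pin down how such a path can cross the boundary of $D'(r'_y)$, and a careful look at where along the path the edge $(x,y)$ is first used — noting $x\in D'(nca_D(x,y))\subseteq D'(r'_y)$ and $y\in D'(r'_y)$ — to conclude that a suitable prefix never leaves $D'(r'_y)$ and never needs the edge $(x,y)$. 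Handling the interaction between "the path must use $(x,y)$" and "the path, restricted to before it could use $(x,y)$, already reaches $y$" is where the argument must be made precise.
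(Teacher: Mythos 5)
Your Case 1 is essentially correct, though more roundabout than needed: once you observe that $y$ and $v$ are strongly connected in $G'[D'(r'_y)]$, so that $y'$ is an ancestor of $v'$ in $L'$ at level $level'(r'_y)$, the conclusion $\ell'(v')=y'$ follows immediately from the hypothesis $level'(\ell'(v'))\le level'(r'_y)$ together with Lemma~\ref{lemma:unique-level}, without invoking Lemma~\ref{lem:new-parent-in-h-2}.

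Case 2, however, has genuine gaps. First, you equate ``$z\notin D'(r'_y)$'' with ``$level'(z)\le level'(r'_y)$''; these are not equivalent. For an ancestor $z$ of $v$, $z\notin D'(r'_y)$ corresponds to $level'(z)<level'(r'_y)$, while $level'(z)=level'(r'_y)$ corresponds to $z\in D'_y\subset D'(r'_y)$. Consequently your forward inclusion, which rests entirely on Lemma~\ref{lem:ancestor-remains-ancestor} (stated only for $z\notin D'(r'_y)$), silently omits the subcase where the nearest eligible ancestor $w$ lies in $D'_y$. The paper handles that subcase by a separate argument: $D(r_w)\subseteq D'(r'_y)$ preserves the old strong connectivity of $v$ and $w$, so $w'$ remains an ancestor of $v'$ at level $level'(r'_y)$, and Lemma~\ref{lemma:unique-level} finishes. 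Second, in the reverse inclusion you jump from ``some path between $v$ and $z$ uses $(x,y)$'' to ``there is a path from $v$ to $y$''; but a priori the $(x,y)$-using path could be the one from $z$ to $v$, whose $(x,y)$-prefix starts at $z$, not at $v$. The paper first constructs a path from $z$ to $v$ \emph{avoiding} $(x,y)$ (via $r'_y$, using Lemma~\ref{lemma:partition-paths} and $v\in D(r'_y)$), which forces every $(x,y)$-using path to run from $v$ to $z$. Third — and you flag this yourself — your intended contradiction (a path from $v$ to $y$ inside $G'[D'(r'_y)]$) does not follow, because the prefix $P_{vx}$ may well leave $D'(r'_y)$ before reaching $x$; the ``first occurrence of $r'_y$'' can occur only after the path has already left. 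The paper avoids this entirely by using the case hypothesis in the opposite direction: since $v$ has no path to $y$ inside $G'[D'(r'_y)]$, the prefix $P_{vx}$ must leave $D'(r'_y)$, hence must pass through $r'_y$; splicing $P_{vr'_y}$ with a path $r'_y\to y$ inside $D(r'_y)$ and the suffix $P_{yz}$ yields a path from $v$ to $z$ in $G[D(r_z)]$ that avoids $(x,y)$, showing $z$ was already an ancestor of $v$ in $L$ — contradicting the choice of $z$, not the case hypothesis. You would need to restructure your Case 2 along these lines to close the argument.
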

\begin{proof}
	First, we assume
	$v$ has a path to $y$ in $G'[D'(r'_{y})]$.
	Since $v$ is $D$-scanned, it means that $y$ has a path to $v$ in $ G'[D'(nca_{D'}(x,y))]$, and therefore, in $G'[D'(r'_{y})]$.
	Hence, $y$ and $ v $ are strongly connected in $G'[D'(r'_y)]$.
	By the definition of the \newLNT{} forest, $y'$ is an ancestor of $v'$ in $L'$, and since $level'(\ell'(v')) \leq level'(r'_y)$ it follows that $\ell'(v')=y'$.
	(For a visualization of the relations of $v,y,s$ in $D$ see Figure \ref{figure:helping-proofs-1} (ii).)	
	Now we assume $v$ does not contain a path to $y$ in $G'[D'(r'_{y})]$.
	First assume that $w\in D'_y$.
	In this case $v$ and $w$ were strongly connected in $G[D(r_{w})]$ before the insertion of $(x,y)$.
	Since $D(r_{w})\subseteq D'(r'_y)$, if follows that $v$ and $w$ are strongly connected in  $G'[D'(r'_{y})]$.
	By the definition of $L'$, $w'$ is an ancestor of $v'$ in $L'$, and by the assumption of the lemma that $level'(\ell'(v')) \leq level'(r'_y)$, and Lemma \ref{lemma:unique-level}, it follows that $\ell'(v') = w'$.
	
	Finally, for the rest of the proof we assume $w \notin D'(r'_y)$.
	By Lemma \ref{lem:ancestor-remains-ancestor}, $w'$ remains an ancestor of $v'$ in $L'$.
	(For a visualization of the relations of $v,y,w',z,s$ in $D$ see Figure \ref{figure:helping-proofs-1} (iii).)
	Assume, by contradiction, that $\ell'(v') = z'$ and $level'(w') < level'(z') < level'(r'_y)$ and $c_z$ is not an ancestor of $v$ in $L$.
	By Lemma \ref{lemma:partition-paths} all paths from $z$ to $v$ contain $r'_y$.
	As $v \in D(r'_y)$ there is a path from $r'_y$ to $v$ in $G[D(r'_{y})]$.
	Since  $D(r'_{y}) \subset D'(r'_{z})$, there is a path from $z$ to $v$ in $G'[D'(r'_{z})] \setminus (x,y)$ ($=G[D(r_{z})]$ by Lemma~\ref{lem:ancestor-remains-ancestor}).
	This fact, combined with our assumption that $c_z$ is not an ancestor of $v$ in $L$, implies that there is no path from $v$ to $z$ in $G[D(r_{z})]$.
	Since $v$ and $z$ are strongly connected in  $G'[D'(r'_{z})]$ but not in $G[D(r_{z})] = G'[D'(r'_{z})]\setminus (x,y)$ all paths from $v$ to $z$ contain $(x,y)$; let $P_{vz}$ be such a path from $v$ to $z$ and let $P_{vx}$ and $P_{yz}$ be its subpaths from $v$ to $x$ and from $y$ to $z$, respectively.
	Note that $P_{vx}$ and $P_{yz}$ exist also in $G[D(r_{z})] = G'[D'(r'_{z})]\setminus (x,y)$.
	Since we deal with the case where $v$ does not have a path to $y$ in $G'[D'(r'_{y})]$, the path $P_{vx}$ should contain a vertex $u\in D'(r'_z) \setminus D'(r'_{y})$ (as otherwise all vertices are in $D'(r'_{y})$ and the path $P_{vx}\cdot (x,y)$ is a path from $v$ to $y$ in $G'[D'(r'_{y})]$).
	By Lemma \ref{lemma:partition-paths}, $P_{vx}$ contains $r'_y$; let $P_{vr'_y}$ be the subpath of $P_{vx}$ from $v$ to $r'_y$.
	Since $y\in D(r'_y)$ there is a path $P_{r'_yy}$ from $r'_y$ to $y$ in $G[D(r'_{y})] \subset G[D(r'_z)]$ avoiding $(x,y)$.
	Observe that the paths $P_{vr'_y}$,$P_{r'_yy}$, and $P_{yz}$ all exist in $G[D(r'_z)]$.
	Collectively, we have that $P_{vr'_y}\cdot P_{r'_yy} \cdot P_{yz}$ is a path from $v$ to $z$ in $G[D(r'_z)] = G[D(r_z)]$.
	As we argued $z$ also has a path to $v$ in $G[D(r_z)]$, and therefore $z$ and $v$ are strongly connected in $G[D(r_z)]$.
	Since $r_z$ is an ancestor of $r_v$ in $D$, by the definition of $L$, $c_z$ is an ancestor of $v'$ in $L$.
	This is a contradiction to our assumption that $w$ is the nearest ancestor of $v$ in $L$ such that $w \notin D'(r'_y)$ and $level'(w) < level'(z) < level'(r'_y)$.
	The lemma follows.
	\proofend
\end{proof}

\begin{algorithm}[t!]
	\LinesNumbered
	\DontPrintSemicolon

	Set $\ell'(c'_y)=c'_w$, where $w$ is the nearest ancestor of $c_y$ in $L$ such that $w\notin D'(r'_y)$.
	
	Let $S$ be the set of $D$-scanned vertices, and $G_{scanned} = G'[V(S)]$.
	Compute the loop nesting forest $H_{scanned}$ of $G_{scanned}$ with start vertex $y$.
	Contract every vertex $v \in V(G_{scanned})$ into $c'_v$ in $H_{scanned}$, forming $\tilde{H}$. \;
	
	\ForEach{\cvertex{} vertex $v \in V(G_{scanned})$}
	{
		Let $v'=c'_v$ and $y'=c'_y$.\;
		\eIf{$v\in D'_y$}
		{
			\lIf{$v' \not= y'$}{$\ell'(v') = c'_w$, where $w$ is the nearest ancestor of $v$ in $L$ such that $w\notin D'(r'_y)$.}
		}
		{
			\lIf{$\tilde{h}(v') \in S \setminus D'_y$}{$\ell'(v')= \tilde{h}(v')$}
			\lElseIf{$v$ has a path to $y$ in $G'[D'(r'_y)]$}{$\ell'(v') = \ell'(y')$.\label{line:if-path-exists}}
			\lElse{$\ell'(v') = c'_w$, where $w$ is the nearest ancestor of $v$ in $L$ such that $level'(w)\leq level'(r'_y)$.}
		}
	}
	\caption{\textsf{Update-D-scanned}$(\mathcal{D},L,x,y,S)$}
	\label{alg:UpdateDAffected}
\end{algorithm}

Lemmas \ref{lem:parent-of-y-in-h}, \ref{lem:new-parent-in-h-1}, \ref{lem:new-parent-in-h-2}, and \ref{lem:new-parent-in-h-3} provide the tools to
update the parent in $L'$ of the $D$-scanned vertices $S$.
The pseudocode of this update is detailed
in Algorithm \ref{alg:UpdateDAffected}.
The challenging part is to determine whether $v$ has a path to $y$ in $G'[D'(r'_y)]$ (Line \ref{line:if-path-exists} of Algorithm \ref{alg:UpdateDAffected}).
We show how we can answer the queries of Line \ref{line:if-path-exists} of Algorithm \ref{alg:UpdateDAffected}, and thus how to update the parents in $L'$ of all the \cvertex{} $D$-scanned vertices in time linear in the 
number of $D$-scanned vertices and their adjacent edges.

\begin{lemma}
	For all $D$-scanned vertices $v$, we can compute $\ell'(v')$ in time $O(|V(S)|+|E(S)|)$.
	\label{lemma:scanned-h-computation}
\end{lemma}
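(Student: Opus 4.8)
The plan is to bound the running time of each step of Algorithm~\ref{alg:UpdateDAffected}. First I would record the structural fact that $V(S)\subseteq D'(r'_y)$: by the proof of Lemma~\ref{lem:only-in-D(r_y)} we have $nca_D(x,y)\in D'(r'_y)$, and by Lemma~\ref{lemma:insert-affected} every $D$-affected vertex becomes a child of $nca_D(x,y)$ in $D'$, so its whole $D'$-subtree — which contains every $D$-scanned vertex — stays inside $D'(nca_D(x,y))\subseteq D'(r'_y)$. Hence $G_{scanned}=G'[V(S)]$ is an induced subgraph of $G'[D'(r'_y)]$. The construction of the loop nesting forest $H_{scanned}$ of $G_{scanned}$ rooted at $y$ and of its contraction $\tilde{H}$ costs $O(|V(S)|+|E(S)|)$ by the linear-time algorithms of~\cite{dominators:bgkrtw,st:t}, and for each scanned vertex $v$ the tests ``$v\in D'_y$'' and ``$\tilde{h}(v')\in S\setminus D'_y$'' as well as the lookup of $\tilde{h}(v')$ take $O(1)$ time. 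So the whole question reduces to bounding two things: (i) the assignments $\ell'(v')=c'_w$ with $w$ the nearest ancestor of $v$ in the old tree $L$ satisfying a level/subtree condition (as prescribed by Lemmas~\ref{lem:parent-of-y-in-h}, \ref{lem:new-parent-in-h-1} and~\ref{lem:new-parent-in-h-3}), and (ii) deciding, for every $v\in S\setminus D'_y$, whether $v$ has a path to $y$ in $G'[D'(r'_y)]$ (Line~\ref{line:if-path-exists}).

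For (i) I would compute each required ancestor by walking up $L$ from $v$, stopping as soon as the current vertex leaves $D'(r'_y)$ (or falls at or below the level of $r'_y$). The key observation is that every vertex $z$ met by this walk before it stops is $D$-scanned, with at most one exception: if $z$ is an ancestor of $v$ in $L$ with $z\in D'(r'_y)$ and $z$ is not $D$-scanned, then — since no bridge is locally canceled — the argument of Lemma~\ref{lem:out-of-D_y-nothing-changes} gives $dom'(z)=dom(z)$, $r'_z=r_z$ and $D'(r'_z)=D(r_z)$, while $r'_z$ stays an ancestor of $r'_v$ in $D'$; combined with $r'_z\in D'(r'_y)$ this forces $r_z=r'_y$, i.e.\ $z\in D'_y$, and by Lemma~\ref{lemma:unique-level} $v$ has at most one ancestor in $L$ inside the single tree $D'_y$. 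So each walk traverses at most one non-scanned vertex and otherwise only $D$-scanned vertices; memoizing the target at each $D$-scanned vertex (it depends only on that vertex) makes the total cost of all these walks $O(|V(S)|)$.

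Step (ii) is the main obstacle. Here the plan is to read the answers directly off $H_{scanned}$, which the algorithm builds anyway: I claim that for a $D$-scanned vertex $v$, $v$ has a path to $y$ in $G'[D'(r'_y)]$ if and only if $v\in\mathit{loop}_{H_{scanned}}(y)$, i.e.\ $v$ is a descendant of $y$ in $H_{scanned}$ (equivalently, $c'_v$ is a descendant of $c'_y$ in $\tilde{H}$). One direction is easy: $y$ reaches every vertex of $G_{scanned}$, so $\mathit{loop}_{H_{scanned}}(y)$ is exactly the set of vertices of $G_{scanned}$ that reach $y$ in $G_{scanned}$, and $G_{scanned}$ is a subgraph of $G'[D'(r'_y)]$. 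For the converse, since $y$ reaches $v$ inside $G_{scanned}\subseteq G'[D'(r'_y)]$, any $v$--$y$ path in $G'[D'(r'_y)]$ places $v$ in the strongly connected component $C$ of $y$ in $G'[D'(r'_y)]$; it then suffices to show $C\subseteq V(S)$, because then $C$ is strongly connected using only vertices of $V(S)$, hence inside $G_{scanned}$, so $v$ reaches $y$ there. The containment $C\subseteq V(S)$ is the delicate part, and I would prove it by showing that any $w\in C$ is a $D'$-descendant of a $D$-affected vertex: $w$ reaches $y$ and is reached by $y$ inside $G'[D'(r'_y)]$, and since the only new edge $(x,y)$ enters $y$, the reachability $y\rightsquigarrow w$ already holds in $G[D'(r'_y)]$ while $w\rightsquigarrow y$ forces $w\rightsquigarrow x$ in $G[D'(r'_y)]$, so the insertion merges $w$ into $y$'s component inside the chunk rooted at $r'_y$; a case analysis using Lemma~\ref{lem:out-of-D_y-nothing-changes} (a vertex of $D'(r'_y)$ lying outside $D'(nca_D(x,y))$ keeps its dominators and sub-dominator tree, so it cannot become strongly connected to $y$ inside $G'[D'(r'_y)]$ without making every $s$--$y$ path use $(x,y)$, contradicting that $y$ is reachable from $s$ in $G_s$) confines $w$ to $D'(nca_D(x,y))$ and, in fact, below a $D$-affected vertex. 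Granting the claim, a single sweep of $H_{scanned}$ marking the subtree of $y$ answers all the Line~\ref{line:if-path-exists} queries in $O(|V(S)|)$ time, and summing all contributions gives the asserted $O(|V(S)|+|E(S)|)$ bound. I expect the containment $C\subseteq V(S)$ to be the step requiring the most care.
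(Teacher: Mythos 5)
Your treatment of the construction of $H_{scanned}$, $\tilde H$, and the ancestor lookups for step (i) is in the right spirit (the paper does the ancestor computation top–down rather than by walking up $L$ with memoization, but both give $O(|V(S)|)$). The problem is step (ii), which is also where the paper spends almost all of its effort, and your argument there has a genuine gap.

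You reduce Line~\ref{line:if-path-exists} to the claim that the strongly connected component $C$ of $y$ in $G'[D'(r'_y)]$ satisfies $C\subseteq V(S)$, so that ``$v$ reaches $y$ in $G'[D'(r'_y)]$'' is equivalent to ``$v$ is a descendant of $y$ in $H_{scanned}$.'' This containment is not true in general, and the two steps you use to argue it both break down. First, you write that ``$w\rightsquigarrow y$ forces $w\rightsquigarrow x$ in $G[D'(r'_y)]$''; this only holds when every $w$--$y$ path uses the new edge $(x,y)$. But if $w$ was already strongly connected to $y$ before the insertion — for instance, $w$ was in $y$'s old auxiliary component and stayed in $D'_y$ — then $w$ reaches $y$ in $G[D(r_y)]\subseteq G'[D'(r'_y)]$ without touching $(x,y)$, and nothing forces $w\rightsquigarrow x$. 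Second, even granting that $w$ is confined to $D'(nca_D(x,y))$, you assert ``and, in fact, below a $D$-affected vertex'' without justification; membership in $D'(nca_D(x,y))$ does not make a vertex $D$-scanned, and a vertex of $y$'s auxiliary component whose parent in $D$ is unchanged and which is not a $D$-descendant of any $D$-affected vertex is a concrete example. Once $C$ can contain such a vertex $w\notin S$, the ``iff'' fails: a $D$-scanned $v\in S\setminus D'_y$ whose every $v$--$y$ path in $G'[D'(r'_y)]$ passes through $w$ does reach $y$ in $G'[D'(r'_y)]$ but has no $v$--$y$ path in $G_{scanned}=G'[V(S)]$, so a sweep of the $y$-subtree of $H_{scanned}$ would miss it.

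The paper's own proof explicitly confronts this: it considers paths from $v$ to $y$ that ``use vertices outside $S$,'' shows that the first such vertex $w$ on the path satisfies $c'_w=c'_y$, and therefore marks every $D$-scanned vertex that has an outgoing edge into the auxiliary component of $y$; then a backward search in $G'[S]$ from the marked vertices is both correct and $O(|E(S)|)$. That marking step is exactly what is missing from your approach, and without it (or some equivalent) the reduction to $\mathit{loop}_{H_{scanned}}(y)$ does not go through.
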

\begin{proof}
	The loop nesting forest of the graph induced by the $D$-scanned vertices can be computed in linear time to the size of $S$, i.e., $O(|V(S)|+|E(S)|)$.
	The nearest ancestor $w$ of each $D$-scanned \cvertex{} vertex $v$ in $L$, such that $level'(w')\leq level'(r'_{y})$, can be computed in total $O(|V(S)|)$ as follows.
	We find all \cvertex{} vertices $v \in S$ such that $level'(\ell(v))  \leq level'(r'_{y})$ and we assign to each descendant of $v$ in $L$ the vertex $\ell(v)$.
	All the necessary tests of Lemmas \ref{lem:new-parent-in-h-1}, \ref{lem:new-parent-in-h-2}, and \ref{lem:new-parent-in-h-3} can be performed in constant time per vertex.
	
	Now we show that we can determine in time $O(|V(S)|+|E(S)|)$ all vertices $v \in S \setminus D'_{y}$ that have a path to $y$ in $G'[D'(r'_{y})]$, as required by Lemma \ref{lem:new-parent-in-h-3}.
	We mark all vertices in $S$ that have outgoing edges to vertices $v$ such that $c'_v = c'_y$.
	Then, the vertices in $S$ that reach $y$ in $G'[D'(r'_{y})]$ are the vertices that can reach a marked vertex in $G'[S]$.
	These vertices can be determined in time $O(|E(S)|)$, by executing backward traversals from the marked vertices without visiting the same vertices twice.
	Now we show that the above procedure is correct by showing that a vertex $v\in S \setminus D'_{y}$ has a path to $y$ in $G'[D'(r'_{y})]$ if and only if it has a path to a marked vertex in $G'[S]$.
	We start with the forward direction. If a vertex $v \in S \setminus D'_{y}$ has a path to $y$ in $G'[D'(r'_{y})]$ using only vertices in $S$, then clearly $v$ reaches $y$ in $G'[S]$.
	If on the other hand, there is a path from $v$ that uses vertices outside $S$, then let $w$ be the first vertex on that path such that $w\notin S$, and let $z$ be its predecessor on the path.
	Then $c'_w=c'_y$ since: (i) by the fact that $w\notin S$, it holds that either $r'_w \notin D'(r'_y)$ or $r'_w=r'_y$, where only the second case is interesting since in the first it holds that $w \notin G'[D'(r'_{y})]$, (ii) $w$ reaches $y$ in $G'[D'(r'_{y})]$ (by our assumption that a path from $v$ to $y$ exists in $G'[D'(r'_{y})]$ using $w$) and (iii) $y$ reaches $v$ since $v$ is $D$-scanned (which means $y$ has a path to $v$ in $G'[D'(r'_{y})]$).
	Therefore, $v$ has a path to the marked vertex $z\in S$.
	We continue with the reverse direction of the claim, that is, if $v$ does not have a path to $y$ in $G'[D'(r'_{y})]$ then there is not path to a marked vertex in $G'[S]$.
	This is true for the paths containing only vertices in $S$.
	Also there is no path from $v$ to a marked vertex $z$ in $G'[S]$, since otherwise, there is a path from $v$ to a vertex $w$ such that $c'_w=c'_y$ and therefore to $y$ in $G'[D'(r'_{y})]$ using the edge $(z,w)$, while we assumed that no such path exists.
	Thus, we showed how to compute all the vertices that have a path to $y$ in $G'[D'(r'_{y})]$ in time $O(|E(S)|)$, which concludes the lemma.
	\proofend
\end{proof}

\subsection{Updating the $L$-affected vertices that are not $D$-scanned}
\label{sec:canonical-loop-update-not-d-affected}

Now we consider updating the parent in the \newLNT{} forest of the vertices that are $L$-affected but not $D$-affected.
We start with the following
lemma that is used throughout the section. The lemma suggests that we can find all the $L$-affected vertices via a backward traversal from $y$ vising all vertices in $G'[D'(r'_y)]$ that have a path to $y$.
In the worst case we spend $O(m)$ time to execute the traversal, which our algorithm cannot afford.
Later, we show how to speed up this process by exploiting some key properties of the $L$-affected vertices.

\begin{lemma}
	\label{lem:path-through-x-y}
	For every $L$-affected vertex $v$ that is not $D$-scanned, every path from $v$ to $\ell'(v')$ in $G'[D'(r'_{\ell'(v')})]$ contains $(x,y)$.
	Moreover, $v$ has a path to $x$ in $G[D'(r'_{y})] = G'[D'(r'_{y})]\setminus (x,y)$.
\end{lemma}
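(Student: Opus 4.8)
The plan is to argue both claims essentially by contradiction, using the fact that $v$ is $L$-affected precisely because the insertion of $(x,y)$ created a new strong-connectivity relation inside some $(d(r'_{\ell'(v')}),r'_{\ell'(v')})$-dominated component. First I would fix notation: let $w' = \ell'(v')$, and let $z$ be such that $c'_z = c_{\ell(v)}$ was the parent of $v$ in $L$ before the insertion (so $v$ being $L$-affected means $w' \neq c'_{\ell(v)}$, i.e.\ the new parent is strictly deeper in $D'$ than the old one by Lemma~\ref{lemma:unique-level}, since the only way $\ell'(v')$ can differ is by acquiring a deeper ancestor). By Lemma~\ref{lem:only-in-D(r_y)} we have $v \in D'(r'_y)$, and since $v$ is not $D$-scanned, Lemma~\ref{lem:out-of-D_y-nothing-changes}-type reasoning tells us $v$'s position relative to $r'_y$ is as in $D$; in particular $v \in D(r'_y)$ with $D(r'_y) = D'(r'_y)\setminus$ (the affected part).

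For the first claim, suppose for contradiction that there is a path $P$ from $v$ to $w'$ inside $G'[D'(r'_{w'})]$ that avoids $(x,y)$. Since $(x,y)$ is the only new edge, $P$ is also a path in $G[D'(r'_{w'})]$. If I can also show $w'$ had a path to $v$ in $G[D(r_{w'})]$ (equivalently in $G'[D'(r'_{w'})]$ avoiding $(x,y)$), then $v$ and $w'$ were already strongly connected in $G[D(r_{w'})]$, so $w'$ was already an ancestor of $v$ in $L$ at level $level'(w') = level(r_{w'})$; by Lemma~\ref{lemma:unique-level} this was the \emph{only} ancestor of $v$ at that level, and since $w' = \ell'(v')$ is the deepest ancestor in $L'$, this would force $\ell(v)$ to also have level $\le level(r_{w'})$ and in fact $\ell'(v') = c'_{\ell(v)}$, contradicting $L$-affectedness. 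So the whole content is: a path from $v$ to $w'$ avoiding $(x,y)$ forces a path from $w'$ to $v$ avoiding $(x,y)$ — but that is immediate, because $w'$ had a path to $v$ in $G$ already (indeed $w'$ reaches every vertex of its dominated component, or more carefully: if $w'$ did not reach $v$ in $G[D(r_{w'})]$ before, then the new strong connectivity between them must use $(x,y)$ in \emph{both} directions, which is impossible since $(x,y)$ is a single edge traversed once on a simple path). Actually the cleanest version: $v$ and $w'$ are strongly connected in $G'[D'(r'_{w'})]$ but not in $G[D(r_{w'})] = G'[D'(r'_{w'})]\setminus(x,y)$ (this last equality needs that $D'(r'_{w'}) = D(r_{w'})$, which holds because $r'_{w'}$ is an ancestor of $r'_y$ and Lemma~\ref{lem:out-of-D_y-nothing-changes} applies if $w' \notin D'(r'_y)$, and a separate short argument if $w' \in D'(r'_y)$). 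Hence \emph{every} cycle through $v$ and $w'$ in $G'[D'(r'_{w'})]$ uses $(x,y)$; in particular every $v$-to-$w'$ path does, unless there were already a $v$-to-$w'$ path and already a $w'$-to-$v$ path avoiding $(x,y)$ — but then they would have been strongly connected before, contradiction. This gives the first claim.

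For the second claim, take the $v$-to-$w'$ path $P$ in $G'[D'(r'_{w'})]$, which by the first claim contains $(x,y)$; write $P = P_{vx}\cdot(x,y)\cdot P_{yw'}$. The prefix $P_{vx}$ avoids $(x,y)$ (on a simple path $(x,y)$ appears once), so $P_{vx}$ lies in $G'[D'(r'_{w'})]\setminus(x,y)$. I then need to confine $P_{vx}$ to $D'(r'_y)$. Here I would invoke Lemma~\ref{lemma:partition-paths}: since $v \in D(r'_y)$ and $x \notin D(r'_y)$ is impossible because... — actually $x$ need not be in $D'(r'_y)$ in general, but $v$ is, and $y \in D'(nca_D(x,y)) \subseteq D'(r'_y)$; the point is that any edge leaving $D'(r'_y)$ from within it is forbidden for reaching back into $D(r'_y)$ without passing through $r'_y$ again, and $v$ already reaches $r'_y$. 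More directly: $nca_D(x,y) \in D'(r'_y)$ and $x$ is a descendant of $nca_D(x,y)$, hence $x \in D'(r'_y)$ — wait, $x$ is an ancestor-side endpoint, so $x$ need not be in $D'(r'_y)$. The correct move: $P_{vx}\cdot(x,y)$ is a path from $v$ to $y$; since $y \in D'(r'_y)$ and all paths from outside $D'(r'_y)$ to $y$ pass through $r'_y$ (Lemma~\ref{lemma:partition-paths}), and $v \in D'(r'_y)$, I claim $P_{vx}$ can be taken inside $D'(r'_y)$ — if it left $D'(r'_y)$ it would have to re-enter through $r'_y$, and the portion from the last visit of $r'_y$ onward already gives a shorter $v'$-to-... hmm, it gives an $r'_y$-to-$x$ path, not a $v$-to-$x$ path. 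So instead I argue: $v$ reaches $r'_y$ (as $v\in D'(r'_y)$, trivially via... no). The honest statement is that $v$ has a path to $x$ in $G[D'(r'_y)]$; I will obtain this by noting that since $v$ is $L$-affected it must lie in $D'(r'_z)$ for the \emph{new} ancestor $z=w'$ and the only new connectivity is via $(x,y)$, so $v$ must reach $x$ somehow avoiding $(x,y)$, and by the dominator structure (Lemma~\ref{lemma:partition-paths}, applied with the bridge $(d'(r'_y),r'_y)$ and the fact that $v,x$ relate to $D'(r'_y)$) this reaching path stays inside $D'(r'_y)$, i.e.\ inside $G[D'(r'_y)] = G'[D'(r'_y)]\setminus(x,y)$.

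The main obstacle I anticipate is the careful case analysis on where $w' = \ell'(v')$ sits relative to $D'(r'_y)$ — whether $w' \in D'(r'_y)$ (so $D'(r'_{w'}) \subseteq D'(r'_y)$ and everything happens "inside") or $w' \notin D'(r'_y)$ (so Lemma~\ref{lem:out-of-D_y-nothing-changes} gives $D'(r'_{w'}) = D(r_{w'})$ cleanly) — together with pinning down, via Lemma~\ref{lemma:partition-paths}, that the $v$-to-$x$ subpath of the new cycle does not escape $D'(r'_y)$. The strong-connectivity-before-versus-after dichotomy and the single-use of $(x,y)$ on a simple path are the workhorses, and I expect those to go through smoothly; the bookkeeping with the bridge decomposition levels is where the proof will actually need to be written out with care.
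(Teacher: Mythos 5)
Your proposal follows the same high-level strategy as the paper's proof (both claims by contradiction, the case split on whether $\ell'(v')\in D'(r'_y)$, reliance on Lemmas~\ref{lem:only-in-D(r_y)}, \ref{lem:out-of-D_y-nothing-changes}, \ref{lem:ancestor-remains-ancestor}, and \ref{lemma:partition-paths}), but there are genuine gaps in the execution.

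For the first claim, the ``cleanest version'' you settle on has a logical flaw. From ``$v$ and $w'$ are strongly connected in $G'[D'(r'_{w'})]$ but not in $G'[D'(r'_{w'})]\setminus(x,y)$'' one can only conclude the disjunction that either every $v$-to-$w'$ path uses $(x,y)$ or every $w'$-to-$v$ path uses $(x,y)$. Your ``unless'' step implicitly treats the failure of the first disjunct as implying the negation of the second, but a priori it is possible that $v$ already reaches $w'$ avoiding $(x,y)$ while $w'$ reaches $v$ only via $(x,y)$, and that scenario is not excluded by your cycle observation. Your earlier instinct --- that $w'$ reaches $v$ avoiding $(x,y)$ --- is the right thing to establish, but the justification you gave (``$w'$ reaches every vertex of its dominated component'') is wrong: $v$ is precisely \emph{not} in the same $(d(r_{w'}),r_{w'})$-dominated component as $w'$ before the insertion (otherwise $w'$ would already be an $L$-ancestor of $v$ and we would have nothing to prove). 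The paper instead first dispatches the subcase $\ell'(v')\notin D'(r'_y)$ directly with Lemma~\ref{lem:ancestor-remains-ancestor}, and in the subcase $\ell'(v')\in D'(r'_y)$ uses Lemma~\ref{lem:out-of-D_y-nothing-changes} (together with the observation that $D$-scanned vertices only migrate into $D'_y$) to get $D'(r'_{\ell'(v')})\subseteq D(r_{\ell'(v')})$, from which it argues $v$ and $\ell'(v')$ are strongly connected in $G[D(r_{\ell'(v')})]$ and lands the contradiction with $level'(\ell'(v'))>level'(\ell(v))$. The reverse-reachability step is where the actual content sits, and your proposal does not supply it.

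For the second claim you correctly set up $P = P_{vx}\cdot(x,y)\cdot P_{y\ell'(v')}$, but you never close the argument that $P_{vx}$ lies in $D'(r'_y)$, and you say as much (``the honest statement is \dots I will obtain this by noting \dots''). The paper instead argues by a second contradiction: if $v$ has no path to $x$ in $G[D'(r'_y)]$, then by Lemma~\ref{lemma:partition-paths} every $v$-to-$x$ path in $G'[D'(r'_{\ell'(v')})]\setminus(x,y)$ passes through $r'_y$; splicing the prefix $v\leadsto r'_y$ with an $(x,y)$-avoiding path $r'_y\leadsto y$ (which exists because $y\in D'_y$) and with the $(x,y)$-avoiding tail $y\leadsto\ell'(v')$ yields a path from $v$ to $\ell'(v')$ that avoids $(x,y)$, contradicting the first claim. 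That specific splice is the missing piece in your sketch.
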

\begin{proof}
	
	Assume by contradiction that there exists a $L$-affected vertex $v \notin S$ such that $v$ has a path $P$ to $\ell'(v')$ in $G'[D'(r'_{\ell'(v')})]\setminus(x,y)$.
	Since $v$ is not $D$-scanned the ancestors of $v$ in $D$ and $D'$ are the same and the edge $(d(r'_{\ell(v)}),r'_{\ell(v)})$ is a strong bridge in $G'_s$ such that $v\in D'(r'_{\ell(v)})$.
	Therefore, $c'_{\ell(v)}$ is an ancestor of $v'$ in $L'$.
	Notice that $level'(\ell'(v')) > level'(\ell(v))$ as $v$ is $L$-affected and $c'_{\ell(v)}$ remains an ancestor of $v'$ in $L'$ by Lemma~\ref{lem:ancestor-remains-ancestor}.
	If $\ell'(v')\notin D'(r'_y)$, by Lemma~\ref{lem:ancestor-remains-ancestor}, $\ell'(v')$ is an ancestor of $v$ in $L$, contradicting the fact that $level'(\ell'(v')) > level'(\ell(v))$.
	Now we assume $\ell'(v') \in D'(r'_y)$.
	All $D$-scanned vertices become children of $nca_D(x,y)$, and by Lemma~\ref{lem:out-of-D_y-nothing-changes} no vertex outside of $nca_{D'}(x,y)$ is $D$-scanned.
	Therefore, vertices can only move out of $D'(r'_{\ell'(v')})$ and hence $D'(r'_{\ell'(v')}) \subset D(r'_{\ell'(v')})$.
	Collectively we have that $\ell'(v')$ and $v$ are strongly connected in $G'[D'(r'_{\ell'(v')})]\setminus (x,y) = G[D(r'_{\ell'(v')})]$, and moreover $r'_{\ell'(v')}$ is a proper ancestor of $v$ in $D'$ (as it is in $D$).
	As $r'_{\ell'(v')}\in D'(r'_{\ell(v)})$, we have $r'_{\ell'(v')}\in D(r_{\ell(v)})$ as non of the ancestor of $v$ in $D$ are $D$-scanned.
	That means $r_\ell'(v')$ is a ancestor of $v$ in $L$ and a descendant of $r_{\ell(v)}$, which contradicts the fact that $level'(\ell'(v')) > level'(\ell(v))$.
	Thus, all paths from a $L$-affected vertex $v \notin S$  to $\ell'(v')$ contain $(x,y)$.

	Now we prove the second part of the lemma.
	Assume by contradiction that $v$ does not have a path to $x$ in $G[D'(r'_{y})]$.
	Then, all paths contain a vertex $w \notin D'(r'_{y})$, and by Lemma \ref{lemma:partition-paths} also vertices $(d'(r'_y),r'_y)$.
	Therefore, all paths in $G'[D'(r'_{\ell'(v')})]$ from $v$ to $x$ contain $r'_y$ (clearly avoiding $(x,y)$); let $P_{vr'_y}$ be the subpath from $v$ to $r'_y$ of any such path.
	By the fact that $y\in D'_{y}$ it follows that there is a path from $r'_y$ to $y$ avoiding $(x,y)$; let $P_{r'_yy}$ be such a path.
	Then, $P_{vr'_y} \cdot P_{r'_yy}$ is a path in $G'[D'(r'_{\ell'(v')})]$ from $v$ to $y$ avoiding $(x,y)$, which contradicts
	the fact that all paths from $v$ to $\ell'(v')$ contain $(x,y)$.
	\proofend
\end{proof}
		
		Notice that we only need to consider the vertices that are in $D'(r'_y)$ and are not $D$-scanned, by Lemma \ref{lem:only-in-D(r_y)}.
		
		\begin{figure}[t!]
			\begin{center}
				\includegraphics[trim={.5cm 1cm 1cm 0cm}, clip=true, width=.9\textwidth]{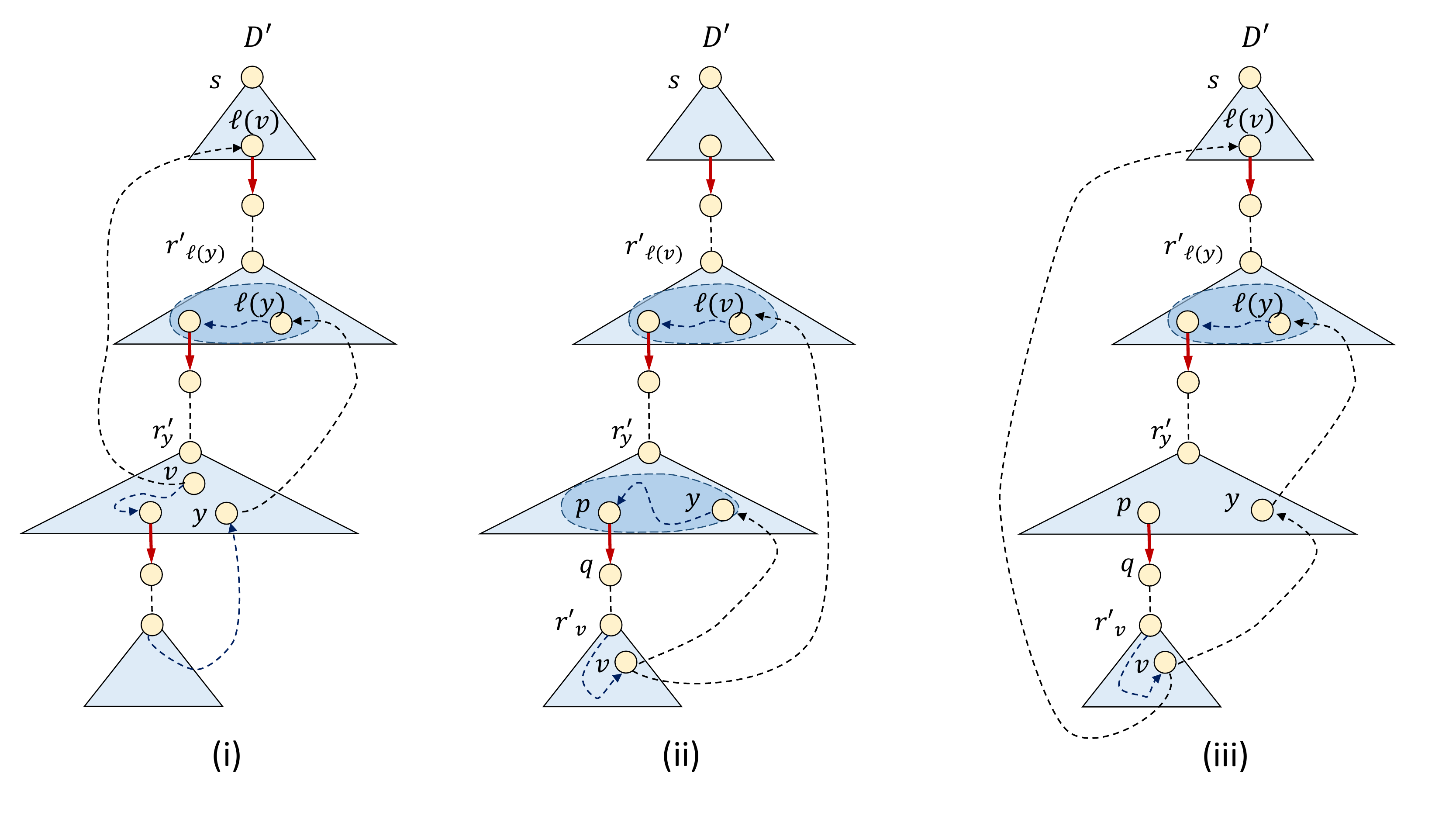}	
				\vspace{.3cm}
				\caption{A demonstration of the different case in Lemma~\ref{lemma:new-parent-in-H}.
					(i) Case (1) of Lemma~\ref{lemma:new-parent-in-H} where $v \in D'_y$ and  $level'(\ell(v)) < level'(\ell'(y'))$.
					Here we have $\ell'(v') = \ell'(y')$.
					(ii) Case (2.1) of Lemma~\ref{lemma:new-parent-in-H} where $v \notin D'_y, c'_p = y'$ and $level'(\ell(v)) < level'(c'_p)$.
					Now we have $\ell'(v') = c'_p$.
					(iii) Case (2.2) of Lemma~\ref{lemma:new-parent-in-H} where $v \notin D'_y,c'_p \not = y'$ and $level'(\ell(v)) < level'(\ell'(y'))$.
					In this case $\ell'(v') = \ell'(y')$.}
				\label{figure:helping-proofs-2}
			\end{center}
			\vspace{-.7cm}
		\end{figure}
		
		\begin{lemma}
			\label{lemma:new-parent-in-H}
			Let $(x,y)$ be the newly inserted edge.
			The \cvertex{} vertex $v'$ of a vertex $v \in D'(r'_y)$  that is not $D$-scanned and has a path to $x$ in $G'[D'(r'_{y})]$, changes its parent $\ell'(v')$ as follows:
			\vspace{-.25cm}
			\begin{mylist}{ (1)}
				\litem{(1)} Case $v \in D'_y$: if $level'(\ell(v)) < level'(\ell'(y'))$ or $\ell(v) = \emptyset$ then $\ell'(v') = \ell'(y')$. Otherwise, $\ell'(v') = c_{\ell(v)}$.
				\litem{(2)} Case $v \notin D'_y$: let $(p,q)$ the strong bridge such that $p \in D'_y$ and $q$ is an ancestor of $v$ in $D'$.
				\begin{mylist}{ (2.1)}
					\litem{(2.1)} Case $c'_p = y'$: if $level'(\ell(v)) < level'(c'_p)$ or $\ell(v) = \emptyset$, then $\ell'(v') = c'_p$. Otherwise, $\ell'(v') = c'_{\ell(v)}$.
					\litem{(2.2)} Case $c'_p \not = y'$: if $level'(\ell(v)) < level'(\ell'(y'))$  or $\ell(v) = \emptyset$, then $\ell'(v') = \ell'(y')$. Otherwise, $\ell'(v') = c'_{\ell(v)}$.
				\end{mylist}	
			\end{mylist}
			\vspace{-.2cm}
		\end{lemma}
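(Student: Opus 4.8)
The plan is to identify $\ell'(v')$ with the deepest \cvertex{} vertex of $D'$ that is strongly connected to $v$ inside its dominated subgraph, and to observe that the insertion of $(x,y)$ can only create paths that run through the edge $(x,y)$, hence through $y$, so that this deepest vertex is completely determined by $\ell'(y')$ --- which Lemma~\ref{lem:parent-of-y-in-h} already computes --- together with where $v$ lies relative to $D'_y$; the four branches are exactly the resulting possibilities. First I would fix the set‑up. By Lemma~\ref{lem:path-through-x-y}, $v$ has a path to $x$ inside $G[D'(r'_y)]$, so appending $(x,y)$ yields a path from $v$ to $y$ inside $G'[D'(r'_y)]$. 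Since $v$ is not $D$-scanned and $(x,y)$ is not locally cancelled, no bridge on the $D$-path to $v$ is destroyed (a cancelled, non‑locally‑cancelled bridge makes its head $D$-affected, which would make $v$, still a $D'$-descendant of that head, $D$-scanned) and no new bridge is created (an edge insertion destroys no path, so a non‑bridge of $G_s$ stays a non‑bridge of $G'_s$); hence $r'_v=r_v$, $level'(v)=level(v)$, the $D'$-ancestors of $v$ are its old ones, and --- arguing as in the first paragraph of the proof of Lemma~\ref{lem:path-through-x-y} --- if $\ell(v)\neq\emptyset$ then $c'_{\ell(v)}$ remains an ancestor of $v'$ in $L'$. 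Consequently $level'(\ell'(v'))\geq level'(\ell(v))$, and $\ell'(v')=c'_{\ell(v)}$ unless some \cvertex{} vertex of strictly larger level is strongly connected to $v$ in the corresponding dominated subgraph.

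The crux is the following reduction: if a \cvertex{} vertex $z$ is an ancestor of $v'$ in $L'$ but was not an ancestor of $v$ in $L$, then $y$ reaches both $z$ and $v$ inside $G[D'(r'_z)]$. Indeed, arguing as in the proof of Lemma~\ref{lem:path-through-x-y}, every path from $v$ to $z$ in $G'[D'(r'_z)]$ uses $(x,y)$; splitting such a path at its first occurrence of $(x,y)$ into a path from $v$ to $x$, then $(x,y)$, then a path from $y$ to $z$, and using that a simple path leaving $y$ cannot reuse $(x,y)$, we get a path from $y$ to $z$ in $G[D'(r'_z)]$, and then, appending a path from $z$ back to $v$ and simplifying, a path from $y$ to $v$ in $G[D'(r'_z)]$. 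Conversely, any \cvertex{} vertex strongly connected to $y$ in some $G'[D'(r'_z)]$ with $r'_z$ an ancestor of $v$ in $D'$ is also strongly connected to $v$ there, via the path from $v$ through $x$ to $y$ and a path from $y$ back to $v$. Since $y\in D'_y$, a dominated subgraph $G'[D'(r'_z)]$ with $r'_z$ properly below $r'_y$ does not contain $y$, hence equals $G[D'(r'_z)]$ and cannot host a connection absent before the insertion; so the only candidates of level exceeding $level'(\ell(v))$ are $\ell'(y')$ together with its proper $L'$-ancestors --- and, when $v$ lies one bridge $(p,q)$ below $D'_y$, the \cvertex{} vertex $c'_p$ of the last vertex $p\in D'_y$ reached on the way to $v$.

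I would then check the branches by exhibiting the witnessing cycles and using Lemma~\ref{lemma:partition-paths} to bound the level from above. In Case~(1) ($v\in D'_y$): the path from $v$ through $x$ to $y$ stays in $G'[D'(r'_y)]$, and a path from $y$ to $v$ in $G$, upon re‑entering $D'(r'_y)$, must pass through $r'_y$ by Lemma~\ref{lemma:partition-paths}, so a suffix of it lies in $G'[D'(r'_y)]$; hence $v$, $y$ and therefore $y'$ are strongly connected in $G'[D'(r'_y)]$, so $y'$ and all its $L'$-ancestors are ancestors of $v'$ in $L'$. As $\ell'(v')$ must lie outside $D'(r'_{v'})=D'(r'_y)$, and the ancestors of $y'$ of level $<level'(r'_y)$ are exactly $\ell'(y'),\ell'(\ell'(y')),\dots$ while the old ancestors of $v$ have level at most $level'(\ell(v))$, the deepest admissible ancestor of $v'$ is $\ell'(y')$ when $level'(\ell(v))<level'(\ell'(y'))$ or $\ell(v)=\emptyset$, and $c'_{\ell(v)}$ otherwise; Lemma~\ref{lemma:unique-level} excludes a competing ancestor at the same level. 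Case~(2.1) ($v\notin D'_y$, $c'_p=y'$) is the same with $c'_p$ in the role of $y'$: $y'=c'_p$ reaches $v$ inside $G'[D'(r'_y)]$ (through $p$, then $q$, then on to $v$), while $v$ reaches $y'$ through $x$ and $y$, so $y'=c'_p$ is an ancestor of $v'$ in $L'$, at level $level'(r'_y)<level'(r'_v)$, and --- since new candidates have level at most $level'(r'_y)$ --- it is the deepest new candidate. Case~(2.2) ($v\notin D'_y$, $c'_p\neq y'$): here $p$ and $y$ are not strongly connected in $G'[D'(r'_y)]$, so no new candidate sits at level $level'(r'_y)$, because a \cvertex{} vertex $z\in D'_y$ strongly connected to $v$ would, through the bridge $(p,q)$ and the edge $(x,y)$, force $p$ and $y$ to be strongly connected; the deepest new candidate is then $\ell'(y')$, strongly connected to $v$ because $D'(r'_{\ell'(y')})\supseteq D'(r'_y)$ contains a path from $v$ through $x$ and $y$ to $\ell'(y')$ and a path from $y$ back to $v$. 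In every branch the answer is capped by $c'_{\ell(v)}$ precisely when the latter is at least as deep, which is exactly the stated condition.

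The main obstacle is the tightness of each branch: showing that no \cvertex{} vertex of level strictly above the claimed one is strongly connected to $v$. Here Lemma~\ref{lemma:partition-paths} does the work --- a hypothetical deeper $z$ would, by the reduction above, force every path from $v$ to $z$ through $(x,y)$ and hence through $r'_y$ (and through $q$ when $v\notin D'_y$), and a careful analysis of where such a path may re‑enter $D'(r'_z)$ yields either an $(x,y)$‑free cycle through $v$ and $z$ inside $G[D'(r_z)]$ (contradicting that $z$ was not already an ancestor of $v$ in $L$) or the conclusion that every path from $s$ to $y$ uses $(x,y)$ (contradicting that $y$ is reachable from $s$ in $G_s$) --- the same type of contradiction already exploited in the proofs of Lemmas~\ref{lem:new-parent-in-h-1} and~\ref{lem:new-parent-in-h-3}.
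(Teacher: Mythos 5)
Your strategy — show that every new $L'$-ancestor of $v'$ must be strongly connected to $y$ in the relevant dominated subgraph, route the return paths through $r'_y$ using Lemma~\ref{lemma:partition-paths}, and compare the deepest new candidate (governed by $\ell'(y')$ or $c'_p$) against the surviving old candidate $c'_{\ell(v)}$ — is the same as the paper's.

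There is, however, an error in some of the intermediate claims. In Case~(1) you write ``hence $v$, $y$ and therefore $y'$ are strongly connected in $G'[D'(r'_y)]$.'' This is false in general: strong connectivity of $v$ and $y$ inside $G'[D'(r'_y)]$ puts them in the same auxiliary component and forces $v' = y'$, whereas the case is supposed to cover any $v \in D'_y$ that reaches $x$, including those with $v' \ne y'$. Your two path fragments ($v \to x \to y$ inside $D'(r'_y)$, and $r'_y \to v$ inside $D'(r'_y)$) do not compose into ``$y$ reaches $v$ inside $D'(r'_y)$'' because no path from $y$ back to $r'_y$ that stays inside $D'(r'_y)$ is supplied. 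What those fragments do yield — and what the paper actually proves — is that $\ell'(y')$ and $v$ are strongly connected in the \emph{larger} subgraph $G'[D'(r'_{\ell'(y')})]$: the path $v\to x\to y\to \ell'(y')$ stays inside it, and the return path from $\ell'(y')$ to $y$ crosses $r'_y$ by Lemma~\ref{lemma:partition-paths}, after which one continues on to $v$. That gives ``$\ell'(y')$ is an $L'$-ancestor of $v'$,'' which is what you need. The same over-claim appears in the converse half of your reduction: a canonical vertex $z$ strongly connected to $y$ in $G'[D'(r'_z)]$ is \emph{not} automatically strongly connected to $v$ there when $z$ lies in $D'_y$ (take $z = y'$ with $v\in D'_y$ and $v'\ne y'$, or $v\notin D'_y$ with $c'_p\ne y'$); that level has to be treated separately through the bridge $(p,q)$, as your (2.1)/(2.2) analysis correctly does, so the headline reduction should be stated only for $r'_z$ a proper ancestor of $r'_y$.
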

		\begin{proof}
			
			First, consider the case where $v = D'_y$.
			Recall that we consider the update of $\ell'(v')$ for all vertices $v$ that are $L$-affected but not $D$-scanned when our algorithm is not involved in a restart.
			Since no bridge is locally canceled, and no ancestor of $v$ is $D$-affected, then for every bridge $(d(q),q)$ of $G_s$ for which $v\in D(q)$ we have that $v\in D'(q)$ and $(d(q),q) = (d'(q),q)$ is still a bridge in $G'_s$.
			Moreover, no new bridge can appear on the paths from $s$ to $v$.
			Hence $(d(r_v),r_v) = (d(r'_v),r'_v) = (d(r'_y),r'_y)$ is a bridge also in $G'_s$ and $r'_y = r_v$.
			Since $\ell(v)\notin D(r_v)$, it follows that $\ell(v) \notin D'(r'_y)$.
			Therefore, $z=\ell(v)$ is the \cvertex{} vertex with the largest level for which $v$ and $z$ are strongly connected in $G[D(r_{z})]=G[D'(r'_{z})]$ (the equality holds by Lemma \ref{lem:out-of-D_y-nothing-changes}); that is, without using the edge $(x,y)$.
			In the case where $\ell(v) = \emptyset$ there is no such vertex.
			If $v$ has a path to $x$ in $G'[D'(r'_y)]$, then $\ell'(y')$ and $v$ are strongly connected in $G'[D'(r'_{\ell'(y')})]$, as (i) $v$ has a path to $\ell'(y')$ in $G'[D'(r'_{\ell'(y')})]$ through $y$, and (ii) $\ell'(y')$ has a path to $v$ in $G'[D'(r'_{\ell'(y')})]$ through $r'_y$ (i.e., the very same subpath $P_1$ from $\ell'(y')$ to $r'_y$ as in the path from $\ell'(y')$ to $y$, followed by any path from $r'_y$ to $v$).
			Therefore, if $v$ has a path to $x$ in $G'[D'(r'_y)]$ then $\ell'(y')$ is an ancestor of $v'$ in $L'$.
			That includes the case where $\ell(v) = \emptyset$.
			Now we show that if $v$ is $L$-affected then $\ell'(v')$ is an ancestor of $y'$ in $L'$.
			By Lemma \ref{lem:path-through-x-y}, it follows that all paths in $G'[D'(r'_{\ell'(v')})]$ from $v'$ to $\ell'(y')$ contain $(x,y)$, and therefore $y$ has a path to $\ell'(y')$ in $G'[D'(r'_{\ell'(v')})]$ as well.
			Moreover, $\ell'(v')$ has a path to $y$ in $G'[D'(r'_{\ell'(v')})]$ through $r'_y$ (i.e., the very same subpath from $\ell'(v')$ to $r'_v$ as in the path from $\ell'(v')$ to $v$, followed by any path from $r'_y$ to $y$).
			Collectively, we showed that if $v$ has a path to $x$ in $G'[D'(r'_{\ell'(y')})]$, then $\ell'(y')$ is an ancestor of $v'$ in $L'$, and moreover if $v$ is $L$-affected then $\ell'(v')$ is an ancestor of $y'$ in $L'$.
			Thus, $\ell'(v') = \ell'(y')$ if $level'(\ell(v)) < level'(r'_{\ell'(y')})$ or $\ell(v) = \emptyset$ and $v$ reaches $x$ in $G'[D'(r'_{y})]$, and $\ell'(v') = c'_{\ell(v)}$ otherwise.
			
			Now we prove the case where $v \notin D'_y$.
			First, we show that if $level'(\ell(v)) \geq level'(y')$, then $\ell'(v')=\ell(v)$.
			There is no $D$-scanned vertex $t \in D'(r'_{\ell(v)})$ since otherwise it is an descendant of a $D$-affected vertex in $D'_y$ and thus, $v$ is also a $D$-scanned descendant of the same $D$-affected vertex.
			As all $D$-affected vertices become children of $nca_D(x,y)$ in $D$, for every vertex $z \in D'(r'_y)$ it holds that $D'(r'_z) \subseteq D(r_z)$.
			As we assume $level'(\ell(v)) \geq level'(y')$, all new ancestors of $v'$ in $L'$ have level greater than $level'(y)$.
			Moreover, in the case where $level'(\ell(v)) = level'(y')$ it holds that $v$ and $\ell(v)$ are strongly connected in $G'[D'(r'_y)]$ since $D'(r'_y) \supseteq D(r'_{\ell(v)})$.
			Hence, $c'_{\ell(v)}$ is an ancestor of $v'$ in $L'$.
			As mentioned before, $D'(r'_z) \subseteq D(r_z)$ for all ancestors $r'_z$ of $v$ in $D'$ such that $level'(r_z) > level'(y)$ (including $q$ from the statement of the lemma).
			Therefore,	if $\ell'(v') \not= c'_{\ell(v)}$ then $\ell'(v')$ and $v$ are strongly connected in $G[D(r_{\ell'(v')})]$, and hence, $\ell'(v')$ is an ancestor of $v$ in $L$ with $level'(\ell'(v')) > level'(\ell(v))$.
			This contradicts the definition of $\ell(v)$.
			Thus, also in this case it follows $\ell'(v)=\ell(v)$.
			
			Finally, we consider the case where $v\in D'(r'_v)\setminus D'_y$ and  $level'(\ell(v)) < level'(y')$ or $\ell(v) = \emptyset$.
			Let $(p,q)$ be the strong bridge such that $p \in D'_y$ and $q$ is an ancestor of $v$ in $D'$.
			As mentioned before, $D'(r'_z) \subseteq D(r_z)$ for all ancestors $r'_z$ of $v$ in $D'$ (including $q$) such that $level'(r'_z) > level'(y)$.
			Therefore, if $\ell'(v') \not= c_{\ell(v)}$ then $\ell'(v')$ cannot be a descendant of $q$ in $D'$ since $\ell'(v')$ and $v$ is strongly connected in $G[D(r'_{\ell'(v')})]$, and therefore, $\ell'(v')$ is an ancestor of $v$ in $L$ (which  contradicts the definition of $\ell(v)$).
			Notice that if $v$ did not have a path to $x$ before the insertion in $G[D'(r'_y)]$, by Lemma \ref{lem:path-through-x-y} $v$ is not $L$-affected.
			Next we assume that $v$ has a path to $x$ and $y$ in $G'[D'(r'_y)]$.	
			If $y$ and $p$ are strongly connected in $G'[D'(r'_y)]$ (i.e., $y' = c'_{p}$), then also $v$ and $c'_p$ are strongly connected in $G'[D'(r'_{y})]$, since $v$ has a path to $y$ and $p$ has a path to $v$ in $G'[D'(r'_{y})]$.
			If additionally $r'_{\ell'(v')}$ is an ancestor of $r'_{p}$ in $D'$ (that is, $level'(\ell(v)) < level'(\ell'(v')) \leq level'(p)$), by the definition of the \newLNT{} forest $\ell'(v') = c'_p$, including the case where $\ell(v) = \emptyset$.
			To prove the case where $y' \not = c'_p$ we can use the same argument as in the case where $v \in D'_y$.
			\proofend
		\end{proof}

		\begin{algorithm}[t!]
			\LinesNumbered
			\DontPrintSemicolon
			\ForEach{\cvertex{} vertex $v \in D'(r'_y), v\notin S,$ that has a path to $x$ in $G'[D'(r'_{y})]$  \label{line:has-path-to-x}}{
				\eIf{$v\in D'_y$}
				{
					\lIf{$level'(\ell(v)) < level'(\ell'(c'_y))$ or $\ell(v) = \emptyset$}{$\ell'(c'_v) = \ell'(c'_y)$}
				}
				{
					Let $(p,q)$ be the bridge such that $p \in D'_y$ and $q$ is an ancestor of $v$ in $D'$.
					
					\lIf{$c'_p = c'_y$ and $level'(\ell(v)) < level'(c'_p)$ or $\ell(v) = \emptyset$}{$\ell'(c'_v)= c'_p$}
					
					\lElseIf{$c'_p \not= c'_y$ and $level'(\ell(v)) < level'(\ell'(c'_y))$ or $\ell(v) = \emptyset$}{$\ell'(c'_v) = \ell'(c'_y)$}	
				}
			}
			\caption{\textsf{Update-L-affected}$(\mathcal{D},L,x,y,S)$}
			\label{alg:UpdateLAffected}
		\end{algorithm}
		
		Lemma \ref{lemma:new-parent-in-H} shows
		how to
		determine the new parent in $L'$ of each \cvertex{} vertex $v \in D'(r'_y)$ that is $L$-affected but not $D$-scanned.
		The pseudocode for this update is given
		in Algorithm \ref{alg:UpdateLAffected}.
		In the following we show how we can efficiently answer all the tests of Algorithm \ref{alg:UpdateLAffected}.
		The most challenging computation is to determine which vertices have a path to $x$ in $G'[D'(r'_{y})]$ as required in Line~\ref{line:has-path-to-x}.
		We show how to compute efficiently those vertices by executing a backward traversal: this runs in time proportional to the sum of the degrees of the $L$-affected vertices.
		We start with the following definition of \emph{loop cover} of a vertex, which
		we use to speed up our backward search.

		\begin{definition}
		Let  $w \in D'(r'_y)\setminus S$ be a \cvertex{} vertex, and let $\ell_{min}$ be the ancestor of $w$ in $L$ with the lowest level such that $\ell_{min}\in D'(r'_y)$.
		Moreover, let $(p,q)$ be the bridge such that $p\in D'_{\ell_{min}}$ and $q$ is an ancestor of $w$.
		We call $q$ the \emph{loop cover} $lcover(w)=q$ of $w$ in $D$.
		If $\ell(w) \not \in D'(r'_{y})$, then $lcover(w) = \emptyset$.
		\end{definition}

		We use the loop cover of vertices that are neither $D$-scanned nor $L$-affected in order to avoid unnecessary visits to vertices during the search for $L$-affected vertices.
		Whenever we visit a vertex $w \in D'(r'_{y})\setminus D'_y$ that is not $L$-affected, then we do not need to visit any of the vertices
in $D'(lcover(w))$. 
		Formally, we have the following lemma. 
		
		\begin{lemma}
			\label{lemma:successive-candidate}
			Let $w \in D'(r'_{y})\setminus D'_y$ be a \cvertex{} vertex that is not $D$-scanned and has a path to $x$ in $G'[D'(r'_{y})]$ and $\ell'(w') = c'_{\ell(w)}$.
			If $lcover(w)\not= \emptyset$, for every \cvertex{} vertex $v \in D'(lcover(w))$ such that $v$ has a path to $w$ in $G'[D'(lcover(w))]$, we have that $\ell'(v') = c'_{\ell(v)}$.
			If $lcover(w) = \emptyset$, for all vertices $v \in  D'(r'_y)$ that have a path to $w$ in $G'[D'(r'_y)]$, it holds that $\ell'(v') = c'_{\ell(v)}$.
		\end{lemma}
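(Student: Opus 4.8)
The plan is to split on whether $lcover(w)=\emptyset$ or not, and in the main case $q:=lcover(w)\neq\emptyset$ to prove the statement \emph{directly}: I will show that the old parent $\ell(v)$ of $v$ in $L$ already sits at a level $\ge level'(r'_y)$, which by Lemma~\ref{lemma:new-parent-in-H} is exactly what forces $v$ into the ``otherwise'' branch and hence $\ell'(v')=c'_{\ell(v)}$. Write $\ell_{min}$ for the shallowest ancestor of $w$ in $L$ with $\ell_{min}\in D'(r'_y)$, so that $q$ is the head of the bridge leaving $D'_{\ell_{min}}$ towards $w$ and $level'(q)=level'(\ell_{min})+1$. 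The goal is to exhibit $\ell_{min}$ as an \emph{old} ancestor of $v$ in $L$; since $\ell_{min}\in D'(r'_y)$ gives $level'(\ell_{min})\ge level'(r'_y)$, this yields $level'(\ell(v))\ge level'(\ell_{min})\ge level'(r'_y)$ and $\ell(v)\neq\emptyset$, which is at least the target level in both sub-cases of Lemma~\ref{lemma:new-parent-in-H}(2): $level'(c'_p)=level'(r'_y)$ in (2.1), and $level'(\ell'(y'))<level'(r'_y)$ in (2.2) by Lemma~\ref{lem:parent-of-y-in-h}. Note also $v\notin D'_y$, because $level'(v)\ge level'(q)>level'(r'_y)$, so only case (2) of Lemma~\ref{lemma:new-parent-in-H} is relevant.

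First I would establish that $D'(q)$ is \emph{untouched} by the insertion. Since $level'(q)=level'(\ell_{min})+1>level'(r'_y)=level'(y)$, the vertex $q$ cannot be an ancestor of $y$ in $D'$, so $y\notin D'(q)$ and $(x,y)$ is not an edge of $G'[D'(q)]$. Next, $D'(q)$ contains no $D$-scanned vertex: such a vertex is a $D'$-descendant of some $D$-affected $a$, a child of $nca_D(x,y)$ in $D'$; if it lay in $D'(q)$ then $q$ would be a $D'$-ancestor of it, hence either a $D'$-descendant of $a$ (making $q$, and then $w$, $D$-scanned — impossible since $w$ is not) or a $D'$-ancestor of $nca_D(x,y)$, hence of $y$ — impossible as above. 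The same argument shows $\ell_{min}$ is not $D$-scanned (its bridge $(p,q)$ would otherwise put $q$ below a $D$-scanned vertex). Consequently $r_{\ell_{min}}=r'_{\ell_{min}}$, $D'(q)=D(q)\setminus(\text{$D$-scanned})\subseteq D(q)\subseteq D(r_{\ell_{min}})$, $G'[D'(q)]$ is just the subgraph of $G$ induced by $D'(q)$, and $(p,q)$ is a bridge of $G_s$ as well as of $G'_s$ (an insertion only destroys bridges, and none on the $s$-to-$q$ path is destroyed, else $q$ would be $D$-scanned).

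Now take a canonical $v\in D'(q)$ that is not $D$-scanned with a path $P$ from $v$ to $w$ inside $G'[D'(q)]$. By the previous paragraph $P$ is a path of $G$ lying in $D(r_{\ell_{min}})$; since $\ell_{min}$ is an ancestor of $w$ in $L$, $w$ and $\ell_{min}$ are strongly connected in $G[D(r_{\ell_{min}})]$, so $P$ prolonged by an old path $w\to\ell_{min}$ gives $v\to\ell_{min}$ in $G[D(r_{\ell_{min}})]$. In the other direction, an old path $\ell_{min}\to w$ inside $G[D(r_{\ell_{min}})]$ must enter $D(q)$ through the bridge $(p,q)$, hence reaches $q$, and $q$ reaches $v$ within $D(q)$ (an ancestor in the dominator tree reaches its descendants along a path staying in its subtree); concatenating gives $\ell_{min}\to v$ in $G[D(r_{\ell_{min}})]$. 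Hence $v$ and $\ell_{min}$ are strongly connected in $G[D(r_{\ell_{min}})]$, so $\ell_{min}$ is an ancestor of $v$ in $L$, and the deduction of the first paragraph finishes this case. For $lcover(w)=\emptyset$ one has $\ell(w)\notin D'(r'_y)$, so by Lemma~\ref{lem:ancestor-remains-ancestor} the vertex $c'_{\ell(w)}$ is untouched and remains the parent of $w'$ in $L'$; here I would argue by contradiction, assuming $v\in D'(r'_y)$ reaches $w$ in $G'[D'(r'_y)]$ and is $L$-affected. By Lemma~\ref{lem:path-through-x-y} the new strong connection witnessing $\ell'(v')$ passes through $(x,y)$, so, using $w\to x$ (a simple path to $x$ cannot use the edge out of $x$), the edge $(x,y)$, and the portion of that connection from $y$, $w$ and $\ell'(v')$ end up in a common SCC of $G'[D'(r'_{\ell'(v')})]$, making $\ell'(v')$ an ancestor of $w'$ in $L'$; comparing its level with that of $\ell'(w')=c'_{\ell(w)}$ contradicts either the maximality of the parent or, by the level test of Lemma~\ref{lemma:new-parent-in-H}, forces $\ell'(v')=c'_{\ell(v)}$ after all.

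The step I expect to be the main obstacle is the ``$D'(q)$ is untouched'' part: one must pin down how $lcover(w)$ sits relative to $nca_D(x,y)$ and the $D$-affected region and verify, via Lemma~\ref{lem:out-of-D_y-nothing-changes}, that $P$ is genuinely an old path contained in $D(r_{\ell_{min}})$ — without this the identification of $\ell_{min}$ as an old $L$-ancestor of $v$ collapses. A second, more routine, source of care is making the case analysis of Lemma~\ref{lemma:new-parent-in-H}, and the slightly different $lcover(w)=\emptyset$ variant (which genuinely needs Lemma~\ref{lem:path-through-x-y} rather than the direct argument, since there the $v\to w$ path may use $(x,y)$), uniform, so that ``$level'(\ell(v))$ at least the target level'' is seen to cover every branch.
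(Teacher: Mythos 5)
Your proposal is correct in substance and, for the main case $lcover(w)\neq\emptyset$, takes a genuinely different route than the paper. The paper argues entirely in the \emph{new} forest $L'$: it shows that $lcover(w)$ and $v$ are strongly connected in $G'[D'(lcover(w))]$, deduces that $v'$ has an $L'$-ancestor at level $\geq level'(y)$, and then invokes the observation (read off from Lemma~\ref{lemma:new-parent-in-H}) that the insertion can only contribute $L'$-ancestors at level $\leq level'(y)$, forcing $\ell'(v')=c'_{\ell(v)}$. You instead argue in the \emph{old} forest $L$: you establish that the subtree $D'(q)$ is completely untouched by the insertion — no $D$-scanned vertex, no occurrence of $(x,y)$, $(p,q)$ already a bridge of $G_s$ — so the $v\to w$ path inside $G'[D'(q)]$ is a pre-existing path of $G$, and closing the cycle through the pre-existing $w\leftrightarrow\ell_{min}$ connection shows $\ell_{min}$ was already an $L$-ancestor of $v$. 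You then read off the ``otherwise'' branch of Lemma~\ref{lemma:new-parent-in-H} directly from $level'(\ell(v))\geq level'(r'_y)$. Both approaches work; yours is more explicit about \emph{why} $v$'s old parent is deep enough, at the cost of the extra ``$D'(q)$ is untouched'' bookkeeping, which you correctly identify as the delicate step. For the case $lcover(w)=\emptyset$ your contradiction argument via Lemma~\ref{lem:path-through-x-y} matches the paper's, though you should spell out the direction $\ell'(v')\to w$ (it goes $\ell'(v')\to r'_y\to w$, using Lemma~\ref{lemma:partition-paths} and $w\in D'(r'_y)$) rather than leaving it implicit in ``end up in a common SCC.''

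One small imprecision worth fixing: the chain $level'(\ell(v))\geq level'(\ell_{min})\geq level'(r'_y)$ silently assumes $\ell(v)$ keeps its depth. What you actually know from $\ell_{min}$ being an old $L$-ancestor is $level(\ell(v))\geq level(\ell_{min})$, and then either $\ell(v)$ is not $D$-scanned, giving $level'(\ell(v))=level(\ell(v))$, or $\ell(v)$ is $D$-scanned, in which case it lands inside $D'(nca_D(x,y))\subseteq D'(r'_y)$ so $level'(\ell(v))\geq level'(r'_y)$ anyway. Either way the bound $level'(\ell(v))\geq level'(r'_y)$ you need for Lemma~\ref{lemma:new-parent-in-H} does hold, and with this bound the proof of Lemma~\ref{lemma:new-parent-in-H} in turn rules out the $D$-scanned possibility, but the intermediate inequality as written overstates what is immediate.
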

		\begin{proof}
			First, assume $lcover(w)\not= \emptyset$.
			As (i) $v$ has a path to $w$ in $G'[D'(lcover(w))]$, (ii) all paths from $\ell(lcover(w))$ to $w$ in $G'[D'(lcover(w))]$ contain $lcover(w)$, and (iii) $v\in D(lcover(w))$, it follows that $lcover(w)$ and $v$ are strongly connected in $G'[D'(lcover(w))]$.
			Hence, by definition of \newLNT{} forest, $c'_{\ell(lcover(w))}$ is an ancestor of $v'$ in $L'$ and $level'(\ell(lcover(w))) \geq level'(y)$.
			Note that Lemma~\ref{lemma:new-parent-in-H} implies that the insertion of $(x,y)$ might only introduce an ancestor $t$ of $v'$ in $L'$ such that $level'(t) \leq level'(y)$.
			Therefore, by the fact that $c'_{\ell(lcover(w))}$ is an ancestor of $v'$ in $L'$ and $level'(\ell(lcover(w))) \geq level'(y)$, it follows that $\ell'(v') = c'_{\ell(v)}$.
			
			No assume $lcover(w)=\emptyset$ and by contradiction that $v$ is $L$-affected.
			We have that $\ell'(w')=c'_{\ell(w)} \notin D'(r'_y)$.
			Since $D'(r'_y)\subset D'(r'_{\ell'(w')})$, $v$ has a path to $w$ in $G'[D'(r'_{\ell'(w')})]$.
			Moreover, there is a path from $\ell'(w')$ to $v$ in $G'[D'(\ell'(w'))]$ as all paths from $\ell'(w')$ to $w$ contain $r'_y$ and $v\in D'(r'_y)$.
			Hence, $\ell'(w')$ is an ancestor of $v'$ in $L'$.
			Therefore, $level'(w') < level'(\ell'(v')) \leq level'(r'_y)$ as $v$ is $L$-affected.
			By Lemma~\ref{lem:out-of-D_y-nothing-changes} we have that $r'_{\ell'(v')}$ is an ancestor of $v$ in $D$, and by Lemma~\ref{lemma:partition-paths} all paths from $\ell'(v')$ to $v$ contain $r'_y$ and can also avoid $(x,y)$ (as it is not a bridge in $G'_s$).
			By Lemma~\ref{lem:out-of-D_y-nothing-changes} we have that  $D'(r'_{\ell'(v')}) = D(r_{\ell(v)})$, and therefore, $\ell'(v')$ has a path to $v$ in $G[D(r_{\ell(v)})]$ (which avoids $(x,y)$).
			As $\ell'(v')$ is not an ancestor of $v$ before the insertion, by Lemma~\ref{lem:path-through-x-y}, all paths from $v'$ to $\ell'(v')$ in $G'[D'(r'_{\ell'(v')})]$ contain the edge $(x,y)$.
			Hence, $x,y,\ell'(v')$, and $v'$ are strongly connected in $G'[D'(r'_{\ell'(v')})]$.
			As $w$ has a path to $x$ in $G'[D'(r'_{\ell'(v')})]$ and $\ell'(v')$ has a path to $v'$ in $G'[D'(r'_{\ell'(v')})]$, it follows that $\ell'(v')$ and $w$ are strongly connected in $G'[D'(r'_{\ell'(v')})]$.
			This implies that $\ell'(v')$ is an ancestor of $w$ in $L'$ with $level'(\ell'(v'))>level'(\ell'(w'))$ which contradicts the definition of $\ell'(w')$.
			The lemma follows.
			\proofend
		\end{proof}

		\begin{lemma}
			If a bridge is not locally canceled by the edge insertion, the set $S'$ of $L$-affected vertices can be identified and $L'$ can be correctly updated in time $O(V(S')+E(S')+V(S)+E(S)+n)$.
			\label{lemma:h-affected-time}
		\end{lemma}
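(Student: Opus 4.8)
The plan is to carry out Algorithm~\ref{alg:UpdateLAffected} by means of a single \emph{pruned} backward search from $x$, never touching more than the advertised number of edges. By Lemma~\ref{lem:only-in-D(r_y)} no canonical vertex outside $D'(r'_y)$ is $L$-affected, and by Lemma~\ref{lem:path-through-x-y} every $L$-affected vertex $v\notin S$ has a path to $x$ inside $G[D'(r'_y)]=G'[D'(r'_y)]\setminus(x,y)$. Hence the set $S'$ of $L$-affected vertices is contained in the set $R$ of canonical vertices of $D'(r'_y)$ that reach $x$ in $G'[D'(r'_y)]\setminus(x,y)$, and it is enough to (i) discover $R$, (ii) classify each $w\in R$ in $O(1)$ time, i.e., decide whether $w\in S'$ and, if so, set $\ell'(w')$, and (iii) leave the parents of all other canonical vertices of $D'(r'_y)$ untouched (which is correct, since a canonical vertex not in $S'$ has $\ell'(\cdot)=c'_{\ell(\cdot)}$ by definition). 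Step (ii) is exactly the case distinction of Lemma~\ref{lemma:new-parent-in-H}: it reads only the recomputed $level'$ array, the value $\ell'(c'_y)$ already produced by \textsf{Update-D-scanned}, the canonical-vertex map, and, for $w\notin D'_y$, the unique bridge $(p,q)$ with $p\in D'_y$ on the tree path from $r'_y$ to $w$. All of this, together with the loop covers $lcover(\cdot)$ of the vertices of $D'(r'_y)$, is tabulated in $O(n)$ time from the dominator tree, the bridge decomposition and the auxiliary components, which are maintained anyway. Thus the whole difficulty is producing $R$ without paying $\Theta(m)$.

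To produce $R$, I would grow a backward search from $x$ that stays inside $D'(r'_y)$ and never uses $(x,y)$, interleaved with the above classification. An extracted vertex $w$ that is $L$-affected, or that is $D$-scanned, is expanded normally (all incoming edges scanned); this accounts for the $O(V(S')+E(S'))$ and $O(V(S)+E(S))$ terms. The point is to expand only $O(n)$ further vertices, and here the loop cover and Lemma~\ref{lemma:successive-candidate} do the work: when $w\in D'(r'_y)\setminus D'_y$ is extracted and turns out not to be $L$-affected, i.e., $\ell'(w')=c'_{\ell(w)}$, then either $lcover(w)=\emptyset$, in which case no vertex of $D'(r'_y)$ reaching $w$ is $L$-affected and $w$ is not expanded at all, or $lcover(w)\neq\emptyset$, in which case no vertex of $D'(lcover(w))$ reaching $w$ inside $D'(lcover(w))$ is $L$-affected, so instead of scanning the incoming edges of $w$ we \emph{jump over} $D'(lcover(w))$ and resume the search from the tail $p$ of the unique bridge $(p,lcover(w))$ entering it. The jump is sound because, by Lemma~\ref{lemma:partition-paths} applied to $(p,lcover(w))$, every path that reaches $w$ from a vertex not in $D'(lcover(w))$ passes through $lcover(w)$, so any $L$-affected vertex lying ``behind'' $w$ is recovered once the search continues from $p$; by marking each bridge tree the first time it is jumped over, it is entered at most once, so all jumps together cost $O(n)$.

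Granting these, the total cost is $O(n)$ for the tabulation, $O(1)$ per extracted vertex for classification, $O(\deg)$ only for the vertices of $S\cup S'$, and $O(n)$ for the bridge-tree jumps, hence $O(V(S')+E(S')+V(S)+E(S)+n)$; correctness of the assigned parents is exactly Lemmas~\ref{lemma:new-parent-in-H} and~\ref{lemma:successive-candidate}, while Lemmas~\ref{lem:only-in-D(r_y)} and~\ref{lem:path-through-x-y} ensure that no $L$-affected vertex is skipped. \textbf{The hard part} will be making the pruning argument airtight, namely showing that, once the loop-cover jumps are in place, every edge the search actually traverses is incident to a vertex of $S\cup S'$ except for $O(n)$ edges that open bridge trees, and in particular dealing cleanly with the vertices that lie in $D'_y$ — for which Lemma~\ref{lemma:successive-candidate} is not stated — so that the dominator subtree of $y$ is not rescanned from scratch (one likely route is to observe that a non-$D$-scanned vertex of $D'_y$ that reaches $x$ in $G'[D'(r'_y)]$ is confined near $y$ in the dominator tree, which lets its scanned edges be charged to $E(S)$).
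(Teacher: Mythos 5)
Your plan essentially matches the paper's own proof: compute $lcover(\cdot)$ and the remaining per-vertex tables in $O(n)$, handle $S$ via Lemma~\ref{lemma:scanned-h-computation}, then run a single pruned backward search from $x$ inside $G'[D'(r'_y)]$ that expands only $D$-scanned and $L$-affected vertices and otherwise short-circuits through the loop cover, with correctness supplied by Lemmas~\ref{lem:only-in-D(r_y)}, \ref{lem:path-through-x-y}, \ref{lemma:new-parent-in-H}, and \ref{lemma:successive-candidate}. Your choice to jump to the tail $p$ of the bridge $(p,lcover(w))$ rather than to $lcover(w)$ itself is interchangeable with the paper's ``continue from $lcover(v)$'': by Lemma~\ref{lemma:successive-candidate}, applied to $lcover(w)$ itself (which reaches $w$ in $G'[D'(lcover(w))]$), the vertex $lcover(w)$ is not $L$-affected, so its only effect on the search is to hand control to $p$; the paper's own correctness argument in fact reasons as if the resumption point is $p$.

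The gap you flag --- how to avoid rescanning $D'_y$ --- is real, but it is not resolved by the route you sketch; in particular, nothing is charged to $E(S)$. When the search visits a non-$L$-affected canonical vertex $w\in D'_y$, the paper simply \emph{backtracks}: none of $w$'s incoming edges are scanned, so such visits contribute $O(1)$ each and $O(n)$ in total. The justification hangs directly off Lemma~\ref{lemma:new-parent-in-H}: since $w\in D'_y$ is not $L$-affected, that lemma gives $level'(\ell(w))\geq level'(\ell'(y'))$, and any $v\in D'(r'_y)$ that reaches $w$ in $G'[D'(r'_y)]$ is then strongly connected to $\ell(w)$ inside $G'[D'(r'_{\ell(w)})]$ (forward through $w$, back through $r'_y$), so $c'_{\ell(w)}$ is an ancestor of $c'_v$ in $L'$ at that level; combined with the level constraints of Lemma~\ref{lemma:new-parent-in-H}, this shows $v$ cannot be $L$-affected. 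Replace your speculative ``confined near $y$ / charge to $E(S)$'' step with this backtracking rule and its justification, and your outline becomes the paper's proof.
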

		\begin{proof}
			By Lemma~\ref{lemma:scanned-h-computation}, we can compute $\ell'(v')$ for each $v\in S$ in time $O(V(S)+E(S))$.
			For every vertex $v \in S' \setminus S$, we set its value $\ell'(v')$, according to Lemma~\ref{lemma:new-parent-in-H}.
			Notice that in all cases of Lemma~\ref{lemma:new-parent-in-H}, a vertex $v \in S' \setminus S$, should reach $x$ in $G'[D'(r'_y)]$.
			Therefore, a straightforward way to test whether a vertex in $D(r'_y)$ changes its parent in $L'$, and compute the new parent, is to start a backward traversal in $G'[D'(r'_y)]$ from $x$, and for each vertex $v \notin S$ that is visited by the traversal apply Lemma~\ref{lemma:new-parent-in-H}.
			This, takes $O(m+n)$ in the worst case.
			We next present a charging scheme to achieve the claimed bound.
			
			First, we can identify $lcover(v)$ for all \cvertex{} vertices $v$, after every edge insertion, in time $O(n)$ by traversing the forest $L$ from each \cvertex{} vertex $z \in D'(r'_y), \ell(z) \notin D'(r'_y)$, and setting for each descendant $v$ of $z$ in $L$ the $lcover(v) = q$ such that $(p,q)$ is the strong bridge for which $p\in D'_z$ and $v\in D'(q)$.
			Notice that $(p,q)$ is the $level'(q)$-th bridge in the path from $s$ to $v$ in $D'$.
			In order to identify for each descendant $v$ of $z$ in $L$ the vertex $q$, we proceed as follows.
			Initially, we set to each child of $z$ in $L$ the level of $q$ (which is $level'(z)+1$), that is, each $z$ knows $level'(lcover(v))$.
			Next, we begin a traversal from $s$ on the tree $D'$ keeping track of the bridges that exist on the path from $s$ to the current vertex $v$, and once we visit a vertex $v$ that is assigned a value $level'(lcover(v))$, we set $lcover(v)$ to be $q$ where $(p,q)$ is the $level'(lcover(v))$-th bridge on the path from $s$ to $v$ (we keep track of this information during the traversal).
			Therefore, the total time spent in this computation after all edge insertions is $O(mn)$, where $m$ is the number of edges after all insertions.
			
			Next we start a backward traversal from $x$ in $G'[D'(r'_y)]$; that is, we visit all vertices $v\in D'(r'_y)$ that have an edge to $x$, and consecutively to each visited vertex.
			During the traversal we act as follow.
			Whenever the traversal reaches an unvisited vertex $v$, we test whether $\ell'(v') \not = c'_{\ell(v)}$: if this is the case,
			we set $\ell'(v')$ according to Lemma~\ref{lemma:new-parent-in-H}, we iterate over the incoming edges of $v$ and recursively traverse each vertex $w$ that has an incoming edge to $v$.
			If $\ell'(v') = c'_{\ell(v)}$ and $v \notin D'_y$ we do not traverse any incoming edge to $v$, and instead, we continue the traversal from $lcover(v)$ ($=lcover'(v')$ since $\ell'(v') = c'_{\ell(v)}$) if $lcover(v) \not= \emptyset$, or otherwise we do not visit any vertices from $v$.
			It is correct to continue the traversal from $lcover(v)$ in the case where $\ell'(v') = c'_{\ell(v)}$ and $v \notin D'_y$ since by Lemma~\ref{lemma:successive-candidate} for every vertex $w \in D'(lcover(v))$, such that $w$ reaches $v$ in $G'[D'(lcover(v))]$ we have that $\ell'(c'_w) = c'_{\ell(c_w)}$.
			If, on the other hand, $\ell'(v') = c'_{\ell(v)}$  and  $v \in D'_y$, then we backtrack the traversal from $v$.
			This is correct as $level'(\ell(v)) \geq level'(\ell'(y'))$ follows from Lemma \ref{lemma:new-parent-in-H}, and therefore for all the vertices $z\in D'(r'_y)$ that have a path to $v$ in $G'[D'(r'_y)]$,
			$c'_{\ell(v)}$ is an ancestor of $c'_z$ in $L'$ (since $z$ and $\ell(v)$ are strongly connected in $G'[D'(r'_{\ell(v)})]$).
			Moreover, whenever the traversal reaches a vertex in $S$, we have time to traverse all the vertices in $S$ and their incoming edges; we continue the traversal without testing or updating its value $\ell'(v')$.
			By the above description it is clear that our traversal does not traverse the edges of a vertex that is neither $L$-affected nor $D$-scanned (that is, the vertices in $V\setminus S' \cup S$).
			Collectively, we traverse only $D$-scanned vertices, $L$-affected vertices, and vertices for which we test in constant time whether $\ell'(v') = c'_{\ell(v)}$.
			Thus, we spend time $O(V(S')+E(S')+V(S)+E(S)+n)$.
			
			Next we show that we correctly update the forest $L'$ after an edge insertion.
			Clearly, the vertices whose edges are traversed by the traversal contain the correct value in $L'$.
			Now we argue that all vertices $v$ such that $\ell'(v') \not = c'_{\ell(v)}$ correctly change their parent in $L'$.
			Since for all visited vertices we update correctly the value $\ell'(v')$, we need to show that the backwards traversal that we execute visits all vertices that change their value $\ell'(v')$.
			We already showed in Lemma~\ref{lem:new-parent-in-h-3} that we do this correctly for the vertices in $S$.
			Assume, by contradiction, that this is not true for some vertex $v \notin S$, and therefore, the algorithm fails to set $\ell'(v')$ according to Lemma~\ref{lemma:new-parent-in-H}.
			Let $z$ be the correct value $\ell'(v')$ that the algorithm fails to assign.
			Since $v$ has a path $P$ to $z$ in $G'[D'(r'_{z})]$ only after the insertion of $(x,y)$, $P$ goes through $(x,y)$.
			Let $P_{vx}$ be the subpath of $P$ from $v$ to $x$.
			Moreover, let $w$ be the first vertex on $ P_{vx} $ that is visited by the traversal, i.e., the traversal did not visited any vertices that appear before $w$ on $ P_{vx} $ (recall that we assume the traversal did not visited $v$).
			We know that $\ell'(c'_w) = c'_{\ell(c_w)}$ since otherwise the traversal visits the predecessor of $w$ on $P_{vx}$.
			Assume first that $w\in D'(r'_{y}) \setminus D'_y$.
			Let $(p,q)$ be the first strong bridge on $D'[r'_{\ell'(c'_w)}, w]$.
			Vertex $v$ is not a descendant of $q$ in $D'$, since otherwise by Lemma~\ref{lemma:successive-candidate} it holds $\ell'(v') = c'_{\ell(v)}$, which contradicts our assumption.
			Thus, $v$ is not a descendant of $q$ in $D'$, and therefore, all paths from $v$ to $w$ go through $(p,q)$, by the properties of the dominator tree.
			That means $p \in P_{vx}$ and $p$ appears earlier than $w$ on $P_{vx}$.
			According to the rules of the traversal, if $w$ is visited and $\ell'(c'_w) = c'_{\ell(c_w)}$, then the traversal continues from $p$, so $p$ must have visited during the traversal.
			This contradicts the choice of $w$ as the earliest vertex on $P_{vx}$ that is visited.
			If $w \in D'_y$, then by the assumption that $\ell'(c'_w) = c'_{\ell(c_w)}$ and by Lemma \ref{lemma:new-parent-in-H}, we have that $level'(y) > level'(\ell'(c'_w)) \geq level'(\ell'(y'))$.
			Since $v$ has a path to $w$ in $G'[D'(r'_{y'})] \subset G'[D'(r'_{\ell'(c'_w)})]$ and $\ell'(c'_w)$ has path to $v$ in $G'[D'(r'_{\ell'(c'_w)})]$ (through $r'_y$; such a path exists since $\ell'(c'_w)$ has a path to $w$ in $G'[D'(r'_{\ell'(c'_w)})]$), it follows that $v$ and $\ell'(c'_w)$ are strongly connected in $G'[D'(r'_{\ell'(c'_w)})]$.
			Therefore,  $level'(\ell'(v')) > level'(\ell'(y'))$, which contradicts the assumption that $\ell'(v') \not= c'_{\ell(v)}$ according to Lemma \ref{lemma:new-parent-in-H}.
			Thus, all vertices $v$ for which $\ell'(v') \not= c'_{\ell(v)}$ are visited by the traversal, and the lemma follows.
			\proofend
		\end{proof}
		
		Finally, we bound the total time spend over any sequence of $m$ insertions.
		
		\begin{lemma}
			\label{lem:final-running-time-strongly-connected}
			After any sequence of edge insertions in a flow graph $G_s$, any \cvertex{} vertex $v$ changes its parent $\ell'(c'_v)$ in $L'$ at most $t$ times, where $t < n$ is the number of bridges dominating $v$ in $D$ before any insertion.
			Moreover, $L$-affected vertices can be identified and correctly updated in a total of $O(mn)$ time for all edge insertions, where $m$ is the number of edges after all edge insertions.
		\end{lemma}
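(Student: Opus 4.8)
The plan is to split the statement into two parts: (a) a purely combinatorial bound showing that each \cvertex{} vertex $v$ changes its parent in $L$ at most $t_v<n$ times, where $t_v$ is the number of bridges dominating $v$ before any insertion; and (b) the running time, which will follow by summing the per-insertion cost of Lemma~\ref{lemma:h-affected-time} against the bound from~(a), together with the standard ``$D$-scanned at most $n-1$ times'' budget and the ``$O(n)$ restarts'' budget. For part~(a), I would first record the monotonicity backbone: an edge insertion only creates new paths from $s$, hence can never turn a non-bridge of $G_s$ into a bridge, so $level(v)$ — the number of bridges dominating $v$ — is non-increasing over the whole sequence and equals $t_v<n$ at the start, and similarly $level(\ell(v))$ can only decrease; by Lemma~\ref{lemma:unique-level} the ancestors of $v$ in $L$ occupy pairwise distinct levels, all below $level(v)$, and $\ell(v)$ is the one of largest level. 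The engine of the bound is the fact already isolated inside the proof of Lemma~\ref{lem:path-through-x-y}: if $(x,y)$ is inserted without triggering a restart and $v$ is $L$-affected, then by Lemma~\ref{lem:ancestor-remains-ancestor} the old $L$-parent $c_{\ell(v)}$ stays an ancestor of $v'$ in $L'$ while $level'(\ell'(v')) > level'(\ell(v))$, i.e.\ the level of $v$'s $L$-parent strictly increases. If in addition $v$ is not $D$-scanned, then no ancestor of $v$ and no ancestor of $r_{\ell(v)}$ in $D$ is $D$-affected, so $level'(v')=level(v)$ and $level'(\ell(v))=level(\ell(v))$, and therefore the potential $\Phi(v):=level(v)-level(\ell(v))$, which lies in $[1,level(v)+1]\subseteq[1,n]$, strictly drops. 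It then remains to absorb the parent changes that happen together with a $D$-scan of $v$ or with a restart: a drop of $level(\ell(v))$ that is not a local cancellation forces an ancestor of $v$ in $D$ to be $D$-affected, hence $v$ to be $D$-scanned, and at most $t_v$ bridges above $v$ can ever be destroyed; restarts happen $O(n)$ times overall, and a restart destroying no bridge above $v$ cannot move $v$'s $L$-parent, since the \newLNT{} forest is unique. Combining the potential decrease with this bounded number of ``resets'' yields the bound $t_v$ on the number of changes of $\ell(c_v)$ (a cruder accounting already gives $O(n)$, which is all that part~(b) needs).

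For part~(b), I would invoke Lemma~\ref{lemma:h-affected-time}: an insertion that does not locally cancel a bridge is processed in $O(V(S')+E(S')+V(S)+E(S)+n)$ time, with $S$ the $D$-scanned and $S'$ the $L$-affected vertices, while an insertion that does locally cancel a bridge triggers a restart, handled by re-running \textsf{Initialize} in $O(m)$ time. Summing over the $\le m$ insertions: the additive $O(n)$ per insertion totals $O(mn)$; restarts total $O(n)\cdot O(m)=O(mn)$ because they occur $O(n)$ times. Since each vertex is $D$-scanned at most $n-1$ times, $\sum_{\mathrm{ins}}\sum_{v\in S}(1+\deg(v))\le (n-1)\sum_v(1+\deg(v))=O(mn)$ (charging, at each occurrence, the current adjacency list of $v$, which is bounded by its final one). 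By part~(a), each \cvertex{} vertex is $L$-affected at most $t_v<n$ times, so $\sum_{\mathrm{ins}}\sum_{v\in S'}(1+\deg(v))\le \sum_v(1+\deg(v))\,t_v\le n\sum_v(1+\deg(v))=O(mn)$. Finally, maintaining $D$, the bridge decomposition and the auxiliary components costs $O(mn)$~\cite{dyndom:2012,GIN16:ICALP}, and maintaining the $lcover$ values costs $O(mn)$ as shown inside the proof of Lemma~\ref{lemma:h-affected-time}. Hence the total time over all insertions is $O(mn)$.

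The step I expect to be the main obstacle is the bookkeeping in part~(a): arguing rigorously that $\Phi(v)$ is ``reset'' only a bounded number of times per destroyed bridge above $v$ and per restart requires carefully relating the re-leveling of both $v$ and its $L$-parent under a $D$-scan — where a vertex may decrease its depth in $D$ without changing its level — to the destruction of specific bridges, and separating the harmless restarts (those that leave $v$'s dominators and auxiliary component above $r_v$ intact, so that uniqueness of $L$ pins down $\ell(c_v)$) from those that genuinely consume a bridge. Once this is in place, the running-time part is a routine accumulation of Lemmas~\ref{lemma:h-affected-time}, \ref{lemma:successive-candidate} and the incremental-dominators guarantees.
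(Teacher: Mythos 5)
Your part~(b) is essentially the paper's accounting: charge the $O(V(S)+E(S))$ term of Lemma~\ref{lemma:h-affected-time} to the incremental dominators algorithm, charge the $O(V(S')+E(S'))$ term against the per-vertex bound from part~(a), and absorb the $O(n)$ additive term and the $O(n)$ restarts into $O(mn)$. That matches the paper.

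Your part~(a), however, takes a genuinely different route, and the gap you flag is real and is not a routine bookkeeping issue. You propose the potential $\Phi(v)=level(v)-level(\ell(v))$ and argue it strictly drops on each $L$-affected-but-not-$D$-scanned step and is ``reset'' only $O(n)$ times. But a single reset can increase $\Phi$ by $\Theta(n)$: when the bridge $(d(r_{\ell(v)}),r_{\ell(v)})$ is destroyed (say because $v$ is $D$-scanned), the ancestor sitting at level $level(\ell(v))$ may be lost, and the new $\ell(v)$ is the next-deepest surviving $L$-ancestor of $v$, which could be as shallow as level $0$; meanwhile $level(v)$ only drops by one. So $\Phi$ can jump from $1$ to nearly $level(v)$. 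With $O(n)$ such resets each of magnitude up to $n$, the potential argument as written only yields an $O(n^2)$ bound on $L$-affected events per vertex, not $O(n)$, and the $O(mn)$ total would become $O(mn^2)$ unless the resets are bounded in aggregate rather than in number. The paper sidesteps exactly this by not using a potential on $level(\ell(v))$: it first proves a \emph{persistence} claim (once $w$ is an $L$-ancestor of $v$ with $v\in D(r_w)\setminus D_w$, $w$ remains one as long as $v'\in D(r'_w)\setminus D'_w$), then observes via Lemma~\ref{lemma:new-parent-in-H} that each $L$-affected-but-not-$D$-scanned step hands $v$ a \emph{brand-new} ancestor at a level not previously occupied. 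Each such event is thus charged to a distinct bridge on $D[s,v]$, and since bridges can only disappear in the incremental setting and $v$ only moves up in $D$, at most $t<n$ bridges ever appear on $D[s,v]$. This direct charging gives the $t$ bound without ever needing to control how far $\ell(v)$ can fall after a bridge is destroyed. If you want to salvage a potential/charging formulation, the right invariant to track is the \emph{set} of $L$-ancestors of $v$ together with their associated bridges on $D[s,v]$ (cardinality, not depth difference): its size increases by one at each $L$-affected step, and it shrinks only when a bridge is destroyed, so insertions into it are at most $t$ plus the number of bridge deletions, which is again $\le t$. The quantity $level(v)-level(\ell(v))$ does not have this clean monotone-reset structure.

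One further small point: your claim ``$level'(\ell(v))=level(\ell(v))$ when $v$ is not $D$-scanned and there is no restart'' needs a justification through $r_{\ell(v)}$ (which \emph{is} an ancestor of $v$ in $D$ and hence not $D$-affected), since $\ell(v)$ itself might be $D$-scanned; this is a side remark, not the main issue.
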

		\begin{proof}
			We first bound the number of times that a vertex $v$ can be $L$-affected but not $D$-scanned, that is $v\in S' \setminus S$.
			Note that the number of bridges in a flow graph $G_s$ is at most $n-1$.
			Therefore, the number of distinct bridges that appear on the path $D[s,v]$ for each vertex $v$ throughout the course of the algorithms is at most $n-1$.
			Our strategy is to show that for each strong bridge on $D[s,v]$ for any vertex $v$ a vertex is $L$-affected and not $D$-scanned at most once.
			We first claim that once a vertex $w$ becomes an ancestor of $v$ in $L$, such that $v\in D(r_w) \setminus D_w$, $w'$ remains an ancestor of $c'_v$, after any edges insertion, as long as $v'\in D(r'_w) \setminus D'_w$.
			For every vertex $z$ on the path from $w$ to $v$ in $G[D(r_w)]$ we have that $z \in D'(r'_w)$ as otherwise there is a path from a vertex $z$ to $v$ avoiding $r'_w$, which  means a path from $s$ to $v$ avoiding $r'_w$, a contradiction to the fact that $v\in D'(r'_w)$.
			Hence, there is a path from $w$ to $v$ in $G'[D'(r'_w)]$.
			We use the same argument to show that there is a path from $v$ to $w$ in $G'[D'(r'_w)]$.
			For every vertex $z$ on the path from $v$ to $w$ in $G[D(r_w)]$ remains in $D'(r'_w)$ as otherwise there is a path from a vertex $z$ to $w$ avoiding $r'_w$, which means a path from $s$ to $w$ avoiding $r'_w$, a contradiction to the fact that $w\in D'(r'_w)$.
			Therefore, there is a path from $v$ to $w$ in $G'[D'(r'_w)]$.
			Thus, $w$ and $v$ are strongly connected in $G'[D'(r'_w)]$, which means $w'$ is an ancestor of $v'$ in $L'$.
			
			Every time $v$ is $L$-affected but not $D$-scanned we have that $level'(\ell'(v')) > level'(\ell(v))$, as implied by Lemma~\ref{lemma:new-parent-in-H}.
			Therefore, $v$ is assigned an ancestor $\ell'(v')$ such that there is no ancestor $w$ of $v$ where $c'_w=\ell'(v')$ (as otherwise $v$ is be $L$-affected).
			Moreover, as we shown above, as long as $(d'(r'_{\ell'(v')}),r'_{\ell'(v')})$ is a bridge such that $v\in D'(r'_{\ell'(v')}) \setminus D_{\ell'(v')}$, vertex $\ell'(v')$ remains ancestor of $v$ in $L'$.
			Hence, since at most $n-1$ bridges can appear on the path $D[s,v]$, it follows that $v$ can be at most $O(n)$ times $L$-affected but not $D$-scanned.

			Now we bound the overall running time spent on identifying and updating the $L$-affected vertices.
			First note that if an $L$-affected vertex is also $D$-scanned, then by Lemma~\ref{lemma:scanned-h-computation} we can update their parent in $L'$ in time $O(V(S)+E(S))$ where $S$ is the set of $D$-scanned vertices.
			We charge this time to the algorithm for updating the dominator tree, which spends $O(V(S)+E(S))$ time after each edge insertion.
			Thus, the overall time spent on updating the parent of a vertices that are $D$-scanned is $O(mn)$.
			Now, let $S'$ be the set of vertices that are $L$-affected.
			By Lemma~\ref{lemma:h-affected-time}, $S'$ can be identified in time $O(V(S')+E(S')+V(S)+E(S)+n)$.
			We again charge the time $O(V(S)+E(S))$ to the algorithm for updating the dominator tree, which sums to $O(mn)$ after all insertions.
			As we showed above, a vertex $u$ can be in $S' \setminus S$ at most $n-1$ times.
			Hence, the time $O(V(S')+E(S'))$ per insertion considers each vertex at most $n-1$ times, and therefore, it takes time $O(mn)$ time.
			Finally, the time $O(n)$ spent after every edge insertion sums to $O(mn)$ overall, since we have at most $m$ insertions.
			The bound follows.
			\proofend
		\end{proof}
		
\begin{theorem}
\label{theorem:hyperloop-total-time}
Let $G_s$ be a flow graph with $n$ vertices.
We can maintain the hyperloop nesting forest $L$ of $G_s$ through a sequence of edge insertions in $O(mn)$ total time, where $m$ is the number of edges after all insertions.
\end{theorem}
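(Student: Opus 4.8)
The statement is essentially a corollary of the work done in Sections~\ref{sec:canonical-loop-update-d-scanned} and~\ref{sec:canonical-loop-update-not-d-affected}, so the plan is to account for the total cost by splitting it into three buckets: (i)~initialization and restarts, (ii)~maintenance of the dominator tree, bridge decomposition and auxiliary components, and (iii)~the bookkeeping that updates the parent function $\ell(\cdot)$. Bucket~(i) is immediate: \textsf{Initialize} (Algorithm~\ref{alg:initialize}) runs in linear time, and by the discussion preceding Algorithm~\ref{alg:SCInsertEdge} at most $O(n)$ bridges are ever locally canceled, so the algorithm restarts $O(n)$ times at cost $O(m)$ each; the at most $n-1$ recomputations triggered when a new vertex becomes reachable from $s$ are handled the same way, and both sum to $O(mn)$. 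Bucket~(ii) is provided by \cite{dyndom:2012} for the dominator tree and by \cite{GIN16:ICALP} for the bridge decomposition and the auxiliary components; the key fact we reuse is that on a non-restart insertion these routines spend $\Theta(|V(S)|+|E(S)|)$ time, where $S$ is the set of $D$-scanned vertices, and since each vertex is $D$-scanned at most $n-1$ times this sums to $O(mn)$.

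For bucket~(iii) I would first invoke Lemma~\ref{lemma:scanned-h-computation}: on each non-restart insertion the values $\ell'(v')$ for all $D$-scanned canonical vertices are computed in time $O(|V(S)|+|E(S)|)$, which I charge to the dominator-tree update of bucket~(ii) (a constant number of such charges per insertion), giving $O(mn)$ overall. For the $L$-affected vertices that are not $D$-scanned I would invoke Lemma~\ref{lemma:h-affected-time}, whose per-insertion bound $O(V(S')+E(S')+V(S)+E(S)+n)$ splits cleanly: the $O(V(S)+E(S))$ term is again charged to bucket~(ii); the $O(n)$ term sums to $O(mn)$ over the $m$ insertions; and the $O(V(S')+E(S'))$ term is controlled through Lemma~\ref{lem:final-running-time-strongly-connected}, which shows that each vertex lies in $S'\setminus S$ at most $n-1$ times --- one charge per distinct bridge that can ever dominate it in $D$ --- so this term too sums to $O(mn)$. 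Summing the three buckets yields the claimed $O(mn)$ total time.

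Correctness of the maintained forest follows by induction on the sequence of insertions, using the structural lemmas already established: Lemma~\ref{lem:only-in-D(r_y)} confines every change to $D'(r'_y)$; Lemmas~\ref{lem:ancestor-remains-ancestor}, \ref{lem:parent-of-y-in-h}, \ref{lem:new-parent-in-h-1}, \ref{lem:new-parent-in-h-2} and~\ref{lem:new-parent-in-h-3} justify the assignments of Algorithm~\ref{alg:UpdateDAffected}; and Lemma~\ref{lemma:new-parent-in-H}, together with the reachability characterization of Lemma~\ref{lem:path-through-x-y} and the pruning rule validated by Lemma~\ref{lemma:successive-candidate}, justifies Algorithm~\ref{alg:UpdateLAffected}. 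Since at this point the argument is purely an accounting exercise on top of Lemma~\ref{lem:final-running-time-strongly-connected}, I do not expect a single hard step; the main thing to be careful about is that the charging arguments across the three buckets are genuinely disjoint --- in particular that the $O(|V(S)|+|E(S)|)$ amounts charged to the dominators algorithm from the $D$-scanned update and from the $L$-affected update do not overflow its budget, which holds because only a constant number of such charges occur per insertion.
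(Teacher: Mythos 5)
Your proposal is correct and follows essentially the same accounting that the paper uses: the theorem is an immediate consequence of Lemma~\ref{lem:final-running-time-strongly-connected}, whose proof already performs the same three-way split (restarts, charging the $O(|V(S)|+|E(S)|)$ terms to the incremental dominators algorithm, and bounding the $O(V(S')+E(S'))$ and $O(n)$ terms separately). Your remark at the end — that only a constant number of $O(|V(S)|+|E(S)|)$ charges are made per insertion, so the dominators budget is not overdrawn — is exactly the right thing to check and is implicit in the paper's charging scheme.
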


		\section{Answering queries in optimal time}
		\label{sec:queries-edges}

		The data structure from \cite{2C:GIP:arXiv} computes the strong bridges of $G$ plus four trees: the dominator tree $D$ and the loop nesting tree $H$ of the flow graph $G_s$, and the dominator tree $D^R$ and the loop nesting tree $H^R$ of the reverse flow graph $G^R_s$.
		This information is sufficient to answer in optimal time all the following types of queries:
		\vspace{-.25cm}
		\begin{mylist}{(iii)}
			\litem{(i)} Report in $O(1)$ time the total number of SCCs in $G\setminus e$, for a query edge $e$ in $G$.
			\litem{(ii)} Report in $O(1)$ time the size of the largest and of the smallest SCCs in $G\setminus e$, for a query edge $e$ in $G$.
			\litem{(iii)} Report in $O(n)$ worst-case time all the SCCs of $G\setminus e$, for a query edge $e$.
			\litem{(iv)} Test in $O(1)$ time if two query vertices $u$ and $v$ are strongly connected in $G\setminus e$, for a query edge $e$.
			\litem{(v)} For query vertices $u$ and $v$ that are strongly connected in $G$,
			report all edges $e$ such that $u$ and $v$ are not strongly connected in $G\setminus e$, in optimal worst-case time, i.e., in time $O(k+1)$, where $k$ is the number of separating edges.
		\end{mylist}
		\vspace{-.1cm}
		
		We note that queries of type {(i)} and {(ii)} require additional $O(n)$ preprocessing time, based solely on the same four trees.
		In particular, the crux of the method is the following theorem, which shows that the information relevant for our queries can indeed be extracted from the strong bridge of $G$ and the four trees $D$, $D^R$, $H$ and $H^R$:

		\begin{theorem}[\cite{2C:GIP:arXiv}]
			\label{cor:scc}
			Let $G=(V,E)$ be a strongly connected digraph, $s$ be an arbitrary start vertex in $G$, and let $e=(u,v)$ be a strong bridge of $G$. Let $C$ be a SCC of $G \setminus e$. Then one of the following cases holds:
			\vspace{-.25cm}
			\begin{mylist}{(a)}
				\litem {(a)} If $e$ is a bridge in $G_s$ but not in $G_s^R$ then either $C \subseteq D(v)$ or $C = V \setminus D(v)$.
				\litem {(b)} If $e$ is a bridge in $G_s^R$ but not in $G_s$ then either $C \subseteq D^R(u)$ or $C = V \setminus D^R(u)$.
				\litem {(c)} If $e$ is a common bridge of $G_s$ and $G_s^R$ then either $C \subseteq D(v) \setminus D^R(u)$, or $C \subseteq D^R(u) \setminus D(v)$, or $C \subseteq D(v) \cap D^R(u)$, or $C = V \setminus \big( D(v) \cup D^R(u) \big)$.
			\end{mylist}
			\vspace{-.25cm}
			Moreover, if  $C \subseteq D(v)$ \textup{(}resp., $C \subseteq D^R(u)$\textup{)} then $C=H(w)$ \textup{(}resp., $C=H^R(w)$\textup{)} where $w$ is a vertex in $D(v)$ \textup{(}resp., $D^R(u)$\textup{)} such that $h(w) \not \in D(v)$ \textup{(}resp.,  $h^R(w) \not \in D^R(u)$\textup{)}.
		\end{theorem}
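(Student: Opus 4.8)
The plan is to establish the theorem in two stages: first localize every SCC of $G\setminus e$ to one of the regions in cases (a)--(c), then identify the SCCs that lie inside $D(v)$ (symmetrically inside $D^R(u)$) with subtrees of the loop nesting tree. The backbone of the first stage is the fact, which follows from Lemma~\ref{lemma:partition-paths} together with \cite{Italiano2012}, that a strong bridge $e=(u,v)$ of a strongly connected $G$ is a bridge of $G_s$, of $G_s^R$, or of both (so the three cases are exhaustive), and that if $e$ is a bridge of $G_s$ then $e$ is the \emph{only} edge from $V\setminus D(v)$ into $D(v)$: any other such edge $(a,b)$ would be a one-edge simple path from a non-descendant of $v$ to a descendant of $v$ avoiding $(d(v),v)$, contradicting Lemma~\ref{lemma:partition-paths}.

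In case (a) this gives: in $G\setminus e$ no edge enters $D(v)$, so no SCC of $G\setminus e$ can straddle the boundary of $D(v)$; and $V\setminus D(v)$ is strongly connected in $G\setminus e$ with $s$ as a hub --- every $w\notin D(v)$ is still reached from $s$ (some $s$-to-$w$ path avoids the dominator $v$, hence avoids $e$), and every $w\notin D(v)$ still reaches $s$ because $e$ is not a bridge of $G_s^R$ and deleting a non-bridge edge leaves every vertex reachable from $s$. Hence $C=V\setminus D(v)$ is exactly one SCC and every other SCC lies in $D(v)$. Case (b) is case (a) applied to $G^R$ and the bridge $e^R=(v,u)$. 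In case (c), $e$ is simultaneously the unique edge entering $D(v)$ and (reversing) the unique edge leaving $D^R(u)$, so in $G\setminus e$ no SCC straddles either boundary; the four possibilities for where an SCC can sit are then exactly $D(v)\cap D^R(u)$, $D(v)\setminus D^R(u)$, $D^R(u)\setminus D(v)$, and $V\setminus(D(v)\cup D^R(u))$, and the last cell is strongly connected in $G\setminus e$ by the same hub-$s$ argument (now invoking both reachability facts), hence equals one SCC.

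For the second stage, let $C$ be an SCC of $G\setminus e$ with $C\subseteq D(v)$. Because $e$ is removed and is the only edge into $D(v)$, no walk of $G\setminus e$ can re-enter $D(v)$, so all paths of $G\setminus e$ between vertices of $C$ stay in $D(v)$; since $e\notin E(G[D(v)])$ this means $C$ is precisely an SCC of $G[D(v)]$. Let $w$ be the vertex of $C$ discovered first by the DFS defining $H$; by the white-path property $C\subseteq T(w)$, and as $C$ is strongly connected inside $T(w)$ we get $C\subseteq\mathit{loop}(w)=H(w)$. The key claim is $\mathit{loop}(w)\subseteq D(v)$ for every $w\in D(v)$: if some $x\in\mathit{loop}(w)\setminus D(v)$, a cycle through $w$ and $x$ lying inside $\mathit{loop}(w)\subseteq T(w)$ must re-enter $D(v)$ through $e=(u,v)$, forcing both $u$ and $v$ to be $T$-descendants of $w$; but $w\in D(v)\subseteq T(v)$ makes $v$ a $T$-ancestor of $w$, so $w=v$, and then $u$ is a $T$-descendant of $v$ while $u=d(v)$ dominates $v$ and is therefore a proper $T$-ancestor of $v$ --- a contradiction. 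Given the claim, $\mathit{loop}(w)\subseteq D(v)$ is strongly connected in $G[D(v)]$, so maximality of $C$ forces $C=\mathit{loop}(w)=H(w)$, and likewise $h(w)\notin D(v)$, since otherwise $\mathit{loop}(h(w))\supsetneq C$ would be a strictly larger strongly connected subset of $D(v)$. Running the same argument in $G^R$ handles the case $C\subseteq D^R(u)$ with $H^R$.

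I expect the main obstacle to be precisely the containment claim $\mathit{loop}(w)\subseteq D(v)$. It is the one place where the depth-first/loop-nesting structure must be reconciled with the dominator structure, and it hinges on two properties of a bridge $(u,v)$ of $G_s$: that the bridge head $v$ is a DFS-ancestor of all of $D(v)$, and that the unique entry edge $e$ has tail $u=d(v)$, which is a proper DFS-ancestor of $v$ (because a dominator of $v$ lies on the tree path from $s$ to $v$). Everything else is bookkeeping about reachability after deleting a single edge, and once the claim and the ``unique entering edge'' fact are in hand, all of (a)--(c) and the loop-nesting identification drop out of the hub arguments above.
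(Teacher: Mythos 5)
The paper states Theorem~\ref{cor:scc} with the citation \cite{2C:GIP:arXiv} and does not reproduce a proof, so there is no in-paper argument against which to compare; your reconstruction must therefore be judged on its own. It is correct. The localization stage rests on the right two facts: (1) by Lemma~\ref{lemma:partition-paths}, a bridge $e=(d(v),v)$ of $G_s$ is the unique edge entering $D(v)$, which both rules out SCCs straddling $\partial D(v)$ in $G\setminus e$ and forces any walk re-entering $D(v)$ to traverse $e$; and (2) deleting an edge that is not a bridge of a flow graph $F_s$ leaves every vertex reachable from $s$ (a vertex that loses reachability certifies that the deleted edge was in fact a bridge), which you use, together with domination, to show that $s$ is a hub for $V\setminus D(v)$, and for $V\setminus (D(v)\cup D^R(u))$ in case~(c). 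Case~(b) is the reversal of~(a), and~(c) combines~(a) and~(b) --- all sound. In the identification stage, the one genuinely delicate step is $\mathit{loop}(w)\subseteq D(v)$ for every $w\in D(v)$, and your argument for it is right: since $v$ dominates $w$, $v$ is a $T$-ancestor of $w$ and $u=d(v)$ is a proper $T$-ancestor of $v$; so if a witnessing path from some $x\in\mathit{loop}(w)\setminus D(v)$ back to $w$, which must stay in $T(w)$, re-entered $D(v)$ via the unique entering edge $e$, then $v$ and $u$ would both be $T$-descendants of $w$, forcing $w=v$ and then the contradiction $u\in T(v)$. With that claim in hand, picking $w$ as the first-discovered vertex of $C$, the white-path property gives $C\subseteq T(w)$, strong connectivity of $C$ gives $C\subseteq\mathit{loop}(w)$, the claim gives $\mathit{loop}(w)\subseteq D(v)$, and maximality of $C$ as an SCC of $G[D(v)]$ (equivalently of $G\setminus e$ restricted to $D(v)$) forces $C=\mathit{loop}(w)=H(w)$ and $h(w)\notin D(v)$. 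One small wording nit: the witnessing path for $x\in\mathit{loop}(w)$ lies in $T(w)$, not necessarily in $\mathit{loop}(w)$ itself; you write the containment $\subseteq T(w)$ immediately afterwards and use only that, so nothing breaks, but the phrase ``lying inside $\mathit{loop}(w)$'' is slightly imprecise.
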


		Now we show that exactly the same information can be extracted if we replace the loop nesting trees $H$ and $H^R$ with two new trees $\hat{H}$ and $\hat{H}^R$, which (differently from loop nesting trees) can be maintained efficiently throughout any sequence of edge insertions. As a result,
		the strong bridges of $G$ plus $D$, $D^R$, $\hat{H}$ and $\hat{H}^R$ allow us to answer all our queries in optimal time throughout any sequence of edge insertions.

		We next define the new trees
		$\hat{H}$ and $\hat{H}^R$.
		Without loss of generality, we restrict our attention to $\hat{H}$, as $\hat{H}^R$ is defined in the reverse graph $G^R$ in a completely analogous fashion.
		We construct $\hat{H}$ starting from the \newLNT{} tree $L$, as follows.
		For every vertex $u$ such that $c_u \not = u$ we set $\hat{h}(u) = c_u$, and for every vertex $u$ where $c_u= u, u \not = s$ we set $\hat{h}(u) = \ell(u)$.
		Note that, once  $L$ is available, the tree $\hat{H}$ can be computed in $O(n)$ time.

		As suggested by Theorem~\ref{cor:scc}, every SCC $C$ in $G\setminus (u,v)$ is either a subtree of $H$ rooted at a vertex $w \in D(v)$ such that $h(w) \notin D(v)$, or a subtree of $H^R$ rooted at a vertex $z \in D^R(u)$ such that $h^R(z) \notin D^R(u)$, or $C = V \setminus D(v) \cup D^R(u)$.
		As a consequence, in order to show that we can safely replace $H$ by $\hat{H}$ and $H^R$ by $\hat{H}^R$,
		we only need to prove the following lemma, which holds symmetrically also for $G_s^R$, $D^R$, $H^R$ and $\hat{H}^R$.
		First, we start with an intermediate technical lemma that is used in the proof or Lemma \ref{lemma:replace}.

		\begin{lemma}
			\label{lemma:nca-canonical-loop-nesting}
			Let $w$ be the nearest common ancestor of two vertices $u$ and $v$ in $H$, and let $z$ be the nearest common ancestor of $c_u$ and $c_v$ in $L$.
			Then, $c_w = z$.
		\end{lemma}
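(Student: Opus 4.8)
The plan is to exploit the fact, established in Lemma~\ref{lemma:loops-nestings-relation}, that $L$ is precisely the tree obtained from $H$ by contracting each auxiliary component to its canonical vertex. Once this is made precise, the statement becomes an instance of the general principle that, for a rooted tree partitioned into connected subtrees, the quotient map commutes with taking nearest common ancestors.

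First I would pin down the quotient picture. For a vertex $x$, let $A(x)$ denote its auxiliary component. The first bullet of Lemma~\ref{lemma:loops-nestings-relation} gives $c_x = c_{h_x}$, so $h_x\in A(x)$; moreover $h_x$ is the same vertex for all $x$ lying in a fixed auxiliary component (it is the topmost vertex of that component in $H$), and the argument in the proof of Lemma~\ref{lemma:loops-nestings-relation} — showing that any vertex on the $H$-path from $x$ up to $h_x$ lies in $D(r_x)$ (via Lemma~\ref{lemma:partition-paths}) and is strongly connected to $x$ inside $G[D(r_x)]$, hence lies in $D_x$ — together with the laminarity of the hyperloop family shows that $A(x)$ induces a connected subtree of $H$ with topmost vertex $h_x$. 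The second bullet, $\ell(c_x)=c_{h(h_x)}$, says exactly that the parent of $A(x)$ in the contracted tree is the auxiliary component containing the $H$-parent of the topmost vertex of $A(x)$. Hence $L$ equals the quotient $H/\mathcal P$, where $\mathcal P$ is the partition into auxiliary components and the quotient map is $\pi\colon V\to V$, $\pi(x)=c_x$.

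Next I would prove the general claim: if $T$ is a rooted tree and $\mathcal P$ partitions $V(T)$ into sets each inducing a connected subtree, then $T/\mathcal P$ is a rooted tree and $\pi(\mathrm{nca}_T(a,b)) = \mathrm{nca}_{T/\mathcal P}(\pi(a),\pi(b))$. For the forward inclusion, set $m=\mathrm{nca}_T(a,b)$ and project the upward $T$-paths from $a$ and from $b$ to $m$: within each part these paths are contiguous, so the projections are ancestor chains in $T/\mathcal P$, whence $\pi(m)$ is a common ancestor of $\pi(a)$ and $\pi(b)$. For the reverse inclusion, let $Q=\mathrm{nca}_{T/\mathcal P}(\pi(a),\pi(b))$; lifting an ancestor chain in $T/\mathcal P$ back to $T$ shows that the topmost vertex $t_Q$ of $Q$ is a $T$-ancestor of both $a$ and $b$, hence of $m$, so $Q=\pi(t_Q)$ is an ancestor of $\pi(m)$ in $T/\mathcal P$. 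Combined with the forward inclusion, $Q=\pi(m)$.

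Applying this with $T=H$, $a=u$, $b=v$ (so $m=w$) yields $c_w=\pi(w)=\pi(\mathrm{nca}_H(u,v))=\mathrm{nca}_L(\pi(u),\pi(v))=\mathrm{nca}_L(c_u,c_v)=z$, as required. The main obstacle is the first step: carefully extracting from the proof of Lemma~\ref{lemma:loops-nestings-relation} (and from laminarity of the $\mathit{loop}$/auxiliary families) that each auxiliary component is genuinely a connected subtree of $H$ with topmost vertex $h_x$ and with its contracted parent governed by $h(h_x)$. Once that structural fact is in hand, the remaining tree-quotient argument is routine.
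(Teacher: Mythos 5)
Your proof is correct and takes the same route as the paper's: both rely on Lemma~\ref{lemma:loops-nestings-relation} to identify $L$ with the quotient of $H$ obtained by contracting each auxiliary component to its canonical vertex, and then deduce that the contraction commutes with nearest common ancestors. You are somewhat more explicit than the paper — the paper shows only that $c_u$ is an $L$-ancestor of $c_v$ whenever $u$ is an $H$-ancestor of $v$ and asserts this "suffices," whereas you also spell out the reverse inclusion (lifting the $L$-ancestor chain of $Q$ back to $H$ via the topmost vertex of each component), which is exactly the step needed to make that "suffices" rigorous.
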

		\begin{proof}
			It suffices to show that, for every ancestor $u$ of a vertex $v$ in $H$,
			$c_u$ is an ancestor of $c_v$ in $L$.
			First, note that if $c_u$ = $c_v$ then they are in the same auxiliary component and thus the above statement trivially holds.
			Now let $c_u \not = c_v$.
			By Lemma~\ref{lemma:loops-nestings-relation}, $v$ and $h_v$ map to the same vertex in $L$. 
			Moreover, $c_{h(h_v)}$ is the parent of $c_{\ell}$ in $L$.
			By repeatedly applying the same argument, it follows that for every ancestor $u$ of $v$ in $H$,
			$c_u$ is an ancestor of $v$ in $L$.
			\proofend
		\end{proof}

		\begin{lemma}
			Let $(u,v)$ be a strong bridge in $G_s$. For every set $H(w)$ where $w \in D(v)$ and $h(w) \notin D(v)$ there is a vertex $z \in D(v)$ and $\hat{h}(z) \notin D(v)$ such that $\hat{H}(z) = H(w)$.
			Additionally,
			for every set $\hat{H}(z)$ where $z \in D(v)$ and $\hat{h}(z) \notin D(v)$ there is a vertex $w \in D(v)$ and $h(w) \notin D(v)$ such that $H(w) = \hat{H}(z)$.
			\label{lemma:replace}
		\end{lemma}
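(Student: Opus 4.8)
The idea is to use the precise relation between $\hat H$, the \newLNT{} tree $L$, and the loop nesting tree $H$. By construction $\hat H$ re-expands $L$: a non-canonical vertex $a$ is hung under $c_a$, and a canonical vertex $z\neq s$ keeps its $L$-parent $\ell(z)$; unwinding this shows that for every \emph{canonical} vertex $z$,
\[
\hat H(z)=\{\, a\in V:\ c_a\in L(z)\,\}.
\]
Moreover, since $(u,v)=(d(v),v)$ is a bridge, $v=r_v$ is the deepest bridge-root above any vertex of $D(v)$, so $z\in D(v)$ forces $r_z\in D(v)$, hence $D_z\subseteq D(r_z)\subseteq D(v)$; in particular $c_z\in D_z\subseteq D(v)$, and $z\neq s$ because $s\notin D(v)$. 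Consequently, if $z\in D(v)$ and $\hat h(z)\notin D(v)$ then $z$ must be canonical, for otherwise $\hat h(z)=c_z\in D(v)$.

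I would then set up the correspondence $w\leftrightarrow z:=c_w$ between $\{w\in D(v):h(w)\notin D(v)\}$ and $\{z\in D(v):\hat h(z)\notin D(v)\}$. If $w\in D(v)$ and $h(w)\notin D(v)$, then $h(w)\notin D_w$ (as $D_w\subseteq D(v)$), so $h_w=w$, and Lemma~\ref{lemma:loops-nestings-relation} gives $\hat h(c_w)=\ell(c_w)=c_{h(h_w)}=c_{h(w)}$. Also $c_{h(w)}\notin D(v)$: a canonical vertex shares the bridge-tree root of its auxiliary component, so $r_{c_{h(w)}}=r_{h(w)}$, and $c_{h(w)}\in D(v)$ would give $r_{h(w)}\in D(v)$, hence $h(w)\in D(r_{h(w)})\subseteq D(v)$, a contradiction. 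Running these implications backwards handles the converse: if $z\in D(v)$ and $\hat h(z)\notin D(v)$ then $z$ is canonical, and for $w$ the (unique---see below) vertex of $z$'s auxiliary component whose $H$-parent falls outside the bridge tree $D_z$ one obtains $w\in D(v)$, $h(w)\notin D(v)$, $z=c_w$ and $\hat h(z)=c_{h(w)}$.

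It then suffices to prove $H(w)=\hat H(c_w)$ for such a pair. The inclusion $H(w)\subseteq\hat H(c_w)$ is immediate from the identity above together with Lemma~\ref{lemma:nca-canonical-loop-nesting}, which implies that $H$-ancestry is preserved by $v\mapsto c_v$: if $a\in H(w)$ then $c_w$ is an $L$-ancestor of $c_a$. For $\hat H(c_w)\subseteq H(w)$, take $a$ with $c_a\in L(c_w)$, write the $L$-path $c_a=z_0,\, z_1=\ell(z_0),\, \dots,\, z_t=c_w$, and lift it to $H$ by $a_0:=a$, $a_{i+1}:=h(h_{a_i})$; Lemma~\ref{lemma:loops-nestings-relation} yields $c_{a_i}=z_i$ and each $a_i$ is an $H$-ancestor of $a$. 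Then $b:=h_{a_t}$ is an $H$-ancestor of $a$ with $c_b=c_{a_t}=c_w$ and $h(b)=h(h_{a_t})\notin D_{a_t}=D_w$, so $b$ is a vertex of $w$'s auxiliary component whose $H$-parent lies outside $D_w$; by the uniqueness statement, $b=w$, and thus $a\in H(w)$. (Along the way I would also record $H(w)\subseteq D(v)$: a vertex of $\mathit{loop}(w)$ that reached $w$ from outside $D(v)$ would, by Lemma~\ref{lemma:partition-paths}, force the dominator $u=d(v)$ of $w$ to be at once a $T$-ancestor and a $T$-descendant of $w$.)

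The crux, which I expect to be the main obstacle, is the uniqueness invoked above: \emph{each auxiliary component contains exactly one vertex whose $H$-parent lies outside its bridge tree.} First, if $h(b)\notin D_b$ then $b$ has no proper $H$-ancestor inside $D_b$: for such an ancestor $y$, the vertex $h_y$ would lie in $D_b$, be an $H$-ancestor of $b$, and satisfy $h(h_y)\notin D_b$, so by the uniqueness of $h_b$ in its defining property $h_y=b$, contradicting that $h_y$ is a proper $H$-ancestor of $b$. Now let $C_j$ be an auxiliary component inside the bridge tree $D_r$, let $C\supseteq C_j$ be the SCC of $G[D(r)]$ containing it, and let $q$ be the vertex of $C$ discovered first by the dfs $T$ defining $H$; since $G[C]$ is strongly connected, the white-path property gives $C\subseteq T(q)$ and that every $x\in C$ reaches $q$ within $T(q)$, whence $C\subseteq\mathit{loop}(q)=H(q)$, i.e.\ $q$ is a common $H$-ancestor of all of $C$. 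If $b,w\in C_j$ both have their $H$-parent outside $D_r$, then $q\in D_r$ gives $q=b=w$ (no proper $H$-ancestor in $D_r$), while $q\notin D_r$ gives $q\in D(r)\setminus D_r$, so $level(q)>level(w)$; but $q$ being an $H$-ancestor of $w$ makes $c_q$ an $L$-ancestor of $c_w$, and $L$-ancestry is level-nonincreasing (because $\ell(v)\notin D(r_v)$), so $level(q)=level(c_q)\le level(c_w)=level(w)$---a contradiction. Hence $b=w$, and the lemma follows. The remaining verifications are routine bookkeeping with bridge trees, levels, and Lemmas~\ref{lemma:partition-paths},~\ref{lemma:loops-nestings-relation} and~\ref{lemma:nca-canonical-loop-nesting}.
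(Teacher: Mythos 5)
Your proof is correct and follows the same high-level strategy as the paper's: set up the correspondence $w \leftrightarrow c_w$, and use Lemma~\ref{lemma:loops-nestings-relation} (to compute $\hat h(c_w) = c_{h(w)}$ from $h_w = w$) together with Lemma~\ref{lemma:nca-canonical-loop-nesting} (to transfer $H$-ancestry to $L$-ancestry) to show the sets match. The paper's proof is much terser: it proves the two inclusions $H(w) \subseteq \hat H(c_w)$ and $\hat H(z) \subseteq H(h_z)$ directly via the nca lemma, and closes the argument by appealing (tacitly) to the uniqueness asserted in the definition of $h_v$ from Section~\ref{sec:canonical-loop-nesting}. The genuine value-add of your version is that you isolate and \emph{prove} the uniqueness fact you call the crux --- each auxiliary component contains exactly one vertex whose $H$-parent lies outside its bridge tree --- via the white-path property and the level-monotonicity of $L$-ancestry, whereas the paper never states this and its proof of inclusion~(ii) (the step ``$c_q = z$ implies $p \in H(w)$'') silently relies on it. Your explicit description $\hat H(z) = \{a : c_a \in L(z)\}$ and the lifting construction $a_{i+1} = h(h_{a_i})$ are also more constructive than the paper's direct nca argument, though they buy the same conclusion; on the other hand, the paper's route through the chain $H(w)\subseteq\hat H(c_w)\subseteq H(h_{c_w})$ plus uniqueness of $h_w$ is shorter once the uniqueness is granted. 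Both are correct; yours is more self-contained and rigorous.
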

		\begin{proof}
			We prove the lemma by proving
			the following two statements:
			\vspace{-.25cm}
			\begin{mylist}{(ii)}
				\litem{(i)} Let vertex $w \in D(v)$ such that $h(w) \notin D(v)$. Then, for every vertex $p \in H(w)$ there is a vertex $z \in D(v)$ such that $p\in \hat{H}(z)$.
				\litem{(ii)} Let vertex $z \in D(v)$ such that $\hat{h}(z) \notin D(v)$. Then, for every vertex $t \in \hat{H}(z)$ there is a vertex $w \in D(v)$ such that $t\in H(w)$.
			\end{mylist}
			\vspace{-.25cm}
			We start with statement {(i)} for $z = c_w$.
			By Lemma \ref{lemma:loops-nestings-relation}, for each vertex $p \in H(w) \cap D_v$ we have that $c_p = c_w$. By definition of $\hat{H}$, it follows that $p$ is a child of $c_w$ in $\hat{H}$ (thus, $p\in \hat{H}(c_w)$).
			For any other vertex $p \in H(w)$,
			the nearest common ancestor of $p$ and $w$ in $H$ is $w$.
			By Lemma \ref{lemma:nca-canonical-loop-nesting}, 
			the nearest common ancestor of $c_w$ and $c_p$ is $c_w$.
			Therefore, $c_p$ is a descendant of $c_w$ in $\hat{H}$ and $p$ is a child of $c_p$ in $\hat{H}$ (or $p = c_p$).
			Thus, $p\in \hat{H}(c_w)$.
			This proves that for every vertex $p\in H(w)$,
			$p\in \hat{H}(c_w)$.
			
			We not turn to statement (ii).
			Since $\hat{h}(z) \notin D(v)$ it holds that $z$ and $\hat{h}(z)$ are not in the same auxiliary component.
			Therefore, by definition of $\hat{H}$,
			$c_{z} = z$.
			We prove statement (ii) for $w = h_{z}$.
			Since $z \in H(w)$, so do vertices $t$ such that $c_{t} = z$ (including $w$), since they are children of $z$ in $\hat{H}$.
			Now let $t$ be a vertex in $\hat{H}(z)$.
			By definition of $\hat{H}$,
			$t$ is a child of $c_{t}$ in $\hat{H}$ (or $c_{t}$).
			Hence,
			the nearest common ancestor of $c_{t}$ and $c_{z}$ in $\hat{H}$, and therefore in $L$, is $z$.
			Thus, by Lemma \ref{lemma:nca-canonical-loop-nesting},
			for the nearest common ancestor $q$ of $p$ and $z$ in $H$, we have that $c_q = z$.
			This implies that $q$ is a child of $z$ in $H$ and thus $p \in H(w)$.
			\proofend
		\end{proof}
		
		In summary, our algorithm works as follows. Given a strongly connected digraph $G$ subject to edge insertions, we maintain in a total of $O(mn)$ time the strong bridges of $G$ \cite{GIN16:ICALP}, the dominator trees $D$ and $D^R$ \cite{dyndom:2012}, and the \newLNT{} trees $L$ and $L^R$, by Theorem
		\ref{theorem:hyperloop-total-time}.
		After each edge insertion, we construct in $O(n)$ time the trees $\hat{H}$ and $\hat{H}^R$ from $L$ and $L^R$, respectively. Since there can be at most $m$ edge insertions,
		where $m$ is the final number of edges after all edge insertions, the total time spent on all those computations is $O(mn)$.
		By Lemma \ref{lemma:replace}, after each update we can answer all our queries in optimal time.

		\begin{corollary}
		\label{corollary:edge-queries}
			We can maintain a strongly connected digraph $G$ through any sequence of edge insertions in a total of $O(mn)$ time, where $m$ is the number of edges after all insertion, so as to  answer the following queries in optimal time after each insertion:
			\begin{mylist}{(iii)}
				\vspace{-.25cm}
				\litem{(i)} Report in $O(1)$ time the total number of SCCs in $G\setminus e$, for a query edge $e$ in $G$.
				\litem{(ii)} Report in $O(1)$ time the size of the largest and of the smallest SCCs in $G\setminus e$, for a query edge $e$ in $G$.
				\litem{(iii)} Report in $O(n)$ worst-case time all the SCCs of $G\setminus e$, for a query edge $e$.
				\litem{(iv)} Test in $O(1)$ time if two query vertices $u$ and $v$ are strongly connected in $G\setminus e$, for a query edge $e$.
				\litem{(v)} For query vertices $u$ and $v$ that are strongly connected in $G$, report all edges $e$ such that $u$ and $v$ are not strongly connected in $G\setminus e$, in optimal worst-case time, i.e., in time $O(k+1)$, where $k$ is the number of separating edges.
			\end{mylist}
			\vspace{-.1cm}
		\end{corollary}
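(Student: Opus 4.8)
The plan is to assemble the corollary from the incremental data structures built in the previous sections together with the static query machinery of~\cite{2C:GIP:arXiv}. Throughout the sequence of insertions I would maintain: the set of strong bridges of $G$ and the auxiliary components of $G_s$ and $G_s^R$, using the incremental algorithm of~\cite{GIN16:ICALP}; the dominator trees $D$ and $D^R$ of $G_s$ and $G_s^R$, using the incremental dominators algorithm of~\cite{dyndom:2012}; and the \newLNT{} trees $L$ and $L^R$, applying Theorem~\ref{theorem:hyperloop-total-time} once to $G_s$ and once to $G_s^R$. Each of these costs $O(mn)$ total time, where $m$ is the final number of edges. After each insertion I would rebuild the auxiliary trees $\hat{H}$ and $\hat{H}^R$ from $L$ and $L^R$ in $O(n)$ time, exactly as described just before the statement, and also redo the $O(n)$ preprocessing needed for the counting queries of types~(i) and~(ii); since there are at most $m$ insertions this adds another $O(mn)$, so the overall running time is $O(mn)$.

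For the query bounds, I would argue that after every insertion the structures $D$, $D^R$, $\hat{H}$, $\hat{H}^R$ together with the strong-bridge set encode exactly the information about the SCCs of $G\setminus e$ that the four trees $D$, $D^R$, $H$, $H^R$ encode in the static setting of~\cite{2C:GIP:arXiv}. If the query edge $e$ is not a strong bridge of $G$ — decided in $O(1)$ time from $D$ and $D^R$, since by~\cite{Italiano2012} a strong bridge is precisely a bridge of $G_s$ or of $G_s^R$ — then $G\setminus e$ is strongly connected and every query is trivial. Otherwise $e=(u,v)$ is a strong bridge, and by Theorem~\ref{cor:scc} every SCC of $G\setminus e$ is one of the sets $H(w)$ with $w\in D(v)$ and $h(w)\notin D(v)$, one of the sets $H^R(z)$ with $z\in D^R(u)$ and $h^R(z)\notin D^R(u)$, or the complement $V\setminus(D(v)\cup D^R(u))$. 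By Lemma~\ref{lemma:replace}, applied to $G_s$ and symmetrically to $G_s^R$, the family $\{H(w): w\in D(v),\ h(w)\notin D(v)\}$ coincides with $\{\hat{H}(z): z\in D(v),\ \hat{h}(z)\notin D(v)\}$, and likewise on the reverse side; hence the same partition of $V$ into SCCs of $G\setminus e$ is obtained whether one uses $H,H^R$ or $\hat{H},\hat{H}^R$.

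Consequently, each query procedure of~\cite{2C:GIP:arXiv} — counting the SCCs of $G\setminus e$, reporting the sizes of the largest and smallest ones, enumerating all of them, testing whether two vertices lie in the same SCC of $G\setminus e$, and listing all separating edges of a pair $u,v$ — runs verbatim with $\hat{H}$ and $\hat{H}^R$ in place of $H$ and $H^R$, and returns the correct answer within the same worst-case time: $O(1)$ for~(i),~(ii),~(iv), $O(n)$ for~(iii), and $O(k+1)$ for~(v). Since $D,D^R,\hat{H},\hat{H}^R$ and the strong bridges are all available after each insertion, and the $O(n)$ preprocessing for types~(i)--(ii) is redone after each insertion within budget, the stated bounds follow.

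I expect the only genuine subtlety — and thus the step to state carefully rather than grind through — to be the observation that the query algorithms of~\cite{2C:GIP:arXiv} access the loop nesting trees \emph{only} through the subtree families $\{H(w)\}$ and $\{H^R(z)\}$ isolated by Theorem~\ref{cor:scc}; once this is made explicit, Lemma~\ref{lemma:replace} is exactly what licenses the substitution of $\hat{H},\hat{H}^R$, and the remainder is bookkeeping of the $O(mn)$ cost across the maintained components and the at most $m$ reconstructions of $\hat{H}$ and $\hat{H}^R$.
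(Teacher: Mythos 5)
Your proposal matches the paper's argument essentially verbatim: maintain the strong bridges via~\cite{GIN16:ICALP}, the dominator trees via~\cite{dyndom:2012}, and $L$, $L^R$ via Theorem~\ref{theorem:hyperloop-total-time}, then rebuild $\hat{H}$, $\hat{H}^R$ (and the $O(n)$ preprocessing for types~(i)--(ii)) in $O(n)$ per insertion, and invoke Theorem~\ref{cor:scc} together with Lemma~\ref{lemma:replace} to justify substituting $\hat{H}$, $\hat{H}^R$ for $H$, $H^R$ in the query routines of~\cite{2C:GIP:arXiv}. Your closing observation — that the query procedures access the loop nesting trees only through the subtree families isolated by Theorem~\ref{cor:scc}, which is exactly what Lemma~\ref{lemma:replace} preserves — is the correct subtle point, and the bookkeeping of $O(mn)$ is the same as in the paper.
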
 

\subsection{Extension to general graphs}
\label{sec:extension-to-general-graphs}

In this subsection we extend Corollary~\ref{corollary:edge-queries} to general (not necessarily strongly connected) graphs within the same time bounds. 
The main idea is to maintain the necessary structures for each SCC of the input graph $G$, rooted at suitable vertices. As edges are inserted into $G$, several SCCs may merge, so we need to update our structures accordingly. 
In \cite{GIN16:ICALP} it is shown how to maintain the dominator tree, the bridge decomposition, and the auxiliary components, all rooted at the same start vertex $s$, of each SCC of a general graph under any sequence of edge insertions in total time $O(mn)$.
Since we always maintain the \newLNT{} tree rooted at the same start vertex as the dominator tree, 
we will be using the same start vertices as the algorithm in \cite{GIN16:ICALP}.
Next we briefly review these choices of start vertices.

The algorithm in \cite{GIN16:ICALP} maintains incrementally the SCCs of the graph using the algorithm from \cite{Bender:IncCycleDetection:TALG}.
In each SCC the algorithm maintains an instance of the data structure that we developed for strongly connected graphs.
Whenever an edge insertion causes two or more SCCs to merge, a new data structure instance is created for the newly merged SCC as follows.
Let $C_1, C_2, \ldots, C_j$ be the SCCs that are merged into $C$ after the insertion of an edge. 
We choose the start vertex of $C$ to be the start vertex of the largest SCC $C_i$ and we restart the algorithm in $C$.
We refer to the component $C_i$ as the \emph{principal component of $C$}.

We now show that the choices of start vertices allow our algorithm from Section \ref{sec:canonical-loop-update} to run in a total of $O(mn)$ time, when executed in each SCC independently.
The total time required to maintain instances of the algorithm in the SCCs of the input graph $G$ is $O(mn)$.
The total number of new SCCs that can be created is at most $n-1$. 
In the insertion of an edge results into a merge between two SCCs, the algorithm restarts in the newly merged component: each time we restart the algorithm, it takes $O(m)$ time to initialize the data structures.
We remark that these restarts are different from the restarts described in Section \ref{sec:canonical-loop-update}, that are executed whenever a strong bridge is locally canceled.
We refer to this new type of restarts as \emph{top-level restarts}.
In summary, the total time required to maintain the SCCs and the time spent for the top-level restarts of the algorithm is $O(mn)$.

By Lemma \ref{lemma:h-affected-time}, after each edge insertion, that does not result to a restart of the algorithm, the \newLNT{} tree is updated in time $O(V(S')+E(S')+V(S)+E(S)+n')$ where $S'$ and $S$ are the set of $L$-affected and $D$-scanned vertices after the edge insertion, and $n'$ is the number of vertices inside the SCCs of both $x$ and $y$.
To bound the running time, we show that each vertex $v$ can be $L$-affected at most $O(n)$ times and $D$-scanned at most $O(n)$ times, through the whole execution of the algorithm. 
This holds  despite the fact that a vertex can be part of many SCCs, as they merge while the graph undergoes edge insertions.
Let $C$ be the current SCC containing $v$, and let $C'$ be the new SCC that contains $C$ after a merge caused by an edge insertion.
If $C$ is the principal subcomponent of $C'$, then the depth of $v$ may only decrease as we keep the same start vertex in $C'$ as in $C$. 
Otherwise, the depth of $v$ may increase.
The \emph{effective depth} of $v$ after merging $C$ into $C'$ is defined as follows: it is zero, if $C$ is the principal subcomponent of $C'$, and is equal to the depth of $v$ in the dominator tree of $G_s[C']$ otherwise.
To bound the total amount of work needed to maintain the \newLNT{} tree of all SCCs, we compute the sum of the effective depths of $v$ in all the SCCs that $v$ is contained through the execution of algorithm. We refer to this sum as the \emph{total effective depth of $v$}.

\begin{lemma}[\cite{GIN16:ICALP}]
	\label{lemma:effective-depth}
	The total effective depth of any vertex $v$ is $O(n)$.
\end{lemma}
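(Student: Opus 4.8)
The plan is to combine two observations: (a) the effective depth of $v$ is nonzero only at those merges in which the SCC containing $v$ participates as a \emph{non-principal} subcomponent, and (b) every such merge at least doubles the size of the SCC containing $v$. A geometric-series estimate over the (few) non-principal merges then yields the $O(n)$ bound.

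First I would record the monotonicity and the doubling. Since only edge insertions occur, SCCs can only merge, so the SCC containing $v$ is nondecreasing in size over time. Now consider a merge of SCCs $C_1,\dots,C_j$ (with $j\ge 2$) into $C'$, and let $C_k$ be the one containing $v$. If $C_k$ is the principal (largest) subcomponent, the merge contributes effective depth $0$ and, crucially, keeps the same start vertex for the component of $v$. If $C_k$ is not principal, let $C_i$ be the principal subcomponent; then $|C_k|\le|C_i|$, and since the $C_\ell$ are pairwise disjoint, $|C'|=\sum_\ell|C_\ell|\ge|C_k|+|C_i|\ge 2|C_k|$. Hence at each merge where $v$ lies in a non-principal subcomponent, the size of $v$'s SCC at least doubles.

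Next I would set up the accounting. Let $\mathcal{M}_1,\dots,\mathcal{M}_t$ be the merges, in chronological order, at which $v$ belongs to a non-principal subcomponent, and let $m_i$ denote the size of $v$'s SCC right after $\mathcal{M}_i$. Just before $\mathcal{M}_i$ the size of $v$'s SCC is at most $m_i/2$ by the doubling observation; by monotonicity it is at least $m_{i-1}$, so $m_{i-1}\le m_i/2$, i.e.\ $m_i\ge 2m_{i-1}$ for all $i\ge 2$. Consequently $m_i\le m_t\cdot 2^{-(t-i)}$ for every $i$, and of course $m_t\le n$. The effective depth contributed by $\mathcal{M}_i$ is the depth of $v$ in the dominator tree of the flow graph of the merged component, which has $m_i$ vertices, hence is at most $m_i$. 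Summing, the total effective depth of $v$ is at most $\sum_{i=1}^t m_i\le\sum_{i=1}^t m_t\,2^{-(t-i)}=m_t\sum_{j=0}^{t-1}2^{-j}<2m_t\le 2n$, which is $O(n)$.

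There is no hard combinatorial obstacle here; the one point that requires care is the bookkeeping at principal merges. A priori, when $v$'s component is absorbed as the principal subcomponent its raw dominator-tree depth in the new, larger flow graph could change, and naively charging for it would reintroduce a $\log n$ factor. The definition of effective depth sidesteps this: keeping the same start vertex means $v$'s depth can only decrease through a principal merge, so these merges legitimately contribute nothing, and every unit of effective depth we do pay for is charged against a genuine doubling of the size of $v$'s SCC — an event that occurs only $O(\log n)$ times and whose geometric weighting is exactly what collapses the bound to $O(n)$.
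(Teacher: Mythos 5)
Your proof is correct and all steps check out: at each merge where $v$'s SCC is absorbed as a non-principal subcomponent, the SCC at least doubles (the principal subcomponent is no smaller and the subcomponents are pairwise disjoint), the effective depth charged at such a merge is at most the resulting SCC size $m_i$, and since the $m_i$ grow geometrically, $\sum_i m_i < 2m_t \le 2n$. The paper states this lemma without proof, citing it directly from \cite{GIN16:ICALP}, so there is no in-paper argument to compare against; your doubling-plus-geometric-series amortization is the natural and standard approach for a bound of this kind, and your observation that principal merges contribute zero (because the start vertex is preserved and added edges/vertices can only shrink $dom(v)$, hence only decrease $v$'s depth) correctly justifies discarding them.
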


Therefore, each vertex can be $D$-scanned at most $O(n)$ times.
Moreover, by Lemma \ref{lem:final-running-time-strongly-connected},
a vertex $v$ can be $L$-affected at most $t$ times, where $t$ is the number of strong bridges that dominate over $v$ before any edge insertion takes place. Clearly, $t \leq n'$, where $n'$ is the number of vertices inside the SCC containing $v$.
Note that, whenever some SCCs merge into one SCC $C'$, if $v$ belongs to the principal component of $C'$, then the number of bridges dominating $v$ in $D$ may only decrease. Otherwise, the number of strong bridges dominating $v$ may increase up to $|V(C')|$, and therefore $v$ may be $L$-affected $|V(C')|$ times again in the future.
By summing these numbers each time some SCCs merge, the number of times that a vertex can be $L$-affected equals the effective depth of the vertex.
Thus, by Lemma \ref{lemma:effective-depth}, each vertex can become $L$-affected at most $O(n)$ times.

In summary, we have shown that each vertex can be $D$-scanned and $L$-affected at most $O(n)$ times.
By Lemma \ref{lemma:h-affected-time}, it immediately follows that our algorithm runs in total time $O(mn)$ under any sequence of edge insertions, except for the edge insertions that result to restarts of the algorithm.
Each of the restarts takes $O(m)$ time, as we shown previously.
The following lemma from \cite{GIN16:ICALP} shows that, throughout any sequence of edge insertions, at most $O(n)$ strong bridges can appear in the graph, which results to at most $O(n)$ restarts of the algorithm.

\begin{lemma}[\cite{GIN16:ICALP}]
	During any sequence of edge insertions 
	on a general graph, at most $2(n-1)$ strong bridges can appear.
\end{lemma}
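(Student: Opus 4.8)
The plan is to reduce strong bridges to flow-graph bridges, prove a monotonicity property of strong bridges under edge insertions, and finish with an amortized count that is tied to the algorithm's choice of start vertices.

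First, observe that an edge joining two different SCCs is never a strong bridge, so at every moment the strong bridges of $G$ are exactly the strong bridges of its SCCs. Fixing for each SCC $C$ the start vertex $s$ used by the algorithm, the characterization of \cite{Italiano2012} (recalled earlier) says that the strong bridges inside $C$ are precisely the bridges of the flow graph $C_s$ together with the bridges of $C_s^R$, and since the bridges of a flow graph inject into the non-root edges of its dominator tree there are at most $|V(C)|-1$ of each kind. Hence at any single instant the graph has fewer than $2n$ strong bridges; the substance of the lemma is that the \emph{union} of these sets over the whole insertion sequence still has size at most $2(n-1)$.

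The key step I would prove next is a monotonicity property: once an edge $e=(x,y)$ whose endpoints already lie in a common SCC fails to be a strong bridge, it is never a strong bridge again. Indeed, if $e$ is not a strong bridge while $x,y$ belong to the same SCC $C$, then $C\setminus e$ is still strongly connected, so there are paths $x\rightsquigarrow y$ and $y\rightsquigarrow x$ inside $C$ avoiding $e$, and these paths survive every later insertion. For any later graph, writing $C'\supseteq C$ for the SCC that then contains $x$ and $y$, a short path-surgery argument (replace every use of $(x,y)$ by the fixed $x\rightsquigarrow y$ detour, and truncate every route into a vertex at its first occurrence of $x$) shows that $C'\setminus e$ is still strongly connected, i.e.\ $e$ is not a strong bridge. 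Since an edge is trivially not a strong bridge while its endpoints lie in distinct SCCs, it follows that the set of times at which a given edge is a strong bridge is a single contiguous interval, and hence that a strong bridge can only \emph{first appear} either in the initial graph or at one of the at most $n-1$ events at which two or more SCCs merge.

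It then remains to count the strong bridges created at a merge. Consider the insertion that coalesces $C_1,\dots,C_j$ into $C$, with $C_1$ the principal (largest) component, whose start vertex $s$ becomes the start vertex of $C$. If $(u,v)$ is a bridge of $C_s$ whose dominator-tree head $v$ lies in $V(C_1)$, then already before the merge every $s\rightsquigarrow v$ path inside $C_1$ used $(u,v)$ (passing to the larger graph only adds $s\rightsquigarrow v$ paths), so $(u,v)$ was already a strong bridge of $C_1$ and is not new; the symmetric statement holds in $C_s^R$ with head and tail exchanged. Thus every newly appearing strong bridge has its (reverse-)dominator-tree head outside $V(C_1)$, and since each vertex is the head of at most one tree edge, at most $2\big(|V(C)|-|V(C_1)|\big)$ strong bridges are created by this insertion. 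I expect the remaining amortized accounting to be the main obstacle: one must show that summing these quantities over all merges (together with the at most $2(|V(C)|-1)$ strong bridges present in each initial SCC) never exceeds $2(n-1)$, and this is precisely the point where the ``keep the largest SCC's start vertex'' policy is essential. A naive charging (each vertex in a non-principal part of a merge sees its SCC at least double) already gives an $O(n\log n)$ bound; the sharp $2(n-1)$ is obtained by the finer amortized argument behind the total-effective-depth bookkeeping, as carried out in \cite{GIN16:ICALP} (cf.\ Lemma~\ref{lemma:effective-depth}). Unlike the first two steps, this last step relies on the specific start-vertex policy rather than on purely structural facts about strong bridges, which is why I view it as the crux.
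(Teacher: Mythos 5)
Your monotonicity argument (an edge that ceases to be a strong bridge while its endpoints share an SCC can never become one again) is correct, as is the reduction to counting new strong bridges at SCC-merge events. The gap lies in the per-merge bound and in how you propose to close the resulting slack. You bound the new strong bridges at a merge of $C_1,\dots,C_j$ into $C$ by $2(|V(C)|-|V(C_1)|)$, which charges once per non-principal vertex and, as you note, only yields $O(n\log n)$ overall. You then appeal to Lemma~\ref{lemma:effective-depth} to sharpen this to $2(n-1)$, but that lemma bounds the total effective depth of each single vertex by $O(n)$, so summed over vertices it gives $O(n^2)$, strictly weaker than the $O(n\log n)$ you already have; it cannot close the gap. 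Worse, your claim that the sharp count ``relies on the specific start-vertex policy rather than on purely structural facts'' is incorrect: the set of strong bridges of $G$ at any instant is an intrinsic property of $G$, so the number of edges that ever appear as strong bridges along an insertion sequence cannot depend on which start vertices the algorithm happens to pick.

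The missing, purely structural observation is that at most $2(j-1)$ new strong bridges arise when $j$ old SCCs $C_1,\dots,C_j$ merge into $C$. By your own monotonicity step, every new strong bridge of $C$ must be an edge joining two distinct old components: intra-$C_i$ edges that were not strong bridges of $C_i$ remain non-strong-bridges of $C$, and intra-$C_i$ edges that were strong bridges of $C_i$ have already been counted. Form the multigraph $\hat C$ on $j$ nodes by contracting each $C_i$ to a point and keeping all inter-component edges; since each $C_i$ is strongly connected, an inter-component edge $e$ of $C$ is a strong bridge of $C$ if and only if its image $\hat e$ is a strong bridge of $\hat C$. The multigraph $\hat C$ is strongly connected on $j$ vertices, so by the flow-graph bridge count you already invoke it has at most $2(j-1)$ strong bridges, hence $C$ acquires at most $2(j-1)$ new ones. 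Each such merge also decreases the number of SCCs by exactly $j-1$, so $\sum_{\mathrm{merges}}2(j-1)\le 2(k_0-1)$, where $k_0$ is the initial number of SCCs; adding the at most $2(n-k_0)$ strong bridges present in the $k_0$ initial components gives a total of at most $2(n-1)$. This finishes the count without any appeal to the principal-component rule or to effective depth.
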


Thus, we have the following theorem.

\begin{theorem}
Let $G$ be a general graph with $n$ vertices.
Both the dominator trees $D$ and $D^R$, the \newLNT{} trees $L$ and $L^R$ of each SCC $C$ of $G$, all rooted at the same arbitrary start vertex $s$, can be maintained in a total of $O(mn)$ time under any sequence of edge insertions, where $m$ is the number of edges after all insertions.
\end{theorem}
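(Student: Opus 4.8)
The plan is to run two independent copies of the incremental machinery of Section~\ref{sec:canonical-loop-update} --- one on the flow graph $G_s[C]$ of each current SCC $C$ and one on its reverse $G^R_s[C]$ --- layered on top of the incremental SCC, dominator-tree, bridge-decomposition and auxiliary-component data structures of \cite{GIN16:ICALP} (which internally use the incremental cycle-detection structure of \cite{Bender:IncCycleDetection:TALG}), and to use exactly the start vertices chosen there: when an insertion merges $C_1,\dots,C_j$ into $C$, keep the start vertex of the largest (principal) component and reinitialize everything else in $C$ --- a \emph{top-level restart}. Since Theorem~\ref{theorem:hyperloop-total-time} and its reverse-graph analogue apply verbatim inside a single strongly connected component, the only thing left to argue is that the total work, summed over all SCCs ever created, remains $O(mn)$.

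First I would recall the per-insertion cost: by Lemma~\ref{lemma:h-affected-time}, an insertion $(x,y)$ that does not trigger a restart costs $O(V(S')+E(S')+V(S)+E(S)+n')$ inside the affected SCC, where $S$ is the set of $D$-scanned vertices, $S'$ the set of $L$-affected vertices, and $n'$ the number of vertices in the SCCs of $x$ and $y$. The term $O(V(S)+E(S))$ is charged to the incremental dominators algorithm of \cite{dyndom:2012}, which already pays $O(mn)$ overall; the term $O(n')$ is charged to the insertion itself and sums to $O(mn)$ over at most $m$ insertions. It remains to bound $\sum (V(S')+E(S'))$, i.e.\ to bound how often each vertex can be $D$-scanned or $L$-affected over the whole run.

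This is the crux, and the difficulty is that a vertex belongs to a growing chain of SCCs: whenever its SCC is absorbed as a \emph{non-principal} component, its depth in the new dominator tree --- and the number of strong bridges dominating it --- can jump back up, so the ``depth only decreases'' accounting of Section~\ref{sec:incrementa-dominator-and-auxiliary-components} is no longer sufficient. The fix is the effective-depth potential of \cite{GIN16:ICALP}: after $C$ merges into $C'$, the effective depth of $v$ is $0$ if $C$ is the principal component of $C'$ and the depth of $v$ in the dominator tree of $G_s[C']$ otherwise, and the total effective depth of $v$ is the sum over all merges, which is $O(n)$ by Lemma~\ref{lemma:effective-depth}. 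A $D$-scanned vertex strictly decreases its depth, so within each SCC it is $D$-scanned at most as many times as the depth decreases it can absorb; charging each decrease to effective depth gives that every vertex is $D$-scanned $O(n)$ times in total. Similarly, Lemma~\ref{lem:final-running-time-strongly-connected} shows that inside one SCC a vertex $v$ is $L$-affected-but-not-$D$-scanned at most (number of strong bridges dominating $v$) $\le |V(C')|$ times; summing these quotas over all merges telescopes into the total effective depth, so $v$ is $L$-affected $O(n)$ times overall. Hence, counting each vertex together with its incident edges at most $O(n)$ times, $\sum (V(S')+E(S'))$ over all non-restart insertions is $O(mn)$.

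Finally I would dispose of the restarts. There are two flavours: the local-cancellation restarts of Section~\ref{sec:canonical-loop-update}, of which there are $O(n)$ because at most $O(n)$ bridges are locally canceled (which relies on the quoted lemma that at most $2(n-1)$ strong bridges ever appear), and the top-level restarts triggered by SCC merges, of which there are at most $n-1$ since at most $n-1$ new SCCs are created. Each restart recomputes $D$, $D^R$, $L$, $L^R$ (the latter two in linear time via Lemma~\ref{lemma:loops-nestings-relation}) in $O(m)$ time, so all restarts together cost $O(mn)$. Adding the three contributions --- the \cite{GIN16:ICALP} bookkeeping plus incremental dominators, the $\sum(V(S')+E(S')+n')$ update work, and the restarts --- yields $O(mn)$ total; and by construction $L$ and $L^R$ are always kept rooted at the same start vertex $s$ as $D$ and $D^R$ in every SCC, which proves the theorem.
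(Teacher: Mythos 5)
Your proposal is correct and follows essentially the same route as the paper: run the single-SCC machinery of Theorem~\ref{theorem:hyperloop-total-time} inside each SCC using the principal-component start-vertex rule of \cite{GIN16:ICALP}, charge $D$-scanned and $L$-affected events to the effective-depth potential (Lemma~\ref{lemma:effective-depth}), and bound both local-cancellation and top-level restarts by $O(n)$. The only cosmetic difference is that you present the two restart types together at the end, whereas the paper interleaves them, but the accounting is identical.
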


\section{Answering queries under vertex failures}
\label{sec:extension-to-vertices}

In this section we extend the queries from Section \ref{sec:queries-edges}, with respect to vertex failures instead of edge failures. More specifically, given a digraph $G$, we show how we can answer in asymptotically optimal (worst-case) time the following type of queries under vertex failures:
	\vspace{-.25cm}
\begin{mylist}{(iii)}
	\litem{(i)} Report in $O(1)$ time the total number of SCCs in $G\setminus v$, for a query vertex $v \in V$.
	\litem{(ii)} Report in $O(1)$ time the size of the largest and of the smallest SCCs in $G\setminus v$, for a query vertex $v\in V$.
	\litem{(iii)} Report in $O(n)$ time all the SCCs of $G\setminus v$, for a query vertex $v\in V$.
	\litem{(iv)} Test in $O(1)$ time if two query vertices $u$ and $w$ are strongly connected in $G\setminus v$, for a query vertex $v$.
	\litem{(v)} For query vertices $u$ and $w$ that are strongly connected in $G$,
	report all vertices $v$ such that $u$ and $w$ are not strongly connected in $G\setminus v$, in time $O(k+1)$, where $k$ is the number of separating vertices.
\end{mylist}
\vspace{-.1cm}

Let $G=(V,E)$ be a strongly connected digraph and $s\in V$ be an arbitrary start vertex.
For any vertex $u$ in $G$, we let $\widetilde{D}(u)$ \textup{(}resp., $\widetilde{D}^R(u)$\textup{)} denote the set of proper descendants of $u$ in $D$ \textup{(}resp., $D^R$\textup{)}, i.e., $\widetilde{D}(u) = D(u) \setminus u$ \textup{(}resp., $\widetilde{D}^R(u) = D^R(u) \setminus u$\textup{)}.
Clearly, $\widetilde{D}(u)\not= \emptyset$ \textup{(}resp., $\widetilde{D}^R(u) \not= \emptyset$\textup{)} if and only if either $u$ is a nontrivial dominator of $G_s$ \textup{(}resp., $G_s^R$\textup{)} or $u=s$.
Analogously, let $D^R$ and $H^R$ be the dominator tree and the loop nesting tree of $G^R$, respectively.

In order to answer the above queries we use the framework developed in \cite{2C:GIP:arXiv}, that characterizes the SCCs of $G\setminus v$, for any $v \not= s$ that is a strong articulation point, according to the following lemma.

\begin{theorem} [\cite{2C:GIP:arXiv}]
	\label{theorem:vertices-scc}
	Let $u$ be a strong articulation point of $G$, and let $s$ be an arbitrary vertex in $G$.
	Let $C$ be a SCC of $G \setminus u$. Then one of the following cases holds:
	\vspace{-.25cm}
	\begin{mylist}{(a)}
		\litem{(a)} If $u$ is a nontrivial dominator in $G_s$ but not in $G_s^R$ then either $C \subseteq \widetilde{D}(u)$ or $C = V \setminus D(u)$.
		\litem{(b)} If $u$ is a nontrivial dominator in $G_s^R$ but not in $G_s$ then either $C \subseteq \widetilde{D}^R(u)$ or $C = V \setminus D^R(u)$.
		\litem{(c)} If $u$ is a common nontrivial dominator of $G_s$ and $G_s^R$ then either $C \subseteq \widetilde{D}(u) \setminus \widetilde{D}^R(u)$, or $C \subseteq \widetilde{D}^R(u) \setminus \widetilde{D}(u)$, or $C \subseteq \widetilde{D}(u) \cap \widetilde{D}^R(u)$, or $C = V \setminus \big( D(u) \cup D^R(u) \big)$.
		\litem{(d)} If $u = s$ then $C \subseteq \widetilde{D}(u)$.
	\end{mylist}
	\vspace{-.25cm}
	Moreover, if  $C \subseteq \widetilde{D}(u)$ \textup{(}resp., $C \subseteq \widetilde{D}^R(u)$\textup{)} then $C=H(w)$ \textup{(}resp., $C=H^R(w)$\textup{)} where $w$ is a vertex in $\widetilde{D}(u)$ \textup{(}resp., $\widetilde{D}^R(u)$\textup{)} such that $h(w) \not \in \widetilde{D}(u)$ \textup{(}resp.,  $h^R(w) \not \in \widetilde{D}^R(u)$\textup{)}.
\end{theorem}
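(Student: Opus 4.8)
The plan is to prove the two halves of the statement in turn: first the coarse classification of the SCCs of $G\setminus u$ into the regions of cases (a)--(d), then the finer description of each such SCC as a subtree of a loop nesting tree. Suppose first $u\neq s$; recall from \cite{Italiano2012} that a vertex $u$ is a strong articulation point of the strongly connected graph $G$ if and only if $u$ is a nontrivial dominator of $G_s$ or of $G_s^R$, and the three possibilities (one, the other, or both) are exactly cases (a), (b), (c), while $u=s$ is case (d). The key observation is that in $G\setminus u$ the set of vertices reachable from $s$ is precisely $V\setminus D(u)$: since $u$ dominates exactly $D(u)$, no vertex of $\widetilde D(u)$ remains reachable from $s$, and conversely a path from $s$ avoiding $u$ cannot visit a vertex of $\widetilde D(u)$ (such a prefix would have to contain $u$), so all of $V\setminus D(u)$ stays reachable. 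Running the same argument in $G^R$, the vertices that reach $s$ in $G\setminus u$ are exactly $V\setminus D^R(u)$; hence the SCC of $s$ in $G\setminus u$ is $V\setminus\bigl(D(u)\cup D^R(u)\bigr)$, which is the ``$C=V\setminus\cdots$'' alternative in each case.

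For the remaining SCCs I would show each is confined to a single region. If an SCC $C$ meets $\widetilde D(u)$, then $C\subseteq\widetilde D(u)$: otherwise $C$ contains some $a\in\widetilde D(u)$ and some $b\notin D(u)$, and concatenating an $s$-to-$b$ path avoiding $u$ with a $b$-to-$a$ path inside $C$ contradicts the observation above; symmetrically for $\widetilde D^R(u)$. So every SCC is either disjoint from or contained in $\widetilde D(u)$, and likewise for $\widetilde D^R(u)$, which gives the four shapes of case (c); cases (a), (b), (d) are the specializations where $\widetilde D^R(u)$, $\widetilde D(u)$, or $V\setminus D(u)$ is empty (note $\widetilde D^R(u)=\emptyset$ precisely when $u$ is not a nontrivial dominator of $G_s^R$, and $D(s)=V$).

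For the second half, fix an SCC $C\subseteq\widetilde D(u)$; the statement for $\widetilde D^R(u)$ is the identical argument in $G^R$. Since $u$ dominates $D(u)$, no edge of $G$ enters $\widetilde D(u)$ from outside $D(u)$, so $G[\widetilde D(u)]=(G\setminus u)[\widetilde D(u)]$ and $C$ is exactly an SCC of $G[\widetilde D(u)]$. Let $T$ be the dfs tree defining $H$. I would use two facts: (i) every vertex of $D(u)$ is a descendant of $u$ in $T$, since its dfs tree path from $s$ is a path in $G$ and hence contains $u$; (ii) the first vertex $w$ of $C$ visited by the dfs is a $T$-ancestor of all of $C$, by the white-path property. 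Then for every $x\in C$ there is an $x$-to-$w$ path inside $C$, hence through descendants of $w$ in $T$, so $x\in\mathit{loop}(w)$ and $C\subseteq H(w)$. Conversely, by (i) $u$ is a proper $T$-ancestor of $w$, so $u\notin\mathit{loop}(w)$, and $H(w)$ contains no $y\notin D(u)$ either, since a $y$-to-$w$ path through descendants of $w$ in $T$ avoids $u$ and prepending an $s$-to-$y$ path avoiding $u$ would contradict that $u$ dominates $w$. Thus $H(w)\subseteq\widetilde D(u)$ is strongly connected and contains $w$, so $H(w)\subseteq C$ and $C=H(w)$. Finally $h(w)\notin\widetilde D(u)$: otherwise $w\in\mathit{loop}(h(w))$ and the same reasoning shows $H(h(w))$ is a strongly connected subset of $\widetilde D(u)$ containing both $w$ and $h(w)$, forcing $h(w)\in C$ and contradicting that $h(w)$, a proper $T$-ancestor of $w$, is not a descendant of $w$.

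The step I expect to be the main obstacle is the bookkeeping in the second half, where three trees on the same vertex set --- the dfs tree $T$, the dominator tree $D$, and the loop nesting tree $H$ --- interact, and where the delicate point is that the $T$-subtree rooted at $u$ may strictly contain $D(u)$; one must therefore argue consistently via ``$u$ dominates $w$'' rather than ``$u$ is a $T$-ancestor of $w$'' when bounding which vertices a path confined to descendants of $w$ in $T$ can reach. Once facts (i)--(ii) and the ``no entry from outside $D(u)$'' observation are isolated, the rest, including the case analysis for (a)--(d), is routine.
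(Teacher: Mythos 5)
The paper does not prove Theorem~\ref{theorem:vertices-scc}: it is imported verbatim by citation from~\cite{2C:GIP:arXiv}, so there is no in-paper argument to compare against. Your reconstruction is self-contained and, as far as I can tell, correct. The first half rests on two clean observations: after deleting $u$, exactly $V\setminus D(u)$ remains reachable from $s$ and exactly $V\setminus D^R(u)$ still reaches $s$, so $V\setminus\bigl(D(u)\cup D^R(u)\bigr)$ is precisely the SCC of $s$; and no edge of $G$ enters $\widetilde D(u)$ from outside $D(u)$, so every other SCC that meets $\widetilde D(u)$ is contained in it (and symmetrically for $\widetilde D^R(u)$), giving the case split in (a)--(d). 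The second half correctly combines $D(u)\subseteq T(u)$ with the white-path property to identify each $C\subseteq\widetilde D(u)$ with $\mathit{loop}(w)$ for the dfs-first vertex $w$ of $C$, and the argument that $h(w)\notin\widetilde D(u)$ is sound.

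One step is left more compressed than the others and deserves an explicit sentence: you assert that $C$ ``is exactly an SCC of $G[\widetilde D(u)]$'' immediately after noting the induced subgraph is unchanged by removing $u$. What you actually need, and what your ``no edge enters $\widetilde D(u)$ from outside $D(u)$'' observation does give once $u$ is deleted, is that any path in $G\setminus u$ between two vertices of $\widetilde D(u)$ is confined to $\widetilde D(u)$ (it cannot re-enter after leaving and cannot pass through $u$); the same confinement is also implicitly used later when you deduce $H(w)\subseteq C$ from $H(w)$ being strongly connected and meeting $C$. Spelling that out closes the only genuine jump in the write-up; you already flag the subtle point that one must reason via domination rather than $T$-ancestry because $T(u)$ may strictly contain $D(u)$, which is the right caution here.
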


\begin{figure}[t!]
	\begin{center}
		\includegraphics[trim={0cm 1cm 1cm 4cm}, clip=true, width=1.0\textwidth]{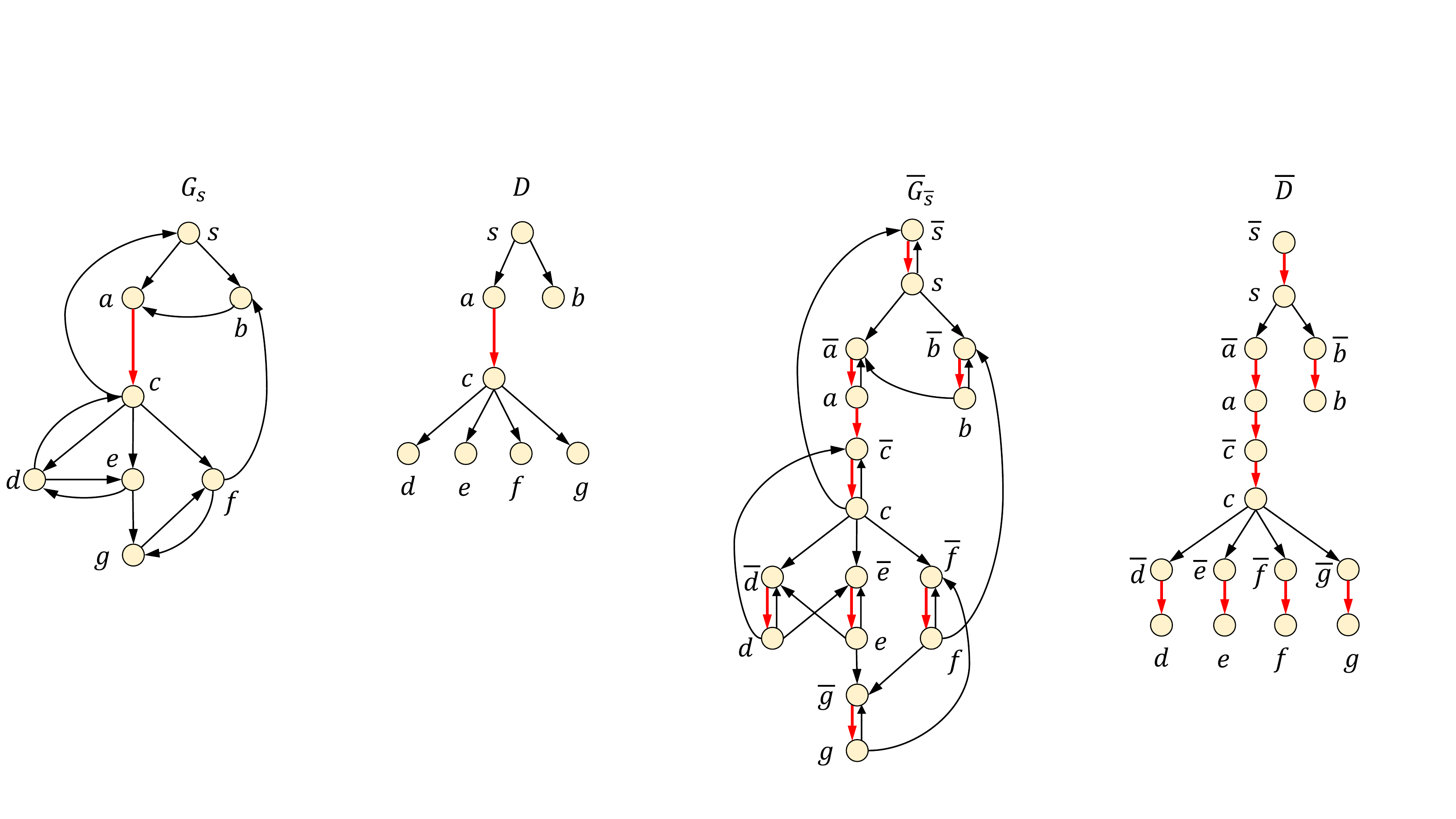}	
		\caption{A strongly connected directed flow graph $G_s$, the dominator tree $D$ of $G_s$, the graph $\overline{G}_{\overline{s}}$, and the dominator tree $\overline{D}$ of $\overline{G}_{\overline{s}}$. The bridges of $G_s$ and $\overline{G}_{\overline{s}}$ are shown in red.}
		\label{figure:overline-graph}
	\end{center}
\vspace{-.7cm}
\end{figure}

Next we show a reduction from answering queries under vertex failures to answering queries under edges failures, which allows us to exploit the incremental algorithm from Section \ref{sec:queries-edges}.
For the reduction we built on Lemma \ref{theorem:vertices-scc} and on 
other properties that were shown in \cite{2C:GIP:arXiv}.
We construct a new graph $\overline{G}=(\overline{V},\overline{E})$ that results from $G$ by applying the following transformation.
For each strong articulation point $x$ of $G$ (i.e., nontrivial dominator in $G_s$ or $G^R_s$), we add an auxiliary vertex $\overline{x} \in \overline{V}$ and add the auxiliary edges $(\overline{x},x)$ and $(x,\overline{x})$. 
Moreover, we replace each edge $(u, x)$ entering $x$ in $G$ with an edge $(u,\overline{x})$ entering $\overline{x}$ in $\overline{G}$.
Note that this transformation maintains the strong connectivity of $\overline{G}$. 
Call the vertices of $V\subset \overline{V}$ ordinary.
The resulting graph $\overline{G}$ has $n$ auxiliary vertices and $2n$ auxiliary edges.
Hence, $|\overline{V}| = 2n$ and $|\overline{E}| = m+2n$. 
Finally, we choose vertex $\overline{s}$ as the start vertex of $\overline{G}$.
See Figure \ref{figure:overline-graph}.

Let $\overline{D}$ and $\overline{D}^R$ be the dominator trees of $\overline{G}_{\overline{s}}$ and $\overline{G}^R_{\overline{s}}$, respectively.
The following lemma states the correspondence between the strong articulation points in $G$ (except for $s$) and the strong bridges in $\overline{G}$. See Figure \ref{figure:overline-graph}.

\begin{lemma}
	Let $x \not =s$ be a strong articulation point of digraph $G$. Then the following properties hold:
	\vspace{-.25cm}
	\begin{mylist}{(iii)}
		\litem{(i)} Let $x$ be a nontrivial dominator of $G_s$ (resp., $G^R_s$).
		Then the auxiliary edge $(\overline{x},x)$ is a strong bridge of $\overline{G}$ and a bridge
		of $\overline{G}_{\overline{s}}$ \textup{(}resp., $\overline{G}^R_{\overline{s}}$\textup{)}.
		\litem{(ii)} For any ordinary vertex $u \in V \setminus x$, we have that $u \in D(x)$ \textup{(}resp., $u\in D^R(x)$\textup{)} if and only if $u \in \overline{D}(x)$ \textup{(}resp., $u \in \overline{D}^R(\overline{x})$\textup{)}.
		\litem {(iii)} All vertices in a SCC of $G\setminus x$ are the ordinary vertices in a SCC of $\overline{G} \setminus (\overline{x},x)$.
	\end{mylist}
	\vspace{-.25cm}
	\label{lemma:vertex-to-edge-reduction}
\end{lemma}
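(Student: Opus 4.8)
The plan is to reduce all three parts to a single combinatorial correspondence between walks in $G$ and walks in $\overline{G}$; since dominator membership and strong connectivity are both determined by walks (a walk avoiding a vertex contains a path avoiding it), this correspondence will suffice. I would first record the two facts that are immediate from the construction: for a strong articulation point $x$, the only edge entering $x$ in $\overline{G}$ is $(\overline{x},x)$, and the only edge leaving $\overline{x}$ is $(\overline{x},x)$ (equivalently, in $\overline{G}^R$ the only edge entering $\overline{x}$ is its reverse). Strong connectivity of $\overline{G}$ follows in one line (each rerouted edge $(u,x)$ is now realized by the two-edge path $u\to\overline{x}\to x$, the edges leaving $x$ are unchanged, and $x\leftrightarrow\overline{x}$), so $x$ is reachable from $\overline{s}$; hence every path from $\overline{s}$ to $x$ uses $(\overline{x},x)$, and every path from $\overline{s}$ to $\overline{x}$ in $\overline{G}^R$ uses its reverse, so $(\overline{x},x)$ is a bridge of both $\overline{G}_{\overline{s}}$ and $\overline{G}^R_{\overline{s}}$. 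Deleting $(\overline{x},x)$ leaves $x$ with in-degree zero, so $x$ forms a singleton SCC and $(\overline{x},x)$ is a strong bridge of $\overline{G}$. This proves (i); the hypothesis that $x$ is a nontrivial dominator of $G_s$ (resp.\ $G^R_s$) is used only to guarantee that $\overline{x}$ and its auxiliary edges exist.

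For (ii) and (iii) I would make the walk correspondence precise. The \emph{lift} of a walk in $G$ replaces every internal step $a\to y$ into a strong articulation point $y$ by $a\to\overline{y}\to y$ (and prepends $\overline{s}\to s$ if the walk starts at $s$); the \emph{contraction} of a walk in $\overline{G}$ whose endpoints are ordinary or $\overline{s}$ collapses every transition $\overline{y}\to y$ back to $y$ and trims an auxiliary copy of $s$ at the ends. The one point needing an argument is that a walk in $\overline{G}$ visiting $\overline{y}$ at a non-terminal position must visit $y$ immediately afterward (as $\overline{y}$ has out-degree one), so a walk with ordinary endpoints avoids an ordinary vertex $x$ if and only if it avoids $\overline{x}$, and contraction therefore never introduces $x$. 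Granting this, lift and contraction are mutually inverse up to the irrelevant endpoints $s,\overline{s}$, and both preserve the set of ordinary vertices visited. Since a vertex $u\ne x$ lies in $D(x)$ (resp.\ $D^R(x)$) exactly when it is not reachable from $s$ while avoiding $x$ (resp.\ cannot reach $s$ while avoiding $x$), and the same holds for $\overline{D}(x)$ and $\overline{D}^R(\overline{x})$ with $\overline{s}$, the correspondence immediately yields $u\in D(x)\Leftrightarrow u\in\overline{D}(x)$ and $u\in D^R(x)\Leftrightarrow u\in\overline{D}^R(\overline{x})$, which is (ii).

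For (iii), a path from $u$ to $w$ in $\overline{G}\setminus(\overline{x},x)$ between two ordinary vertices $u,w\ne x$ is precisely a path from $u$ to $w$ in $\overline{G}$ avoiding the vertex $x$ (hence also $\overline{x}$), which by the correspondence is precisely a path from $u$ to $w$ in $G$ avoiding $x$; running this in both directions shows that $u$ and $w$ are in the same SCC of $\overline{G}\setminus(\overline{x},x)$ if and only if they are in the same SCC of $G\setminus x$. Since $x$ is a singleton SCC of $\overline{G}\setminus(\overline{x},x)$, intersecting the SCCs of $\overline{G}\setminus(\overline{x},x)$ with $V$ gives exactly the SCCs of $G\setminus x$, which is (iii). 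I expect the main obstacle to be bookkeeping rather than any single deep step: keeping the ordinary/auxiliary distinction straight, treating the start vertex $s$ consistently (it is itself a strong articulation point, and its image on the $\overline{G}$ side is $\overline{s}$), and checking that the asymmetry between $\overline{D}(x)$ (rooted at $x$) and $\overline{D}^R(\overline{x})$ (rooted at $\overline{x}$) in (ii) is exactly what the one-edge-in/one-edge-out structure of the gadget forces.
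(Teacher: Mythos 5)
Your proof is correct and uses essentially the same machinery as the paper's: the unique in-edge of $x$ and unique out-edge of $\overline{x}$ force a visit-set-preserving correspondence between walks in $G$ and walks in $\overline{G}$, which is then applied to dominance for (ii) and to strong connectivity for (iii). You package this as an explicit lift/contraction applied uniformly, whereas the paper proves (ii) by a cut argument (no edge other than $(\overline{x},x)$ enters the set of vertices all of whose $\overline{s}$-paths use it) and (iii) by an ad hoc corresponding-path argument; the two presentations carry the same content, and your observation in (i) that $(\overline{x},x)$ is in fact a bridge of \emph{both} $\overline{G}_{\overline{s}}$ and $\overline{G}^R_{\overline{s}}$ whenever $\overline{x}$ exists is a small (correct) strengthening of the stated result.
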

\begin{proof}
	We prove (i) and (ii) only for the case where $x$ is a nontrivial dominator in $G_s$ since the case where $x$ is a nontrivial dominator in $G^R_s$ is completely analogous.
	Note that $x$ has indegree one in $\overline{G}$, and hence $(\overline{x},x)$ is a strong bridge of $\overline{G}$.
	By the fact that $\overline{G}$ is strongly connected, there is at least one path from $\overline{s}$ to $x$. 
	All such paths must contain $(\overline{x},x)$, so $(\overline{x},x)$ is a bridge in $\overline{G}_{\overline{s}}$.
	This implies (i).
	
	In order to prove (ii), we show that $\overline{s}$ has a path to an ordinary vertex $v$ avoiding $(\overline{x},x)$ in $\overline{G}$ if and only if $s$ has a path to $v$ avoiding $x$ in $G$.
	We start with one direction.
	Let $W$ be the set of vertices such that all paths from $\overline{s}$ to vertices in $W$ contain $(\overline{x},x)$.
	Clearly there is no edge $(w,z) \not=(\overline{x},x)$ such that $w\notin W$, $z\in W$, as that means there is a path from $\overline{s}$ to $z \in W$ avoiding $(\overline{x},x)$, which contradicts the fact that $z \in W$.
	Therefore, by the construction of $\overline{G}$ it follows that there is no edge $(w,z)$ in $G$, such that $w \notin W,  w \not= x$ and $z\in W$.
	Thus all paths from $s \notin W$ to ordinary vertices in $W$ contain $x$.
	We now prove the opposite direction of our claim: namely, if $x$ appears in all paths from $s$ to a vertex $v$ in $G$ then $(\overline{x},x)$ appears in all paths from $\overline{s}$ to $v$ in $\overline{G}$.
	Let $W$ be the set of vertices such that all paths from $s$ to vertices in $W$ contain $x$.
	Clearly there is no edge $(w,z)$ such that $w\notin W, w \not= x$, $z\in W$, as that implies that there is a path from $s$ to $z \in W$ avoiding $x$, which contradicts the fact that $z \in W$. 
	Therefore, by the construction of $\overline{G}$ it follows that there is no edge $(w,\overline{z})$, such that $w\notin W, w \not=x$ and $z\in W$.
	Thus all paths from $\overline{s}$ to vertices in $W$ contain $(\overline{x},x)$.
	Thus, (i) and (ii) follow.
	
	We finally prove (iii). 
	Assume that two vertices $u$ and $v$, $u \not= v \not= x$, are in the same SCC in $G \setminus x$ but not in the same SCC in $\overline{G} \setminus (\overline{x},x)$.
	The fact that $u$ and $v$ are not strongly connected in $\overline{G}\setminus (\overline{x},x)$ implies that either all paths from $u$ to $v$ contain $(\overline{x},x)$ or all paths from $v$ to $u$ contain $(\overline{x},x)$.
	Without loss of generality, assume that all paths from $u$ to $v$ in $\overline{G}$ contain the strong bridge $(\overline{x},x)$.
	Since $u$ and $v$ are in the same SCC in $G\setminus x$, there is a path $\pi$ from $u$ to $v$ in $G$ that avoids all edges incoming to $x$ and all edges outgoing to $x$.
	That means in $\overline{G} \setminus (\overline{x},x)$ the corresponding path $\overline{\pi}$ of $\pi$ avoids all incoming edges to $\overline{x}$ and all outgoing edges from $x$.
	This contradicts the fact that all paths from $u$ to $v$ in $\overline{G}$ contain $(\overline{x},x)$.
	Now suppose that in a SCC in $\overline{G} \setminus (\overline{x},x)$, there are two ordinary vertices $u$ and $v$, $u \not= v \not= x$, that are in different SCC in $G \setminus x$.
	Therefore, there is a path $\overline{\pi}$ in $\overline{G}$ that avoids $(\overline{x},x)$, and thus vertices $\overline{x}$ and $x$ by construction.
	Then the corresponding path $\pi$ of $\overline{\pi}$ in $G$ avoids $x$, a contradiction.
	\proofend
\end{proof}

Now we combine Lemmas \ref{theorem:vertices-scc} and \ref{lemma:vertex-to-edge-reduction} to prove the following lemma.

\begin{lemma}
\label{lem:reduction-from-vertices-to-edges}
Let $G$ be a strongly connected graph with start vertex $s$.
For each strong articulation point $v$ it holds that:
	\vspace{-.25cm}
\begin{mylist}{(iii)}
	\litem{(i)} The total number of SCCs in $G\setminus v$, is equal to the total number of SCCs in $\overline{G}\setminus (\overline{v},v)$ minus $1$.
	\litem{(ii)} The size of the largest SCC in $G\setminus v$, is equal to half of the size of the largest SCC in $\overline{G}\setminus (\overline{v},v)$.
	The size of the smallest SCC in $G\setminus v$, equal to half of the size of the smallest SCC in $\overline{G}\setminus (\overline{v},v)$, excluding the singleton SCCs $\overline{v}$ and $v$.
	\litem{(iii)} The SCCs of $G\setminus v$, are the sets of ordinary vertices of each SCC of $\overline{G}\setminus (\overline{v},v)$ except of the singleton SCCs $v$ and $\overline{v}$.
	\litem{(iv)} Two query vertices $u$ and $w$ are strongly connected in $G\setminus v$, if and only if $u$ and $w$ are strongly connected in $\overline{G}\setminus (\overline{v},v)$.
	\litem{(v)} All vertices $v$ such that $u$ and $w$ are not strongly connected in $G\setminus v$, are the ordinary endpoints of strong bridges $e$ \textup{(}excluding $u$ and $w$\textup{)} such that $u$ and $w$ are not strongly connected in $\overline{G}\setminus e$.
\end{mylist}
\end{lemma}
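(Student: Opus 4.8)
The plan is to derive all five parts from Lemma~\ref{lemma:vertex-to-edge-reduction}, after first pinning down the coarse SCC structure of $\overline{G}\setminus(\overline{v},v)$. In $\overline{G}$ the only edge entering $v$ is the auxiliary edge $(\overline{v},v)$, since every original edge into $v$ was redirected to $\overline{v}$; hence in $\overline{G}\setminus(\overline{v},v)$ the vertex $v$ has in-degree $0$ and its SCC is the singleton $\{v\}$. Symmetrically, the only edge leaving $\overline{v}$ is $(\overline{v},v)$, so in $\overline{G}\setminus(\overline{v},v)$ the vertex $\overline{v}$ has out-degree $0$ and its SCC is $\{\overline{v}\}$. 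For every other vertex $x\neq v$ the $2$-cycle on $(\overline{x},x)$ and $(x,\overline{x})$ survives, so $x$ and $\overline{x}$ stay in the same SCC of $\overline{G}\setminus(\overline{v},v)$. Combining these observations with Lemma~\ref{lemma:vertex-to-edge-reduction}(iii), the SCCs of $\overline{G}\setminus(\overline{v},v)$ other than $\{v\}$ and $\{\overline{v}\}$ are in bijection with the SCCs of $G\setminus v$, where an SCC $C$ of $G\setminus v$ corresponds to $\overline{C}:=C\cup\{\overline{x}:x\in C\}$, which has exactly $2\lvert C\rvert$ vertices.

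Parts~(i)--(iv) then follow quickly. For~(i), each SCC of $\overline{G}\setminus(\overline{v},v)$ is either the image $\overline{C}$ of a unique SCC $C$ of $G\setminus v$ or one of the singleton SCCs $\{v\},\{\overline{v}\}$ produced by the removed vertex, and setting these aside gives the claimed count. Part~(ii) holds because every non-singleton SCC of $\overline{G}\setminus(\overline{v},v)$ has the form $\overline{C}$ with $\lvert\overline{C}\rvert=2\lvert C\rvert$, so, after discarding the singletons, the largest (resp.\ smallest) SCC of $\overline{G}\setminus(\overline{v},v)$ has twice the size of the largest (resp.\ smallest) SCC of $G\setminus v$. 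Part~(iii) is exactly the statement that the SCCs of $G\setminus v$ are recovered by dropping the singletons $\{v\},\{\overline{v}\}$ and taking the ordinary vertices of each remaining SCC. For part~(iv), two ordinary vertices $u,w\neq v$ lie in a common SCC of $G\setminus v$ if and only if they lie in a common SCC of $\overline{G}\setminus(\overline{v},v)$, again by Lemma~\ref{lemma:vertex-to-edge-reduction}(iii), since neither $u$ nor $w$ is one of the two auxiliary singletons.

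Part~(v) needs more work and is where I expect the main obstacle. For the inclusion ``$\subseteq$'': if $v\notin\{u,w\}$ separates $u$ and $w$ in $G$, then $v$ is a strong articulation point, so by Lemma~\ref{lemma:vertex-to-edge-reduction}(i) the edge $(\overline{v},v)$ is a strong bridge of $\overline{G}$; its unique ordinary endpoint is $v$, and by part~(iv) it separates $u$ and $w$ in $\overline{G}$. The borderline case $v=s$ is the same, since $(\overline{s},s)$ is still the only edge entering $s$, hence a strong bridge. For ``$\supseteq$'', let $e$ be a strong bridge of $\overline{G}$ that separates $u$ and $w$, with ordinary endpoint $z\notin\{u,w\}$ (by construction every edge of $\overline{G}$ has exactly one ordinary and one auxiliary endpoint, so $z$ is well defined). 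I would distinguish three cases according to the shape of $e$. If $e=(\overline{z},z)$, then part~(iv) gives that $u$ and $w$ are not strongly connected in $G\setminus z$, so $z$ is a separating vertex. The case $e=(z,\overline{z})$ cannot occur: since $G$ is strongly connected on at least two vertices and has no self-loop, $z$ reaches some in-neighbour $t$ of $z$ in $G$, so $z\to\cdots\to t\to\overline{z}$ is a path of $\overline{G}\setminus(z,\overline{z})$; as any use of $(z,\overline{z})$ inside a walk can be short-circuited, removing this edge disconnects no pair, so it is not a strong bridge. In the last case $e=(z,\overline{b})$ with $b\neq z$, arising from an original edge $(z,b)$ of $G$: here I would set up the natural correspondence between walks, namely that every $u$-to-$w$ path of $\overline{G}$ alternates between ordinary and auxiliary vertices and projects to a $u$-to-$w$ walk of $G$ that uses $(z,b)$ exactly when the original path uses $(z,\overline{b})$, and conversely every $u$-to-$w$ path of $G$ lifts to a $u$-to-$w$ path of $\overline{G}$. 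Hence $u$ and $w$ are separated in $\overline{G}\setminus(z,\overline{b})$ if and only if they are separated in $G\setminus(z,b)$; and since $G\setminus z$ is a subgraph of $G\setminus(z,b)$ (using $z\notin\{u,w\}$), separation in $G\setminus(z,b)$ forces separation in $G\setminus z$, so $z$ is a separating vertex for $u$ and $w$ in $G$. The three cases together give the reverse inclusion. The delicate points will be making the walk correspondence of the last case precise — in particular the handling of the two endpoints and of the ``stay'' steps $x\to\overline{x}\to x$ — and checking the degenerate situations $v=s$ and $u=w$.
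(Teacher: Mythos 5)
Your proof is correct, and for parts~(i)--(iv) it follows essentially the same route as the paper: observe that $v$ and $\overline{v}$ become singleton SCCs in $\overline{G}\setminus(\overline{v},v)$ while every other ordinary vertex $x$ stays in the same SCC as $\overline{x}$ via the surviving $2$-cycle, then invoke Lemma~\ref{lemma:vertex-to-edge-reduction}(iii) to get the bijection between the remaining SCCs.

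For part~(v) your argument diverges from the paper's and is noticeably heavier. You split on the form of the bridge $e$ (type $(\overline{z},z)$, type $(z,\overline{z})$, type $(z,\overline{b})$ with $b\neq z$), rule out the middle type by rerouting through an in-neighbour, and for the last type set up a walk-correspondence between $\overline{G}\setminus(z,\overline{b})$ and $G\setminus(z,b)$, followed by the monotonicity observation that $G\setminus z\subseteq G\setminus(z,b)$. This all goes through, but the paper dispatches the entire reverse inclusion with one observation that subsumes your second and third cases: if $e=(p,\overline{q})$ is a strong bridge separating $u$ and $w$ with $p\notin\{u,w\}$, then (say) every $w$-to-$u$ path uses $(p,\overline{q})$, hence passes through $p$, hence, because $(\overline{p},p)$ is the \emph{unique} edge entering $p$, every such path also uses $(\overline{p},p)$; so $(\overline{p},p)$ is itself a separating strong bridge and part~(iv) applies directly. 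This avoids both the walk-correspondence and the edge-versus-vertex-deletion comparison. Your version is sound, but if you keep the case split you should note that the $(z,\overline{z})$ case is vacuous and that the $(z,\overline{b})$ case can be collapsed into the $(\overline{z},z)$ case by the ``unique in-edge'' trick above.
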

\begin{proof}
We start with (i).
Notice that in $\overline{G}\setminus (\overline{x},x)$, the vertices $\overline{x}$ and $x$ are singleton SCCs since $(\overline{x},x)$ is the only outgoing and the only incoming edge of $\overline{x}$ and $x$, respectively.
First, notice that each ordinary vertex $v\not=x$ is in the same SCC together with $\overline{v}$ in $\overline{G}$ since $\overline{G}$ contains $(\overline{v},v)$ and $(v,\overline{v})$.
By Lemma \ref{lemma:vertex-to-edge-reduction}(iii), all the vertices in a SCC in $G\setminus x$ are the ordinary vertices of a SCC in $\overline{G}\setminus (\overline{x},x)$.
Therefore, (i) follows.

The cases (ii), (iii), and (iv) follows immediately from Lemma \ref{lemma:vertex-to-edge-reduction}(iii) and the fact that all ordinary vertices are in the same SCC with their auxiliary vertex.

Finally, we consider case (v).
By Lemma \ref{lemma:vertex-to-edge-reduction}(i) and Lemma \ref{lemma:vertex-to-edge-reduction}(ii) it follows for each vertex $x$ such that $u$ and $w$ are not strongly connected in $G\setminus x$, there exists the strong bridge $(\overline{x},x)$ such that $u$ and $w$ are not strongly connected in $\overline{G} \setminus (\overline{x},x)$.
Let now $e = (p,\overline{q})$ be a strong bridge in $\overline{G}$ separating $w$ and $u$, such that $p\not= w \not=u$.
We show that $(\overline{p},p)$ is also a strong bridge in $\overline{G}$ separating $w$ and $u$.
That implies, by Lemma \ref{lemma:vertex-to-edge-reduction}(iii), $p$ separates $w$ and $u$ in $G$.
Without loss of generality, assume all paths from $w$ to $u$ contain $(p,\overline{q})$.
By the fact that the only incoming edge to $p$ is $(\overline{p},p)$ it follows that all paths from $w$ to $u$ in $\overline{G}$ contain $(\overline{p},p)$.
Thus, $(\overline{p},p)$ is a strong bridge in $\overline{G}$ separating $w$ and $u$, and as we mentioned above that implies $p$ is a separating vertex for $w$ and $u$.
Case (v) follows.
\proofend
\end{proof}

Lemma \ref{lem:reduction-from-vertices-to-edges} allows us to answer strong connectivity queries under vertex failures 
using the strong connectivity queries under edge failures from Section \ref{sec:queries-edges}. Similarly to Section \ref{sec:queries-edges}, we only need to maintain incrementally the dominator trees $\overline{D}$ and $\overline{D}^R$, and the \newLNT{} trees $\overline{L}$ and $\overline{L}^R$ of $\overline{G}$ and $\overline{G}^R$, respectively.
The most challenging query is to report the size of the smallest SCC in $G\setminus v$, for a query vertex $v$.
This happens due to the fact that the size of the smallest SCCs in $\overline{G}\setminus (\overline{v},v)$ is always one, namely, 
the singleton SCCs $\overline{v}$ and $v$.
To resolve this problem we apply a minor modification of the algorithm in \cite{2C:GIP:arXiv}, in order to ignore the singleton SCC $v$ when computing the minimum SCC in $G[D(v)]$, and the singleton SCC $\overline{v}$ when computing the minimum SCC in $G^R[D^R(v)]$. In summary, we have the following theorem.

\begin{theorem}
	\label{thrm:main-result}
	We can maintain a digraph $G$ throughout any sequence of edge insertions in a total of $O(mn)$ time, where $m$ is the number of edges after all insertions, and after each edge insertion  the following queries can be answered in asymptotically optimal (worst-case) time:
			\vspace{-.25cm}
	\begin{mylist}{(iii)}
		\litem{(i)} Report in $O(1)$ time the total number of SCCs in $G\setminus v$, for a query vertex $v$ in $G$.

		\litem{(ii)} Report in $O(1)$ time the size of the largest and of the smallest SCCs in $G\setminus v$, for a query vertex $v$ in $G$.
		\litem{(iii)} Report in $O(n)$ time all the SCCs of $G\setminus v$, for a query vertex $v$.
		\litem{(iv)} Test in $O(1)$ time if two query vertices $u$ and $w$ are strongly connected in $G\setminus v$, for a query vertex $v$.
		\litem{(v)} For query vertices $u$ and $w$ that are strongly connected in $G$, report all vertices $v$ such that $u$ and $w$ are not strongly connected in $G\setminus v$, in $O(k+1)$ time, where $k$ is the number of separating vertices.
	\end{mylist}
	\vspace{-.1cm}
\end{theorem}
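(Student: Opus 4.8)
The plan is to reduce the vertex-failure queries on $G$ to the edge-failure queries of Corollary~\ref{corollary:edge-queries} on the auxiliary graph $\overline{G}$ by means of Lemma~\ref{lem:reduction-from-vertices-to-edges}, and to maintain $\overline{G}$ together with its dominator trees $\overline{D},\overline{D}^R$, its \newLNT{} trees $\overline{L},\overline{L}^R$, and the derived trees $\hat{\overline{H}},\hat{\overline{H}}^R$, incrementally in $O(mn)$ total time. The observation that makes this work incrementally is that $\overline{G}$ may be given a \emph{fixed} vertex set of size $2n$: we keep an auxiliary copy $\overline{x}$ of \emph{every} vertex $x\in V$ (not only of the current strong articulation points), together with the gadget edges $(\overline{x},x)$ and $(x,\overline{x})$, and we route every edge entering $x$ into $\overline{x}$ instead. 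Then $\overline{x}$ has out-degree one and $x$ has in-degree one, so $(\overline{x},x)$ is always a strong bridge of $\overline{G}$, each pair $\{x,\overline{x}\}$ lies in one SCC of $\overline{G}$, and $\overline{G}$ is strongly connected whenever $G$ is; the correspondences of Lemmas~\ref{lemma:vertex-to-edge-reduction} and~\ref{lem:reduction-from-vertices-to-edges} hold for every vertex $x$, the queries for a vertex $x$ that is not a strong articulation point being answered trivially (the graph $G\setminus x$ stays strongly connected, and $x$ is no separating vertex).

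With this setup, every insertion of an edge $(u,v)$ into $G$ is a single insertion of $(u,\overline{v})$ into $\overline{G}$. First I would run on $\overline{G}$ the incremental algorithm of Section~\ref{sec:queries-edges} together with its extension to general graphs from Section~\ref{sec:extension-to-general-graphs}: this maintains the strong bridges of $\overline{G}$, the dominator trees $\overline{D},\overline{D}^R$ (by~\cite{dyndom:2012}), and the \newLNT{} trees $\overline{L},\overline{L}^R$ (by Theorem~\ref{theorem:hyperloop-total-time}), and after each insertion rebuilds $\hat{\overline{H}},\hat{\overline{H}}^R$ in $O(n)$ time. Since $\overline{G}$ has $2n$ vertices and at most $m+2n$ edges, all these computations cost $O\big((m+2n)\,2n\big)=O(mn)$ in total (using $m\ge n-1$), and the per-insertion $O(n)$ rebuilds contribute another $O(mn)$. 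After each insertion, the maintained strong bridges and trees $\overline{D},\overline{D}^R,\hat{\overline{H}},\hat{\overline{H}}^R$ are, by Theorem~\ref{cor:scc} combined with Lemma~\ref{lemma:replace} applied to $\overline{G}$, sufficient to answer in optimal worst-case time every edge-failure query on $\overline{G}\setminus(\overline{v},v)$; composing this with Lemma~\ref{lem:reduction-from-vertices-to-edges} gives all five vertex-failure queries. Queries~(i), (iii), (iv) and the \emph{largest}-SCC part of~(ii) are immediate, and query~(v) is obtained by enumerating the separating edges of $\overline{G}$ for $u,w$ with the edge-query machinery and, as in the proof of Lemma~\ref{lem:reduction-from-vertices-to-edges}(v), reporting their ordinary endpoints; that same proof shows that the number of such separating edges is $O(k)$, so the running time is $O(k+1)$.

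The one query that needs extra care --- and which I expect to be the main obstacle --- is reporting the size of the \emph{smallest} SCC of $G\setminus v$, since the smallest SCC of $\overline{G}\setminus(\overline{v},v)$ is always one of the forced singletons $\{v\}$ or $\{\overline{v}\}$, so reading it off naively returns the useless value $1$. To handle this I would adapt the static minimum-SCC computation of~\cite{2C:GIP:arXiv}, which obtains the size of the smallest SCC of $\overline{G}\setminus(\overline{v},v)$ from $\overline{D},\overline{D}^R,\hat{\overline{H}},\hat{\overline{H}}^R$ by scanning the sizes of the relevant $\hat{\overline{H}}$-subtrees inside $\overline{D}(v)$, of the relevant $\hat{\overline{H}}^R$-subtrees inside $\overline{D}^R(\overline{v})$, and of the complement component; it suffices to make this scan ignore the unique subtree equal to $\{v\}$ on the $\overline{D}$ side and the unique subtree equal to $\{\overline{v}\}$ on the $\overline{D}^R$ side, which is an $O(n)$ additional preprocessing per insertion, hence $O(mn)$ overall, after which Lemma~\ref{lem:reduction-from-vertices-to-edges}(ii) yields the answer by halving. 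Combining all the pieces, after each insertion every one of the five queries is answered in asymptotically optimal worst-case time, and the total update time over the whole sequence of edge insertions is $O(mn)$, as claimed.
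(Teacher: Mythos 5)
Your proof is correct and follows essentially the same route as the paper: reduce vertex-failure queries on $G$ to edge-failure queries on the auxiliary graph $\overline{G}$ via Lemma~\ref{lem:reduction-from-vertices-to-edges}, maintain $\overline{D},\overline{D}^R,\overline{L},\overline{L}^R$ incrementally on $\overline{G}$, and patch the smallest-SCC query by making the scan of $\hat{\overline{H}}$- and $\hat{\overline{H}}^R$-subtrees ignore the forced singletons $\{v\}$ and $\{\overline{v}\}$. One place where you add a small but useful clarification: the paper defines $\overline{G}$ by adding an auxiliary vertex only for each \emph{current} strong articulation point, which would make the vertex set of $\overline{G}$ shrink as insertions destroy articulation points — awkward for an incremental maintenance argument — whereas you add $\overline{x}$ for \emph{every} $x\in V$, yielding a fixed $2n$-vertex graph in which $(\overline{x},x)$ is always a strong bridge and the non-articulation-point queries degenerate trivially; this removes a wrinkle the paper leaves implicit, at no asymptotic cost.
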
 

\section{Maintaining the $2$-vertex-connected components of a digraph}
\label{sec:2-vertex-connected-components}

In this section we present an algorithm for maintaining incrementally the $2$-vertex-connected components of a digraph $G$ in a total $O(mn)$ of time.
In our algorithm we will use some properties from \cite{2C:GIP:arXiv}. However, we need to extend those properties, by substituting them with strong connectivity queries under edge failures, in order to provide necessary and sufficient conditions for two vertices being $2$-vertex-connected.

We compute the $2$-vertex-connected components of a digraph by computing as an intermediary the following relation.
Two vertices $x$ and $y$ are said to be \emph{vertex-resilient} if the removal of any vertex different from $x$ and $y$
leaves $x$ and $y$ in the same strongly connected component. A \emph{vertex-resilient component}
of a digraph $G=(V,E)$ is defined as a maximal subset $B \subseteq V$ such that $x$ and $y$ are vertex-resilient for all $x, y \in B$. 
As a (degenerate) special case, a vertex-resilient component might consist of a singleton vertex only, in which case we have a \emph{trivial vertex-resilient component}.
We are interested in computing only the nontrivial vertex-resilient components, and since there is no danger of ambiguity, we will call them simply vertex-resilient components.
The following lemma states that we can compute the $2$-vertex-connected components starting from the $2$-edge-connected components and the vertex-resilient components.
Moreover, this computation can be done in $O(n)$ time~\cite{2VCB}.

\begin{lemma} \emph{(\cite{2VCB})}
\label{cor:2vc-resilient}
For any two distinct vertices $x$ and $y$, $x$ and $y$ are $2$-vertex-connected if and only if $x$ and $y$ are both vertex-resilient and $2$-edge-connected.
\end{lemma}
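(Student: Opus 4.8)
The plan is to prove the two implications separately, using the (standard) convention that ``two internally vertex-disjoint $a\to b$ paths'' means two \emph{distinct} simple paths $P_1,P_2$ with $V(P_1)\cap V(P_2)=\{a,b\}$; this convention is needed to rule out the degenerate reading in which a single edge counts twice. If $G$ has at most two vertices the statement is vacuously true, since then the only candidate $x\to y$ path is the edge $(x,y)$ and none of the three properties can hold, so I would assume $n\ge 3$ throughout.

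For the forward direction ($x\leftrightarrow_{\mathrm{2v}}y$ implies vertex-resilient and $x\leftrightarrow_{\mathrm{2e}}y$), I would fix two internally vertex-disjoint $x\to y$ paths $P_1,P_2$ and two internally vertex-disjoint $y\to x$ paths $Q_1,Q_2$. For vertex-resilience: any $w\neq x,y$ lies on at most one of $P_1,P_2$ (they meet only in $\{x,y\}$), so at least one of them survives in $G\setminus w$ as an $x\to y$ path; symmetrically $y$ reaches $x$ in $G\setminus w$, hence $x$ and $y$ stay in the same SCC. For $2$-edge-connectivity I would show that two distinct internally vertex-disjoint $a\to b$ paths are automatically edge-disjoint: a common edge $(u,v)$ would force $u,v\in\{a,b\}$, and since on a simple $a\to b$ path the vertex $b$ is never the tail of an edge and $a$ is never its head, this edge must be $(a,b)$, which in turn forces both paths to equal the single edge $(a,b)$, contradicting distinctness. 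Applying this to $P_1,P_2$ and to $Q_1,Q_2$ gives $x\leftrightarrow_{\mathrm{2e}}y$.

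For the converse (vertex-resilient and $x\leftrightarrow_{\mathrm{2e}}y$ imply $x\leftrightarrow_{\mathrm{2v}}y$), it suffices to produce two internally vertex-disjoint $x\to y$ paths, the $y\to x$ direction being symmetric, and I would split into two cases. If $(x,y)\notin E$: vertex-resilience says $x$ still reaches $y$ in $G\setminus w$ for every $w\neq x,y$, i.e.\ no single vertex is an $x$--$y$ separator, and since $(x,y)\notin E$ the vertex form of Menger's theorem for digraphs applies and yields two internally vertex-disjoint $x\to y$ paths. If $(x,y)\in E$: take the two edge-disjoint $x\to y$ paths provided by $2$-edge-connectivity; at most one of them uses the edge $(x,y)$, so one of them, say $R$, avoids it, and then $R$ together with the single-edge path $(x,y)$ are two distinct internally vertex-disjoint $x\to y$ paths.

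The hard part — and the reason the ``$2$-edge-connected'' hypothesis genuinely cannot be dropped — is precisely the case of adjacent vertices, $(x,y)\in E$ or $(y,x)\in E$, where Menger's vertex theorem has its classical exception. There vertex-resilience alone does not force a second internally-vertex-disjoint path (two mutually adjacent vertices are vertex-resilient yet need not be $2$-vertex-connected), and it is exactly $2$-edge-connectivity that supplies the missing path in the argument above. I would therefore take care to isolate this case, together with the trivial small-$n$ cases, explicitly; the remaining assertion in the surrounding text, that the $2$-vertex-connected components can then be read off from the $2$-edge-connected and vertex-resilient components in $O(n)$ time, is routine bookkeeping that I would simply cite.
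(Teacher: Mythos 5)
The paper gives no proof of this lemma; it is stated with a citation to~\cite{2VCB} and used as a black box, so there is no in-paper argument to compare against. Judged on its own, your blind proof is correct and is the standard Menger-based argument. The forward direction correctly observes that two distinct internally-vertex-disjoint simple $a\to b$ paths must already be edge-disjoint: any shared edge is forced to be $(a,b)$, which would collapse both paths to that single edge. For the converse you correctly split on whether $(x,y)\in E$: in the non-adjacent case vertex-resilience rules out any single-vertex separator and the directed vertex form of Menger's theorem yields the two internally vertex-disjoint $x\to y$ paths, while in the adjacent case Menger does not apply directly and it is precisely $2$-edge-connectivity that supplies the missing path (one avoiding $(x,y)$, to pair with the single-edge path $(x,y)$) --- exactly the spot, as you emphasize, where two mutually adjacent but not $2$-vertex-connected vertices show that vertex-resilience alone would not suffice. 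The case analysis and its mirror for $y\to x$ must indeed be run independently, since one of $(x,y),(y,x)$ may lie in $E$ while the other does not, and the small-$n$ and path-distinctness caveats you add are the right precautions.
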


Because, of Lemma \ref{cor:2vc-resilient}, an incremental algorithm for maintaining the $2$-vertex-connected components of a digraph can be immediately obtained from incremental algorithms for maintaining the vertex-resilient and $2$-edge-connected components. Since we know how to maintain incrementally the $2$-edge-connected components of a digraph in a total of $O(mn)$ time \cite{GIN16:ICALP}, in the remainder of this section we will focus on the incremental maintenance of the vertex-resilient components. 

Once again, let $s$ be an arbitrary start vertex in $G$, and let $D$ (resp., $D^R$) be the dominator tree of the flow graph $G_s$ (resp., $G_s^R$).
For any vertex $u$, we denote by $C(u)$ (resp. $C^R(u)$) the set of children of $u$ in $D$ (resp. $D^R$). 
For any pair of vertices $u$ and $v$ we identify a set of vertices $C(u,v)$ defined as follows.
Set $C(u,v)$ contains all vertices in $C(u) \cap C^R(v)$. Also, if $u=v$ or $u \in C^R(v)$ then we include $u$ in $C(u,v)$, and if $v \in C(u)$ then we include $v$ in $C(u,v)$.
These sets can be computed in $O(n)$ time~\cite{LuigiGILP15}.

\begin{corollary} [\cite{2C:GIP:arXiv}]
\label{corollary:vertex-resilient-necessary}
Let $G=(V,E)$ be a strongly connected digraph, and let $s \in V$ be an arbitrary start vertex.
Any two vertices $x$ and $y$ are vertex-resilient only if they are located in a common set $C(u,v)$.
\end{corollary}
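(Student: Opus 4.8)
The plan is to derive the conclusion directly from vertex-resilience, using it only through the remark that removing any third vertex cannot leave $x$ and $y$ in different SCCs. The proof has three stages: a domination observation applied to both $G_s$ and $G^R_s$; a short rooted-tree computation; and a small case analysis against the (intentionally asymmetric) definition of the sets $C(u,v)$. Throughout I take $x\neq y$ (if $x=y$ the claim is vacuous).

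The key observation is the following. Suppose a vertex $w \notin \{x,y\}$ dominates $x$ in $G_s$ but is not an ancestor of $y$ in $D$. Then there is a path $\sigma$ from $s$ to $y$ in $G$ avoiding $w$; for any path $\pi$ from $y$ to $x$, the walk $\sigma \cdot \pi$ runs from $s$ to $x$ and hence meets $w$, and since $\sigma$ avoids $w$ the piece $\pi$ must contain $w$. Thus every path from $y$ to $x$ passes through $w$, so $x$ and $y$ lie in different SCCs of $G \setminus w$, i.e., $w$ separates them. Taking $w = d(x)$ (legitimate since $d(x)\neq x$ always, with nothing to prove when $d(x)=y$) and invoking vertex-resilience, we conclude that $d(x)$ is an ancestor of $y$ in $D$; symmetrically $d(y)$ is an ancestor of $x$ in $D$, and the same argument on $G^R$ gives that $d^R(x)$ is an ancestor of $y$ in $D^R$ and $d^R(y)$ is an ancestor of $x$ in $D^R$. (The boundary cases $x=s$ or $y=s$ are handled identically, using that $s$ is the root of both $D$ and $D^R$.)

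Next, a purely combinatorial step converts the pair of relations ``$d(x)$ is an ancestor of $y$'' and ``$d(y)$ is an ancestor of $x$'' into the trichotomy: exactly one of \textup{(D1)} $d(x)=d(y)$, \textup{(D2)} $d(x)=y$, \textup{(D3)} $d(y)=x$ holds. Indeed, if $d(x)\neq y$ and $d(y)\neq x$ then $d(x)$ is a proper common ancestor of $x$ and $y$, which forces $d(x)=\mathit{nca}_D(x,y)$, and likewise $d(y)=\mathit{nca}_D(x,y)$, giving (D1); and (D2) and (D3) cannot hold simultaneously in a tree. The identical trichotomy \textup{(R1)}/\textup{(R2)}/\textup{(R3)} holds in $D^R$ with $d^R$ in place of $d$.

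It then remains to combine a $D$-case with an $R$-case and, in each resulting configuration (at most nine, and up to the symmetry exchanging $x$ with $y$ only a handful), to exhibit the pair $(u,v)$ that works, reading membership straight off the three clauses of the definition of $C(u,v)$. For example: in (D1)\&(R1), $u=d(x)=d(y)$ and $v=d^R(x)=d^R(y)$ give $x,y\in C(u)\cap C^R(v)\subseteq C(u,v)$; in (D1)\&(R2), $u=d(x)=d(y)$ and $v=y$ give $x\in C(u)\cap C^R(y)\subseteq C(u,v)$, while $y\in C(u)$ places $y$ into $C(u,v)$ via the clause ``$v\in C(u)\Rightarrow v\in C(u,v)$''; in (D2)\&(R1), $u=y$ and $v=d^R(x)=d^R(y)$ give $x\in C(y)\cap C^R(v)\subseteq C(u,v)$, and $y\in C^R(v)$ places $y$ into $C(u,v)$ via ``$u\in C^R(v)\Rightarrow u\in C(u,v)$''; in (D2)\&(R3), $u=y$ and $v=x$ place $y$ (since $y\in C^R(x)$) and $x$ (since $x\in C(y)$) into $C(u,v)$; the remaining combinations are completely analogous. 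The genuine obstacle is not any individual deduction, each of which is short, but keeping this last bookkeeping straight — matching each configuration to the correct one of the three deliberately asymmetric clauses defining $C(u,v)$, which are present exactly to absorb the degenerate cases in which one of $x,y$ is the $D$- or $D^R$-parent of the other.
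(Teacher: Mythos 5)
The paper does not supply its own proof of this corollary: it is imported verbatim from the earlier reference \cite{2C:GIP:arXiv}, so there is no paper proof to compare against line by line. Evaluated on its own, your argument is correct and self-contained, and it is close in spirit to the dominator-tree machinery that surrounds the statement in Section~\ref{sec:2-vertex-connected-components} (in particular, Lemma~\ref{vertices-separators-ancestors} and Lemma~\ref{SAP-relation} are essentially the same phenomenon, packaged as black boxes rather than rederived). Your ``key observation'' — that a vertex $w\notin\{x,y\}$ which dominates $x$ in $G_s$ but is not an ancestor of $y$ in $D$ must separate $x$ from $y$ — is exactly the right lever, and the four instantiations in $D$ and $D^R$, the two trichotomies, and the nine-way case match against the three clauses of $C(u,v)$ all check out; I verified every combination.

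Two places are glossed slightly more than I would like, though neither is a real gap. First, in the trichotomy you jump from ``$d(x)$ is a proper common ancestor of $x$ and $y$'' to ``$d(x)=\mathit{nca}_D(x,y)$''; this needs the extra observation that $\mathit{nca}_D(x,y)$ cannot be $x$ or $y$ itself under the standing hypotheses $d(x)\neq y$, $d(y)\neq x$ (if, say, $x$ were a proper ancestor of $y$, then $d(y)$ being an ancestor of $x$ would force $d(y)=x$). A cleaner route avoids NCAs entirely: $d(y)\neq x$ and ``$d(y)$ ancestor of $x$'' imply $d(y)$ is an ancestor of $d(x)$, symmetrically $d(x)$ is an ancestor of $d(y)$, so antisymmetry gives $d(x)=d(y)$. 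Second, the boundary cases $x=s$ or $y=s$ are not ``handled identically''; rather, when $x=s$ the relation $d(x)$ is undefined and the two usable relations force $d(y)=d^R(y)=s=x$, landing you directly in the $(D3)\,\&\,(R3)$ cell with $u=v=s$. Spelling that out would make the write-up airtight.
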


Let $x$ and $y$ be two distinct vertices in $G$. 
We say that a strong articulation point $u$ \emph{separates $x$ and $y$} (or equivalently that $u$ is a \emph{separating vertex for $x$ and $y$}) if all paths from $x$ to $y$ or all paths from $y$ to $x$ contain vertex $u$ (i.e., $x$ and $y$ belong to different strongly connected components of $G\setminus u$).
Clearly, $x$ and $y$ are vertex-resilient if and only if there exists no separating vertex for them.
The next two lemmas from \cite{2C:GIP:arXiv} imply that only $O(1)$ vertices need to be tested in order to determine whether there exists a separating vertex for 
$x$ and $y$.

\begin{lemma} [\cite{2C:GIP:arXiv}]
\label{vertices-separators-ancestors}
Let $u$ be nontrivial dominator in either $D$ or $D^R$ that is a separating vertex for vertices $x$ and $y$. Then $u$ must appear in at least one of the paths $D[s,x]$, $D[s,y]$, $D^R[s,x]$, and $D^R[s,y]$.
Let $u$ be a strong articulation point that is a separating vertex for vertices $x$ and $y$. Then $u$ must appear in at least one of the paths $D[s,x]$, $D[s,y]$, $D^R[s,x]$, and $D^R[s,y]$.
\end{lemma}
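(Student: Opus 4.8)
The plan is to prove the contrapositive directly from the definition of separating vertex, using only the standard dictionary between tree paths and domination: $u$ lies on $D[s,w]$ exactly when $u$ dominates $w$ in $G_s$, and symmetrically $u$ lies on $D^R[s,w]$ exactly when $u$ dominates $w$ in $G_s^R$ (equivalently, when every path from $w$ to $s$ in $G$ contains $u$). Since a separating vertex for $x$ and $y$ is by definition distinct from both $x$ and $y$, and since $s$ always lies on each of $D[s,x]$, $D[s,y]$, $D^R[s,x]$, $D^R[s,y]$, the case $u=s$ is immediate; so the second statement (strong articulation point) reduces to the first, and in fact the argument below never uses anything about $u$ beyond $u\notin\{x,y\}$.

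First I would assume, for contradiction, that $u$ appears on none of the four paths. From $u\notin D[s,x]$ we get a path $P_x$ from $s$ to $x$ in $G$ avoiding $u$; from $u\notin D[s,y]$ a path $P_y$ from $s$ to $y$ avoiding $u$; from $u\notin D^R[s,x]$ a path $Q_x$ from $x$ to $s$ avoiding $u$; and from $u\notin D^R[s,y]$ a path $Q_y$ from $y$ to $s$ avoiding $u$. Now, because $u$ separates $x$ and $y$, either every path from $x$ to $y$ contains $u$, or every path from $y$ to $x$ contains $u$. In the first case, concatenating $Q_x$ (from $x$ to $s$) with $P_y$ (from $s$ to $y$) yields a walk from $x$ to $y$ that avoids $u$; extracting a simple path from this walk gives an $x$-to-$y$ path avoiding $u$, a contradiction. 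In the second case, concatenating $Q_y$ with $P_x$ gives a $y$-to-$x$ path avoiding $u$ in the same way, again a contradiction. Hence $u$ lies on at least one of the four paths, proving both assertions of the lemma.

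I do not expect a genuine obstacle here: the only point needing care is checking that the concatenated walk is legitimate, i.e.\ that both halves are paths in $G$ avoiding $u$ and that $u\neq x,y$ so the endpoints are unaffected, after which simple-path extraction from a walk is routine. (One could also record the sharper fact that falls out of the same argument — if all $x$-to-$y$ paths go through $u$ then $u\in D[s,y]\cup D^R[s,x]$, and symmetrically in the other direction — but the weaker four-path statement is all that is needed downstream.)
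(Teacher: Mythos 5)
The paper does not prove this lemma here; it is cited as-is from~\cite{2C:GIP:arXiv}, so there is no in-paper proof to compare against. Your contrapositive argument is correct and complete. The only facts used are the standard dictionary $u\in D[s,w]\iff u$ dominates $w$ in $G_s$ (and its reverse-graph analogue), which gives the four $u$-avoiding paths $P_x,P_y,Q_x,Q_y$; concatenation followed by simple-path extraction then contradicts the separation hypothesis, and the $u=s$ degenerate case is dispatched because $s$ lies on all four tree paths while $u\notin\{x,y\}$ by the definition of a separating vertex. Your observation that the argument never uses the hypothesis that $u$ is a nontrivial dominator or a strong articulation point is also accurate: what you prove is that \emph{any} vertex $u\notin\{x,y\}$ separating $x$ and $y$ must lie on one of the four paths, which simultaneously yields both assertions of the lemma (and, as you note, a slightly sharper version that pins down which of $D[s,y]\cup D^R[s,x]$ versus $D[s,x]\cup D^R[s,y]$ must contain $u$ depending on which direction fails).
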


\begin{lemma} [\cite{2C:GIP:arXiv}]
\label{SAP-relation}
Let $x$ and $y$ be vertices that are either siblings or $x$ is the parent of $y$ in $D$, and let $w=d(x)$.
A strong articulation point $u$ that is not a descendant of $w$ in $D$ is a separating vertex for $x$ and $y$ only if $w$ is a separating vertex for $x$ and $y$.
\end{lemma}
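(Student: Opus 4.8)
The plan is to prove the stated implication directly: assuming the strong articulation point $u$, which is not a descendant of $w$ in $D$, is a separating vertex for $x$ and $y$, I will deduce that $w$ is one too. Since $u$ separates $x$ and $y$, without loss of generality every path from $x$ to $y$ in $G$ contains $u$ (the case in which the $y$-to-$x$ direction is blocked is symmetric, because in both admissible configurations of $x$ and $y$ the vertex $w = d(x)$ dominates both $x$ and $y$). The whole argument then rests on one observation: every path from $u$ to $y$ must pass through $w$. Granting this, take an arbitrary path $P$ from $x$ to $y$; it contains $u$, so the suffix of $P$ from the last occurrence of $u$ to $y$ is a $u$-to-$y$ path, which by the observation contains $w$; hence $P$ contains $w$. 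As $P$ was arbitrary, all $x$-to-$y$ paths contain $w$, so $w$ separates $x$ and $y$.

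To justify the observation I would first record the vertex analogue of Lemma~\ref{lemma:partition-paths}: if $w \neq s$ and a vertex $z$ is not a descendant of $w$ in $D$, then every path from $z$ to any proper descendant $y'$ of $w$ in $D$ contains $w$. This is immediate from the definition of $D$: since $z \notin D(w)$, the vertex $w$ does not dominate $z$, so some path from $s$ to $z$ avoids $w$; were there a $z$-to-$y'$ path avoiding $w$, concatenating the two would yield a walk from $s$ to $y'$ avoiding $w$, contradicting that $w$ dominates $y'$. The hypothesis that $u$ is not a descendant of $w$ rules out $w = s$ (otherwise $D(w) = V$ and no such $u$ exists), so $w \neq s$; and $w$ is in fact a nontrivial dominator because $x$ is a proper descendant of $w$. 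Applying the analogue with $z = u$ and $y' = y$ — legitimate because in the sibling case $y$ is a child of $w$, and in the parent-of case $y$ is a child of $x$ and hence a proper descendant of $w$ — yields exactly the observation; for the symmetric direction one applies it instead with $y' = x$, using that $x$ is a child of $w$.

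I do not expect a substantive obstacle. The only care needed is the bookkeeping confirming that $w = d(x)$ dominates both $x$ and $y$ in each of the two allowed configurations (so that both are proper descendants of $w$ and the roles of $x$ and $y$ may be swapped in the symmetric case), together with the trivial degenerate possibilities $u = w$ (conclusion immediate) and $w = s$ (vacuous). Conceptually, the lemma merely records that a walk which leaves the subtree $D(w)$ and later reaches a vertex dominated by $w$ must re-enter through $w$, so a separating vertex sitting outside $D(w)$ can only separate $x$ and $y$ ``via'' $w$.
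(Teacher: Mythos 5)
The paper states Lemma~\ref{SAP-relation} by citation to \cite{2C:GIP:arXiv} and gives no proof of its own, so there is nothing to compare against directly. Your argument is correct and is the expected one: the key observation is the standard dominator fact that any path from a vertex outside $D(w)$ to a proper descendant of $w$ must pass through $w$, and then splitting an $x$-to-$y$ (or $y$-to-$x$) path at the last occurrence of $u$ yields a $u$-to-$y$ (or $u$-to-$x$) suffix that must contain $w$. Both admissible configurations indeed make $x$ and $y$ proper descendants of $w$, and $w \neq s$ follows as you note.

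One tiny redundancy: the degenerate case $u = w$ cannot occur under the hypothesis, since $w \in D(w)$ while $u \notin D(w)$, so it needs no separate treatment. Otherwise the write-up is complete; if you want to tie it to the paper's machinery, you could phrase your ``vertex analogue'' as the direct consequence of the definition of dominators already used implicitly throughout (it is in fact a weaker and more elementary fact than Lemma~\ref{lemma:partition-paths}, which adds the strong-bridge structure), but no extra lemma is needed.
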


The following corollary summarizes Lemmas \ref{vertices-separators-ancestors} and \ref{SAP-relation}.

\begin{corollary}
\label{cor:2-vertex-connecte-pair}
Two vertices $x$ and $y$ are vertex-resilient if and only if there are some vertices $u,v$ such that $x,y \in C(u,v)$, and $x$ and $y$ are strongly connected in:
\vspace{-.25cm}
\begin{mylist}{(a)}
	\litem{(a)} $G\setminus u$ 
	\litem{(b)} (if $u\not=s$) in $G\setminus d(u)$ 
	\litem{(c)} $G\setminus v$ 
	\litem{(d)} (if $v\not=s$) in $G\setminus d^R(v)$ 
\end{mylist}
\end{corollary}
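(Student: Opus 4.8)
The key fact to exploit is the one recalled just above: $x$ and $y$ are vertex-resilient exactly when no strong articulation point separates them. The plan is to combine Corollary~\ref{corollary:vertex-resilient-necessary} with Lemmas~\ref{vertices-separators-ancestors} and~\ref{SAP-relation} in order to cut the a~priori unbounded set of potential separating vertices down to the four explicit candidates $u$, $d(u)$, $v$, $d^R(v)$.

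The ``only if'' direction is immediate. If $x$ and $y$ are vertex-resilient, Corollary~\ref{corollary:vertex-resilient-necessary} provides $u,v$ with $x,y\in C(u,v)$, and since by definition no vertex other than $x$ and $y$ separates them, every one of the conditions (a)--(d) holds (reading a condition whose deleted vertex equals $x$, equals $y$, or does not exist as vacuously true).

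For the ``if'' direction, I would start by unwinding the definition of $C(u,v)$ into its structural form: $x,y\in C(u,v)$ forces, in $D$, either $d(x)=d(y)=u$ (so $x$ and $y$ are siblings with common parent $u$), or one of $x,y$ to equal $u$ and to be the parent in $D$ of the other; and symmetrically in $D^R$ with $v$ in the role of $u$. Now suppose for contradiction that some strong articulation point $z\notin\{x,y\}$ separates $x$ and $y$. By Lemma~\ref{vertices-separators-ancestors}, $z$ lies on one of $D[s,x]$, $D[s,y]$, $D^R[s,x]$, $D^R[s,y]$; since $z$ also separates $x$ and $y$ in the reverse graph $G^R$ (whose dominator tree is $D^R$), the two $D^R$-cases are symmetric to the two $D$-cases, so I treat only $z\in D[s,x]\cup D[s,y]$. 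From the structural form, $D[s,x]$ is either $D[s,u]$ (when $x=u$) or $D[s,u]\cdot(u,x)$ (when $d(x)=u$), and likewise for $D[s,y]$; hence, dropping $x$ and $y$ themselves, every vertex of $D[s,x]\cup D[s,y]$ belongs to $\{u,d(u)\}$ or is a proper ancestor of $d(u)$ in $D$. If $z=u$, this contradicts (a) (and $z=u$ can occur only in the sibling case, where $u\notin\{x,y\}$, so (a) is meaningful). If $z=d(u)$, then in the parent-child case necessarily $u\in\{x,y\}$ and, moreover, $u\ne s$ (else $D[s,x]\cup D[s,y]$ would contain no vertex outside $\{x,y\}$ other than $s=u$), so $d(u)$ exists and $z=d(u)$ contradicts (b); in the sibling case $d(u)$ is not a descendant of $w:=d(x)=u$, so Lemma~\ref{SAP-relation} forces $u$ to separate $x$ and $y$, contradicting (a). Finally, if $z$ is a proper ancestor of $d(u)$, then $z$ is not a descendant in $D$ of $w$, where $w=u$ in the sibling case and $w=d(u)$ in the parent-child case, so by Lemma~\ref{SAP-relation} the vertex $w$ separates $x$ and $y$, contradicting (a) or (b). All cases end in a contradiction, so no strong articulation point separates $x$ and $y$, i.e.\ they are vertex-resilient.

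The two cited lemmas carry essentially all the mathematical content once the membership $x,y\in C(u,v)$ has been translated into the sibling/parent--child picture in $D$ and $D^R$. I expect the main obstacle to be nothing deep but rather the careful handling of the degenerate configurations — $u$ or $v$ equal to $s$, $x$, or $y$ — which is precisely what is needed to guarantee that, whenever the argument invokes one of (a)--(d), that condition is genuinely meaningful; in particular one has to track that $z=d(u)$ triggers (b) exactly when $u\in\{x,y\}$, and that in that case $u\neq s$ automatically.
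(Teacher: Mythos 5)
Your proof is correct and follows exactly the route the paper intends: the paper does not give an explicit argument for this corollary (it merely says it ``summarizes'' Lemmas~\ref{vertices-separators-ancestors} and~\ref{SAP-relation}), and your proposal supplies precisely the case analysis---reduce to the four candidate ancestors via Lemma~\ref{vertices-separators-ancestors}, then push any other ancestor down to $u$ or $d(u)$ via Lemma~\ref{SAP-relation}, using the structural meaning of $x,y\in C(u,v)$---that those two lemmas are meant to package. The only nitpick is that in the sibling subcase with $z=d(u)$ you route through Lemma~\ref{SAP-relation} unnecessarily: there $u\notin\{x,y\}$ and $u\neq s$ (since $d(u)$ exists), so condition~(b) already contradicts $z=d(u)$ directly; your detour is valid but longer than needed.
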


Note that, by Corollary \ref{cor:2-vertex-connecte-pair}, each vertex-resilient component is contained in a $C(u,v)$ set. 
Thus, the set $C(u,v)$ defines an initial set of ``coarse'' blocks that are supersets of the vertex-resilient components. 
To find the real vertex-resilient components, our algorithm will refine those coarse blocks with the help of the auxiliary components.
The sets  $C(u,v)$ can be represented by a block forest $\mathcal{F}$ of size $O(n)$ as shown in \cite{2C:GIP:arXiv}.
The block forest is a bipartite undirected acyclic graph that contains a node for each vertex $v\in V$ and a node for each vertex-resilient component of the graph; it contains an edge ${\alpha,\beta}$ if $\alpha$ is in the vertex-resilient component $\beta$.
In order to refine the initial block forest, we will use the following operation from \cite{2VCB}.
Let $\mathcal{B}$ be a set of blocks, let $\mathcal{S}$ be a partition of a set $U \subseteq V$, and let $x$ be a vertex not in $U$.

\begin{description}
\item[\emph{refine}$(\mathcal{B}, \mathcal{S}, x)$:] For each block $B \in \mathcal{B}$, substitute $B$ by the sets  $B \cap (S \cup \{ x \} )$ of size at least two, for all $S \in \mathcal{S}$.
\end{description}

As shown in \cite{2VCB}, this operation can be executed in time that is linear in the total number of elements in all sets of $\mathcal{B}$ and $\mathcal{S}$.
The algorithm needs to locate the blocks that contain a specific vertex, and, conversely, the vertices that are contained in a specific block. 
Note that $\mathcal{F}$ is bipartite, so the adjacency list of a vertex $v$ stores the blocks that contain $v$, and the adjacency list of a block node $B$ stores the vertices in $B$.
Initially $F$ contains one block for each $C(u,v)$ set. 
Those blocks are 
refined by means of $\mathit{refine}$ operations.
The executions of $\mathit{refine}$ operations update the block forest $F$, while maintaining the linear execution time.

Our algorithm, called \textsf{Inc2VCC}, computes after every edge insertion the vertex-resilient components of the digraph in $O(n)$ time, using Corollary \ref{cor:2-vertex-connecte-pair}.
The pseudocode is shown in Algorithm \ref{fig:Inc2VCC}.
We first prove the correctness of our algorithm. Next, we show that it runs in $O(n)$ time.

\begin{algorithm}[t!]
	\LinesNumbered
	\DontPrintSemicolon
	\KwIn{Strongly connected digraph $G=(V,E)$}
	\KwOut{The vertex-resilient components of $G$}
	
	\textbf{Dominator trees and initialize block forest:}\;
	Let $D$ and $D^R$ be the dominator tree of $G_s$ and $G^R_s$, respectively.\;
	Compute the sets $C(u,v)$.\label{line:child-parent}\;
	Initialize the block forest $F$ to contain one block for each set $C(u,v)$ with at least two vertices.\label{line:common-child-parent}\;
	
	\textbf{Forward direction:}\;
	\ForEach{$u \in V$ in a bottom-up order of $D$}
	{
		Find the set of blocks $\mathcal{B}$ that contain at least two vertices in $C(u)$.\;
		Compute the collection of vertex subsets $\mathcal{S} = \{S \cap C(u) : \textit{S is an SCC in } G\setminus u\}$.\label{line:gather-components}\;
		Execute $refine(\mathcal{B}, \mathcal{S}, u)$.\label{line:refine-1}\;
		\If{$u \not= s$}
		{
			\ForEach{$B \in \mathcal{B}$  such that $u \in B$ \label{line:query-1}}
			{
				Choose an arbitrary vertex $v \not=u$ in $B$.\;
				\If{$u$ and $v$ are strongly connected in $G\setminus d(u)$}{
					set $B = B \setminus u$ \label{line:remove-1}\;
					\lIf{$|B|=1$}{delete $B$ from $F$}
				}
			}
		}
	}
	\textbf{Reverse direction:}\;
	\ForEach{$u \in V$ in a bottom-up order of $D^R$}
	{
		Find the set of blocks $\mathcal{B}$ that contain at least two vertices in $C^R(u)$.\;
		Compute the collection of vertex subsets $\mathcal{S} = \{ S \cap C ^R(u) : S~ \textit{is an SCC in } G\setminus u\}$.\label{line:gather-components-2}\;
		Execute $refine(\mathcal{B}, \mathcal{S}, u)$.\label{line:refine-2}\;
		\If{$u \not= s$}
		{
			\ForEach{$B \in \mathcal{B}$  such that $u \in B$ \label{line:query-2}}
			{
				Choose an arbitrary vertex $v \not=u$ in $B$.\;
				\If{$u$ and $v$ are strongly connected in $G^R\setminus d^R(u)$}{
					set $B = B \setminus u$\label{line:remove-2}\;
					\lIf{$|B|=1$}{delete $B$ from $F$}
				}
			}
		}
	}
	\caption{\textsf{Inc2VCC}}
	\label{fig:Inc2VCC}
\end{algorithm}

\begin{lemma}
Algorithm \textsf{Inc2VCC} is correct.
\end{lemma}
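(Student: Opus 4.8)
The goal is to show that, when \textsf{Inc2VCC} halts, the blocks stored in $F$ are exactly the nontrivial vertex-resilient components of $G$; correctness follows at once. I would split the argument into a completeness direction (every vertex-resilient pair ends up together in some block) and a soundness direction (every pair that ends up together in a block is vertex-resilient), both reduced to Corollary~\ref{cor:2-vertex-connecte-pair}: a pair $x,y$ is vertex-resilient iff some set $C(u,v)$ contains both and $x,y$ are not separated by any of $u$, $d(u)$, $v$, $d^R(v)$ (the last three being relevant only when $u\ne s$, resp.\ $v\ne s$). The backbone of both directions is the following invariant, maintained throughout the run: $F$ starts with one block per set $C(u,v)$ with $|C(u,v)|\ge 2$, and every $\mathit{refine}$ call as well as every removal at lines~\ref{line:remove-1}/\ref{line:remove-2} only deletes elements from (i.e.\ splits) blocks, never adds an element to a block that did not contain it and never merges blocks. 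Hence at every moment each block is a subset of some initial $C(u,v)$, two vertices that share a block have always shared a common $C(u,v)$, the structure $F$ is never ``too coarse'', and the necessity part of Corollary~\ref{corollary:vertex-resilient-necessary} holds for free. Note that a vertex may belong to several $C(u,v)$ sets and hence to several blocks at once; this is essential and is consistent with vertex-resilient components not partitioning $V$.

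For completeness I would fix a vertex-resilient pair $x,y$, pick a witness $(u,v)$ as in Corollary~\ref{cor:2-vertex-connecte-pair}, and prove by induction over the performed operations that $x,y$ are never separated out of the block refining $C(u,v)$. Two facts carry this. First, before $u$ is processed in the forward pass the block refining $C(u,v)$ is untouched: an operation performed while processing a vertex $w$ affects only blocks carrying at least two vertices of $C(w)$, and the members of $C(u,v)$ other than $u$ are children of $u$ in $D$, so such a block cannot carry two children of a proper descendant of $u$. Second, when $u$ is processed the $\mathit{refine}$ by the SCCs of $G\setminus u$ keeps $x,y$ together whenever they lie in the same SCC of $G\setminus u$ (true by resilience, and vacuously so when $u\in\{x,y\}$, since the processed vertex is re-inserted into every surviving part), and the removal that follows deletes only the processed vertex: if it is not one of $x,y$ this is harmless, while if it is $x$ (say), then using that two children of $u$ strongly connected in $G\setminus u$ remain strongly connected in $G\setminus d(u)$, together with the ``$G\setminus d(u)$'' clause of the witness, one sees that the removal test leaves $x$ in the part containing $y$ (and if $u=s$ no removal happens at all). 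The symmetric statements for $D^R$ and line~\ref{line:remove-2} handle the reverse pass; the degenerate coincidences ($x$ or $y$ equal to $u$, $v$, $d(u)$ or $d^R(v)$, or the two pivots not being independent) are dispatched individually.

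For soundness I would start from a pair $x,y$ that survives together in a final block $B$, note that $B$ refines some $C(u,v)$ by the invariant, and show that the operations actually executed certify all four conditions of Corollary~\ref{cor:2-vertex-connecte-pair} for this $(u,v)$. Since \textsf{Inc2VCC} sweeps the levels of $D$ bottom-up and then those of $D^R$ bottom-up, the $\mathit{refine}$ performed while processing $u$ (resp.\ $v$) retains $x,y$ in a common block only if they are strongly connected in $G\setminus u$ (resp.\ $G\setminus v$), whenever that condition is not vacuous (i.e.\ $u\notin\{x,y\}$), which is exactly condition~(a) (resp.\ (c)); and the removal tests at lines~\ref{line:remove-1}/\ref{line:remove-2}, combined with Lemmas~\ref{vertices-separators-ancestors} and~\ref{SAP-relation} applied to $x,y$ viewed as sibling/parent vertices inside $C(u)$ (resp.\ $C^R(v)$), certify strong connectivity of $x,y$ in $G\setminus d(u)$ (resp.\ $G\setminus d^R(v)$), i.e.\ conditions~(b) and~(d). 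The bottom-up order guarantees that by the time a block carrying two children of $u$ is examined all refinements at strictly deeper levels are already in place, so no separating vertex below $u$ in $D$ (resp.\ $D^R$) can have been overlooked; Corollary~\ref{cor:2-vertex-connecte-pair} then gives that $x,y$ are vertex-resilient.

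The step I expect to be the main obstacle is matching the removal tests of lines~\ref{line:remove-1} and~\ref{line:remove-2} to conditions~(b) and~(d). One must prove a structural lemma stating that, among the children of $u$ in $D$ that are pairwise strongly connected in $G\setminus u$, being strongly connected with $u$ itself in $G\setminus d(u)$ is an all-or-nothing property, so that testing a single, arbitrarily chosen representative of a block is legitimate; and one must simultaneously dispose of every degenerate coincidence among $x$, $y$, $u$, $v$, $d(u)$, $d^R(v)$ and $s$, verifying in each case both that a vertex-resilient pair is never split and that a non-vertex-resilient pair is never kept. This case analysis, leaning on Lemma~\ref{SAP-relation} and on elementary dominator-tree properties, is where essentially all the technical effort lies.
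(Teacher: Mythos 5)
Your proposal follows essentially the same route as the paper: you reduce correctness to the iff of Corollary~\ref{cor:2-vertex-connecte-pair}, argue a completeness direction (a vertex-resilient pair is never separated at the initialization, at the \textit{refine}, or at the removal lines) and a soundness direction (two vertices surviving together must satisfy conditions (a)--(d)), and you correctly identify that the crux is justifying the single-representative test at lines~\ref{line:remove-1}/\ref{line:remove-2} via Lemma~\ref{SAP-relation} --- which shows that, within a block whose members are already pairwise strongly connected in $G\setminus u$, strong connectivity to $u$ in $G\setminus d(u)$ is indeed an all-or-nothing property by transitivity. The "structural lemma" you anticipate needing is exactly this use of Lemma~\ref{SAP-relation}, which the paper makes explicitly; your extra emphasis on the invariant that blocks only split (never merge) and on the bottom-up order is implicit in the paper but not load-bearing for correctness (Corollary~\ref{cor:2-vertex-connecte-pair} already restricts the candidate separators to $u,d(u),v,d^R(v)$, so deeper levels cannot interfere with blocks containing children of $u$). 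One small tightening relative to your sketch: the paper dispatches the "degenerate coincidences'' with just two cases --- $x,y$ siblings in $D$, or $x=d(y)=u$ --- rather than a larger enumeration.
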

\begin{proof}
We will prove that two vertices $x$ and $y$ are in the same block of the maintained block forest $\mathcal{F}$ after the end of the algorithm if and only if they satisfy Corollary \ref{cor:2-vertex-connecte-pair}.
Since two vertices are vertex-resilient if and only if they satisfy Corollary \ref{cor:2-vertex-connecte-pair}, it follows that the blocks in $\mathcal{F}$ are exactly the vertex-resilient components of the graph.

We first prove one direction of the claim: assume that $x$ and $y$ satisfy Corollary \ref{cor:2-vertex-connecte-pair}.
The vertices $x$ and $y$ will not be separated in Line \ref{line:common-child-parent} since there exists a pair of vertices $u,v$ such that $x,y\in C(u,v)\cup u$ by Corollary \ref{cor:2-vertex-connecte-pair}.
Now we first assume that $x$ and $y$ are siblings in $D$.
The vertices $x,y$ will not be separated in Line \ref{line:refine-1} since they are strongly connected in $G\setminus u$ by Corollary \ref{cor:2-vertex-connecte-pair} and $x,y\in C(u)\cup u$.
Moreover, the removal carried out in Line \ref{line:remove-1} cannot remove any of $x$ or $y$ from their block.
Now assume, without loss of generality, that $x=d(y)=u$.
The vertices $x,y$ will not be separated in Line \ref{line:refine-1} since $u,y\in C(u)$ and $u$ will be included in the same block $C$ as $x$ by the definition of the $\mathit{refine}$ operation.
Notice that all vertices in $C$ remain strongly connected in $G\setminus d(u)$ by Lemma \ref{SAP-relation}.
Therefore, $u$ is strongly connected in $G\setminus d(u)$ with all vertices in $C$, since $u$ and $x$ are strongly connected by Corollary \ref{cor:2-vertex-connecte-pair} (recall that we assumed $x$ and $y$ satisfy Corollary \ref{cor:2-vertex-connecte-pair}).
The same arguments can be used to show that $x$ and $y$ are not separated in Lines \ref{line:refine-2} and \ref{line:remove-2}.
Hence, if two vertices satisfy Corollary \ref{cor:2-vertex-connecte-pair}, they are contained in the same block of the block forest after the end of the algorithm.

Now we prove the opposite direction of the claim. Namely, we  
assume that at the end of the algorithm $x$ and $y$ are in the same block of the block forest and we wish to prove that $x$ and $y$ satisfy Corollary \ref{cor:2-vertex-connecte-pair}.
Since $x$ and $y$ are not separated in Line \ref{line:common-child-parent} there exists a pair of vertices $u,v$ such that $x,y\in C(u,v)$.
Now assume, without loss of generality, that $x$ and $y$ are siblings in $D$ (both children of some vertex $u$).
Then they are strongly connected in $G\setminus d(x)$ since they are not separated in Line \ref{line:refine-1}.
Now assume, without loss of generality, that $x=d(y)=u$.
Recall that all vertices in the same block $C$ as $y$ in the block forest, right after the execution of Line \ref{line:refine-1}, are strongly connected in $G \setminus x$.
By Lemma \ref{SAP-relation}, they are also strongly connected in $G \setminus d(x)$.
Since $x$ remains in the same block as $y$ in Line \ref{line:remove-1}, then $x$ is strongly connected with all vertices in the component of $y$ in $G\setminus d(x)$.
Therefore, $x$ and $y$ satisfy conditions (a) and (b) in Corollary \ref{cor:2-vertex-connecte-pair}.
The same arguments can be use to show that $x$ and $y$ satisfy (c) and (d) from Corollary \ref{cor:2-vertex-connecte-pair}, and 
the lemma follows.
\proofend
\end{proof}

In order to collect efficiently the sets of vertices in $C(u)$ (resp., $C^R(u)$), for some vertex $u$, that are strongly connected in $G\setminus u$, as required in Line \ref{line:gather-components} (resp., Line \ref{line:gather-components-2}) of Algorithm \textsf{Inc2VCC}, we use an additional data structure.
Recall the construction of the graph $\overline{G}=(\overline{V},\overline{E})$ from Section \ref{sec:extension-to-vertices}. 
For each strong articulation point $x$ of $G$, we add an auxiliary vertex $\overline{x} \in \overline{V}$ and add the auxiliary edges $(\overline{x},x)$ and $(x,\overline{x})$. 
Moreover, we replace each edge $(u, x)$ entering $x$ in $G$ with an edge $(u,\overline{x})$ entering $\overline{x}$ in $\overline{G}$.
We maintain incrementally the flow graph $\overline{G}_s$, with start vertex $s$, its dominator tree $\overline{D}$, its bridge decomposition $\overline{\mathcal{D}}$, and the auxiliary components of $\overline{G}_s$.
This will allow us to execute efficiently Lines \ref{line:gather-components} and \ref{line:gather-components-2} of Algorithm \textsf{Inc2VCC}, as suggested by the following lemma.

\begin{lemma}
	\label{lem:vertex-resilience-condition}
	Let $x$ and $y$ be two vertices that are siblings in $D$, and let $w=d(x)=d(y)$.
	Then, $x$ and $y$ are strongly connected in $G\setminus d(x)$ if and only if 
	$r_{\overline{x}} = r_{\overline{y}}$ and $c_{\overline{x}} = c_{\overline{y}}$, where $r_{\overline{x}}$ and $r_{\overline{y}}$ are the roots of the bridge decomposition of $\overline{G}_s$ containing $x$ and $y$, respectively, and $c_{\overline{x}}$ and $c_{\overline{y}}$ are the auxiliary components of $\overline{G}$ containing $x$ and $y$, respectively.
\end{lemma}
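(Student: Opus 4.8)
The plan is to translate the statement about strong connectivity in $G\setminus w$ (with $w=d(x)=d(y)$) into one about the static structure of $\overline{G}_s$, via the vertex-to-edge reduction of Lemma~\ref{lemma:vertex-to-edge-reduction} together with the definitions of the bridge decomposition and of the auxiliary components. For a vertex $v$ of $G$ I write $\widehat{v}$ for the vertex of $\overline{G}$ that represents $v$: namely $\widehat{v}=\overline{v}$ if $v$ is a strong articulation point of $G$, and $\widehat{v}=v$ otherwise. Thus $\widehat{v}$ is exactly the vertex that collects, in $\overline{G}$, every edge that entered $v$ in $G$, and $r_{\overline{x}}$, $c_{\overline{x}}$ in the statement denote the root of the bridge-decomposition tree, resp.\ the auxiliary component, of $\overline{G}_s$ that contains $\widehat{x}$ (and similarly for $y$). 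Since $x$ and $y$ are two distinct children of $w$ in $D$, the vertex $w$ is a nontrivial dominator of $G_s$, so $w$ is either the start vertex $s$ or a strong articulation point of $G$; I would first treat the latter (generic) case and then handle $w=s$ separately.

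Assume $w$ is a strong articulation point. Then $(\overline{w},w)$ is the unique incoming edge of $w$ in $\overline{G}$, hence a bridge of $\overline{G}_s$, so $w$ is the root of its tree $\overline{D}_w$ in $\overline{\mathcal{D}}$. Using Lemma~\ref{lemma:vertex-to-edge-reduction}(ii) together with the fact that no vertex of $D$ lies strictly between $w$ and $x$, one checks that in $\overline{D}$ the parent of $\widehat{x}$ is $w$ (and likewise for $\widehat{y}$); consequently $r_{\overline{x}}\in\{w,\widehat{x}\}$, and $r_{\overline{x}}=\widehat{x}$ exactly when the single edge $(w,\widehat{x})$ is itself a bridge of $\overline{G}_s$. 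In that degenerate situation I would use Lemma~\ref{lemma:partition-paths} to show that $x$ cannot be reached in $G\setminus w$ from any vertex outside $D(x)$, so in particular $x$ is not strongly connected with $y$ in $G\setminus w$; since $r_{\overline{x}}=\widehat{x}\neq r_{\overline{y}}$ as well, both sides of the claimed equivalence are false (and symmetrically if $r_{\overline{y}}=\widehat{y}$). Hence we may assume $r_{\overline{x}}=r_{\overline{y}}=w$, i.e.\ $\widehat{x}$ and $\widehat{y}$ both lie in $\overline{D}_w$.

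Under this assumption, Lemma~\ref{lemma:vertex-to-edge-reduction}(iii) gives that $x$ and $y$ are strongly connected in $G\setminus w$ iff they are so in $\overline{G}\setminus(\overline{w},w)$, which — since $x$ is joined to $\overline{x}$, and $y$ to $\overline{y}$, by the auxiliary two-cycles, none of whose edges is $(\overline{w},w)$ — is iff $\widehat{x}$ and $\widehat{y}$ are strongly connected in $\overline{G}\setminus(\overline{w},w)$. Now the only edge of $\overline{G}$ entering $\overline{D}(w)$ from outside is $(\overline{w},w)$, because by the dominator property every such edge has head $w$, whose unique incoming edge is $(\overline{w},w)$; after deleting $(\overline{w},w)$ no walk can enter $\overline{D}(w)$, so on $\overline{D}(w)\setminus\{w\}$ strong connectivity in $\overline{G}\setminus(\overline{w},w)$ coincides with strong connectivity in the induced subgraph $\overline{G}[\overline{D}(w)]$. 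Finally, $(\overline{w},w)$ is the bridge whose head is the bridge-decomposition root $w$, so its dominated components are exactly the SCCs of $\overline{G}[\overline{D}(w)]$, and the auxiliary components of $\overline{G}_s$ inside $\overline{D}_w$ are precisely their intersections with $\overline{D}_w$; since $\widehat{x},\widehat{y}\in\overline{D}_w$, they lie in the same SCC of $\overline{G}[\overline{D}(w)]$ iff $c_{\overline{x}}=c_{\overline{y}}$. Chaining these equivalences with $r_{\overline{x}}=r_{\overline{y}}=w$ proves the statement in this case.

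The main obstacle is the case $w=s$, together with the degenerate sub-case above. When $s$ is a strong articulation point the "no re-entry" argument gives nothing since $\overline{D}(s)=\overline{V}$, so instead I would apply Lemma~\ref{lemma:vertex-to-edge-reduction}(iii) directly — $x,y$ strongly connected in $G\setminus s$ iff $\widehat{x},\widehat{y}$ strongly connected in $\overline{G}\setminus(\overline{s},s)$, iff strongly connected in $\overline{G}\setminus s$ (once $(\overline{s},s)$ is removed, $s$ has no incoming edge and is a singleton component) — and combine it with the facts that $\widehat{x},\widehat{y}$ are children of $s$ in $\overline{D}$, that the root tree $\overline{D}_s$ of the bridge decomposition forms a single auxiliary component (because $\overline{G}[\overline{D}(s)]=\overline{G}$ is strongly connected), and the same degenerate analysis as before (if $(s,\widehat{x})$ is a bridge then $x$ is a source of $G\setminus s$ and both sides fail). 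When $s$ is not a strong articulation point, $G\setminus s$ is strongly connected, so both sides hold unconditionally for every pair of children of $s$, and the bookkeeping is immediate. The only genuinely technical point throughout is locating $\widehat{x}$ in $\overline{D}$ and verifying that "$(w,\widehat{x})$ is a bridge of $\overline{G}_s$" is exactly the situation in which $x$ is unreachable in $G\setminus w$ from outside $D(x)$.
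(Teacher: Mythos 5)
Your plan follows essentially the same route as the paper's proof: pass to $\overline{G}$ via the vertex-to-edge reduction of Lemma~\ref{lemma:vertex-to-edge-reduction}, locate the entry vertex $\widehat{x}$ of $x$ as a child of $w$ in $\overline{D}$, split on whether the bridge $(w,\widehat{x})$ (resp.\ $(w,\widehat{y})$) exists, and in the generic case read off strong connectivity in $G\setminus w$ from whether $\widehat{x},\widehat{y}$ lie in the same SCC of $\overline{G}[\overline{D}(w)]$, i.e.\ from $c_{\overline{x}}=c_{\overline{y}}$. Introducing the $\widehat{v}$ notation is a nice way to make explicit the case where $x$ or $y$ is not a strong articulation point (the paper silently assumes $\overline{x},\overline{y}$ exist), and the degenerate case ($r_{\overline{x}}=\widehat{x}$) is handled correctly via Lemma~\ref{lemma:partition-paths}.

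The gap is in the case $w=s$. You write that ``the root tree $\overline{D}_s$ of the bridge decomposition forms a single auxiliary component (because $\overline{G}[\overline{D}(s)]=\overline{G}$ is strongly connected).'' This presupposes that $s$ is the root of $\overline{D}$, i.e.\ that the flow graph is rooted at the ordinary vertex $s$. But first, even granting the premise, the argument does not close: if all of $\overline{D}_s$ were one auxiliary component, then $r_{\overline{x}}=r_{\overline{y}}$ and $c_{\overline{x}}=c_{\overline{y}}$ would hold for \emph{every} pair of siblings of $s$ whose entry edge is not a bridge, so the right-hand side of the lemma would be vacuously true, while $x$ and $y$ can easily fail to be strongly connected in $G\setminus s$ (take two children of $s$ with a common in-neighbour each, so neither $(s,\widehat{x})$ nor $(s,\widehat{y})$ is a bridge, yet $G\setminus s$ is a DAG). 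Second, the construction in Section~\ref{sec:extension-to-vertices} roots $\overline{G}$ at $\overline{s}$, not $s$ (the ``start vertex $s$'' in the running text of Section~\ref{sec:2-vertex-connected-components} is a slip); with root $\overline{s}$ we have $\overline{D}(s)=\overline{V}\setminus\{\overline{s}\}$, so $\overline{G}[\overline{D}(s)]=\overline{G}\setminus\overline{s}$ is \emph{not} strongly connected, and $\overline{D}_s$ is emphatically not a single auxiliary component. With the correct root, the $w=s$ case is not an exception at all: $(\overline{s},s)$ is a genuine bridge of $\overline{G}_{\overline{s}}$, $s$ is the root of its bridge-decomposition tree, and the generic ``no-re-entry into $\overline{D}(s)$'' argument applies verbatim, exactly as the paper (implicitly) does by taking $e'=(\overline{d}(\overline{d}(\widehat{x})),\overline{d}(\widehat{x}))=(\overline{s},s)$. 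So the fix is simply to drop the special-casing of $w=s$ and run your generic argument with the root $\overline{s}$.
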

\begin{proof}
	We begin with the forward direction of the claim: namely, we show that if $x$ and $y$ are strongly connected in $G\setminus d(x)$ then $r_{\overline{x}} = r_{\overline{y}}$ and $c_{\overline{x}} = c_{\overline{y}}$.
	Notice, that since $\overline{x}$ and $\overline{y}$ are auxiliary vertices, 
	$\overline{d}(\overline{x})$ (resp., $\overline{d}(\overline{y})$) is an ordinary vertex and therefore $e'=(\overline{d}(\overline{d}(\overline{x})),\overline{d}(\overline{x}))$ (resp., $e'=(\overline{d}(\overline{d}(\overline{y})),\overline{d}(\overline{y}))$) is the only 
	edge entering $\overline{d}(\overline{x})$  (resp., $\overline{d}(\overline{y})$) and moreover it is a bridge in $G$.
	Now assume that $r_{\overline{x}} \not= r_{\overline{y}}$.
	Then 
	either $(\overline{d}(\overline{x}),\overline{x})$ 
	or $(\overline{d}(\overline{y}),\overline{y})$ is a strong bridge (or both).
	Without loss of generality, assume that $(\overline{d}(\overline{x}),\overline{x})$ is a strong bridge and $y$ is not a descendant of $\overline{x}$ in $D$.
	By definition of strong bridges, all paths from a vertex $v \notin \overline{D}(\overline{x})$ (which includes $y$) to $\overline{x}$ contain the strong bridge $(\overline{d}(\overline{x}),\overline{x})$ and therefore the vertex $\overline{d}(\overline{x})$.
	By Lemma \ref{lemma:vertex-to-edge-reduction}(ii),
	$\overline{d}(\overline{x}) = d(x)$.
	The fact that all paths from $y$ to $\overline{x}$ in $\overline{G}$ contain the vertex $\overline{d}(\overline{x})$ contradicts our initial assumption that $x$ and $y$ are strongly connected in $G\setminus d(x)$, as indicated by Lemma  \ref{lemma:vertex-to-edge-reduction}(ii).
	Hence, if $x$ and $y$ are strongly connected in $G\setminus d(x)$ then $r_{\overline{x}} = r_{\overline{y}}$.
	To complete the proof of this case, assume by contradiction that $x$ and $y$ are strongly connected in $G\setminus d(x)$ but $c_{\overline{x}} \not= c_{\overline{y}}$.
	Since neither $(\overline{d}(\overline{x}),\overline{x})$ nor $(\overline{d}(\overline{y}),\overline{y})$ are strong bridges and $e'=(\overline{d}(\overline{d}(\overline{x})),\overline{d}(\overline{x}))$ is a strong bridge in $G$, it follows that $r_{\overline{x}} = r_{\overline{x}} = \overline{d}(\overline{x})$.
	The fact that $c_{\overline{x}} \not= c_{\overline{y}}$ implies that $\overline{x}$ and $\overline{y}$ are not strongly connected in $\overline{G}[\overline{D}(\overline{d}(\overline{x}))]$, and therefore, they are not strongly connected in $\overline{G}\setminus e'$.
	By Lemma  \ref{lemma:vertex-to-edge-reduction}(iii), all vertices of a strongly connected component in $G\setminus d(x)$ are ordinary vertices of a strongly connected component in $\overline{G}\setminus e'$.
	This contradicts our assumption that $x$ and $y$ are strongly connected in $G\setminus d(x)$.
	Thus, if $x$ and $y$ are strongly connected in $G\setminus d(x)$ then it must be $r_{\overline{x}} = r_{\overline{y}}$ and $c_{\overline{x}} \not= c_{\overline{y}}$.
	
	Now we prove the other direction of the claim: namely, if $r_{\overline{x}} = r_{\overline{y}}$ and $c_{\overline{x}} \not= c_{\overline{y}}$ then $x$ and $y$ are strongly connected in $G\setminus d(x)$.
	Since $r_{\overline{x}} = r_{\overline{y}}$, it follows that neither $(\overline{d}(\overline{x}),\overline{x})$ nor $(\overline{d}(\overline{y}),\overline{y})$ is a strong bridges
	Moreover, by the fact that $c_{\overline{x}} \not= c_{\overline{y}}$ if follows that $\overline{x}$ and $\overline{y}$ are strongly connected in $\overline{G}\setminus e'$.
	By Lemma \ref{lemma:vertex-to-edge-reduction}, $x$ and $y$ must be strongly connected.
\proofend
\end{proof}

We are now ready to analyze the total running time of Algorithm \textsf{Inc2VCC}.

\begin{lemma}
	Algorithm \textsf{Inc2VCC} runs in $O(n)$ time.
\end{lemma}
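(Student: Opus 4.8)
The plan is to analyze Algorithm \textsf{Inc2VCC} line by line and charge the cost of each step to a total that is $O(n)$, relying on the fact that the various trees and auxiliary structures ($D$, $D^R$, the bridge decompositions, the auxiliary components of $\overline{G}_s$, the sets $C(u,v)$, and the block forest $\mathcal{F}$) are all of size $O(n)$ and can be computed or looked up in $O(n)$ total time, as cited earlier in the excerpt. First I would observe that computing the dominator trees $D$ and $D^R$ (Line 2) and the sets $C(u,v)$ (Line~\ref{line:child-parent}) takes $O(n)$ time by the references to \cite{LuigiGILP15} and the linear-time dominator-tree algorithms, and that initializing $\mathcal{F}$ (Line~\ref{line:common-child-parent}) with one block per $C(u,v)$ set of size $\ge 2$ costs $O(\sum_{u,v}|C(u,v)|) = O(n)$, since the sets $C(u,v)$ partition a subset of $V$ augmented by at most one extra copy of each $u$ and each $v$, hence have total size $O(n)$.

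Next I would bound the cost of the forward pass (the reverse pass being symmetric). The bottom-up traversal of $D$ visits each vertex once. For a fixed $u$, collecting the blocks $\mathcal{B}$ that contain at least two vertices of $C(u)$ is done by scanning the children of $u$ and their incident block-adjacency lists in $\mathcal{F}$; over all $u$ this is $O(n)$ because each vertex belongs to $O(1)$ blocks at any time (the block forest stays laminar-like in size) and the children lists partition $V$. Computing $\mathcal{S}$ in Line~\ref{line:gather-components} is the place where Lemma~\ref{lem:vertex-resilience-condition} is used: rather than recomputing SCCs of $G\setminus u$ from scratch, the vertices of $C(u)$ are grouped by their pair $(r_{\overline{x}}, c_{\overline{x}})$ in the maintained bridge decomposition and auxiliary-component structure of $\overline{G}_s$, which are available in $O(1)$ per lookup; this costs $O(|C(u)|)$ for each $u$, hence $O(n)$ overall. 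The $\mathit{refine}(\mathcal{B},\mathcal{S},u)$ call in Line~\ref{line:refine-1} runs, by \cite{2VCB}, in time linear in the total number of elements of the sets in $\mathcal{B}$ and $\mathcal{S}$; since both are drawn from $C(u)\cup\{u\}$ and each $\mathit{refine}$ can only split blocks (never create new elements), the amortized cost summed over all $u$ is $O(n)$.

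Then I would handle the conditional block in Lines \ref{line:query-1}--\ref{line:remove-1}. For each block $B$ containing $u$ after the refine, we pick one representative $v\ne u$ and test whether $u$ and $v$ are strongly connected in $G\setminus d(u)$. This single strong-connectivity-under-vertex-failure query is answered in $O(1)$ time using the incremental data structure of Theorem~\ref{thrm:main-result} (equivalently, via the reduction of Lemma~\ref{lem:reduction-from-vertices-to-edges} to an edge-failure query on $\overline{G}$). Since each vertex $u$ appears in only $O(1)$ blocks, the number of such queries over the whole forward pass is $O(n)$, and the possible removal of $u$ from $B$ (and deletion of $B$ if it becomes a singleton) is $O(1)$ bookkeeping each. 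The reverse pass over $D^R$, using $C^R(u)$, $d^R(u)$, and strong connectivity queries in $G^R\setminus d^R(u)$ (via $\overline{G}^R$), is identical in structure and cost. Summing all contributions gives $O(n)$ total time.

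The main obstacle is justifying that Line~\ref{line:gather-components} --- gathering, for each $u$, the partition of $C(u)$ induced by the SCCs of $G\setminus u$ --- can truly be done in $O(|C(u)|)$ time rather than by an expensive recomputation; this is exactly what Lemma~\ref{lem:vertex-resilience-condition} delivers, by reducing "strongly connected in $G\setminus d(x)$" to the easily-read-off condition "$r_{\overline{x}}=r_{\overline{y}}$ and $c_{\overline{x}}=c_{\overline{y}}$" in the maintained structures of $\overline{G}_s$, together with the fact that these structures (dominator tree, bridge decomposition, auxiliary components of $\overline{G}_s$) are maintained incrementally within the overall $O(mn)$ budget by \cite{GIN16:ICALP}. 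A secondary point requiring care is the amortization of the $\mathit{refine}$ operations and block-adjacency scans, which hinges on the invariant that the block forest always has size $O(n)$ and that each vertex lies in $O(1)$ blocks throughout; this follows because blocks only ever shrink or split during a single run of the algorithm and start from the $O(n)$-size family $\{C(u,v)\}$.
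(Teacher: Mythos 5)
Your decomposition of the algorithm and the charging of Lines 2--4 to $O(n)$ are fine, as is your use of Lemma~\ref{lem:vertex-resilience-condition} to justify that Line~\ref{line:gather-components} costs $O(|C(u)|)$ by reading off $(r_{\overline{x}}, c_{\overline{x}})$ from the maintained structure on $\overline{G}_s$ --- this matches the paper's argument. However, the central accounting step has a genuine gap. You bound the cost of collecting $\mathcal{B}$, of the $\mathit{refine}$ calls, and of the queries in Lines~\ref{line:query-1}--\ref{line:remove-1} by asserting that ``each vertex belongs to $O(1)$ blocks at any time (the block forest stays laminar-like in size).'' This invariant is false: although each vertex starts in at most four initial blocks $C(u,v)$, $C(d(u),v)$, $C(u,d(v))$, $C(d(u),d(v))$, the $\mathit{refine}(\mathcal{B},\mathcal{S},u)$ operation replaces a block $B$ containing $u$ by several blocks $B \cap (S \cup \{u\})$, one per $S \in \mathcal{S}$, all of which contain $u$. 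Thus $u$ can afterwards lie in many blocks, and the ``$O(1)$ blocks per vertex'' claim does not hold. Moreover, $\mathcal{F}$ is a bipartite incidence graph between vertices and blocks, not a laminar family.

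The paper's argument replaces this with a different structural invariant: $\mathcal{F}$ is a \emph{forest}, and remains a forest after every $\mathit{refine}$. The initial family is a forest because $C(u,v) = \{C(u)\cup u\}\cap\{C^R(v)\cup v\}$ comes from intersecting two tree structures, and the proof shows explicitly that a $\mathit{refine}(\mathcal{B},\mathcal{S},u)$ either partitions a block node $b$ (turning one tree into a forest) or replaces it by several block nodes whose adjacency lists overlap only in the node for $u$ (keeping a tree a tree). Once $\mathcal{F}$ is known to be a forest, rooting each relevant subtree gives $\sum_{B\in\mathcal{B}}|B| \leq 2|C(u)|$, because each node has one parent and $|\mathcal{B}|\leq |C(u)|$. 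This is the counting fact that makes the total cost of block scanning, refining, and querying sum to $O(n)$. Without proving the forest invariant, your $O(n)$ bound does not follow; so you should replace the per-vertex-$O(1)$-blocks claim by the argument that $\mathcal{F}$ is and stays a forest, and derive the $2|C(u)|$ bound from that.
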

\begin{proof}
The computation of all non-empty sets $C(u,v)$ in Line \ref{line:common-child-parent} takes $O(n)$ time as it is shown in~\cite{LuigiGILP15}.
Each vertex $v$ is contained in at most $4$ different initial blocks, i.e., the blocks $C(u,v),C(d(u),v),C(u,d(v)),C(d(u),d(v))$.
This implies that the initial block forest contains $O(n)$ edges, and therefore its initialization in Line \ref{line:child-parent} takes $O(n)$ time.
To find  all subsets of vertices of $C(u)$ that are strongly connected in $G\setminus u$ in Line \ref{line:gather-components}, we use Lemma \ref{lem:vertex-resilience-condition}, which implies that these blocks are the auxiliary components of $\overline{G}_s$ containing vertices $c_{\overline{x}}$, where $x\in C(u)$.
Those auxiliary components are maintained by running an instance of the incremental algorithm in \cite{GIN16:ICALP} on $\overline{G}$ in a total of time $O(mn)$, and they can be easily collected in time $O(|C(u)|)$.
As a result, we spend a total of $O(mn)$ time to maintain the auxiliary components of $\overline{G}$, and $O(n)$ time for all queries throughout one execution of Algorithm \textsf{Inc2VCC}.

The initial block forest in Line \ref{line:common-child-parent} is a forest since it is defined with respect to the sets $C(u,v)=\{C(u) \cup u\}\cap \{C^R(v) \cup v\}$, where both sets $C(u)\cup u$ and $C^R(v)\cup v$ form trees.
We now show that this is preserved by a $refine(\mathcal{B}, \mathcal{S}, u)$ operation.
Consider what happens to a block $B \in \mathcal{B}$.
This block is represented by a node $b$ in $\mathcal{F}$.
$B$ is either partitioned into several disjoint blocks, in which case the tree containing $b$ becomes a forest.
Otherwise, $B$ is replaced by sets $B_1,\dots,B_l$ where all sets share $u$.
In this case, the new adjacency lists of the nodes $b_1,\dots, b_l$ representing the sets $B_1,\dots, B_l$ share only the node corresponding to $u$.
Therefore, the tree containing $b$ remains a tree.
Since $\mathcal{F}$ is a forest, the sum of the cardinalities of the blocks $\mathcal{B}$ from Line \ref{line:refine-1} (resp., Line \ref{line:refine-2}) is at most $2|C(u)|$  (resp., $2|C^R(u)|$). To see this, just root each subtree of $\mathcal{F}$ that contains some of the blocks in $\mathcal{B}$ and their vertices: each node has one parent and $|\mathcal{B}|\leq |C(u)|$.
Hence, the $refine$ operations in Lines \ref{line:refine-1} and \ref{line:refine-2} are executed in time $O(|C(u)|)$ and $O(|C^R(u)|)$, respectively.
Thus, all refine operations take overall $O(n)$ time.
Finally, notice that we can answer each query in Lines \ref{line:query-1} and \ref{line:query-2} in constant time by a type (iii) query from Section \ref{sec:extension-to-vertices}.
This gives the lemma.
\proofend
\end{proof}

\begin{lemma}
	\label{lem:vertex-resilient-running-time}
We can maintain the vertex-resilient components of a directed graph $G$ through any sequence of edge insertions in a total of $O(mn)$ time, where $m$ is the number of edges after all edge insertions, and linear space.
\end{lemma}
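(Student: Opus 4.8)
The plan is to obtain the lemma purely by aggregation: by the two preceding results, a single invocation of Algorithm \textsf{Inc2VCC} is correct and runs in $O(n)$ time, so it suffices to (i) show that every auxiliary structure on which \textsf{Inc2VCC} relies can be maintained incrementally in $O(mn)$ total time and linear space, and (ii) charge one $O(n)$-time invocation of \textsf{Inc2VCC} to each of the at most $m$ edge insertions. Adding these two contributions gives $O(mn)$.

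First I would enumerate the structures that \textsf{Inc2VCC} queries and recall their incremental bounds. The dominator trees $D$ and $D^R$ of $G_s$ and $G^R_s$ (needed for the sets $C(u,v)$ and for the bottom-up sweeps) are maintained in $O(mn)$ total time and $O(m+n)$ space by the algorithm of \cite{dyndom:2012}. For the strong-connectivity tests under vertex failures in Lines~\ref{line:query-1} and~\ref{line:query-2} — whether $u$ and $v$ are strongly connected in $G\setminus d(u)$, resp.\ $G^R\setminus d^R(u)$ — we maintain the structures behind Theorem~\ref{thrm:main-result}, namely the dominator trees $\overline{D},\overline{D}^R$ and the \newLNT{} trees $\overline{L},\overline{L}^R$ of the transformed graph $\overline{G}$; since $|\overline{V}|=2n$ and $|\overline{E}|=m+2n$, Theorem~\ref{theorem:hyperloop-total-time} applied to $\overline{G}$ together with the reduction of Section~\ref{sec:extension-to-vertices} yields $O((m+2n)\cdot 2n)=O(mn)$ total time, linear space, and $O(1)$ worst-case per test. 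For the partitions of $C(u)$ and $C^R(u)$ into classes strongly connected in $G\setminus u$ required in Lines~\ref{line:gather-components} and~\ref{line:gather-components-2}, we run the incremental algorithm of \cite{GIN16:ICALP} on $\overline{G}_s$ to maintain its bridge decomposition and auxiliary components; by Lemma~\ref{lem:vertex-resilience-condition}, two siblings $x,y$ of $D$ are strongly connected in $G\setminus d(x)$ iff $r_{\overline{x}}=r_{\overline{y}}$ and $c_{\overline{x}}=c_{\overline{y}}$, so for each $u$ the desired partition is read off in $O(|C(u)|)$ time; this maintenance again costs $O((m+2n)\cdot 2n)=O(mn)$ and linear space.

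Next I would assemble the per-insertion cost. After inserting an edge we update all of the above within the aggregate bounds just stated, recompute the sets $C(u,v)$ from the current $D$ and $D^R$ in $O(n)$ time via \cite{LuigiGILP15}, and run \textsf{Inc2VCC} once; by the preceding lemma this takes $O(n)$ time, since the block-forest initialization, all \textit{refine} operations, the partition extractions, and the $O(1)$ strong-connectivity queries along the two sweeps sum to $O(n)$. Summing over at most $m$ insertions gives $O(mn)$ for the repeated executions of \textsf{Inc2VCC}, and combining with the $O(mn)$ cost of the supporting structures gives the claimed total time. For space, every maintained object — $D$, $D^R$, $\overline{G}$, $\overline{D}$, $\overline{D}^R$, $\overline{L}$, $\overline{L}^R$, the bridge decomposition and auxiliary components of $\overline{G}_s$, and the block forest $\mathcal{F}$ (each vertex lying in at most four initial blocks, and \textit{refine} only splitting blocks) — has size $O(m+n)$, while each run of \textsf{Inc2VCC} needs only $O(n)$ working memory, so the overall space is linear.

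I do not expect a genuine obstacle, only bookkeeping at the boundaries. The first point to handle is that $G$ need not be strongly connected: exactly as in Section~\ref{sec:extension-to-general-graphs}, the supporting structures are kept per SCC and rebuilt by a top-level restart costing $O(m)$ whenever SCCs merge or a strong bridge is locally canceled; there are $O(n)$ such events, contributing $O(mn)$, which is absorbed. The second point is that $\overline{G}$ must itself be maintained incrementally: inserting $(x,y)$ into $G$ translates into inserting $(x,\overline{y})$ into $\overline{G}$, plus the one-time addition of at most two auxiliary vertices and their two incident edges each time a new strong articulation point appears, and by the last lemma of Section~\ref{sec:extension-to-general-graphs} at most $2(n-1)$ strong articulation points ever appear, so $\overline{G}$ undergoes $O(m+n)$ insertions overall — consistent with the $O(mn)$ bounds invoked above. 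Everything else is a direct restatement of results already established.
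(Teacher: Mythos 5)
The paper states this lemma without an explicit proof, treating it as an immediate consequence of the two preceding lemmas (correctness of \textsf{Inc2VCC} and its $O(n)$ per-invocation running time) together with the $O(mn)$ incremental maintenance bounds for the supporting structures on $G$ and $\overline{G}$; your proposal reconstructs exactly that aggregation and is correct. One small caveat, which the paper itself glosses over: when a vertex newly becomes a strong articulation point, the construction of $\overline{G}$ requires re-routing \emph{all} of its incoming edges to $\overline{x}$, not just adding two auxiliary edges, so the count of insertions into $\overline{G}$ is not $O(m+n)$ as you claim unless one pre-emptively attaches an auxiliary vertex $\overline{x}$ to every vertex (which is harmless for correctness and keeps the total within $O(mn)$); this detail does not affect the final bound but your stated count is slightly off.
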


\begin{theorem}
	We can maintain the $2$-vertex-connected components of a directed graph $G$ through any sequence of edge insertions in a total of $O(mn)$ time, where $m$ is the number of edges after all edge insertions, and linear space.
\end{theorem}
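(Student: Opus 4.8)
The plan is to assemble the theorem from three pieces that are already in place. By Lemma~\ref{cor:2vc-resilient}, two vertices $x,y$ are $2$-vertex-connected if and only if they are simultaneously vertex-resilient and $2$-edge-connected; moreover, as noted after that lemma, \cite{2VCB} shows that, given the vertex-resilient components and the $2$-edge-connected components, the $2$-vertex-connected components can be recovered in $O(n)$ time (essentially by a single \emph{refine}-style pass that intersects the two families). So it suffices to maintain both families incrementally within the claimed bounds and to invoke this $O(n)$-time extraction after each insertion.

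First I would run, in parallel, the incremental algorithm of \cite{GIN16:ICALP} that maintains the $2$-edge-connected components of $G$ in a total of $O(mn)$ time and linear space, and the algorithm of Lemma~\ref{lem:vertex-resilient-running-time} that maintains the vertex-resilient components in a total of $O(mn)$ time and linear space. Both operate on the same sequence of edge insertions and each uses only linear space, so running the two in tandem still costs $O(mn)$ total time and $O(m+n)$ space. Then, after each of the at most $m$ edge insertions, I would spend $O(n)$ additional time to combine the current vertex-resilient components and the current $2$-edge-connected components into the $2$-vertex-connected components, using the $O(n)$-time procedure of \cite{2VCB} justified by Lemma~\ref{cor:2vc-resilient}. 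Summed over all insertions this adds $O(mn)$ time and no asymptotic space overhead, giving the stated $O(mn)$ total time and linear space.

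There is no genuinely hard step left at this point: the technical core --- maintaining the vertex-resilient components incrementally, which required the \newLNT{} forest $L$ of Section~\ref{sec:canonical-loop-update}, the reduction from vertex failures to edge failures through the auxiliary graph $\overline{G}$ of Section~\ref{sec:extension-to-vertices}, and the refinement procedure of Algorithm~\textsf{Inc2VCC} --- was already carried out in Section~\ref{sec:2-vertex-connected-components}, and the $2$-edge-connectivity side is imported verbatim from \cite{GIN16:ICALP}.

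The one point I would state carefully is consistency of the two incremental data structures: both \cite{GIN16:ICALP} and the vertex-resilient component algorithm pick, for every SCC, the start vertex of its principal component after a merge, so the dominator trees, bridge decompositions, and auxiliary components they maintain are rooted compatibly. Hence the per-insertion $O(n)$-time merge always sees mutually compatible decompositions, and the correctness of the final output follows directly from Lemma~\ref{cor:2vc-resilient} together with the correctness of the two component-maintenance algorithms.
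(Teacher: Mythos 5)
Your proposal is correct and follows essentially the same route as the paper: combine Lemma~\ref{cor:2vc-resilient} with the $O(mn)$-time incremental maintenance of vertex-resilient components (Lemma~\ref{lem:vertex-resilient-running-time}) and of $2$-edge-connected components from \cite{GIN16:ICALP}, extracting the $2$-vertex-connected components in $O(n)$ time after each insertion via the \cite{2VCB} procedure. The only addition beyond the paper's terse statement is your explicit note about the two data structures using compatible start vertices across SCC merges, which is a reasonable point to flag even if the paper leaves it implicit.
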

\begin{proof}
	The theorem follows from Lemmas \ref{cor:2vc-resilient} and \ref{lem:vertex-resilient-running-time} and from the fact that we can maintain incrementally the $2$-edge-connected components of a directed graph in a total of $O(mn)$ time \cite{GIN16:ICALP}.
	\proofend
\end{proof}

\section{Conditional lower bounds}
In this section we present conditional lower bounds for both the partially dynamic and the fully dynamic setting.
In particular, one of our lower bounds implies that a polynomial improvement over our bounds would have interesting consequences, as such an improvement would disprove widely believed conjectures.
More specifically, we show that there is no algorithm that can maintain either incrementally of decrementally a data structure allowing queries of the form ``are $u$ and $v$ strongly connected in $G\setminus e$?", where $u,v\in V, e\in E$, and has total update time $ O ((m n)^{1-\epsilon}) $ (for some constant $ \epsilon > 0 $) and sub-polynomial query time unless the online matrix-vector multiplication (OMv) conjecture~\cite{HenzingerKNS15} is false.
Hence, under the OMv conjecture the total running time of our algorithm, for this particular query, is asymptotically optimal.

In the fully dynamic version we show that, unless the strong exponential time hypothesis (SETH) is false, there is no algorithm maintaining a graph and being able to answer any of the queries that we consider, within the same asymptotic query time, with amortized update time $O(m^{1-\epsilon})$.
Finally, we show that in the incremental/decremental model, there is no algorithm that can maintain a data structure answering any of the queries we consider in this paper, in the same asymptotic query time, with worst-case update time $O(m^{1-\epsilon})$, for any $\epsilon>0$, unless the SETH is false.
Therefore, in these two cases, recomputing the data structure from~\cite{2C:GIP:arXiv} from scratch after every update achieves the best possible asymptotic update time.

\subsection{$\Omega(mn)$ conditional lower bound for the total update time in the partially dynamic model} 
In this section show a conditional lower bound for the total update time of a partially dynamic algorithm that either incrementally or decrementally maintains a data structure that can answer the queries ``are $ u $ and $ v $ strongly connected in $ G\setminus e $?'', where $u,v\in V$, $e\in E$.
Here, we show that there is neither incremental nor decremental algorithm for maintaining a data structure for answering the aforementioned that has total update time $ O ((m n)^{1-\epsilon}) $ (for some constant $ \epsilon > 0 $) and sub-polynomial query time unless the \textsf{OMv} Conjecture~\cite{HenzingerKNS15} fails.
This bound matches the total update time of our algorithms.
For our reduction we use a modification of the construction that was used in \cite{decdom17} to prove conditional lower bounds for partially dynamic algorithm for updating the dominator tree.
In what follows, we prove the following statement.

\begin{theorem}
	For any constant $ \delta \in (0, 1/2] $ and any $ n $ and $ m = \Theta (n^{1 / (1-\delta)}) $, there is no algorithm for maintaining a data structure under edge deletions/insertions allowing queries of the form ``are $ u $ and $ v $ strongly connected in $G\setminus e$'', where $u,v\in V$, $e\in E$, that uses polynomial preprocessing time, total update time $ u (m, n) = (m n)^{1-\epsilon} $ and query time $ q (m) = m^{\delta-\epsilon} $ for some constant $ \epsilon > 0 $, unless the \textsf{OMv} conjecture fails.
\end{theorem}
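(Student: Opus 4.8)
The plan is to reduce from the \textsf{OMv} problem, or more precisely from its equivalent reformulation, the \textsf{OuMv} problem (online vector-Matrix-vector multiplication), following the general recipe of Henzinger et al.\ and the adaptation in~\cite{decdom17}. Recall that in \textsf{OuMv} we are given a Boolean $k \times k$ matrix $M$ to preprocess, and then online we receive pairs of Boolean vectors $(u_t, v_t)$, $t = 1, \dots, k$, and must output $u_t^{\mathsf{T}} M v_t$ before seeing the next pair; under the \textsf{OMv} conjecture no algorithm does this in total time $O(k^{3-\epsilon})$. First I would build a fixed ``gadget'' digraph encoding $M$: a bipartite-style layer with a left node $a_i$ for each row $i$ and a right node $b_j$ for each column $j$, and an edge $a_i \to b_j$ whenever $M_{ij} = 1$. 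Around this I would add a small constant-size control gadget containing the distinguished vertices $u, v$ (the endpoints of the strong-connectivity query) and a designated edge $e$, wired so that in the query graph $G \setminus e$, the vertices $u$ and $v$ are strongly connected if and only if there is some index $i$ with $(u_t)_i = 1$, $(v_t)_j = 1$ and $M_{ij} = 1$ --- i.e.\ if and only if $u_t^{\mathsf{T}} M v_t = 1$. The key structural trick, exactly as in the dominator-tree lower bounds, is that the only ``return path'' closing the cycle through $u$ and $v$ must pass through the edge $e$; deleting $e$ forces the path to route through the $M$-gadget using exactly the currently active rows and columns.

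Second, I would explain how each online round $t$ is simulated by a batch of edge updates. In the incremental version, before round $t$ the graph contains no edges from the ``selector'' source into the active $a_i$'s nor out of the active $b_j$'s; to process $(u_t, v_t)$ we insert the (at most $k$) edges activating the rows with $(u_t)_i = 1$ and the columns with $(v_t)_j = 1$, perform the single query ``are $u$ and $v$ strongly connected in $G \setminus e$?'', read off $u_t^{\mathsf{T}} M v_t$, and then --- crucially --- we must reset for round $t+1$. Since we cannot delete edges incrementally, the standard fix is the ``$k$ fresh copies'' construction: make $k$ disjoint activation layers (one per round), so that round $t$ uses its own copy of the $u_t/v_t$ wiring while the $M$-gadget and the $u,v,e$ control gadget are shared. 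This inflates the vertex count to $n = \Theta(k)$ (the shared $M$-gadget already has $\Theta(k)$ vertices, and the $k$ selector layers add $\Theta(k \cdot 1) = \Theta(k)$ more per-round vertices if each layer is $O(1)$ size, or we budget $O(k)$ per layer for $\Theta(k^2)$ total if needed) and the edge count to $m = \Theta(k^2)$ (the $M$-gadget has up to $k^2$ edges; the activation edges total $O(k^2)$ over all rounds). The decremental direction is symmetric: start with all activation edges present and delete the unwanted ones each round, again using $k$ private copies so deletions never need to be undone.

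Third, I would do the bookkeeping that matches the theorem's parameters. With $n = \Theta(k)$ and $m = \Theta(k^2)$ we have $m = \Theta(n^2)$, which is the case $\delta = 1/2$; to hit a general $\delta \in (0, 1/2]$ one pads the vertex set with $\Theta(n) = \Theta(k)$ useless isolated-ish vertices (attached by a path so strong connectivity of the useful part is unaffected) so that $m = \Theta(n^{1/(1-\delta)})$, i.e.\ $k = \Theta(m^{1-\delta})$. Now suppose a partially dynamic algorithm had total update time $u(m,n) = (mn)^{1-\epsilon}$ and query time $q(m) = m^{\delta - \epsilon}$. The total update cost over all $k$ rounds is $(mn)^{1-\epsilon} = (\Theta(n^{1/(1-\delta)}) \cdot n)^{1-\epsilon} = n^{(2-\delta)(1-\epsilon)/(1-\delta)}$, and the total query cost is $k \cdot q(m) = \Theta(m^{1-\delta}) \cdot m^{\delta-\epsilon} = \Theta(m^{1-\epsilon}) = \Theta(n^{(1-\epsilon)/(1-\delta)})$; I would check that both exponents are strictly below $3$ when written in terms of $k$ (equivalently, the whole simulation runs in $O(k^{3-\epsilon'})$ for some $\epsilon' > 0$), plus the polynomial preprocessing time to build the $\Theta(k^2)$-edge gadget is absorbed. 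That contradicts the \textsf{OMv} conjecture, proving the theorem.

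The main obstacle I expect is the correctness of the gadget: designing the control structure around $u$, $v$, $e$ so that (a) $G$ (with $e$ present) is such that the query is always well-defined, (b) in $G \setminus e$ the strong connectivity of $u$ and $v$ is \emph{exactly} the OR over active $(i,j)$ with $M_{ij}=1$ --- with no spurious alternative strongly connected path created by the shared $M$-gadget or by interaction between the $k$ selector copies --- and (c) the reachability the ``wrong way'' (from $v$ back to $u$ without using $e$) is genuinely blocked. The construction in~\cite{decdom17} for dominator trees already solves an essentially isomorphic problem (there the query is about the dominator relation rather than strong connectivity in $G \setminus e$), so the right move is to take their graph almost verbatim and re-verify the equivalence for our query type; the verification is where the real work lies, but it is a finite case analysis of paths, not a deep argument.
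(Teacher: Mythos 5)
Your high-level plan (reduce from \textsf{OuMv} via a gadget graph encoding $M$) matches the paper's, but the construction you sketch has a gap that the paper's construction is specifically designed to avoid, and I think it is fatal to your version as stated.

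You fix a single query triple $(u,v,e)$ for all rounds, and propose ``$k$ fresh copies'' of the activation layer so that round~$t$ uses its own wiring. But in the incremental model you only ever add edges, so strong connectivity of $u$ and $v$ in $G \setminus e$ is a monotone property of the round number: once some round $t$ produces the answer ``yes'', every later round also answers ``yes'', because layer-$t$'s edges are still present and $e$ is the same. The fresh copies do not help --- they avoid interference \emph{between} layers, but nothing \emph{retires} layer $t$ in round $t+1$. This means your reduction cannot encode a sequence of OuMv answers that goes $1,0,\dots$. Your premise ``since we cannot delete edges incrementally, we need fresh copies'' also mislocates the issue: the edge $e$ in the query ``$G\setminus e$'' is never actually deleted from the graph, it is only removed at query time, so no deletions are needed. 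The paper's construction exploits precisely this freedom: it builds one shared path $s\to x_{n_3}\to\cdots\to x_0$, and each round $t$ queries strong connectivity of \emph{different} vertex pairs $(s,z_j)$ in $G\setminus(x_t,x_{t-1})$ --- a different deleted edge per round. Deleting $(x_t,x_{t-1})$ from the query graph cuts off every $x_{t'}$ with $t'<t$ (and all the row-activation edges emanating from them), which is exactly the reset that your fresh-copies gadget was supposed to achieve but cannot with $e$ fixed. If you let $e$ range over the $x$-path edges (and let the query vertices range over $(s,z_j)$), you recover the paper's argument; with $(u,v,e)$ fixed there is no obvious way to make your version work.

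Two smaller points. First, the paper's OR over $j$ with $v^{(t)}_j = 1$ is handled by issuing a separate $(s,z_j)$-query per such $j$; with a fixed $u,v$ pair you would need an extra gadget to compute that OR inside the graph, which you do not specify. Second, your bookkeeping for general $\delta$ is underdeveloped: padding with $\Theta(n)$ extra vertices does not change $n$ asymptotically, so it does not move you off $m=\Theta(n^2)$. The paper sidesteps this by starting from the \textsf{$\gamma$-OuMv} formulation with a rectangular $n_1\times n_2$ matrix and $n_3$ query pairs, $n_1=\lfloor n_2^\gamma\rfloor$ with $\gamma=\delta/(1-\delta)$, which yields $n=\Theta(m^{1-\delta})$ and $m=\Theta(n_1n_2+n_2n_3)$ directly.
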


Under this conditional lower bound, the running time of our algorithm is optimal up to sub-polynomial factors.
We give the reduction for the incremental version of the problem.
The hardness of the decremental problem follows via an analogous reduction.

\smallskip
\noindent{\bf Hardness assumption.}
As in \cite{decdom17}, we consider the following \textsf{$\gamma$-OuMv} problem (for a fixed $ \gamma > 0 $) and parameters $ n_1 $, $ n_2 $, and $ n_3 $ such that $ n_1 = \lfloor n_2^\gamma \rfloor $:
We are first given a Boolean $ n_1 \times n_2 $ matrix $ M $ to preprocess. 
After the preprocessing, we are given a sequence of pairs of $n_1$-dimensional Boolean vectors $ (u^{(1)}, v^{(1)}), \dots, (u^{(n_3)}, v^{(n_3)}) $ one by one.
For each $ 1 \leq t \leq n_3 $, we have to return the result of the Boolean vector-matrix-vector multiplication $ (u^{(t)})^\intercal M v^{(t)} $ before we are allowed to see the next pair of vectors $ (u^{(t+1)}, v^{(t+1)}) $.
It has been shown~\cite{HenzingerKNS15} that under the \textsf{OMv} Conjecture as stated above, there is no algorithm for this problem that has polynomial preprocessing time and for processing all vectors spends total time $ n_1^{1-\epsilon_1} n_2^{1-\epsilon_2} n_3^{1-\epsilon_3} $ such that all $ \epsilon_i $ are $ \geq 0 $ and at least one $ \epsilon_i $ is a constant $ > 0 $.

\smallskip
\noindent{\bf Reduction.}
We now give the reduction from the \textsf{$\gamma$-OuMv} problem with $ \gamma = \delta / (1-\delta)$ to the incremental maintainance of a data structure that supports the queries ``are $ u $ and $ v $ strongly connected in $ G\setminus e $?'', where $u,v\in V$, $e\in E$.
In the following we denote by $ v_i $ the $i$-th entry of a vector $ v $ and by $ M_{i,j} $ the entry at row~$ i $ and column~$ j $ of a matrix~$ M $.

Consider an instance of the \textsf{$\gamma$-OuMv} problem with parameters $ n_1 = m^{1-\delta} $, $ n_2 = m^{\delta} $, and $ n_3 = m^{1-\delta} $.
We preprocess the matrix $ M $ by constructing a graph $ G^{(0)} $ with the set of vertices
\begin{equation*}
V = \{ s, x_0, x_1, \dots, x_{n_3}, y_1, \dots, y_{n_1}, z_1, \dots, z_{n_2} \}
\end{equation*}
and the following edges:
\begin{itemize}
	\item an edge $ (s, x_{n_3}) $, and, for every $ 1 \leq t \leq n_3 $, an edge $ (x_{t},x_{t-1}) $ (i.e., a path from $ s $ to $ x_0 $).
	\item for every $ 1 \leq j \leq n_2 $, an edge $ (x_0, z_j) $.
	\item for every $ 1 \leq i \leq n_1 $ and every $ 1 \leq j \leq n_2 $, an edge $ (y_i, z_j) $ if and only if $ M_{i,j} = 1 $ (i.e.\ a bipartite graph between $ \{ y_1, \ldots, y_{n_1} \} $ and $ \{ z_1, \ldots, z_{n_2} \} $ encoding the entries of matrix $ M $).
	\item an edge $ (z_j, s) $, for every $ 1 \leq j \leq n_2 $ (this makes the graph strongly connected).
\end{itemize}

Whenever the algorithm is given the next pair of vectors $ (u^{(t)}, v^{(t)}) $, we first create a graph $ G^{(t)} $ by performing the following edge insertions in $ G^{(t-1)} $:
If $t>1$, we first insert from $x_{t-1}$ an edge $ (x_{t-1}, y_i) $, for all $ 1 \leq i \leq n_1 $.
Then, for every $ i $ such that $ u^{(t)}_i = 1 $ we add the edge from $ x_t $ to $ y_i $.
Having created $ G^{(t)} $, we now, for every $ j $ such that $ v^{(t)}_j = 1 $, check whether $ s $ and $ z_j $  are strongly connected in $G^{(t)} \setminus (x_{t},x_{t-1})$.
If this is the case for at least one $ j $ we return that $ (u^{(t)})^\intercal M v^{(t)} $ is $ 1 $, otherwise we return $ 0 $.

\smallskip
\noindent{\bf Correctness.}
The correctness of our reduction follows from the following lemma.

\begin{lemma}\label{lem:technical lemma lower bound}
	For every $ 1 \leq t \leq n $, the $j$-th entry of $ (u^{(t)})^\intercal M $ is $ 1 $ if and only if $s$ and $ z_j $ are strongly connected in $ G^{(t)} \setminus (x_{t}, x_{t-1})$.
\end{lemma}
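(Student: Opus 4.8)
The plan is to reduce the claimed equivalence to a pure reachability computation in the (fixed) graph $G^{(t)}\setminus(x_t,x_{t-1})$. Note first that $e=(x_t,x_{t-1})$ is a spine edge already present in $G^{(0)}$ and is never deleted, so this graph is well defined, and since every step only \emph{inserts} edges, the edge set of $G^{(t)}$ is exactly that of $G^{(0)}$ together with all edges inserted during steps $1,\dots,t$ — hence no induction on $t$ is needed. I would begin by recording the Boolean identity $\big((u^{(t)})^\intercal M\big)_j=\bigvee_{i}\big(u^{(t)}_i\wedge M_{i,j}\big)$: the $j$-th entry is $1$ iff there is an index $i$ with $u^{(t)}_i=1$ and $M_{i,j}=1$, which by construction is exactly the condition that $G^{(t)}$ contains, for some common $i$, both the edge $(x_t,y_i)$ and the edge $(y_i,z_j)$.

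The second step is a structural analysis of reachability from $s$ in $G^{(t)}\setminus(x_t,x_{t-1})$. The only edges of $G^{(t)}$ entering a spine vertex $x_k$ are spine edges, so the unique walk from $s$ to $x_k$ is the spine prefix $s,x_{n_3},x_{n_3-1},\dots,x_k$; after deleting $(x_t,x_{t-1})$ this walk survives exactly for $k\ge t$. Hence $x_{t-1},x_{t-2},\dots,x_0$ are unreachable from $s$, while $x_t,\dots,x_{n_3}$ remain reachable. I would then observe that the tails of the ``dead'' edges $(x_0,z_j)$ and $(x_k,y_i)$ with $1\le k\le t-1$ are all unreachable from $s$ — the latter being precisely the edges contributed by the ``insert $(x_{t-1},y_i)$ for all $i$'' steps up to time $t$ together with the $u$-dependent insertions at times $<t$, all of which have tails among $x_1,\dots,x_{t-1}$. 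Therefore from $s$ one can enter a vertex $y_i$ only via an edge $(x_k,y_i)$ with $k\ge t$; by construction the only such edge is $(x_t,y_i)$, present iff $u^{(t)}_i=1$, and since $s$ reaches $x_t$ this gives: $y_i$ is reachable from $s$ iff $u^{(t)}_i=1$. Continuing one step, the only edges into $z_j$ are $(x_0,z_j)$ (dead) and $(y_i,z_j)$ for $i$ with $M_{i,j}=1$, so $s$ reaches $z_j$ iff some such $y_i$ is reachable from $s$, i.e.\ iff there is $i$ with $u^{(t)}_i=1$ and $M_{i,j}=1$. Formally I would package this as the claim that the set of vertices reachable from $s$ in $G^{(t)}\setminus(x_t,x_{t-1})$ equals $\{s\}\cup\{x_t,\dots,x_{n_3}\}\cup\{y_i:u^{(t)}_i=1\}\cup\{z_j:((u^{(t)})^\intercal M)_j=1\}$, proved by checking that this set is closed under the out-edges of $G^{(t)}\setminus(x_t,x_{t-1})$ and that each of its members is hit by an explicit walk from $s$.

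Finally I would close the loop: the edge $(z_j,s)$ belongs to $G^{(0)}$ and is never deleted, so $z_j$ reaches $s$ in $G^{(t)}\setminus(x_t,x_{t-1})$ for every $j$; therefore $s$ and $z_j$ are strongly connected there iff $s$ reaches $z_j$, which by the reachability characterization holds iff $((u^{(t)})^\intercal M)_j=1$, i.e.\ iff the $j$-th entry of $(u^{(t)})^\intercal M$ is $1$. The step I expect to be the main obstacle is the bookkeeping in the second paragraph: one has to be precise about exactly which edges $(x_k,y_i)$ are present after step $t$ and argue that both the ``for all $i$'' insertions at times $\le t$ and the $u$-dependent insertions at times $<t$ are inert once $(x_t,x_{t-1})$ is removed; one must also verify that the cycle $y_i\to z_j\to s\to x_{n_3}\to\cdots$ does not smuggle in any unexpected reachability, which it cannot since it only re-enters vertices already shown reachable. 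Everything else is a short finite case analysis over the handful of edge types of $G^{(t)}$.
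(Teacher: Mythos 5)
Your proof is correct and takes essentially the same approach as the paper: exhibit the cycle $\langle s,x_{n_3},\dots,x_t,y_i,z_j,s\rangle$ when the $j$-th entry is $1$, and argue that every $s$-to-$z_j$ path must use $(x_t,x_{t-1})$ when it is $0$, using the fact that edges $(x_k,y_i)$ with $k<t$ have tails cut off from $s$ and no $(x_k,y_i)$ with $k>t$ exists. Your version is slightly more systematic in packaging the argument as an explicit characterization of the set reachable from $s$ (and in fact cleans up a small imprecision in the paper's phrase ``there is no edge $(x_l,y_i)$ for $t\le l\le n_3$'', which should exclude $l=t$), but the underlying idea is the same.
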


\begin{proof}
	First, notice that there is always a path from $z_j$ to $s$ since there is the edge $(z_j,s)$.
	If the $j$-th entry of $ (u^{(t)})^\intercal M $ is $ 1 $, then there is an $ i $ such that $ u^{(t)}_i = 1 $ and $ M_{i,j} = 1 $.
	Thus, $ G^{(t)} $ contains the edges $ (x_t, y_i) $ and $ (y_i, z_j) $ and consequently a cycle containing $ s $ and $ z_j $, namely $ \langle s, x_{n_3}, \ldots, x_t, y_i, z_j \rangle $.
	Therefore, $s$ and $z_j$ are strongly connected in $ G^{(t)} \setminus (x_t, x_{t-1})$.
	
	If the $j$-th entry of $ (u^{(t)})^\intercal M $ is $ 0 $, then there is no $ i $ such that $ u^{(t)}_i = 1 $ and $ M_{i,j} = 1 $.
	This implies that there is no path (of length $ 2 $) from $ x_t $ to $ z_j $ avoiding $ (x_t, x_{t-1}) $ (via some vertex $ y_i $).
	Moreover, notice that there is not edge $(x_l,y_i)$ for $t \leq l \leq n_3$ and $1\leq i \leq n_1$.
	Thus, every path from $ s $ to $ z_j $ contains $(x_t, x_{t-1})$.
	Thus, $ s $ cannot be the strongly connected with $ z_j $ in $ G^{(t)} \setminus (x_t,x_{t-1})$.
	\proofend
\end{proof}

Note that $ (u^{(t)})^\intercal M v^{(t)} $ is $ 1 $ if and only if there is a $ j $ such that both the $j$-th entry of $ u^{(t)} M $ as well as the $j$-th entry of $ v^{(t)} $ are $ 1 $.
Furthermore, $ s $ and  $ z_j $  are strongly connected in $ G^{(t)} \setminus (x_t,x_{t-1})$ if and only if $ s $ and $ z_j $ are strongly connected in the maintained graph minus the edge $(x_t,x_{t-1})$.
Therefore the lemma establishes the correctness of the reduction.

\smallskip
\noindent{\bf Complexity.}
The final graph $ G^{(t)} $ has $ n := \Theta (n_1 + n_2 + n_3) = \Theta (m^\delta + m^{1-\delta}) = \Theta (m^{1-\delta}) $ vertices and $ \Theta (n_1 n_2 + n_2 n_3) = \Theta (m) $ edges.
The total number of queries is $ O (n_1 n_3) = m^{2 (1-\delta)} $.
Suppose the total update time of the incremental algorithms that supports the required queries is $ O (u(m, n)) = (m n)^{1-\epsilon} $ and its query time is $ O (q(m)) = m^{\delta - \epsilon} $.
Using the reduction above, we can thus solve the \textsf{$\gamma$-OuMv} problem for the parameters $ n_1, n_2, n_3 $ with polynomial preprocessing time and total update time
\begin{equation*}
O (u (m, n) + m^{2 (1-\delta)} q (m)) = O (u (m, m^{1-\delta}) + m^{2 (1-\delta)} q (m)) = O (m^{2-\delta - \epsilon}) \, .
\end{equation*}
Since $ n_1 n_2 n_3 = m^{2-\delta} $, this means we would get an algorithm for the \textsf{$\gamma$-OuMv} problem with polynomial preprocessing time and total update time $ n_1^{1-\epsilon_1} n_2^{1-\epsilon_2} n_3^{1-\epsilon_3} $ where at least one $ \epsilon_i $ is a constant $ > 0 $.
This contradicts the \textsf{OMv} Conjecture.

\subsection{$\Omega(m)$ conditional lower bound for the amortized update time in the fully dynamic setting}

Now we prove a conditional lower bounds for the fully dynamic setting and for the partially dynamic setting with worst-case update time guarantees. 
More specifically, we show that for those two models, under the assumption of the strong exponential time hypothesis, the trivial algorithm that recomputes from scratch the solution using the static algorithm is asymptotically optimal up to sub-polynomial factors.
We base our bounds on the following conjecture that was first 
stated in \cite{impagliazzo1999complexity,impagliazzo1998problems}.

\begin{conjecture}[Strong Exponential Time Hypothesis (SETH)]
\label{con:SETH}
For every $\epsilon > 0$, there exists a $k$, such that SAT on $k$-CNF formulas on n variables cannot be solved in $O(2^{(1-\epsilon)n} poly(n))$ time.
\end{conjecture}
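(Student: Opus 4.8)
The final statement is the Strong Exponential Time Hypothesis, which is a \emph{conjecture}, not a theorem: it has no known proof, and proving it lies far beyond the reach of current techniques. Hence there is no genuine proof strategy to propose; what follows is an account of what a proof would have to accomplish, why that is presently hopeless, and the (purely bookkeeping) steps that the paper actually needs around it. A proof of SETH would immediately imply $\mathsf{P}\neq\mathsf{NP}$: if SAT were solvable in polynomial time, then every $k$-CNF instance on $n$ variables would be decidable in $O(n^{c})$ time for a fixed $c$, which for any constant $\epsilon>0$ beats $2^{(1-\epsilon)n}$ for all large $n$ and every $k$, contradicting the conjecture. More strongly, SETH rules out $2^{(1-\epsilon)n}$-time algorithms for all clause widths $k$ simultaneously, so a proof would constitute an unconditional, near-exponential time lower bound for a problem in $\mathsf{NP}$ — a type of result that no known method can deliver.

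Because of this, the role of the statement in the paper is not to be established but to be \emph{assumed}, exactly as is standard in fine-grained complexity: it is invoked as a hardness hypothesis from which the conditional lower bounds of the paper (for the fully dynamic setting, and for the worst-case partially dynamic setting) are derived. The only steps one can meaningfully carry out around it are: (i) state the conjecture precisely, as done here, attributing it to Impagliazzo and Paturi \cite{impagliazzo1999complexity,impagliazzo1998problems}; (ii) invoke the standard reductions that propagate SETH-hardness from $k$-SAT to an intermediate problem convenient for graph constructions — typically Orthogonal Vectors via the split-and-list argument; and (iii) reduce that intermediate problem to the incremental/fully dynamic data structure under consideration, so that an update/query bound of $O(m^{1-\epsilon})$ would refute SETH. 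None of (i)--(iii) proves SETH itself; they merely transfer its assumed hardness.

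If one nevertheless tried to prove SETH directly, the natural target would be to show that for every $\epsilon>0$ there is a $k$ such that no algorithm beats brute force by a constant factor in the exponent on $k$-CNF. But this is precisely an unconditional super-polynomial (indeed near-exponential) lower bound against a completely general model of computation, and the main — and, realistically, insurmountable with today's tools — obstacle is that we possess no technique capable of proving such lower bounds for problems in $\mathsf{NP}$; known circuit and formula lower bound methods fall far short, and barrier results suggest that purely combinatorial arguments will not suffice either. In short, the honest ``plan'' is that there is no plan: the conjecture is adopted as an axiom, and the paper's contribution lies entirely in the reductions built on top of it rather than in any attempt to discharge it.
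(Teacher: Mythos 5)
You correctly recognize that this statement is a conjecture (SETH), which the paper, like the rest of the fine-grained complexity literature, states without proof and uses only as a hardness assumption for its conditional lower bounds. Your account matches the paper's treatment exactly: there is nothing to prove, and the conjecture is simply assumed and attributed to Impagliazzo and Paturi.
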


As an intermediate step in our reductions, we use the following result from~\cite{AW14}.

\begin{theorem}[\cite{AW14}]
	\label{lem:SC2-lower-bound}
	Let $G$ be a digraph with $n$ vertices that undergoes $m$ edge updates from an initially empty graph (until the graph is empty, in the case of a decremental algorithm). If for some $\epsilon > 0$ and $t \in N$, there exists either
	\begin{itemize}
		\item a fully dynamic algorithm with preprocessing time $O(n^t)$, amortized update time $O(m^{1-\epsilon})$, and amortized query time $O(m^{1-\epsilon})$, or
		\item an incremental or decremental algorithm with preprocessing time $O(n^t)$, worst-case update time $O(m^{1-\epsilon})$, and worst-case query time $O(m^{1-\epsilon})$
	\end{itemize}
that can answer either the query ``are there more that two or more SCCs in $G$?" or the query ``what is the size of the largest SCC in $G$?", then Conjecture~\ref{con:SETH} is false.
\end{theorem}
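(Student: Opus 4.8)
The plan is to re-derive this, following \cite{AW14}, by a reduction from the Orthogonal Vectors problem (\textsf{OV}): given sets $A=\{a_1,\dots,a_N\}$ and $B=\{b_1,\dots,b_N\}$ of $\{0,1\}$-vectors of dimension $d$, decide whether some $a_i$ is orthogonal to some $b_j$. By Williams' split-and-list reduction from $k$-CNF-SAT --- split the $n$ variables into two halves, list the $2^{n/2}$ partial assignments of each half, and map a partial assignment to the characteristic vector of the clauses it leaves unsatisfied, so that a complementary pair extends to a satisfying assignment exactly when the two vectors are orthogonal --- an $O(N^{2-\epsilon})$-time algorithm for \textsf{OV} with $N=2^{n/2}$ and $d=\operatorname{poly}(n)=\operatorname{polylog}(N)$ would refute Conjecture~\ref{con:SETH}. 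So it suffices to convert a dynamic structure of the claimed efficiency into an $O(N^{2-\epsilon}\operatorname{polylog}N)$-time algorithm for \textsf{OV}.

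First I would build, in polynomial preprocessing time, a graph that rigidly encodes $B$ together with a bank of $d$ ``coordinate'' gadgets: a source $s$, a sink $t$ with a permanent back-edge $(t,s)$, and, for each $b_j$, a layered path $\beta_j^0\!\to\!\cdots\!\to\!\beta_j^d$ whose $k$-th link is a straight edge present iff $b_{jk}=0$ and, in parallel, a detour $\beta_j^{k-1}\to c_k^{\mathrm{in}},\ c_k^{\mathrm{in}}\to c_k^{\mathrm{out}},\ c_k^{\mathrm{out}}\to\beta_j^k$ through the coordinate vertices, plus the edges $(s,\beta_j^0)$ and $(\beta_j^d,t)$. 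The only dynamic edges are the $d$ edges $(c_k^{\mathrm{in}},c_k^{\mathrm{out}})$; loading a vector $a_i$ means keeping exactly those with $a_{ik}=0$, which takes at most $d$ edge updates to pass from $a_{i-1}$ to $a_i$. The design should make the $b_j$-path traversable iff $\langle a_i,b_j\rangle=0$, hence $s$ reach $t$ iff some $b_j$ is orthogonal to $a_i$; and with $(t,s)$ present this reachability is exposed by the permitted query, since in the orthogonal case $s,t$ and the whole $b_j$-gadget of every orthogonal partner fuse into one large SCC, while otherwise every SCC is a singleton. I would then stream $a_1,\dots,a_N$, load each with a handful of updates, ask ``what is the size of the largest SCC?'' (or, after a dual construction, ``are there $\ge 2$ SCCs?''), and return the OR of the $N$ answers. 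The count is $n=\Theta(N\operatorname{polylog}N)$ vertices, $m=\Theta(N\operatorname{polylog}N)$ edges, $O(N\operatorname{polylog}N)$ updates and $N$ queries; a fully dynamic algorithm with $O(n^t)$ preprocessing and $O(m^{1-\epsilon})$ amortized update and query time would thus solve \textsf{OV} in $\operatorname{poly}(n)+O\big(N\operatorname{polylog}N\cdot(N\operatorname{polylog}N)^{1-\epsilon}\big)=O(N^{2-\epsilon}\operatorname{polylog}N)$ time, contradicting SETH. The incremental and decremental worst-case variants are obtained by analogous reductions given in \cite{AW14}.

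The step I expect to be the genuine obstacle is the gadget engineering, in two respects. First, since the $b_j$-paths all run through the shared coordinate vertices $c_k^{\mathrm{in}},c_k^{\mathrm{out}}$ (this sharing is what keeps the per-phase update count at $\operatorname{polylog}(N)$), one must prevent ``splicing'': a path from $s$ to $t$ must not be allowed to hop between different $b_j$-paths at the coordinate vertices, which would certify orthogonality that no single $b_j$ actually has --- so the detours have to be guarded, e.g.\ by first committing the path to one index $j$ before it ever enters the coordinate-checking segment, or by some equivalent device. Second, the answer has to be read off through exactly the weak queries allowed in the statement (the size of the largest SCC, or merely whether at least two SCCs exist) rather than a direct ``are $u$ and $v$ in the same SCC'' query; this dictates which auxiliary vertices may be permanently placed in $s$'s component and which must be only conditionally attached --- a careless extra edge creates a backdoor that makes $s$ reach $t$ in every phase and kills the reduction --- and it is also why two essentially symmetric constructions are needed to cover both query forms. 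Once these are in place, verifying the polarities and the exact component sizes and counts is routine.
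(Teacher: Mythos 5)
This theorem is imported from \cite{AW14}; the paper itself gives no proof, so your attempt is measured against the original, not against anything internal. Your overall plan (reduce from Orthogonal Vectors via split-and-list, encode $B$ statically, stream $a_1,\dots,a_N$ with $\mathrm{polylog}(N)$ updates per phase, and read the answer off an SCC-count or SCC-size query) is indeed the right framework, but the gadget you describe is \emph{wrong}, not merely incomplete, and the ``splicing'' issue you flag is a fatal bug rather than a deferred engineering detail. Concretely, take $d=3$, $a_i=(1,0,1)$, $b_1=(0,1,1)$, $b_2=(1,1,0)$: neither $b_j$ is orthogonal to $a_i$, yet your graph contains the path $s\to\beta_1^0\to\beta_1^1$ (straight link, $b_{1,1}=0$), then $\beta_1^1\to c_2^{\mathrm{in}}\to c_2^{\mathrm{out}}\to\beta_2^2$ (detour, legal since $a_{i,2}=0$, exiting into lane~$2$), then $\beta_2^2\to\beta_2^3\to t$ (straight link, $b_{2,3}=0$). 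So $s$ and $t$ fuse into a large SCC in a NO instance. Guarding the detour by ``committing to one $j$ first'' destroys exactly the vertex sharing that keeps the per-phase update count at $\mathrm{polylog}(N)$; a per-lane copy of the coordinate nodes costs $\Theta(Nd)$ updates per phase, and the bound evaporates.

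The construction actually used in \cite{AW14} sidesteps this by encoding each coordinate test in a \emph{single hop}, so a path never visits more than one coordinate vertex and there is nothing to splice. One version: a vertex $u_j$ for each $b_j$ and a vertex $c_k$ for each coordinate; static edges $(u_j,c_k)$ whenever $b_j[k]=1$; one dynamic edge $(c_k,t)$ per phase whenever $a_i[k]=1$; and static return wiring $t\to s$, $s\to u_j$ for all $j$, $t\to c_k$ for all $k$. Then $u_j$ reaches $t$ (hence lies in the SCC of $s$) iff $\exists k\colon a_i[k]=b_j[k]=1$, i.e.\ iff $b_j\not\perp a_i$, while every orthogonal $u_j$ becomes a singleton SCC; so both ``more than one SCC?'' and ``size of largest SCC $<n$?'' detect an orthogonal pair, at $O(d)$ updates and one query per phase on a graph with $O(Nd)$ vertices and edges. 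If you want a self-contained derivation rather than a citation, you should replace the layered-path gadget with this one-hop design (and its mirror image for the other query polarity, together with the incremental / decremental worst-case variants, exactly as in \cite{AW14}).
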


Now we show the reduction from the problems of Lemma~\ref{lem:SC2-lower-bound} to the problem of maintaining a data structure answering any of the queries that we consider in Theorem~\ref{thrm:main-result}.
We break our result into two parts.
In the first part, we prove that the result for three of the totally five types of queries without any assumptions.
In the second part, where we assume that the number of edges is in the graph is superlinear to the number of vertices.
This assumption about the density of the graph is necessary as our reduction in this spends $O(n)$ additional time per query.
We start with the first case where we make no assumptions about the density of the graph.

\begin{lemma}
\label{lem:lower-bound-fully-1}
Let $G=(V,E)$ be a digraph with $n$ vertices that undergoes $m$ edge updates from an initially empty graph (until the graph is empty, in the case of a decremental algorithm), $\epsilon>0$.
Assume that there exists a dynamic algorithm that can answer any of the following queries, either incrementally, decrementally, or fully dynamically:
\begin{itemize}	
	\item[(i)] \label{query1} Report in $O(m^{1-\epsilon})$ time the total number of SCCs in $G\setminus e$ (resp., $G\setminus v$), for any query edge $e$  (resp., vertex $v$) in $G$.
	\item[(ii)] \label{query2} Report in $O(m^{1-\epsilon})$ time the size of the largest SCC in $G\setminus e$ (resp., $G\setminus v$), for any query edge $e$  (resp., vertex $v$) in $G$.
	\item[(iii)] \label{query3} Report in $O(m^{1-\epsilon})$ time all the SCCs of $G\setminus e$ (resp., $G\setminus v$), for any query edge $e$ (resp., vertex $v$).
\end{itemize}
Then, there exists a dynamic algorithm, in the same model (incremental, decremental, or fully dynamic), that can answer either the query ``are there more that two SCCs in the graph?" or ``what is the size of the largest SCC in the graph?'' with the same preprocessing time, the same update time, and can answer queries in $O(m^{1-\epsilon})$ time.
\end{lemma}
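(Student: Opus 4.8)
The plan is a straightforward padding reduction: from any dynamic algorithm $\mathcal{A}$ that supports one of the queries (i)--(iii) on $G\setminus e$ (or $G\setminus v$) within the stated bounds, I would build a dynamic algorithm, in the same model, that reports SCC information about the maintained graph $G$ itself, at the cost of $O(1)$ extra vertices and edges and $O(1)$ bookkeeping per operation.

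First I would describe the graph we actually feed to $\mathcal{A}$. Fix an arbitrary vertex $v_0\in V$ (which exists, since $n\ge 1$) and add one fresh vertex $z$. If $\mathcal{A}$ answers edge-removal queries, let $G'$ be $G$ together with the single edge $(z,v_0)$, and agree to always issue the query with the fixed edge $e=(z,v_0)$; if $\mathcal{A}$ answers vertex-removal queries, let $G'$ be $G$ together with the isolated vertex $z$, and always query the fixed vertex $z$. Every edge update to $G$ is forwarded verbatim to $G'$, while the $O(1)$ auxiliary items are created once during preprocessing and never touched afterwards. Hence $G'$ has $n+1=\Theta(n)$ vertices and $m'=m+O(1)$ edges, the preprocessing time is $O((n+1)^t)=O(n^t)$, and the update time of the composite algorithm equals that of $\mathcal{A}$.

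The key structural claim is that the graph actually queried, namely $G'\setminus e$ in the edge case and $G'\setminus z$ in the vertex case, has exactly the SCCs of $G$, plus (only in the edge case) the trivial singleton $\{z\}$. Indeed, after deleting $e$ the vertex $z$ is a sink, and after deleting $z$ it is gone; in either case $z$ cannot appear on a path between two vertices of $V$, so for $u,w\in V$ there is a $u$-to-$w$ path in the queried graph if and only if there is one in $G$. Consequently the SCCs contained in $V$ are precisely the SCCs of $G$, and the only possible additional SCC is $\{z\}$.

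Finally I would read off the desired answer. If $\mathcal{A}$ supports query (i), the number of SCCs of $G$ equals the number of SCCs of the queried graph minus a known constant ($1$ in the edge case, $0$ in the vertex case), so ``are there more than two SCCs in $G$?'' is decided in $O(1)$ extra time per call. If $\mathcal{A}$ supports query (ii), then since $G$ is nonempty its largest SCC has size at least $1$, so the size of the largest SCC of $G$ equals that of the queried graph and is returned directly. If $\mathcal{A}$ supports query (iii), the returned list of SCCs lets us compute both the number of SCCs and the size of the largest SCC of $G$ in time linear in the output size (using that we know $n$ and that $G$ is nonempty), which is $O(m^{1-\epsilon})$ because $\mathcal{A}$ produced that list within that bound. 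In all cases we obtain, in the same model, an algorithm with the same preprocessing time, the same update time, and query time $O((m')^{1-\epsilon})=O(m^{1-\epsilon})$ that answers one of the two target queries of Lemma~\ref{lem:SC2-lower-bound}, which is exactly the statement. The only points that need care are making $z$ a sink / isolated (rather than two-way connected) so that it does not merge SCCs of $G$, handling uniformly the edge-removal and vertex-removal flavours of $\mathcal{A}$, and the elementary accounting $m'=m+O(1)$; none of these is a real obstacle, so I do not expect a hard step here.
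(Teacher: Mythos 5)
Your proof is correct and uses the same core idea as the paper (a padding reduction that adds auxiliary structure to $G$ and always queries the same fixed edge or vertex), but your construction is strictly simpler. The paper builds $G'$ by adding two vertices $s_1,s_2$ together with the edge $(s_1,s_2)$ and, for every $v\in V$, the edges $(v,s_1)$ and $(s_2,v)$ --- i.e., $\Theta(n)$ extra edges --- and then queries $G'\setminus(s_1,s_2)$ or $G'\setminus s_1$; your construction adds a single vertex $z$ and at most one edge $(z,v_0)$, so $m'=m+O(1)$. Your smaller gadget buys two things. First, it avoids the paper's side condition ``we may assume $O(n)\in O(m)$'' (the paper dismisses the case $m\le n-3$ separately), which it needs because its $G'$ has $m+2n+1$ edges. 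Second, your accounting for query (iii) is cleaner: you bound the postprocessing cost by the size of the returned output, which is automatically $O(m^{1-\epsilon})$, whereas the paper asserts $O(n)\in O(m^{1-\epsilon})$, a step that is not literally justified by $O(n)\in O(m)$ alone. The one thing the paper's heavier construction buys --- making $G'$ strongly connected --- is not needed for this lemma (the assumed data structure is not restricted to strongly connected graphs), though it does matter for the companion lemma about query types (iv) and (v), where the paper reuses the same $G'$. So the two proofs share the same skeleton, and yours is a legitimate and in fact tidier instantiation of it.
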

\begin{proof}
It is safe to assume that $O(n)\in O(m)$ as if $m\leq n-3$ we can answer immediately that $G$ has more than two SCCs, as each SCC $C$ has at least $|C|-1$ edges.
We construct the following graph $G'=(V',E')$ from $G=(V,E)$.
$G'$ consists of the edges and vertices of $G$ and additionally two vertices $s_1,s_2$, the edge $(s_1,s_2)$, for each $v\in V$ an edge $(v,s_1)$ and an edge $(s_2,v)$.
Notice that $|V'|=|V|+2 \in O(|V|)$ and $|E'|=|E|+2\cdot n + 1 \in O(|E|)$. Our overall strategy for proving the lemma is to argue that at any time during the course of the dynamic algorithm we can answer the queries ``are there more that two SCCs in $G$?" or ``what is the size of the largest SCC in $G$?'' by answering any one of the queries in the statement of the lemma.
That is, we answer those queries independently of the type of updates (i.e., incremental, decremental, or fully dynamic algorithm), and of the type of query bound (i.e., either worst-case per update, or amortized over all updates).
Specifically, we show that by answering the queries of the lemma withing the stated bounds, then we can answer the queries ``are there more that two SCCs in $G$?" or ``what is the size of the largest SCC in $G$?'' in time $O(m^{1-\epsilon})$.

First, observe that two vertices $u$ and $v$ are strongly connected in $G'\setminus(s_1,s_2)$ or in $G'\setminus s_1$ if and only if they are strongly connected in $G$.
This is true as any new path from $u$ to $v$ or from $v$ to $u$, that uses any of the new edges that we inserted, contains vertex $s_1$ and edge $(s_1,s_2)$.
Therefore, by deleting either the vertex $s_1$ or the edge $(s_1,s_2)$ we destroy all new paths from $u$ to $v$ and from $v$ to $u$.

We begin with type (i) queries.
Assume that we can report the number of SCCs in $G'\setminus (s_1,s_2)$ (resp., in $G'\setminus s_1$) in time $O(m^{1-\epsilon})$.
Let this number be $c$.
Note that the SCCs among vertices in $V$ remain unchanged, and $s_1$ and $s_2$ form singleton SCCs in $G'\setminus (s_1,s_2)$ (resp., $s_2$ forms a singleton SCC in $G'\setminus s_1$).
It follows that the number of SCCs in $G$ is $c-2$ (resp., $c-1$).
Hence we can answer whether $G$ has more that two SCCs in time $O(m^{1-\epsilon})$.
The same argument applies for type (ii) queries that report all SCCs in $G'\setminus (s_1,s_2)$ (resp., in $G'\setminus s_1$) in time $O(m^{1-\epsilon})$.
The number of SCCs in $G$ can be extracted in additional $O(n)\in O(m^{1-\epsilon})$ time (recall that we assume $O(n)\in O(m)$).

Now we consider type (iii) queries.
Since the SCCs in $G'\setminus (s_1,s_2)$ (resp., in $G'\setminus s_1$) correspond to the SCCs of $G$ plus the singleton SCCs $s_1$ and $s_2$ (resp., the singleton SCC $s_2$), the largest SCC in $G'\setminus (s_1,s_2)$ (resp., in $G'\setminus s_1$) has the same size with the largest SCC in $G$.
Thus, if we can answer the size of the largest SCC in $G'\setminus (s_1,s_2)$ (resp., in $G'\setminus s_1$) in time $O(m^{1-\epsilon})$, then we can answer the size of the largest SCC in $G$ in $O(m^{1-\epsilon})$.
\proofend
\end{proof}

\begin{theorem}
	Let $G$ be a digraph with $n$ vertices that undergoes $m$ edge updates from an initially empty graph (until the graph is empty, in the case of a decremental algorithm).
	If for some $\epsilon > 0$, there exists an algorithm that can answer the following queries:
	\begin{itemize}	
		\item[(i)] \label{query1} Report in $O(m^{1-\epsilon})$ time the total number of SCCs in $G\setminus e$ (resp., $G\setminus v$), for any query edge $e$  (resp., vertex $v$) in $G$.
		\item[(ii)] \label{query2} Report in $O(m^{1-\epsilon})$ time the size of the largest SCC in $G\setminus e$ (resp., $G\setminus v$), for any query edge $e$  (resp., vertex $v$) in $G$.
		\item[(iii)] \label{query3} Report in $O(m^{1-\epsilon})$ time all the SCCs of $G\setminus e$ (resp., $G\setminus v$), for any query edge $e$ (resp., vertex $v$).
	\end{itemize}
	while maintaining $G$ fully dynamically with amortized update time $O(m^{1-\epsilon})$ and amortized query time after polynomial time preprocessing, or partially dynamically with worst-case update time $O(m^{1-\epsilon})$ and worst-case query time after polynomial time preprocessing, then Conjecture~\ref{con:SETH} is false.
\end{theorem}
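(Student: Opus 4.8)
The plan is to compose the reduction of Lemma~\ref{lem:lower-bound-fully-1} with the conditional lower bound of Theorem~\ref{lem:SC2-lower-bound}; the proof is essentially a two-step chaining argument, so there is little to grind through. Concretely, suppose towards a contradiction that there is an algorithm $\mathcal{A}$ supporting some one of the queries (i)--(iii) in $O(m^{1-\epsilon})$ time, while maintaining $G$ either fully dynamically with amortized update time $O(m^{1-\epsilon})$ after polynomial preprocessing, or partially dynamically (incrementally or decrementally) with worst-case update time $O(m^{1-\epsilon})$ after polynomial preprocessing. The first step is to apply Lemma~\ref{lem:lower-bound-fully-1} to $\mathcal{A}$: this yields an algorithm $\mathcal{A}'$ in the \emph{same} dynamic model (incremental, decremental, or fully dynamic, with the corresponding worst-case or amortized bound) that answers either ``are there more than two SCCs in the graph?'' or ``what is the size of the largest SCC in the graph?'', with the same (still polynomial) preprocessing time, the same $O(m^{1-\epsilon})$ update time, and $O(m^{1-\epsilon})$ query time.

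The second step is to feed $\mathcal{A}'$ into Theorem~\ref{lem:SC2-lower-bound}. Its two bullet-point hypotheses are met verbatim by $\mathcal{A}'$: in the fully dynamic case we have polynomial preprocessing time $O(n^t)$, amortized update time $O(m^{1-\epsilon})$, and amortized query time $O(m^{1-\epsilon})$; in the partially dynamic case we have the same with worst-case in place of amortized; and in both cases $\mathcal{A}'$ answers exactly one of the two designated SCC queries. Theorem~\ref{lem:SC2-lower-bound} therefore concludes that Conjecture~\ref{con:SETH} (SETH) is false, which is precisely the statement to be proved.

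The only point that needs a line of verification is that the reduction of Lemma~\ref{lem:lower-bound-fully-1} respects the update model and does not change the query/update bounds by more than a constant: it adds only two vertices $s_1,s_2$ and $O(n)\subseteq O(m)$ incident edges, inserted once at the beginning (and, for a decremental algorithm, deleted only at the very end), so the adversarial update sequence on $G$ is replayed unchanged on $G'$; and each query to $\mathcal{A}'$ is served by a single call to one of (i)--(iii) on $G'\setminus(s_1,s_2)$ or $G'\setminus s_1$ plus at most $O(n)=O(m^{1-\epsilon})$ bookkeeping. Since nothing here is quantitatively delicate, there is no genuine obstacle; the closest thing to a subtle point is just confirming that Lemma~\ref{lem:lower-bound-fully-1} indeed delivers $\mathcal{A}'$ \emph{uniformly} across the incremental, decremental, and fully dynamic settings and for both worst-case and amortized query guarantees --- which is exactly how that lemma is phrased.
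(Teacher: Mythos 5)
Your proof is correct and takes the same route the paper does: the theorem is stated immediately after Lemma~\ref{lem:lower-bound-fully-1} without an explicit proof, precisely because it follows by chaining that lemma with Theorem~\ref{lem:SC2-lower-bound} exactly as you describe. Your verification that the reduction preserves the update model and the asymptotic bounds (the $O(n)$ auxiliary edges, inserted first / deleted last depending on the model) is the right sanity check and matches the paper's implicit reasoning.
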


We now proceed to prove similar results of the last two query types, in graphs where the number of edges is superlinear to the number of vertices.

\begin{lemma}
	Let $G=(V,E)$ be a digraph with $n$  vertices that undergoes $m>n^{1+\delta}$ edge updates from an initially empty graph (until the graph is empty, in the case of a decremental algorithm), $\delta>\epsilon/(1-\epsilon)$, $1/2>\epsilon>0$.
	Assume that there exists a dynamic algorithm that can answer any of the following queries, either incrementally, decrementally, or fully dynamically:
	\begin{enumerate}
		\item[(iv)] \label{query4} Test in $O(m^{1-\epsilon}/n)$ time whether two query vertices $u$ and $v$ are strongly connected in $G\setminus e$ (resp., $G\setminus v$), for any query edge $e$ (resp., vertex $v$).
		\item[(v)] \label{query5} For any two query vertices $u$ and $v$ that are strongly connected in $G$, test whether there exists an edge $e$ (resp., vertex $v$) such that $u$ and $v$ are not strongly connected in $G\setminus e$ (resp., $G\setminus v$) in time $O(m^{1-\epsilon}/n)$.
	\end{enumerate}
	Then, there exists a dynamic algorithm, in the same model (incremental, decremental, or fully dynamic), that can answer either the query ``are there more that two SCCs in the graph?" with the same preprocessing time, the same update time, and can answer queries in $O(m^{1-\epsilon})$ time.
\end{lemma}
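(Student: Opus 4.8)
The plan is to reduce the hard problem of Theorem~\ref{lem:SC2-lower-bound} --- deciding whether the maintained graph has two or more SCCs, equivalently whether it fails to be strongly connected --- to issuing $\Theta(n)$ of the pair queries (iv) or (v) on a graph $G'$ obtained from $G$ by adding $O(1)$ auxiliary vertices and $O(n)$ auxiliary edges. We run the assumed dynamic algorithm on $G'$, mirroring every update of $G$ and creating the auxiliary edges during the (polynomial) preprocessing phase. Since $G'$ still has $\Theta(n)$ vertices and $\Theta(m)$ edges (here we use $m>n^{1+\delta}$, so $O(n)$ extra edges are absorbed), each query on $G'$ costs $O(m^{1-\epsilon}/n)$, and a sweep of $\Theta(n)$ such queries costs $\Theta(n)\cdot O(m^{1-\epsilon}/n)=O(m^{1-\epsilon})$, plus $O(n)$ bookkeeping to iterate over the endpoints. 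The hypothesis $\delta>\epsilon/(1-\epsilon)$ is exactly what is needed to make this bookkeeping negligible: $m>n^{1+\delta}$ gives $m^{1-\epsilon}>n^{(1+\delta)(1-\epsilon)}\ge n$, hence $n=O(m^{1-\epsilon})$. The resulting algorithm answers the ``two or more SCCs'' query in $O(m^{1-\epsilon})$ time while keeping the preprocessing and update time of the assumed algorithm, which is exactly the conclusion of the lemma.

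For the vertex-failure variants of (iv) and (v), I would take $G'$ to be $G$ augmented with a single new vertex $t$ and the edges $(v,t)$ and $(t,v)$ for every $v\in V$. Then $G'$ is strongly connected at all times, independently of $G$, so the precondition of query (v) always holds, and $G'\setminus t=G$. Fix any $r\in V$ (assume $n\ge2$; the case $n\le1$ is trivial). The key observation is that for every $v\in V\setminus\{r\}$ the \emph{only} vertex that can separate $r$ and $v$ in $G'$ is $t$: any other candidate $w$ is bypassed by the length-two paths $r\to t\to v$ and $v\to t\to r$, so it lies neither on every path from $r$ to $v$ nor on every path from $v$ to $r$; and $t$ separates $r$ and $v$ in $G'$ iff $r\not\leftrightarrow v$ in $G'\setminus t=G$. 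Hence, using (iv) I ask for each $v\ne r$ whether $r$ and $v$ are strongly connected in $G'\setminus t$, and using (v) I ask for each $v\ne r$ whether some vertex separates $r$ and $v$ in $G'$; in both cases $G$ is strongly connected iff every one of the $\Theta(n)$ answers reports ``connected'' (resp.\ ``no separator'').

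For the edge-failure variants I reuse the gadget from the proof of Lemma~\ref{lem:lower-bound-fully-1}: add two new vertices $t_1,t_2$, the edge $(t_1,t_2)$, and the edges $(v,t_1)$ and $(t_2,v)$ for all $v\in V$. Again $G'$ is always strongly connected, and $G'\setminus(t_1,t_2)$ induces on $V$ exactly the SCCs of $G$ (because $t_1$ becomes a sink and $t_2$ becomes unreachable in $G'\setminus(t_1,t_2)$). The analogue of the key observation is that for $r,v\in V$ the length-three paths $r\to t_1\to t_2\to v$ and $v\to t_1\to t_2\to r$ use only auxiliary edges, and the only such edge common to both is $(t_1,t_2)$; a short case check over the remaining auxiliary edges incident to $r$ or $v$ (each of which is bypassed through $G$ whenever $r$ and $v$ have positive out- and in-degree in $G$, in particular whenever $G$ is strongly connected) shows that $(t_1,t_2)$ is the only edge that can separate $r$ and $v$ in $G'$, and that it does so iff $r\not\leftrightarrow v$ in $G$. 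The same $\Theta(n)$-query sweep over $v\ne r$, using (iv) with query edge $(t_1,t_2)$ or using (v), therefore decides strong connectivity of $G$; degenerate cases (e.g.\ a vertex of out-degree $0$ in $G$) only make some query answer ``disconnected'', which is still correct since then $G$ is not strongly connected.

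The step I expect to be the main obstacle is precisely this ``only the gadget separates'' claim for query (v): one must rule out that a positive answer to (v) on a pair in $V$ is produced by some internal strong bridge or strong articulation point of $G$ rather than by the designated gadget element. This reduces to a careful but elementary case analysis over all possible separators --- auxiliary versus ordinary vertices/edges, and, in the edge case, in-edges versus out-edges of the two queried endpoints --- exploiting that the augmentation supplies a short gadget detour around every ordinary vertex and every ordinary edge. Once this is settled, correctness of the sweep and the $O(m^{1-\epsilon})$ time bound are immediate, and the reduction is complete.
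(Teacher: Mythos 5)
Your reduction takes essentially the same route as the paper's: augment $G$ with an $O(n)$-edge gadget so that the augmented graph $G'$ is always strongly connected (making every query of type~(v) well-formed) and removal of a designated gadget element recovers the SCC structure of $G$ among the vertices of $V$; then sweep over $\Theta(n)$ pair queries, using $m>n^{1+\delta}$ with $\delta>\epsilon/(1-\epsilon)$ to absorb the $O(n)$ bookkeeping. The single-vertex gadget for the vertex-failure case is a clean minor simplification of the paper's $s_1,s_2$ gadget (the paper reuses the same two-vertex gadget for both variants), and your analysis of it is correct. You anticipate the ``only the gadget separates'' step to require a delicate case analysis, but it is in fact a two-line observation that does not need the out-/in-degree considerations you sketch: for $r,v\in V$ that are strongly connected in $G$, deleting any ordinary edge or vertex is bypassed by the gadget detour of length at most three lying entirely in $E'\setminus E$, while deleting any auxiliary edge or vertex leaves $G$ intact as a subgraph of $G'$, so $r$ and $v$ stay strongly connected either way.

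The one place where your argument falls short of the lemma as stated is the target query. Your algorithm performs a single sweep from a fixed root $r$ and decides ``is $G$ strongly connected?'', i.e., distinguishes one SCC from two or more. The lemma promises an algorithm for ``are there more than two SCCs?'', i.e., distinguishing at most two from at least three, and the paper's proof achieves exactly this with a second sweep: after computing, via a first $\Theta(n)$-query sweep, the set $X$ of vertices not separable from an arbitrary $x$ in $G'$ (which by the gadget argument is precisely the SCC of $x$ in $G$), it picks $y\in V\setminus X$ if that set is nonempty, computes the SCC $Y$ of $y$ by another sweep, and reports ``more than two'' iff $X\cup Y\subsetneq V$; the two sweeps together are still $O(m^{1-\epsilon})$. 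To match the stated conclusion you should add this second sweep, or else argue explicitly that Theorem~\ref{lem:SC2-lower-bound} already applies to the ``two or more SCCs'' question your single sweep decides; as written, your reduction hits a different threshold than the one the lemma names.
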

\begin{proof}
	We use the same construction as in the proof of Lemma~\ref{lem:lower-bound-fully-1},
	That is, we construct the following graph $G'=(V',E')$ from $G=(V,E)$.
	$G'$ consists of the edges and vertices of $G$ and additionally two vertices $s_1,s_2$, the edge $(s_1,s_2)$, for each $v\in V$ an edge $(v,s_1)$ and an edge $(s_2,v)$.
	Notice that $|V'|=|V|+2 \in O(|V|)$ and $|E'|=|E|+2\cdot n + 1 \in O(|E|)$. Our overall strategy for proving the lemma is to argue that at any time during the course of the dynamic algorithm we can answer the queries ``are there more that two SCCs in $G$?" by answering any one of the queries in the statement of the lemma.
	That is, we answer those queries independently of the type of updates (i.e., incremental, decremental, or fully dynamic algorithm), and of the type of query bound (i.e., either worst-case per update, or amortized over all updates).
	Specifically, we show that by answering the queries of the lemma withing the stated bounds, then we can answer the queries ``are there more that two SCCs in $G$?'' in time $O(m^{1-\epsilon}) \in \omega(n^{1+O(1)})$.
	
	For type (v) queries we claim that for any two $u,v\in V$ there exists a separating edge (resp., vertex) in $G'$ if and only in $u$ and $v$ are not strongly connected in $G$.
	We now prove this claim. 
	First, assume that $u$ and $v$ are strongly connected in $G$.
	$G'$ contain the paths $\langle u, s_1,s_2,v\rangle$ and $\langle v, s_1,s_2,u\rangle$, and those paths exists if we delete any edge from $E$.
	Similarly, if we delete any edge from $E'\setminus E$, the vertices $u$ and $v$ will remain strongly connected as they are strongly connected in $(V',E)$.
	Second, assume that $u$ and $v$ are not strongly connected in $G$.
	Then, either there is no path from $u$ to $v$ or there is no path from $v$ to $u$ in $G$.
	Assume, w.l.o.g., that there is no path from $u$ to $v$ in $G$.
	Since all new path from $u$ to $v$ contain the edge $(s_1,s_2)$, then $(s_1,s_2)$  (resp., the vertex $s_1$ and the vertex $s_2$) is a separating edge (resp., vertex) for $u$ and $v$.
	Our claim follows.
	
	We follow a similar approach with type (iv) queries.
	Assume that we can test for any two query vertices $u$ and $v$ that are strongly connected in $G$, whether there exists an edge $e$ (resp., vertex $v$) such that $u$ and $v$ are not strongly connected in $G\setminus e$ (resp., $G\setminus v$) in time $O(\max\{m^{1-\epsilon}/n,1\})$.
	We pick an arbitrary vertex $x$ of $G$ and we test whether there exists a separating edge (resp., a separating vertex) for $x$ and $v$, for every $v\in V\setminus x$.
	This requires $O(m^{1-\epsilon})$ time in total.
	Let $X$ be the set of vertices for which there exists no separating edge (resp., vertex) with $x$.
	Then we pick an arbitrary vertex $y$ from $V\setminus X$ and we test whether there exists a separating edge (resp., vertex) for $y$ and $v$, for every $v\in V \setminus \{X\cup y\}$.
	This requires additional $O(m^{1-\epsilon})$ time.
	Let $Y$ be the set of vertices for which there exists no separating edge (resp., vertex) with $y$.
	If $|X| + |Y| = |V|$, then $G$ has two SCCs, namely the SCC formed by the vertices in $X$ and the SCC formed by the vertices in $Y$.
	If, on the other hand $|X|+|Y|<|V|$, then there exists at least one vertex in $V$ that is neither strongly connected to $x$ nor strongly connected to $y$ in $G$, and hence, $G$ contains at least three SCCs.
	\proofend
\end{proof}

\begin{theorem}
	Let $G$ be a digraph with $n$ vertices that undergoes $m>n^{1+\delta}$ edge updates from an initially empty graph (until the graph is empty, in the case of a decremental algorithm), $\delta>\epsilon/(1-\epsilon)$.
	If for some $\epsilon > 0$ there exists an algorithm that can answer the following queries:
	\begin{itemize}	
		\item[(iv)] \label{query4} Test in $O(\max\{m^{1-\epsilon}/n,1\})$ time whether two query vertices $u$ and $v$ are strongly connected in $G\setminus e$ (resp., $G\setminus v$), for any query edge $e$ (resp., vertex $v$).
		\item[(v)] \label{query5} For any two query vertices $u$ and $v$ that are strongly connected in $G$, test whether there exists an edge $e$ (resp., vertex $v$) such that $u$ and $v$ are not strongly connected in $G\setminus e$ (resp., $G\setminus v$) in time $O(\max\{m^{1-\epsilon}/n,1\})$.
	\end{itemize}
	while maintaining $G$ fully dynamically with amortized update time $O(m^{1-\epsilon})$ and amortized query time, or partially dynamically with worst-case update time $O(m^{1-\epsilon})$ and worst-case query time, after arbitrary polynomial time preprocessing, then Conjecture~\ref{con:SETH} is false.
\end{theorem}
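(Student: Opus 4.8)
The plan is to obtain this theorem as a direct composition of the preceding lemma (the one covering query types (iv) and (v)) with Theorem~\ref{lem:SC2-lower-bound}. Suppose, for contradiction, that for some $\epsilon>0$ there is a dynamic algorithm in one of the two models of the statement --- fully dynamic with amortized update and query time, or partially dynamic with worst-case update and query time --- that supports queries of type (iv) or of type (v) within time $O(\max\{m^{1-\epsilon}/n,1\})$ after arbitrary polynomial-time preprocessing. First I would feed this algorithm into the preceding lemma: it produces a dynamic algorithm $\mathcal{A}'$ in the \emph{same} model that answers the query ``are there more than two SCCs in the graph?'' with the same polynomial preprocessing time, the same update-time guarantee $O(m^{1-\epsilon})$, and query time $O(m^{1-\epsilon})$.

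The only point requiring genuine (if elementary) verification is that the query time of $\mathcal{A}'$ is really $O(m^{1-\epsilon})$ rather than merely $O(\max\{m^{1-\epsilon},n\})$. The reduction inside the lemma answers the ``more than two SCCs'' query by running $O(n)$ queries on the assumed data structure, each costing $O(\max\{m^{1-\epsilon}/n,1\})$, for a total of $O(\max\{m^{1-\epsilon},n\})$. Here the density hypothesis is exactly what is needed: from $\delta>\epsilon/(1-\epsilon)$ we get $1+\delta>1/(1-\epsilon)$, so $m>n^{1+\delta}$ yields $m>n^{1/(1-\epsilon)}$, i.e.\ $n<m^{1-\epsilon}$; hence $\max\{m^{1-\epsilon},n\}=O(m^{1-\epsilon})$ and the total query time collapses to $O(m^{1-\epsilon})$. (This is precisely why the statement, unlike the previous theorem for query types (i)--(iii), restricts to superlinear edge density, as already flagged in the text before the lemma.)

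Finally I would invoke Theorem~\ref{lem:SC2-lower-bound}: the existence of $\mathcal{A}'$ --- a fully dynamic algorithm with polynomial preprocessing and amortized update and query time $O(m^{1-\epsilon})$, or an incremental/decremental algorithm with worst-case update and query time $O(m^{1-\epsilon})$, that decides whether $G$ has two or more SCCs --- contradicts Conjecture~\ref{con:SETH}. This contradiction proves the theorem. I do not expect any real obstacle: all the substance is already contained in the two lemmas that have been established, and the work reduces to the exponent bookkeeping turning $\delta>\epsilon/(1-\epsilon)$ and $m>n^{1+\delta}$ into $n=O(m^{1-\epsilon})$, together with making sure the model (incremental vs.\ decremental vs.\ fully dynamic, worst-case vs.\ amortized) is matched consistently across the hypothesis, the preceding lemma, and Theorem~\ref{lem:SC2-lower-bound}.
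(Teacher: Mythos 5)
Your proposal is correct and takes the same route the paper intends: the theorem is stated without a separate proof because it is exactly the composition of the preceding lemma (converting a query-(iv)/(v) data structure into one deciding ``more than two SCCs'' via $O(n)$ probes) with Theorem~\ref{lem:SC2-lower-bound}. Your exponent bookkeeping is the one nontrivial check, and you carry it out correctly: $\delta>\epsilon/(1-\epsilon)$ gives $1+\delta>1/(1-\epsilon)$, so $m>n^{1+\delta}$ implies $n<m^{1-\epsilon}$, collapsing the $O(\max\{m^{1-\epsilon},n\})$ total query cost to $O(m^{1-\epsilon})$, which is precisely the bound Theorem~\ref{lem:SC2-lower-bound} requires.
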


\bibliographystyle{plain}
\bibliography{ltg}

\newpage

\begin{appendix}
\thispagestyle{empty}
\begin{table}[!ht]
\begin{small}
\tabulinesep=0.5mm
\setlength{\tabcolsep}{4.0pt} 
\begin{tabu}[t]{|p{3cm}|p{11cm}|} 
\hline
  \textbf{Symbol} & \textbf{Description}\\
 \hline
 $G^R$ & The graph resulting from $G$ after reversing the direction of all edges.\\
 \hline
 $G_s$ (resp., $G^R_s$)& A flow graph $G$ (resp., $G^R$), with start vertex $s$, where all vertices are reachable from $s$ (resp., reach $s$). \\
 \hline
 $V(G)$ (resp., $E(G)$) & The set of vertices (resp., edges) of $G$.\\
 \hline
 $G\setminus v, v \in V$ & $G$ after deleting $v$ together with and all its incident edges.\\
 \hline
 $G\setminus e, e \in E$ & The graph resulting from $G$ after deleting edge $e$.\\
 \hline
 $G[S], S \subseteq V$ & The subgraph of $G$ induced by the vertices in $S$.\\
 \hline
 $T(u)$, where $T$ a tree and $u\in V(T)$ & The set of vertices in the subtree rooted of $T$ at $u$.\\
 \hline
 $T[u,v]$, where $T$ a tree and $u,v\in V(T)$ & The path between $u$ and $v$ in $T$ (including $u$ and $v$).\\
 \hline
 $nca_T(u,v)$, where $T$ a tree and $u,v\in V(T)$  & The nearest common ancestor of $u$ and $v$ in $T$.\\
 \hline
 $D$ (resp., $D^R$) of $G_s$ (resp., of $G^R_s$) & The dominator tree $D$ (resp., $D^R$) of $G_s$ (resp., $G^R_s$).\\
 \hline
 $d(u)$ (resp., $d^R(u)$) & The parent of $u$ in $D$ (resp., $D^R$).\\
 \hline
 $dom(u)$ (resp., $dom^R(u)$)& The set of ancestors of $u$ in $D$ (resp., $D^R$).\\
 \hline
 $H$ (resp., $H^R$) of $G_s$ (resp., of $G^R_s$) & The loop nesting tree $H$ (resp., $H^R$) of $G_s$ (resp., $G^R_s$)\\
 \hline
 $h(u)$ (resp., $h^R(u)$) & The parent of $u$ in  $H$ (resp., $H^R$).\\
 \hline
 $h_u$ (resp., $h^R_u$) & The nearest ancestor of $u$ in the loop nesting tree $H$ (resp., $H^R$) such that $h_u\in D(r_u)$ and $h(h_u)\notin D(r_u)$ (resp., $h^R_u\in D^R(r^R_u)$ and $h^R(h^r_u)\notin D^R(r^R_u)$).\\
 \hline
 Bridge decomposition $\mathcal{D}$ (resp., $\mathcal{D}^R$) & The resulting forest after deleting from $D$ (resp., $D^R$) all bridges of $G_s$ (resp., $G^R_s$).\\
 \hline
 $r_u$ (resp., $r^R_u$) & The root of the tree in $\mathcal{D}$ (resp., $\mathcal{D}^R$) containing $u$.\\
 \hline
 $c_u$ (resp., $c^R_u$) & The canonical vertex of the auxiliary component of $u$ in $G_s$  (resp., $G^R_s$).\\
 \hline
 $L$ (resp., $L^R$) of $G_s$ (resp., of $G^R_s$) & The \newLNT{} tree $L$ (resp., $L^R$) of $G_s$ (resp., $G^R_s$).\\
 \hline
 $\ell(u)$ (resp., $\ell^R(u)$) & The parent of $u$ in the \newLNT{} tree $L$ (resp., $L^R$).\\
 \hline
 $\hat{H}$ (resp., $\hat{H}^R$) of $G_s$ (resp., of $G^R_s$) & The tree resulting from $L$ (resp., $\hat{L}^R$) of $G_s$ (resp., $G^R_s$) after making each vertex $u\not=c_u$ (resp., $u\not=c^R_u$) child of $c_u$ (resp., $c^R_u$).\\
 \hline
 $\hat{h}(u)$ (resp., $\hat{h}^R(u)$) & The parent of $u$ in $\hat{H}$ (resp., $\hat{H}^R$).\\
 \hline
 $level(u)$ (resp.,  $level^R(u))$ & The number of strong bridges on $D[s,u]$ (resp., $D^R[s,u]$) where $s$ is the start vertex of the flow graph on which $D$ (resp., $D^R$) is defined.\\
 \hline
 $f'$, where $f$ any relation & The relation $f$ after the insertion of an edge (the inserted edge is usually specified, or we refer to any edge insertion).\\
 \hline
 $D$-scanned vertices $S$ & The vertices that increase their depth in $D$ after an edge insertion.\\
 \hline
 $G_{scanned}$ & The graph induced by the vertices in $S$ (the set of $D$-scanned vertices).\\
 \hline
 $\tilde{H}$ & The loop nesting tree of $G_{scanned}$ rooted at $y$ after the deletion of $(x,y)$\\
 \hline
 $\tilde{h}(u)$ & the parent of $u$ in $\tilde{H}$.\\
 \hline
 $D$-affected & The vertices that change parent in $D$ after an edge insertion.\\
 \hline
 $L$-affected $S'$ & The set of vertices $S$ that change parent in $L$ after an edge insertion.\\
 \hline
 $\widetilde{D}(u)$ (resp., $\widetilde{D}^R(u)$)& $D(u)\setminus u$ (resp., $D^R(u)\setminus u$).\\
 \hline
 $\overline{G}=(\overline{V},\overline{E})$ & For each strong articulation point $x$ of $G$, we add an auxiliary vertex $\overline{x} \in \overline{V}$ and add the auxiliary edges $(\overline{x},x)$ and $(x,\overline{x})$. 
 Moreover, we replace each edge $(u, x)$ entering $x$ in $G$ with an edge $(u,\overline{x})$ entering $\overline{x}$ in $\overline{G}$.\\
 \hline
 $\mathcal{F}$ & The block forest, as defined in \cite{2VCB}. A data structure for representing overlapping sets of vertices.\\
 \hline
 \end{tabu}
\end{small}
\caption{The notation that we use throughout the paper. We exclude notation that is used briefly for further definitions (e.g., $loop(u)$ for a vertex $u$, which is used to define the loop nesting tree of the graph).}
\label{tab:notation}
\end{table}
\end{appendix}

\end{document}